\newcommand*\LyXbar{\rule[0.585ex]{1.2em}{0.25pt}}
\DeclareRobustCommand{\greektext}{%
  \fontencoding{LGR}\selectfont\def\encodingdefault{LGR}}
\DeclareRobustCommand{\textgreek}[1]{\leavevmode{\greektext #1}}
\newcommand{\lyxmathsym}[1]{\ifmmode\begingroup\def\b@ld{bold}
  \text{\ifx\math@version\b@ld\bfseries\fi#1}\endgroup\else#1\fi}
\providecommand{\tabularnewline}{\\}
  \theoremstyle{remark}
  \newtheorem*{rem*}{\protect\remarkname}
\theoremstyle{plain}
\newtheorem{thm}{\protect\theoremname}
  \theoremstyle{definition}
  \newtheorem{example}[thm]{\protect\examplename}
  \theoremstyle{definition}
  \newtheorem{defn}[thm]{\protect\definitionname}
  \theoremstyle{remark}
  \newtheorem{rem}[thm]{\protect\remarkname}
  \theoremstyle{definition}
  \newtheorem*{example*}{\protect\examplename}
 \theoremstyle{definition}
 \newtheorem*{defn*}{\protect\definitionname}
  \theoremstyle{remark}
  \newtheorem{claim}[thm]{\protect\claimname}
  \providecommand{\claimname}{Claim}
  \providecommand{\definitionname}{Definition}
  \providecommand{\examplename}{Example}
  \providecommand{\remarkname}{Remark}
\providecommand{\theoremname}{Theorem}
\begin{document}

\title{Introduction to Spin Networks and Towards a Generalization of the
Decomposition Theorem}

\author{Hans-Christian Ruiz}

\maketitle

\section*{Acknowledgements}

I am truly indebted and thankful for the help and advice of Prof.
John Barrett; without him this work could never have existed. I would
also like to recognize the valuable help and support of the staff
at the University of Nottingham and the University of Munich (LMU).
I am sincerely and heartily grateful to both my student advisor Bernhard
Emmer and the Erasmus Institutional Coordinator of the LMU Jean Schleiss.
I am sure this work would not have been possible without their support.
Thanks are also due to Dr. Hal Haggard for useful comments.

\tableofcontents{}

\section{Introduction}

The objective of this dissertation is twofold. On one hand, it is
intended as a short introduction to spin networks and invariants of
3-manifolds. It covers the main areas needed to have a first understanding
of the topics involved in the development of spin networks. The topics
are describe in a detailed but not exhaustive manner and in order
of their conceptual development such that the reader is able to use
this work as a first reading. On the other hand, some results aiming
towards a decomposition theorem for non-planar spin networks are presented
in \prettyref{cha:Non-planar-Spin-Networks}. 

We start in \prettyref{cha:From-SN-to-GR} with the first conceptual
development of spin networks by Penrose, \cite{penrose1971angular,penrose1971applications},
and a very brief presentation of the main concepts in the theory,
\cite{major1999spin}, which will then be explained in more detail
throughout the dissertation. \prettyref{sec:Spin-Networks} gives
the motivation for considering spin networks as a way of constructing
a 3-dimensional Euclidean space, as well as its formal framework of
Abstract Tensor Systems, which is a generalization of the concept
of tensor algebra. In fact, the diagrammatical language of spin networks
are a representation of such systems. The arguments given are then
reinforced when the description of General Relativity without coordinates
due to Regge is presented, \cite{regge1961general}, and the Ponzano-Regge
theory connecting both concepts is described, \cite{ponzano1969semiclassical}.
In \prettyref{sec:General-Relativity} we will see that the information
about the curvature of an $n$-manifold is encoded in the $(n-2)$-skeleton
of its triangulation and in \prettyref{sec:Connection-between-GR-SN}
we will discuss the relation between the asymptotic formula for the
$6j$-symbols and the path integral over the exponential of the Einstein-Hilbert
action. This relation represents the possibility of a similar description
to QFT of a ``Feynman integral'' over geometries of a combinatorial
manifold.

In \prettyref{cha:Mathematical-Framework} we present the basic mathematical
framework for the algebraic description of spin networks via quantum
groups, \prettyref{sec:Hopf-alg-and-QGrps} following \cite{majid2000foundations}.
These algebraic objects belong to the family of quasitriangular Hopf
algebras and the one deserving our special attention is the $q$-deformed
universal enveloping Hopf algebra $U_{q}(\mathfrak{sl}_{2})$ of the
Lie algebra $\mathfrak{sl}_{2}$ which gives the data needed to regularize
the Ponzano-Regge theory encountered before. The fact that the set
of representations of this quantum group is finite, allows the construction
of a well-defined invariant of 3-manifolds. We will discuss the corresponding
state sums given by Turaev and Viro in \prettyref{cha:Invariants-of-3-Manifolds}.
In \prettyref{sub:Category-Theory} we introduce category theory,
\cite{MacLane1971}, in order to understand spherical categories,
\cite{barrett1999spherical,barrett1996invariants}, and their relation
to the diagrammatic representations of spin networks given by the
Temperley-Lieb recoupling theory in \prettyref{sec:Some-More-Diagrammatics}
following \cite{kauffman1994temperley,kauffman2001knots}. The main
concept needed from spherical categories is theorem \vref{thm:non-degenerate-quotient},
which states that, given an additive spherical category, it is always
possible to build a quotient in order to construct a non-degenerate
spherical category. This construction allows the condition of semisimplicity
needed for the construction of the above mentioned invariants. Hence
the concept of semisimple spherical categories given by Barrett generalizes
the Turaev-Viro invariants given by the state sum of $U_{q}(\mathfrak{sl}_{2})$,
\cite{turaev1992state}. We give in \prettyref{sec:Invariants-from-Spherical}
a brief account of this generalization. 

On the other hand, the diagrammatic language of spin networks is contained
in the very broad field of knot theory. We will give a very brief
account of knot theory and refer the reader to \cite{kauffman2001knots,jones2005jones,sawin1996links}
for a more detailed description. The most important concept of this
area for us is the Temperley-Lieb recoupling theory, which allows
a definition of the $n$-edges in a spin network as a weighted sum
over representations of the generators of the Artin braid group $B_{n}$.
Furthermore, this theory is important as a tool for the evaluation
of closed spin networks in terms of the $q$-deformed $6j$-symbols,
especially in the case where $q$ is a root of unity, since it delivers
all relations related to Moussouris' algorithm and the correspondence
of a tetrahedron with the recoupling coefficients, relevant to the
Ponzano-Regge partition function.

In \prettyref{sec:TQFT} we present briefly the axioms of a topological
quantum field theory (TQFT) related to the concepts mentioned above,
\cite{atiyah1988topological,atiyah1990geometry,turaev1992state}.
The main idea needed in the description of the invariants of 3-manifolds,
in fact of spin networks as well, is the association of finitely generated
modules over some ring to (oriented) closed smooth manifolds of a
fixed dimension $d$. In addition, to any $(d+1)$-cobordism between
two such manifolds one associates an element of the module associated
to its boundary, which is usually a tensor product of modules associated
to each component of the boundary. The presentation of TQFT here is
intended merely as a support to describe the Turaev-Viro invariants
and, regrettably, it was not possible to dedicate more space to this
topic, which could be a (very interesting) dissertation on itself.
For a detailed discussion of the relation between link and knot theory
and TQFT we refer to \cite{sawin1996links}. \prettyref{sec:TQFT}
is directly connected to \prettyref{sec:State-Sum-Invariants}, where
a TQFT arises naturally from the Turaev-Viro state sum after the construction
of a suitable quotient defining a functor from the category of cobordisms
of triangulated 2-manifolds to the category of modules over a ring
introduced for the initial data, \cite{turaev1992state}. The data
for the construction of the state sum is introduced in a general setting.
After discussing the topological aspects of the theory in \prettyref{sec:Moves-on-Triangulations},
such as the transformations on triangulations called Alexander moves,
\cite{viro1992moves}, and their dual form called the Matveev-Piergallini
moves, we conclude with the correspondence of the Turaev-Viro and
the Kauffman-Lins invariants. We give this correspondence in an informal
manner based on the comparison of the results in \cite{turaev1992state}
and \cite{kauffman1994temperley}. For a formal discussion the reader
is referred to \cite{piunikhin1992turaev}. 

In \prettyref{cha:Non-planar-Spin-Networks} we carry out the original
research towards a decomposition theorem for non-planar spin networks.
We start by presenting some basic concepts of (topological) graph
theory, mainly Kuratowski's Theorem \vref{thm:Kuratowski's-Theorem},
in order to prepare the setting in which we will work, \cite{beineke1997graph,hartsfield2003pearls}.
The main idea in \prettyref{sec:Kuratowski's-Theorem-Embeddings}
is to identify the graphs corresponding to non-planar spin networks
and construct the surfaces in which they are \textit{cellular} embeddable
by applying the so called rotation rule. It turns out that the only
graph that needs to be studied is the $(3,3)$-bipartite graph $K_{3,3}$
since the other subgraph responsible for the non-planarity of a given
spin network is the complete graph on 5 vertices, denoted $K_{5}$.
This graph is expandable to the Petersen graph which has as one of
its minors $K_{3,3}$. We used exclusively cellular embeddings to
assure that the information about the topology of the surface is encoded
in the graph itself and found readily that, for each graph, there
is a topological constraint reducing strongly its possible cellular
embeddings. Some examples for $K_{3,3}$ were calculated from the
Rotation Scheme Theorem \vref{thm:Rotation-Scheme-Theorem} and it
was proven that these are the only possible ones up to permutation
of the vertices. In \prettyref{sec:non-planar-SN} we present Moussouris'
algorithm for the evaluation of planar spin networks, \cite{moussouris1983quantum},
and extend it to account for the phase factors containing the information
of a toroidal spin network. In addition, the Decomposition Theorem
\vref{thm:Decomposition-Theorem} is improved to account for this
simple non-planar case and its irreducible network, named toroidal
phase factor, is given. An attempt to evaluate the $K_{3,3}$ spin
network is made by associating the irreducible toroidal network to
the twisting factor given in \cite{kauffman1994temperley,carter1995classical}.
This factor changes effectively the orientation of a vertex and was
used (may be in a naive way) to turn a toroidal phase factor into
a theta-net translating the diagrammatic information of the topology
into an algebraic factor, leaving a possible evaluation in terms of
$q$-$6j$-symbols and the twist factor; this evaluation was called
toroidal symbol. The results of some calculations regarding the different
embeddings are given and compared. This shows some ambiguity in the
evaluations calculated, which are discussed and a solution is proposed.
Finally, the further work needed to achieve a general theorem concerning
the evaluation of non-planar spin networks of higher genus is described
briefly. We hope that the simple categorization of orientable surfaces
as a connected sum of tori is reflected in the evaluation of spin
networks in this general case, namely, as a sum of products of toroidal
symbols. The further analysis regarding this generalization and the
possible solution to the ambiguity of the evaluations calculated will
be presented soon in an article containing also the results presented
in this dissertation.

For the sake of readability the author was careful to avoid as much
as possible mathematical details, keeping in mind the necessity of
further explanation in key aspects of the dissertation. In many cases,
however, this was difficult to achieve without extending too much
the scope of the dissertation. Hence, for a detailed description and
further aspects of the different topics, as well as the proofs of
the statements made, the reader is referred to the literature given.

Most diagrams were made with the ``Xy-Pic'' package and the help
of Aaron Lauda's tutorial. For this and more documentation concerning
this useful package we refer to the Xy-pic website ( http://www.tug.org/applications/Xy-pic/
).

\chapter{From Spin Networks To General Relativity\label{cha:From-SN-to-GR}}

\section{Spin Networks\label{sec:Spin-Networks}}

\subsection{The Origins of Spin Networks}

Penrose constructed a discrete model of space based on the concept
of quantum mechanical angular momentum with the goal to build a consistent
model from which classical, continuous geometry emerged in a limit.
It was shown that spin networks could reproduced a 3 dimensional Euclidean
space, this result is known as the \emph{spin-geometry theorem} (cf.
\prettyref{thm:spin-geom-thm}). The basic idea of Penrose%
\footnote{The following discussion can be found in \cite{penrose1971angular}.%
} was to build up space-time and quantum mechanics simultaneously from
combinatorial principles by using as primary concepts the rules for
combining angular momenta together. Continuous concepts, such as directions
in space, should then emerge in a limit where the systems of angular
momenta get more complicated. 

In order to avoid referring to the idea of spins having $2s+1$ available
states as preexisting directions of a background space, one has to
work only with the total angular momentum ($j$-value) rather than
the direction of quantization ($m$-value)%
\footnote{It is important to notice that a direction of quantization only appears
when the system is related to a bigger one, e.g. a magnetic field,
which defines a \textquotedbl{}preferred\textquotedbl{} direction.%
}. Now imagine an object, called spin network, like the following

\begin{center}
\includegraphics[scale=0.5]{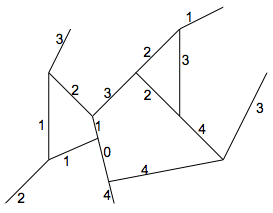}
\par\end{center}

where the number next to each line represents twice the angular momentum
of an isolated and stationary subsystem%
\footnote{The system has to be regarded as isolated and stationary so it has
a well-defined angular momentum, $j\hbar=\frac{1}{2}n\hbar,\:(n=0,1,2,\dots).$%
}, represented by the line and called an n-unit or n-edge. The above
diagram is only a representation of the, for now, rather abstract
concept of spin network, thus it has no spatial meaning, only a relational
one, i.e. its defining properties are the relations between the edges%
\footnote{From this point of view, we can consider a spin network as a (cubic)
graph.%
}. Notice that an important element of this diagram is the trivalent
vertex%
\footnote{The number of edges at a vertex is not limited to three, e.g. in quantum
gravity spin networks are generalized to include higher valence vertices,
as pointed out in \cite{major1999spin}.%
}

\begin{center}
\includegraphics[scale=0.5]{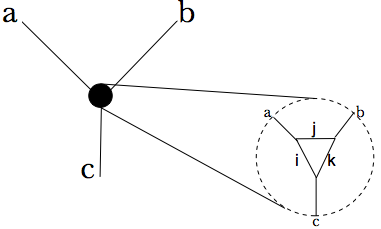}
\par\end{center}

where the dashed circle indicates spin network structure at the vertex
with \emph{internal labels} $i,j,k$ being positive integers determined
by the external labels $a,b,c$ 
\[
i=(a+c-b)/2,\: j=(b+c-a)/2,\: k=(a+b-c)/2.
\]

The external labels must satisfy the triangle inequalities and add
up to an even integer, these conditions are necessary as an expression
of conservation of angular momenta.
\begin{rem*}
The subsystems are not moving relative to one another, there is only
a transfer of angular momentum allowed and the regrouping into different
subsystems, not even time-ordering of the events play a role, only
the topological aspects of the system are relevant.
\end{rem*}
Since only the relational properties of the edges define the spin
network some combinatorial rules must be given, but how are this combinatorial
rules to be interpreted? Every diagram will be assigned a non-negative
integer called its norm, which can be calculated from any given spin
network in a purely combinatorial way. We can think, carefully, of
this as the measure of the frequency of occurrence of the given spin
network, so we could use this norms in some cases to calculate the
probabilities of different spin values occurring. Since the norm is
always an integer these probabilities will always be rational numbers.
How can these rules be obtained? A natural choice would be to derived
them from irreducible representations of $SO(3)$.

How does this enable us to build up a concept of space? In other words,
how can anybody say anything about directions in space, if there is
only the non-directional concept of total angular momentum? One could
ask for the \emph{``orientation'' of a n-unit in relation to some
larger structure} belonging to the system under consideration. Hence,
the system should involve a fairly large total angular momentum number
$N$ (called a large unit), in order to have the possibility of a
well-defined direction as the spin axis of the system%
\footnote{Recall that, because of the correspondence principle, systems with
large quantum numbers behave nearly as classical ones. %
}. After defining this ``direction'', one can then ask further how
the concept of angles between these ``directions'' could be defined,
and then prove if we get a consistent interpretation of this in terms
of directions in a 3-dimensional Euclidean space.

To define an angle between two large units Penrose considered following
``experiment''. Suppose a 1-unit is detached from a large $N$-unit
such that it leaves an $(N-1)$-unit behind and it is then re-attached
to some other large $M$-unit. According to the rules allowed, which
are described below, we have then either an $(M-1)$-unit or an $(M+1)$-unit,
thus there will be certain probabilities for these two different outcomes.
With the information given by these probability values we can obtain
the angle between the $N$-unit and the $M$-unit. To see how this
is possible, notice that if the units are ``parallel'', then we
would expect zero probability for the $M-1$ value and if the units
are anti-parallel we would expect zero probability for the $M+1$
value. For a ``perpendicular'' position, we expect then equal probability
values for each of the two outcomes. Hence, for an angle $\theta$
between the directions of the two large units we would expect a probability
$P\left(M\pm1\right)=\frac{1}{2}\pm\frac{1}{2}\cos\theta$ for the
$M$-unit to be increased/reduced one unit. In this way, from knowledge
of a spin network, one can calculate by means of a combinatorial procedure
the probability of each of the two possible outcomes and with them
define an angle between these two large systems. Since these type
of probabilities will always be rational numbers, one could only obtain
angles with rational cosines, but these ``angles'' would normally
not agree with the actual angles of the Euclidean space until one
goes to the limit of large systems. This means that rational probabilities,
e.g. $p=m/n$, can be seen as something more fundamental than ordinary
real number probabilities. The former might be regarded as arising
because nature has to make a choice between $m$ alternative possibilities
of one kind and $n$ alternative ones of another, all of which are
to be \textit{equally probable.} Only in the limit then, when numbers
get to infinity, we would get the full continuous spectrum of probability
values.

Now, consider a number of disconnected systems, each of them producing
a large N-unit. To measure the ``angle'' between two of them use
the above experiment. Then, the probability (cf. \prettyref{eq:probability-SN})
of the second N-unit to become an $\left(N\pm1\right)$-unit is $\frac{1}{2}\left(N+1\pm1\right)/\left(N+1\right)$
. Hence, the probabilities become equal in the limit $N\rightarrow\infty$
and so, for large $N$, we could assign a right-angle between these
two units. With this result one could put any number of $N$-units
at right-angles to each other, thus, there is no restriction of the
dimensionality of our space. 

Clearly something went wrong. Remember there were no connections between
any of the N-units so there is absolutely \textit{no information }concerning
the interconnections between the different N-units (there is a lack
of knowledge concerning the origins of the systems), so we can think
of the probability, in this particular case, as arising \textit{entirely
}out of the ignorance of the system, rather than being genuine quantum
mechanical probabilities due to an ``angle''.

Suppose there is some ``known'' connecting network $\kappa$ with
two large units coming out, and we realize the above experiment two
times consecutively.

\begin{center}
\includegraphics[scale=0.5]{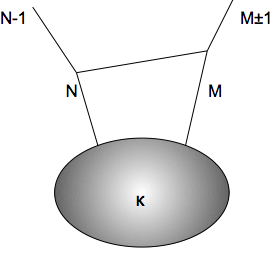}
\par\end{center}

If this is an ``ignorance'' situation, i.e. we do not know about
how the spin axes are pointing, then the probabilities in the second
experiment will change by the information of the relative orientation
of the spin axes obtained in the first one. Hence, if the probabilities
calculated for the second experiment are substantially altered, then
there is a large ``ignorance'' factor involved. On the other hand,
if they are not substantially altered, then the angle between the
two large units is well-defined. 

Suppose we have now a system which has many large units emerging $(A,B,...)$
and that the angle between any two of them is well-defined. Then,
without proof%
\footnote{For proof see \cite{moussouris1983quantum}.%
},
\begin{thm}
\textbf{\textup{Spin-geometry Theorem:}}\label{thm:spin-geom-thm}
In the situation above, all angles between the large units are consistent
with angles between the directions in a three-dimensional Euclidean
space.
\end{thm}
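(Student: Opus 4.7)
The plan is to interpret each large $N$-unit as a semiclassical angular momentum vector whose direction in $\mathbb{R}^{3}$ becomes sharp in the limit $N\to\infty$, and then to show that the combinatorially defined angles coincide with the genuine angles between these vectors. The underlying algebraic input is that the rules for combining $n$-edges are those of $SU(2)$-representation theory; since the adjoint action of $SU(2)$ factors through $SO(3)$, any set of semiclassical directions lives intrinsically in a three-dimensional Euclidean space, and the task reduces to showing that the probabilistic angles extracted from the spin network are consistent with this embedding.

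First I would make precise the semiclassical limit. For large $j=N/2$, the spin-$j$ representation admits coherent states $|j,\hat{n}\rangle$ sharply peaked around unit vectors $\hat{n}\in S^{2}$, with angular fluctuations of order $1/\sqrt{j}$. The state of a large $N$-unit isolated from the rest of the network is, up to $O(1/N)$ corrections, effectively such a coherent state; the ``well-defined angle'' hypothesis of the theorem is what rules out the ignorance-mixing that would otherwise smear out $\hat{n}$. Next I would compute the transition probability of the detach-reattach experiment using Clebsch--Gordan or, equivalently, Racah $6j$ coefficients: the amplitude for a $1$-unit to raise an $M$-unit to $M+1$ once it has been stripped from an $N$-unit is governed by a product of recoupling coefficients, whose square in the limit $N,M\to\infty$ reduces to the standard spherical-average formula $\frac{1}{2}(1+\cos\theta)$, where $\theta$ is the angle between the classical vectors $\hat{n}_{N}$ and $\hat{n}_{M}$. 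This identifies the combinatorial ``angle'' with a genuine angle between directions in $\mathbb{R}^{3}$.

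The main obstacle is the global consistency step: a priori one has a collection of pairwise angles $\theta_{AB}$ extracted from many separate two-unit experiments, and one must show that they are \emph{jointly} realizable as angles between vectors in a single three-dimensional space. The natural tool is the Gram matrix criterion: a family $\{\hat{n}_{A}\}$ of unit vectors in some Euclidean space exists realising the $\theta_{AB}$ iff the matrix $G_{AB}=\cos\theta_{AB}$ is positive semi-definite, and it lies in $\mathbb{R}^{3}$ iff furthermore $\mathrm{rank}(G)\le 3$. Positive semi-definiteness follows because $G$ arises as the large-spin limit of the Gram matrix of coherent-state overlaps, which is manifestly non-negative. The rank bound is the substantive point, and this is where the $SU(2)$ (equivalently $SO(3)$) structure enters decisively: the angular momentum operators $J_{x},J_{y},J_{z}$ span a three-dimensional adjoint space, so the semiclassical direction $\hat{n}_{A}$ of any large unit is determined by the three expectation values $\langle J_{i}\rangle/j$, and hence every $G_{AB}$ is a $3\times 3$ inner product in disguise, forcing $\mathrm{rank}(G)\le 3$.

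Finally I would verify that the connecting network $\kappa$ playing the role of a ``reference frame'' ensures the directions $\hat{n}_{A}$ are defined in a common copy of $\mathbb{R}^{3}$, rather than in separate frames that happen to give compatible pairwise angles; this uses the non-disturbance condition (no substantial alteration of probabilities in the two-step experiment), which guarantees that the semiclassical directions commute in the sense that they can be simultaneously specified to leading order in $1/N$. Combined with the Gram-matrix argument, this gives a map from the large units of the network to vectors in a single three-dimensional Euclidean space whose pairwise angles are exactly those produced by the combinatorial rules, which is the assertion of the theorem. For a fully rigorous treatment the reader is referred to Moussouris~\cite{moussouris1983quantum}, whose proof carries out the asymptotic analysis of the recoupling coefficients in detail.
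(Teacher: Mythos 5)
The paper does not actually prove this theorem: it states it ``without proof'' and defers entirely to Moussouris \cite{moussouris1983quantum}, so there is no in-paper argument to compare yours against. Judged on its own terms, your plan assembles the right ingredients (semiclassical peaking of large units, asymptotics of the recoupling coefficients giving $P(M\pm1)=\tfrac12\pm\tfrac12\cos\theta$, and a Gram-matrix realizability criterion), and this is indeed the shape of the standard argument.

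There is, however, a genuine gap in the step you call decisive. The combinatorial cosine extracted from the detach--reattach experiment is asymptotically the normalized \emph{correlator} $G_{AB}=\langle\vec J_A\cdot\vec J_B\rangle/(j_Aj_B)=\sum_{i=1}^{3}\langle J_A^iJ_B^i\rangle/(j_Aj_B)$, not the inner product of the expectation vectors $\langle\vec J_A\rangle/j_A$. Writing $G=\sum_{i=1}^{3}C^i$ with $C^i_{AB}=\langle J_A^iJ_B^i\rangle/(j_Aj_B)$, each $C^i$ is positive semi-definite but need not be rank one, so the three-dimensionality of the adjoint representation by itself only gives $\operatorname{rank}(G)\le\sum_i\operatorname{rank}(C^i)$, which is useless. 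The rank-$3$ bound requires the factorization $\langle J_A^iJ_B^i\rangle\to\langle J_A^i\rangle\langle J_B^i\rangle$, i.e.\ the absence of ``ignorance'' correlations between units --- and that is precisely the ``well-defined angle'' hypothesis of the theorem, not a consequence of $SU(2)$ structure. The paper's own preceding discussion is the counterexample: for disconnected $N$-units one can make arbitrarily many pairs come out ``perpendicular,'' so the cosine matrix can have unbounded rank. You do invoke the non-disturbance condition in your last paragraph, but only to argue for a common frame; it needs to be moved into the rank argument itself, where it does the real work. With that relocation (and an actual asymptotic computation of the recoupling coefficients, which you correctly outsource to Moussouris), the plan is sound.
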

Does this space corresponds to the original, given three-dimensional
Euclidean space in the sense of a background? Well, there can be systems
($n$-units) with a large total angular momentum, but which do not
give well-defined directions in the original space, remember, there
are states with a large angular momentum which point all over the
place (e.g. $m=0$ states). Thus, the angles coming form these units
do not correspond to anything one can see as angles in the original
space, but they are nevertheless consistent with the angles between
directions in some abstract Euclidean 3-space. One could take the
view that the Euclidean three-dimensional space that comes out of
the large units of spin networks is the \emph{real} space, and that
the original space is just a convenience, like coordinates in general
relativity. The main idea here is the claim that \textit{the system
defines the geometry. }

In order to define the probabilities of a given process, the definition
of the norm of a spin network is needed. This will be given in terms
of a concept called the \textbf{value of a closed spin network}. A
spin network is closed if it has no free ends. If it is not closed
a value will not be assigned, but one can always obtain a closed network
by making a copy of the \textquotedbl{}open\textquotedbl{} one and
gluing the corresponding edges together. To define the value of a
closed spin network, replace each n-unit in the diagram by n parallel
strands. At each vertex, the strands belonging to different edges
must be connected together in pairs, such that two strands of the
same unit are not allowed to be connected (cf. Remark \prettyref{rem:properties-of-antisymmetrizer}).
Such a connection is called a \textit{vertex connection.} The \emph{sign
of a vertex connection} $v$ is defined as\emph{ }$s_{v}=\left(-1\right)^{x}$
where $\mathit{x}$ is the number of intersection points between different
strands at the vertex%
\footnote{This can also be seen as the signum of the permutation of strands
involved in their pairing at each vertex connection, cf. Definition
\prettyref{def:The-weaving}.%
}. When the vertex connections have been completed at every vertex
of a closed spin network, then we get a number $c$ of closed loops.
The value of the closed spin network is the sum over every possible
way of completing the vertex connections. From these considerations,
Penrose got the following expression

\begin{equation}
value=\frac{\sum s\left(-2\right)^{c}}{\prod n!},\label{eq:Vaule-of-SN}
\end{equation}
 where $s=\prod_{v}s_{v}$\textit{ }and $\prod n!$ ranges over all
the $n$-units of the spin network. 
\begin{rem*}
\textit{(i)} In the calculation of the value only the intersections
at the vertex connections count for the sign of the diagram, see example
below. \textit{(ii)} Note that each closed loop yield a value of $-2$
to the main diagram. \textit{(iii)} The value of any closed spin network
always turns out to be an integer.\end{rem*}
\begin{example}
Consider the following diagram and its possible vertex connections

\includegraphics[scale=0.4]{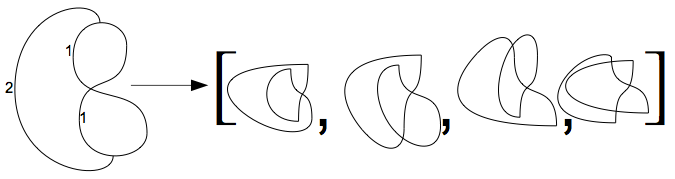}

Note that the intersection arising from the crossing of the two 1-units
does not contribute to the sign of the terms in the sum. The evaluation
of the above diagram according to \prettyref{eq:Vaule-of-SN} gives
\[
value=\frac{1}{2!1!1!}\{+(-2)-(-2)^{2}-(-2)^{2}+(-2)\}=-6.
\]

\end{example}
The value of a spin network is multiplicative, i.e. the value of the
union of disjoint spin networks is equal to the product of their individual
values. Before coming to the definition of the norm we remark the
existence of some useful reduction formula for evaluating spin networks
which may be substituted into any closed spin network in order to
obtain a valid relation between values; they can be found in \cite{penrose1971angular}.
\begin{defn}
The \textbf{norm} of a given spin network is obtained by joining together
the corresponding free end units of two copies of the diagram to make
a closed network and taking the modulus of its value. In other words,
the norm of a spin network is the \textbf{modulus of the value} of
its corresponding closed spin network.
\end{defn}
Finally, the norm is used to calculate the probabilities for spin-numbers
in the sense of the above described ``experiment''. Given a spin
network $\alpha$ with a free $a-$ and $b-$unit, suppose these two
units come together to form an x-unit in a resulting spin network
$\beta$. What are the various probabilities for the different possible
values of x? Let $\gamma$ denote the vertex formed by a, b and x
and $\xi$ denote the spin-network consisting of the x-unit alone.
The probability for the resulting spin-number to be x is

\begin{center}
\begin{equation}
probability(x)=\frac{norm\beta}{norm\alpha}\frac{norm\xi(x)}{norm\gamma(x)}.\label{eq:probability-SN}
\end{equation}

\par\end{center}

With the interpretation of an ``angle'' given by Penrose the three-dimensional
Euclidean nature of the ``directions in space'' is a consequence
of the combinatorial probabilities of spin networks. One should keep
in mind that this space is the one defined by the system and there
\emph{is} a distinction between this one and the space introduced
as a background in a conventional formalism.

\subsection{A Formal Framework for Spin Networks: Abstract Tensor Systems}

In \cite{penrose1971applications} Penrose describes his theory of
\emph{abstract tensor systems} (ATS) for more general objects than
ordinary tensors and which are related to diagrams like the ones above.
These objects are denoted formally identically as in the tensor index
notation, but the meaning of indices is now different. Indices are
now just a label, they do not stand for $1,2,3,...,n$, and an element
$\xi^{a}$ of an ATS is not a set of components of a vector but rather
a whole element of a module $\mathcal{T}^{a}$ over a ring $\mathcal{T}$.
This means that $\xi^{a}\neq\xi^{b}$ if $a\neq b$ but $\mathcal{T}^{a}\cong\mathcal{T}^{b}$.
This is naturally extended to other objects like $\chi_{f\dots h}^{ab\dots d}\in\mathcal{T}_{f\dots h}^{ab\dots d}$
where $a,b,...,d,f,...,h$ are all distinct and $\mathcal{T}_{f\dots h}^{ab\dots d}$
is a module over $\mathcal{T}$ %
\footnote{The order of the indices of the objects is important, $\chi_{f\dots h}^{ab\dots d}\neq\chi_{f\dots h}^{db\dots a}$,
but not for the modules $\mathcal{T}_{f\dots h}^{ab\dots d}=\mathcal{T}_{f\dots h}^{db\dots a}$. %
}. The ATS has four basic operations which are
\begin{enumerate}
\item Addition: $\mathcal{T}_{u\dots w}^{x\dots z}\times\mathcal{T}_{u\dots w}^{x\dots z}\rightarrow\mathcal{T}_{u\dots w}^{x\dots z}$
\item Outer multiplication: $\mathcal{T}_{p\dots r}^{a\dots d}\times\mathcal{T}_{u\dots w}^{x\dots z}\rightarrow\mathcal{T}_{p\dots ru\dots w}^{a\dots dx\dots z}$
\item Contraction%
\footnote{We will use the dummy index notation and the Einstein's sum convention.%
}: $\mathcal{T}_{qu\dots w}^{px\dots z}\rightarrow\mathcal{T}_{u\dots w}^{x\dots z}$
\item Index substitution: $\mathcal{T}_{u\dots w}^{x\dots z}\rightarrow\mathcal{T}_{k\dots m}^{f\dots h}$
where there is a one-to-one correspondence between both set of indices.
\end{enumerate}
The axioms for these operations are the following
\begin{enumerate}
\item The addition defines an Abelian group structure for each module in
the ATS.
\item The multiplication is distributive over the addition.
\item The contraction is distributive over addition and commutes with multiplication
and other contractions.
\item The index substitution is caused by any permutation of the set of
indices but the validity of any formula stays unaltered.
\end{enumerate}
These abstract tensor systems may be expressed diagrammatically, what
allows us to see connections between indices at a glance. Penrose
pointed out, that if we regard the labels as points on a plane, we
may denote an object of the ATS by a symbol with ``arms'' corresponding
the upper indices and ``legs'' corresponding the lower ones.
\begin{example}
\begin{onehalfspace}
\label{exa:algebra-with-diagrams}If $\theta_{c}^{ab}\in\mathcal{T}_{c}^{ab}$
and $\chi_{bcd}^{a}\in\mathcal{T}_{bcd}^{a}$ we denote these object
by

\includegraphics[scale=0.6]{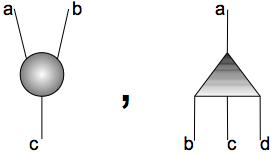}

respectively. Now, outer products are expressed as a juxtaposition
of individual symbols and contractions are depicted by joining the
corresponding arm and leg:

\includegraphics[scale=0.6]{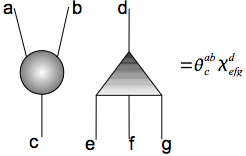},
\includegraphics[scale=0.5]{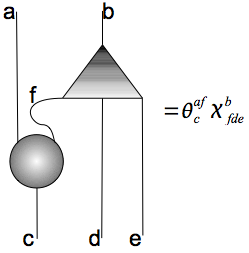}
\end{onehalfspace}

The addition of two objects is analogous. One simply draws the diagrams
for each object and puts a ``+'' sing in-between.
\end{example}
We require the ATS to have ``unit elements'' $1\in\mathcal{T}$
such that $1\chi_{\dots}^{\dots}=\chi_{\dots}^{\dots}$ for all elements
of $\mathcal{T}_{\dots}^{\dots}$ and $\delta_{b}^{a}\in\mathcal{T}_{b}^{a}$
such that $\chi_{\dots}^{..p..}\delta_{p}^{q}=\chi_{\dots}^{..q..}$
and $\chi_{..x..}^{\dots}\delta_{y}^{x}=\chi_{..y..}^{\dots}$. These
elements are unique and the element $\delta_{b}^{a}$ has the formal
properties of a Kronecker delta, so the definition of the ``dimension''
of $\mathcal{T}^{a}$ arises naturally to be the scalar $\delta_{a}^{a}=\nu$. 
\begin{rem*}
In ordinary tensor systems $\nu$ is a positive integer, while in
the more general case presented here it could also be a negative integer,
see remark \prettyref{rem:properties-of-antisymmetrizer}. 

Diagrammatically, the ``dimension'' is depicted as a closed loop
as the following

\includegraphics[scale=0.5]{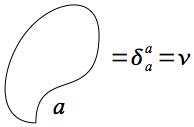}
\end{rem*}
This is the general framework for the discussion about spin networks
in the beginning of this section. Now, we can re-introduce the spin
networks by associating $2\times2$ matrices with diagrams, as in
\cite{major1999spin}. However, we have to make certain that the diagrammatics
are planar isotopic, i.e. invariant under smooth deformations in the
plane, cf. \prettyref{sec:Some-More-Diagrammatics}. From the considerations
above, we start by associating%
\footnote{This association implicitly fixes a preferred direction from the bottom
to the top of the page, i.e. $\bigl|_{A}^{B}\nleftrightarrow${\large $_{A}$\LyXbar{}$_{B}$.}%
} the Kronecker delta symbol $\delta_{A}^{B}$, which in this case
is the $2\times2$ identity matrix, to a line:

\begin{center}
$\delta_{A}^{B}$$\rightarrow$$\Bigl|_{A}^{B}$
\par\end{center}

Another possible identification%
\footnote{Again, the association implies a direction from the first index to
the second which has to be kept in mind when evaluating closed networks
and notice that the orientation has to be consistent throughout the
diagram.%
} is between a curve and the antisymmetric matrix $\varepsilon_{AB}=\varepsilon^{AB}=\biggl(\begin{array}{cc}
0 & 1\\
-1 & 0
\end{array}\biggl)$; i. e. $\varepsilon_{AB}\rightarrow_{A}\bigcap_{B}$ and $\varepsilon^{AB}\rightarrow^{A}\bigcup^{B}$.

These definitions are, however, not planar isotopic since the identities%
\footnote{This identities are associated to the more general notion of a pivotal
category, cf. Def. \vref{def:pivotal-category}.%
} $\delta_{A}^{C}\varepsilon_{CD}\varepsilon^{DE}\delta_{E}^{B}=\varepsilon_{AD}\varepsilon^{DB}=-\delta_{A}^{B}$
and $\varepsilon_{AD}\varepsilon_{BC}\varepsilon^{CD}=-\varepsilon_{AB}$
correspond to the following diagrams

\begin{center}
\includegraphics[scale=0.5]{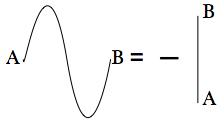}, \includegraphics[scale=0.5]{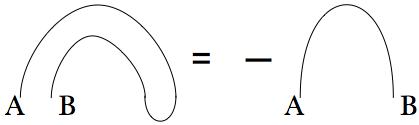}.
\par\end{center}

As pointed out in \cite{major1999spin}, in order to solve this problem
we simply re-define the bent line according to $\varepsilon_{AB}\rightarrow\tilde{\varepsilon}_{AB}=i\varepsilon_{AB}$.

Now, consider the relation $\delta_{A}^{D}\delta_{B}^{C}\tilde{\varepsilon}_{CD}=-\tilde{\varepsilon}_{AB}$,
which gives the diagram

\begin{center}
\includegraphics[scale=0.4]{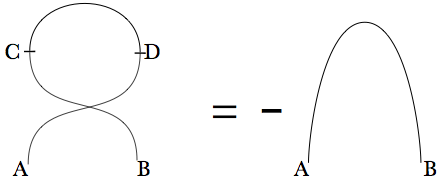}
\par\end{center}

Such difficulty can be solved by associating a minus sign to each
crossing. All these modifications ensure the planar isotopy and as
explained above gives us the possibility to perform algebraic calculations
in a more transparent way (cf. Example \prettyref{exa:algebra-with-diagrams}).
For instance, the value of a simple closed loop is negative, since
$\tilde{\varepsilon}_{AB}\tilde{\varepsilon}^{BA}=-\varepsilon_{AB}\varepsilon^{BA}=-2$.

There is an important relation called the skein relation (or spinor
identity) which is derived from a known identity concerning the product
of two antisymmetric tensors%
\footnote{Note that for the diagrammatic relation the sign of the crossing changes.%
}:

\begin{center}
\includegraphics[scale=0.4]{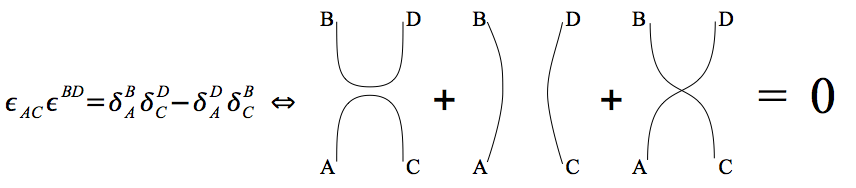}
\par\end{center}

This algebra is ``topological'', i.e. any two diagrams which are
homotopic to each other represent two equivalent algebraic expressions.
The spin network diagrammatics are topologically invariant in a plane
as a consequence of a result of Reidemeister: a knot, i.e. an embedding
of $S^{1}$ in a three dimensional Euclidean space, is homotopic to
another knot, if and only if, the planar projection of the knots can
be transformed into each other via a finite sequence of Reidemeister
moves%
\footnote{In three dimensions one has different types of crossing, the ``over
crossing'' and the ``under-cross''. Here we just depicted the projections
on the plane so, in the general case, on each intersection one has
to keep in mind the additional structure of the crossing, cf. \prettyref{sec:Some-More-Diagrammatics}. %
}:
\begin{itemize}
\item \textbf{Move 0: }In the plane of projection, one can deform the following
curve smoothly \\
\includegraphics[scale=0.5]{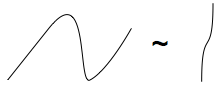}
\item \textbf{Move I:} A curl may be undone since we are dealing with one
dimensional objects \\
\includegraphics[scale=0.5]{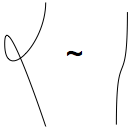}
\item \textbf{Move II:} Overlaps on the projection plane of distinct curves
are not knotted\\
 \includegraphics[scale=0.5]{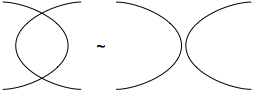}
\item \textbf{Move} \textbf{III:} Planar deformations under or over a diagram
\\
\includegraphics[scale=0.6]{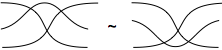}
\end{itemize}
With a finite number of these moves the projection of a knot may be
transformed into the projection of any other knot which is topologically
equivalent to the first one, this equivalence is called ambient isotopy.
Planar isotopy is a special case of this, with the restriction that
there are no crossings, only intersections, cf. \prettyref{sec:Some-More-Diagrammatics}.

Finally, we give the definition of an edge as the one used by Penrose
in the discussion above. This edge must be independent of all the
possible linear relations the graphs might have, e.g. skein relations.
Thus, for the $n$ ``strands'' of an edge one sums over all permutations
of the lines considering the sign of the permutation involved:
\begin{defn}
An \textbf{n-edge} (or n-unit) is a set of lines woven into a single
graph denoted by\\

~~~~~~~~~~~~~~~~~~~~~~~~~~~~~~~~~\includegraphics[scale=0.4]{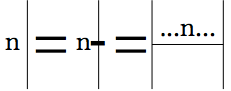},

The weaving is according to the following procedure, \label{def:The-weaving}

~~~~~~~~~~~~~~~~~~~~~~~~~~~\includegraphics[scale=0.5]{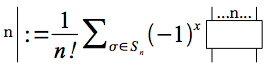},

where $x$ is the minimum number of transpositions needed to generate
the corresponding permutation and the box with the $n$ strands denotes
their permutation $\sigma$, cf. Section \prettyref{sub:Temperley-Lieb-Algebra}.
\end{defn}
As pointed out in \cite{penrose1971applications} this is exactly
(in the case of ordinary tensors) the so-called \emph{generalized
Kronecker delta} 
\[
\left|\begin{array}{cccc}
\delta_{p}^{a} & \delta_{p}^{b} & \dots & \delta_{p}^{f}\\
\delta_{q}^{a} & \delta_{q}^{b} & \dots & \delta_{q}^{f}\\
\vdots & \vdots & \ddots & \vdots\\
\delta_{u}^{a} & \delta_{u}^{b} & \dots & \delta_{u}^{f}
\end{array}\right|=\delta_{pq\dots u}^{ab\dots f}
\]
since each term in the definition above is an outer product of $n$
``Kronecker deltas''.
\begin{rem}
\label{rem:properties-of-antisymmetrizer}\textit{(i)} The antisymmetrizer
of the lines in the definition of the $n$-edge actually symmetrizes
the indices in the $\delta\epsilon-$world since each crossing provides
an extra sign to the corresponding term, \cite{major1999spin}; \textit{(ii)}
The properties of the antisymmetrizer are the following, cf. Section
\prettyref{sub:Temperley-Lieb-Algebra}:\end{rem}
\begin{itemize}
\item \textbf{Irreducibility}, this means that an edge vanishes when a pair
of lines is retracted:\\
\includegraphics[scale=0.4]{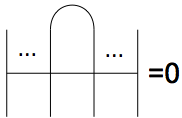}
\item The antisymmetrizer act as \textbf{projectors}:\\
\includegraphics[scale=0.5]{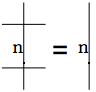}
\item The \textbf{loop value} of an $n$-edge is an integer%
\footnote{\label{fn:loop-value-recursion-relation}The multiplicity $(n+1)$
follows from the recursion relation $\Delta_{n+2}=(-2)\Delta_{n+1}-\Delta_{n}\,;\:\Delta_{0}=1,\,\Delta_{1}=-2$,
cf. Sec. \prettyref{sub:Temperley-Lieb-Algebra}.%
}:\\
\includegraphics[scale=0.5]{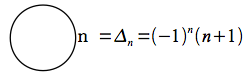} 
\end{itemize}
Now we can define a spin network as consisting of a graph with edges,
vertices and labels. The labels represent the number of strands woven
into edges and any vertex with more than three incident edges must
also be labeled to specify a decomposition into trivalent vertices,
cf. Section \prettyref{sub:Recoupling-Theory} and \cite{rovelli1995spin}.

As the strands from which spin networks are woven can take two values,
they are well-suited to represent two-state systems, hence the name
``spin network''. Thus, it is possible to include the $\left|jm\right\rangle $
representation of angular momentum into the diagrammatics of spin
networks, where each $n$-edge corresponds to the $\frac{n}{2}$-irreducible
representation of SU(2), hence all results of representation theory
have a diagrammatic form. For a detailed description of the angular
momentum representation consult \cite{major1999spin}. In fact, as
described above, this specific representation is only a special case
from a much richer and general theory involving representations of
other algebras which will be explained in the next chapter, but first
it is important to explore the motivation to consider these abstract
objects in the first place. We will do so by considering general relativity
in the framework of the Ponzano-Regge theory.

\section{General Relativity\label{sec:General-Relativity}}

In 1961 T. Regge described an approach to the theory of Riemannian
manifolds that enable the description of general relativity without
the use of coordinates, \cite{regge1961general}. Following the original
work of T. Regge in this section we describe the main ideas of his
approach.

As it is generally known, one can define an intrinsic Gaussian curvature
on any surface by carrying out measurements on geodesic triangles,
\cite[p. 336f]{misner1973gravitation}. This is due to the fact that,
if the triangle $T$ has non Euclidean geometry, in general $\alpha+\beta+\gamma\neq\pi$,
where $\alpha,\beta,\gamma$ are the internal angles of $T$. From
this, one may define the Gaussian curvature $\epsilon_{T}$ of $T$
by $\epsilon_{T}=\alpha+\beta+\gamma-\pi$. 

To get a relation between the area of a spherical triangle $T$ and
the radius $R$ of the sphere consider the area of a segment $S_{\alpha}$
of a sphere. If this segment is built from two arc segments, i.e.
segments of two great circles, with an angle $\alpha$ between them,
leaving both from the point $A$ on the sphere and intersecting in
the opposite point $A'$, then $S_{\alpha}=2\alpha R^{2}$. Now consider
three great circles forming $T$ by intersecting at the points $A,B,C$
and dividing the surface of the sphere in different segments. With
the above relation for each segment of the sphere one obtains the
following relation between the radius $R$, the area $A_{T}$ of the
triangle $T$ and the Gaussian curvature $\epsilon_{T}$ (cf. \cite{Soprunov}),
\[
\epsilon_{T}=\frac{A_{T}}{R^{2}}.
\]

Suppose that $T$ shrinks to a point $P$ and the limit $\frac{\epsilon_{T}}{A_{T}}\underset{A\rightarrow0}{\rightarrow}K(P)$
exists independent of the limiting procedure, then we can take 
\begin{equation}
K(P)=\underset{A\rightarrow0}{lim}\frac{\text{\ensuremath{\epsilon}}_{T}}{A_{T}}\label{eq:def-Gaussian-curvature}
\end{equation}
as the definition of (local) Gaussian curvature at the point $P$. 

Consider a polyhedron M and suppose the triangle $T$ lies in the
interior of a face of $M$ or an edge of $M$ crosses it. Then $\epsilon_{T}=0$
since in the former case T is flat and in the latter the neighborhood
of a point in an edge is homeomorphic to a neighborhood in the plane.
Therefore $K(P)=0$ if $P$ is not a vertex. On the other hand, if
$T$ contains a vertex $V$ the Gaussian curvature is independent
of the form of $T$ and $\epsilon_{T}=\epsilon_{V}$ is a constant
obtained from the relation
\begin{equation}
\epsilon_{V}=2\pi-\sum_{i}\alpha_{i}\label{eq:defect-angle-and-curvature}
\end{equation}
where the sum is over the angles $\alpha_{i}$ at the vertex $V$.
If $T$ contains several vertices, then we have 
\[
\epsilon_{T}=\sum_{V}\epsilon_{V}
\]
so if we think of $K(P)$ as a Dirac distribution with the vertices
V as support, i.e. $K(P)\sim\sum_{V}\delta(P-V)$, we can use the
original formula for the Gaussian curvature $\epsilon_{T}=\int_{T}K(P)dA$.
Thus, from the above, we have a relation between the curvature of
a polyhedron and the deficiency of its vertices, \cite{regge1961general}. 
\begin{example*}
The simplest triangulation of $S^{2}$ is a (regular) tetrahedron.
Hence, the deficiency of each vertex is $\epsilon_{V}=2\pi-(\alpha_{1}+\alpha_{2}+\alpha_{3})=\pi$.
This gives an Gaussian curvature for the tetrahedron of $4\pi$. Compare
this result with the integral curvature of a sphere with radius $R$
and $K=1/R^{2}$, which is $\epsilon_{S}=\int_{S}KdA=4\pi$.
\end{example*}
Now, the Gauss-Bonnet theorem for a closed surface%
\footnote{By closed surface it is meant a compact two-dimensional manifold without
boundary.%
} links the curvature of a manifold $M$ with the Euler characteristic
$\chi(M)=2-2g$, where $g$ is the genus of $M$,
\[
\int_{M}KdA=2\pi\chi(M).
\]

Hence, in the case of a polyhedron, the deficiency angles at the vertices
and the Euler characteristic are connected such that the Poincaré-Euler
formula is obtained directly by first carrying out the sum over all
faces $f$ having $V$ as a common vertex (cf. \prettyref{eq:defect-angle-and-curvature})
following a summation over all vertices and then the other way around,
first vertices and then faces. Since all faces of $M$ are triangles%
\footnote{From this follows that the sum of angles over vertices $V$ and faces
$F$, $\sum_{F,V}\alpha_{FV}=\pi F$. Furthermore, we have the relation
$2E=3F$ between the number of edges and faces of a polyhedron.%
}, we obtain
\[
V-E+F=2-2g
\]
where $V$ is the number of the vertices, $F$ the number of faces
and $E$ the number of edges.

After this small deviation showing the tight relation between geometry
and topology, consider a triangulation of a manifold $M$. Knowing
all lengths of all edges and the connection matrix%
\footnote{The connection matrix contains the information about the relation
between faces, edges and vertices. Hence it corresponds to a metric
in the continuous case.%
} gives us enough information to know also all internal angles $\alpha_{fn}$
of the face $f$ with vertex $V_{n}$. Thus we know all deficiencies
$\epsilon_{n}$ at all vertices and therefore the intrinsic curvature
of $M$. Notice that whenever the number of vertices (and with it
the number of edges and faces) increases the local Gaussian curvature
approximates to a continuous function of the density of vertices $\rho$
and their deficiency $\epsilon$, $K(P)=\rho\epsilon$, when the variation
of the product in the triangle T around the point $P$ is small, cf.
Eq. \prettyref{eq:def-Gaussian-curvature}.

In the case of higher dimensional cell complexes the notion of a geodesic
triangle has to be replaced by the notion of parallel transport, i.e.
an orthogonal mapping between the tangent space $T_{P}M$ at P, and
the tangent space $T_{Q}M$ at another point Q, whenever the points
P and Q are connected by a path $a$ in $M$. If $a$ is a loop, $T_{P}M$
is mapped to itself, the mapping being a rotation around P by an angle
$\epsilon(a)$. Since the rotation is proportional to the curvature
(cf. \cite[Sec. 7.3.2]{Nakahara2003Geometry}), we obtain 
\[
\epsilon(a)=\int_{a}KdA.
\]

This is an additive function of the loops, i.e. if $a,b$ are loops,
then $\epsilon(a\text{\textbullet}b)=\epsilon(a)+\epsilon(b)$, where
\textbullet{} denotes the product of loops.%
\footnote{This product is defined as following: \\
Let $M$ be a topological space. For each two loops $a,b\,:\left[0,\,1\right]\rightarrow M$
at a point P with $a(1)=b(0)$, $a\text{\textbullet}b(t)=\begin{cases}
a(2t)\;0\leq t\leq\frac{1}{2} & b(2t-1)\;\frac{1}{2}\leq t\leq1\end{cases}$ where $t\in\left[0,\,1\right]$. This product is naturally extended
to the homotopy class of the loops in $M$ such that, together with
the homotopy class of inverse loops defined as $a^{-1}(s)\equiv a(1-s),\; s\in\left[0,\,1\right]$
for all $a:\left[0,\,1\right]\rightarrow M$ and the unit element
as the homotopy class of loops homotopic to $P$, it defines a group
called the \textit{fundamental group}, \cite{Nakahara2003Geometry}.%
} Whenever a vertex lies on the curve the parallel transport is not
defined unambiguously, thus we consider the homotopy equivalence only
in $M/V$ where $V$ denotes the set of all vertices in $M$ so the
fundamental group of $M/V$, denoted by $\pi_{1}(M/V)$, is not trivial
if $V\neq\emptyset$. 

Now, each loop at P is associated to an orthogonal matrix $S(a)$
acting on $T_{P}M$. Since a loop homotopic to P is the boundary of
a simply connected region in $M/V$, there is no vertex in that region
of $M$ so the curvature vanishes there. Hence the unit element $[u]\in\pi_{1}(M/V)$
represents the identity matrix in the association made before. On
the other hand, consider a loop $c=a\text{\textbullet}b$ which is
not homotopic to P. Then the vector $x\in T_{P}M$ will be parallel
transported first along $a$ then along $b$, such that the resulting
rotation of $x$ is associated to a matrix $S(c)=S(b)S(a)$, where
$S(c),\, S(b),\, S(a)$ are the matrices associated to the loops $c,\, b,\, a$
respectively. In fact, the association is not between loops and orthogonal
matrices, but rather the homotopy classes, i.e. the elements of $\pi_{1}(M/V)$,
are represented by orthogonal matrices; to see this, we use the fact
that if $c,\, c'$ are loops in the same class, then $c'=v\text{\textbullet}c$
where $v\in[u]$. Thus, with the previous, we have the relation $S(c')=S(c)S(v)=S(c)$,
\cite{regge1961general}.

As an example, consider a one-sheeted cone, which is a polyhedron
with one vertex only. It can be parametrized similar to a plane by
polar coordinates $(\rho,\,\theta)$ such that the metric is given
by $ds^{2}=d\rho^{2}+\rho^{2}d\theta^{2}$, but with the identification
of points with the same $\rho$ and angles $\theta$ differing by
a multiple of $2\pi-\epsilon$, where $\epsilon$ is the deficiency
of the vertex $\rho=0$. This manifold is called the $\epsilon-$cone
and it can be extended to higher dimensional spaces by considering
the product $\mathbb{R}^{n-2}\times\epsilon-cone$ with the metric
$ds^{2}=d\bar{z}^{2}+d\rho^{2}+\rho^{2}d\theta^{2}$ where $\bar{z}\in\mathbb{R}^{n-2}$.
This manifold, called the $\epsilon-n-cone$, is Euclidean everywhere
but the $n-2$ dimensional flat subset $\rho=0$. 

Now, consider a loop in the $\epsilon-3-$cone around the line $\rho=0$,
called the \textbf{bone}, with $\theta(0)=0$ and $\theta(1)=N(2\pi-\epsilon)$,
where $ $$N$ gives the winding number. Since only loops that are
completed around the bone are not null homotopic, i.e. homotopic to
a point, loops with the same N belong to the same homotopy class.
Let the homotopy class be denoted by $a(N)$, then the multiplication
of two equivalent classes is given by $a(N)\text{\textbullet}a(M)=a(N+M)$,
so the fundamental group is isomorphic to $(\mathbb{Z},\,+)$. The
orthogonal matrix associated to N is found by considering the fact
that $S(0)=1_{n\times n}$ and $S(M)S(N)=S(N+M)$ must hold, thus
we can write $S(N)=S^{N}$ where $S$ is called the generator of the
bone. Let $V\in T_{P}\mathbb{R}^{n-2}\times\epsilon-cone$ be a vector
in P, which can be split into orthogonal components $V^{||},\, V^{\perp}$
lying in the subspaces $\mathbb{R}^{n-2}$ and $\epsilon-cone$ respectively.
If we take $V$ along $a(1)$, the component $V^{||}$ will remain
unaffected by the process, i.e. $S(1)V^{||}=V^{||}$, so the subspace
$\mathbb{R}^{n-2}$ is invariant under the action of the orthogonal
$n\times n-$matrix $S(N)$. However, the component $V^{\perp}$ will
be rotated by the angle $\epsilon$. Hence, the general form $S(N)$
is a $n\times n-$matrix with two block-matrices in the diagonal,
the first one is the identity matrix $1_{(n-2)\times(n-2)}$, and
the second one is a rotation matrix describing a rotation on the $\epsilon-cone$
by the angle $N\epsilon$. In the particular case of an $\epsilon-n-$cone
the fundamental group and its associated transformation group of orthogonal
$n\times n-$matrices are Abelian, \cite{regge1961general}.

The $\epsilon-n-$cone described above is an example of n-dimensional
generalizations of polyhedra, called skeleton spaces, where the curvature
of the manifold $M$ is a consequence of a $(n-2)-$dimensional subset
$w\subset M$, called the skeleton of $M$. For the definition of
these generalizations we start from a simplectic decomposition of
the n-dimensional manifold. This determines the topology of this space
but not the metric, which has to be defined.
\begin{defn}
As defined by Regge in \cite{regge1961general}, the (Euclidean) metric
in $M$ is given by the following axioms:\end{defn}
\begin{enumerate}
\item The metric in the interior of any n-dimensional closed simplex $T_{n}$
is Euclidean%
\footnote{This means that if one defines a coordinate system in $T_{n}$, one
can also give the coordinates of the points of the boundary of $T_{n}$
\uline{in this frame}.%
}.
\item In the metric of $T_{n}$, its boundary is decomposable into the sum
of $n+1$ closed simplexes $T_{n-1}$, which are flat.
\item If a simplex $T_{n-1}$ is common boundary of $T_{n}$ and $T'_{n}$,
the distance of any two points of $T_{n-1}$ is the same in both frames
of $T_{n}$ and $T'_{n}$.%
\footnote{Even if the coordinates might be different, the distance between two
points in $T_{n-1}$ is invariant under the change of coordinates
from one simplex to the other. %
}
\item If $P\in T_{n},\; P'\in T'_{n}$ and $P,\, P'$ are ``close enough''
to $T_{n-1}$ we define the distance $PP'$ as $d(P,\, P')=\inf_{Q\in T_{n-1}}\{PQ+QP'\}$.
\end{enumerate}
On the interior of $T_{n-1}$ the metric is well defined and Euclidean
as well. However, it might not be the case on the $T_{n-2}$'s of
the boundary of $T_{n-1}$. The problem with the metric on $T_{n-2}$
is that, with these axioms, it is not possible to define a metric
since starting with the metric on a $T_{n-1}$ that is common boundary
of some n-simplexes does not ensure that at its boundary, i.e. at
the $T_{n-2}$ which is common boundary of \textit{other} $T_{n-1}$'s,
the metric will continue to be well defined, it could be that the
metric defined in other adjacent $T_{n-1}$ is different such that
the metric would not be unambiguous.

This definition is enough to join smoothly neighboring simplexes so
that one can construct a manifold which is everywhere Euclidean in
$M/w$. In the 3-dimensional case the bones are straight segments
connecting two 0-simplexes, so locally the bone is an $\epsilon-3-$cone.
If the lengths of these bones are chosen at random, the sum of all
dihedral angles around the $i$-th bone will be $2\pi-\epsilon_{i}$. 

Consider a 0-simplex $T_{0}$ in the above 3-dimensional manifold
$M_{3}$ being the common end of several oriented bones $T_{1}^{1},\dots,T_{1}^{m}$.
The orientation of the bones is such that all $T_{1}^{i}$'s have,
from $T_{0}$, outgoing direction; such a 0-simplex is called an oriented
m-joint. Consider an idealized isolated m-joint where all the bones
extend up to infinity and the set of plane sectors $A_{p}$ formed
by two contiguous bones $T_{1}^{p},\, T_{1}^{p+1}$ have the property
that $A_{p}\cap A_{p+1}=T_{1}^{p+1}\;\forall p\in\{1,\dots,\, m\}$
and $A_{i}\cap A_{j}=\emptyset\;\forall i,\, j\in\{1,\dots,\, m\}$,
with $j\neq i+1$, where we identify $m+1\triangleq1$. Then $\mathcal{A}=\underset{p}{\cup}A_{p}$
divides the space $M$ in two regions $M'\text{ and }M''$. Suppose
that $P\in M'$, $Q\in M''$ and that there are m paths $t_{p}$ connecting
P to Q. Let $t_{p}$ be such that it intersects $\mathcal{A}$ only
one time and only at $A_{p}$, therefore $a_{p}=t_{p}t_{p+1}^{-1}$
is a loop%
\footnote{The definition of the product of two paths is the same as the one
for the product of loops. Here we do not write explicitly the dot
\textbullet{}. %
} around $T_{1}^{p}$. Notice that $a_{p}$ encircles $T_{1}^{p}$
only once, thus we have the identity $a_{1}\text{\textbullet}a_{2}\text{\textbullet}\dots\text{\textbullet}a_{p}=t_{1}t_{2}^{-1}\text{\textbullet}t_{2}t_{3}^{-2}\text{\textbullet}\dots\text{\textbullet}t_{m}t_{m+1}^{-1}=u$,
which translate for the generators
\begin{equation}
S_{1}S_{2}\dots S_{m}=1_{3\times3}\label{eq:fundamental-relation-regge}
\end{equation}

\begin{example}
Consider a 4-joint. Then we have the relation $S_{1}S_{2}S_{3}S_{4}=1_{3\times3}$,
where $S_{i}$ is the representation of the homotopy class corresponding
to one loop around the i-th bone. Since our manifold is three-dimensional
we can write this representation as
\[
S_{i}=\left(\begin{array}{ccc}
1 & 0 & 0\\
0 & cos(\epsilon_{i}) & -sin(\epsilon_{i})\\
0 & sin(\epsilon_{i}) & cos(\epsilon_{i})
\end{array}\right)
\]
where $\epsilon_{i}$ is the deficiency at the i-th bone. Carrying
out the multiplication of the matrices we find in this case 
\[
\sum_{i=1}^{4}\epsilon_{i}=2\pi n.
\]

\end{example}
The relation \prettyref{eq:fundamental-relation-regge} is the geometrical
version of the Bianchi's identity encountered in a differential Riemannian
manifold, as shown in \cite{regge1961general}. In order to see this,
Regge first shows the relation between the concepts discussed previously
and the (coordinate expression of) Riemann curvature tensor by considering
the transition from a skeleton space to a differentiable manifold,
which is accomplished by increasing the density $\rho$ of the bones
but keeping the local curvature $\rho\epsilon$ slowly varying. 

Consider a bundle of parallel bones on $M_{3}$ which induce a small
curvature. Test the curvature of the manifold by carrying a vector
$V$ along a small loop with area%
\footnote{We change the notation at this point to emphasize that the area has
a normal vector $\overrightarrow{n}$.%
} $\overrightarrow{\Sigma}=\Sigma\overrightarrow{n}$. Hence, the vector
$V$ would be rotated around a unit vector $U$ parallel to the bones
by the angle $\sigma=N\epsilon$, where $N$ is the number of bones
through $\Sigma$ and it was assumed that each bone contribute the
same deficiency $\epsilon.$ Assuming a uniform density $\rho$ of
bones and $\overrightarrow{\Sigma}||U$, we have $N=\rho(U,\overrightarrow{\Sigma})=\rho\Sigma$,
thus $\sigma=\rho\epsilon\Sigma\text{ or }\rho\epsilon\Sigma_{\gamma}U^{\gamma}$.
Suppose the rotation is infinitesimal, such that in first order the
rotated vector can be expressed as $V'=(1+\sigma U\overrightarrow{S})V$
where $S_{\kappa}$ are the generators of the rotation around $U$.
Hence, for the change of the vector we obtain%
\footnote{With the following notation is intended to signalize that, even if
the example is given for a 3-Manifold, the results can be generalized
for higher dimensional cases.%
}
\begin{eqnarray*}
\Delta V & = & \sigma(U\overrightarrow{S})V=\sigma U\wedge V,\\
\Delta V^{\alpha} & = & (\rho\epsilon\Sigma_{\kappa}U^{\kappa})(\epsilon^{\alpha\beta\gamma}U_{\beta}V_{\gamma}).
\end{eqnarray*}

Inserting $\delta_{\kappa}^{\tau}=\frac{\text{1}}{2}\epsilon^{\delta\omega\tau}\epsilon_{\delta\omega\kappa}$
and defining $U^{\alpha\gamma}=\epsilon^{\alpha\gamma\beta}U_{\beta}$
and $\Sigma^{\delta\omega}=\epsilon^{\delta\omega\tau}\Sigma_{\tau}$
we obtain
\[
\Delta V^{\alpha}=\frac{1}{2}\rho\epsilon U^{\gamma\alpha}U_{\delta\omega}\Sigma^{\delta\omega}V_{\gamma}
\]
which we compare with the standard form of the coordinate expression
for the Riemann curvature tensor (cf. \cite{Nakahara2003Geometry})
\[
\Delta V^{\mu}=R_{\quad\lambda\nu}^{\kappa\mu}\varepsilon^{\lambda}\delta^{\nu}V_{\kappa};\text{ where }\varepsilon^{\lambda}\delta^{\nu}=\frac{1}{2}\epsilon^{\lambda\nu\tau}\Sigma_{\tau}.
\]

Hence we obtain the relation for the approximation to differential
manifolds given by Regge
\[
R_{\mu\sigma\alpha\beta}=\rho\epsilon U_{\mu\sigma}U_{\alpha\beta}
\]
which has all symmetry properties of the Riemann tensor, like the
first Bianchi's identity $ $$U_{\mu\sigma}U_{\alpha\beta}+U_{\mu\alpha}U_{\beta\sigma}+U_{\mu\beta}U_{\sigma\alpha}=0$
and with $U_{\mu\sigma}U^{\mu\sigma}=2$ we have that the scalar curvature
$R=2\rho\epsilon$ is independent of the dimension $n$ of the space
and twice the Gaussian curvature $K(P)$. Observe also that the orientation
of a bone in $M_{n}$ is determined by the skew symmetric tensor $U_{\mu\sigma}$,
since in the case of a parallel bundle of bones, we are able to choose
coordinates $\{x_{i};\, i=1,\dots,n\}$, such that $x_{1},\, x_{2}$
are perpendicular and $ $$x_{j}\,(j\geq3)$ are parallel to the bone.
Thus, $U_{12}=-U_{21}=1$ while the other components of the tensor
vanish.
\begin{rem*}
\textit{(i)} If the deficiencies $\epsilon_{p}$ are small, \prettyref{eq:fundamental-relation-regge}
can be written as
\[
\sum_{p=1}^{m}\epsilon_{p}U_{\mu\sigma}^{p}=0.
\]

\textit{(ii)} Assuming a distribution $\rho$ of identical m-joints
we can write the Riemann tensor as a sum over the $m$ bundles of
parallel bones, where the $p$-th bundle has defect $\epsilon_{p}$,
density $\rho_{p}$ and direction $U^{p}$, as follows:
\[
R_{\alpha\beta\lambda\delta}=\sum_{p=1}^{m}\epsilon_{p}\rho_{p}U_{\alpha\beta}^{p}U_{\lambda\delta}^{p}.
\]
This leads to the correspondence between \prettyref{eq:fundamental-relation-regge}
and the Bianchi's identities, \cite{regge1961general}.

\textit{(iii)} For a non-positive definite metric of a skeleton space,
the generators of the bones are Lorentz matrices. In the case $n=4$
the bones are triangles with area 
\begin{equation}
4L^{2}=(A_{\mu}B^{\mu})^{2}-(A_{\mu}A^{\mu})(B_{\mu}B^{\mu}),\label{eq:appropiate-bone-area}
\end{equation}
$A_{\mu},\, B_{\mu}$ being two sides of the triangle. For an indefinite
metric with one time coordinate and the metric signature $(-+++)$
there are three types of bones: spacelike, null and timelike. Note
that the form of the above relation has the opposite sign with respect
to the relation for the area of a triangle in the three dimensional
case, which follows from $2L=A\wedge B$. This is due to the fact
that, if $B=(B_{0},0)$ and $A$ is lying in the $z-$direction, then
$L$ lies in the $xy-$plane, thus its magnitude has to be thought
of as a \textit{real} quantity. Since in this case $A_{\mu}B^{\mu}=0$,
$B^{2}<0$ and $A^{2}>0$ it follows that $A^{2}B^{2}<0$ but $L^{2}>0$
in order for $L$ to have a real magnitude, \cite{misner1973gravitation}.
Therefore the relation \prettyref{eq:appropiate-bone-area} is the
appropriate one and we consider only timelike bones, such that the
deficiencies are all real.
\end{rem*}
To get the field equations in this framework Regge considered approximating
an Einstein space with the action
\[
\mathcal{S}=\frac{1}{16\pi}\int R\sqrt{-g}d^{4}x
\]
with a skeleton space using the variational principle for which $\mathcal{S}$
is calculated on the skeleton. Recalling the discussion at the beginning
of this section, notice that there is no contribution to the action
coming from the interior of the $T_{4}$ and $T_{3}$, this means
that $R$ is a distribution with support $w$. Hence, it can be expressed
in terms of the deficiencies and areas of the bones, labelled by $n$:
\[
\mathcal{S}=\frac{1}{8\pi}\sum_{n}L_{n}\epsilon_{n}.
\]

Since the lengths $l_{p}$ of all $T_{1}^{p}$ contain the same information
as the metric tensor, the variation of $\mathcal{S}$ is carried out
varying these lengths. Regge proved in the appendix of \cite{regge1961general}
that it is possible to carry out the variation as if the deficiencies
$\epsilon_{n}$ were constants, thus we obtain a set of field equations
of the form%
\footnote{Here we only consider the 4-dimensional case, but it can be generalized
by taking $L_{n}$ as the $(m-2)$-dimensional measure of $T_{m-2}^{n}$.%
}
\[
\sum_{n}\epsilon_{n}\frac{\partial L_{n}}{\partial l_{p}}=0
\]
where $p$ runs through all the 1-simplexes in the decomposition of
the manifold. With the help of the law of cosines $l_{p}^{2}=l_{p'}^{2}+l_{p''}^{2}-2l_{p'}l_{p''}cos\theta_{pn}$,
for the angle $\theta_{pn}$ opposite to $l_{p}$ in $T_{2}^{n}$,
the fact that $l_{p}=\sqrt{A_{\mu}A^{\mu}}$ for the edge $A_{\mu}$,
and $\frac{\partial L^{2}}{\partial l_{p}}=2L\frac{\partial L}{\partial l_{p}}$
we obtain 
\[
\frac{\partial L_{n}}{\partial l_{p}}=\frac{1}{2}l_{p}ctg\theta_{pn}.
\]

Hence, Einstein's equations%
\footnote{It can be shown that these equations degenerate into $R_{\mu\nu}=0$
for a differential manifold, \cite{regge1961general}. Hence, the
equations \prettyref{eq:combinatorial-EFE-empty-space} describe the
combinatorial version of an empty space without cosmological constant. %
} for an empty space in the Regge approximation are given by
\begin{equation}
\sum_{n}\epsilon_{n}ctg\theta_{pn}=0,\label{eq:combinatorial-EFE-empty-space}
\end{equation}
 where the sum runs over all 2-simplexes that have the $p$-th edge
in common. 

To summarize, Regge calculus is an approximation of a smoothly curved
$n$-dimensional Riemannian manifold in terms of a collection of $n$-dimensional
simplexes, each of them being flat in its interior and joined at their
lower-dimensional faces. The curvature of the manifold is contained
in the deficiency angles of the $(n-2)$-skeleton geometry together
with its measure. By allowing the number of $(n-2)$-faces to go to
infinity and keeping the limit \prettyref{eq:def-Gaussian-curvature}
slowly varying one obtains the volume integral of the scalar curvature
of the original smooth manifold.

\section{Connection between General Relativity and Spin Networks\label{sec:Connection-between-GR-SN}}

Once we have the concept of spin networks and the Regge calculus an
immediate question regarding the relation between both arises. How
can an object developed from basic notions of quantum theory for which
there was no notion of space-time be related to a conceptually different
theory, where space-time is the fundamental object? In 1969 G. Ponzano
and T. Regge gave us some insight into this relation by explaining
the relation between the asymptotic formula for the 6j-symbols, which
are the coupling coefficients for a system consisting of three spins,
and the path integral over the exponential of the integral over the
Lagrangian density in a 3-dimensional Einstein theory, which resembles
in a remarkable way a Feynman summation over histories as in QFT,
\cite{ponzano1969semiclassical}. In this section we will follow Ponzano's
and Regge's article leaving out the details and concentrating in the
main points which are the asymptotic form and physical interpretation
of the 6j-symbols, their generalization to 3nj-symbols and the rise
of the ``Feynman integral'' out of the asymptotic formula.

Before we start describing the above mentioned relation, one should
notice the association of the 6j-symbol $\left\{ \begin{array}{ccc}
a & b & c\\
d & e & f
\end{array}\right\} $ to a tetrahedron $T$ depicted by

\begin{center}
\includegraphics[scale=0.5]{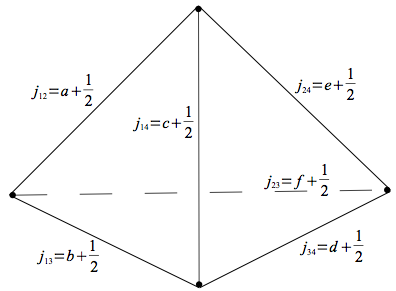}
\par\end{center}

where the length of an edge is chosen to be $a+\frac{1}{2}$ for a
better approximation, rather than just $a$, to the length $\left[a(a+1)\right]^{\frac{1}{2}}$
for higher quantum numbers. A sufficient condition for the existence
of a non-vanishing 6j-symbol is the restriction to those values of
the angular momenta which satisfy the triangular inequalities for
the set of three edges building a triangle, e.g. $|b-c|\leq a\leq b+c$.
This is, however, not enough for the existence of $T$. With the help
of Tartaglia's formula for the square of the volume $V$ of the tetrahedron
\[
2^{3}(3!)^{2}V^{2}=\left|\begin{array}{ccccc}
0 & j_{34}^{2} & j_{24}^{2} & j_{23}^{2} & 1\\
j_{34}^{2} & 0 & j_{14}^{2} & j_{13}^{2} & 1\\
j_{24}^{2} & j_{14}^{2} & 0 & j_{12}^{2} & 1\\
j_{23}^{2} & j_{13}^{2} & j_{12}^{2} & 0 & 1\\
1 & 1 & 1 & 1 & 0
\end{array}\right|
\]
we see that the condition $V^{2}\geq0$ is a necessary and sufficient
condition for the existence of $T$. In what follows we will always
assume that this condition is satisfied.%
\footnote{In \cite{ponzano1969semiclassical} the case where $V^{2}<0$ is also
treated in an analogous way to the WKB approximation method. In this
region, the 6j-symbols have an exponential decrease, hence they do
not vanish completely, but it can be interpreted as the impossibility
of having six angular momenta forming a non-existing T.%
}

\subsection{Asymptotic formula for the 6j-symbols}

The $6j$-symbols $\left\{ \begin{array}{ccc}
j_{1} & j_{2} & j_{12}\\
j_{3} & J & j_{23}
\end{array}\right\} $ are defined as recoupling coefficients of the quantum mechanical
coupling of three angular momenta%
\footnote{In the semiclassical limit we expect to be able to write the angular
momenta as vectors in a three dimensional Euclidean space and the
coupling as a vector sum of the angular momenta involved.%
} $\overrightarrow{j_{1}}$, $\overrightarrow{j_{2}}$ and $\overrightarrow{j_{3}}$
to $\overrightarrow{J}$. The defining relation is, similar to the
Clebsch-Gordan coefficients, the change of basis from a given configuration
of the angular momenta to another, both represent the two different
ways of coupling $\overrightarrow{j_{1}}$, $\overrightarrow{j_{2}}$
and $\overrightarrow{j_{3}}$ to a given $\overrightarrow{J}$, by
(1) first coupling $\overrightarrow{j_{1}}$ and $\overrightarrow{j_{2}}$
to $\overrightarrow{j_{12}}$ and \emph{then} coupling this with $\overrightarrow{j_{3}}$
or (2) first coupling $\overrightarrow{j_{2}}$ and $\overrightarrow{j_{3}}$
to $\overrightarrow{j_{23}}$ and \emph{then }coupling this to $\overrightarrow{j_{1}}$%
\footnote{The order of coupling determines the phase of the resulting state,
cf. \cite{edmonds1996angular}.%
}. The resulting basis vectors for the angular momenta are in general
different and are related by
\[
\left|(j_{1},\,(j_{2},j_{3})\, j_{23}),\, J\right\rangle =\sum_{j_{12}}\left[(2j_{12}+1)(2j_{23}+1)\right]^{\frac{1}{2}}(-1)^{j_{1}+j_{2}+j_{3}+J}\left\{ \begin{array}{ccc}
j_{1} & j_{2} & j_{12}\\
j_{3} & J & j_{23}
\end{array}\right\} \left|((j_{1},j_{2})\, j_{12},\, j_{3}),\, J\right\rangle .
\]

Consider the option (1), then the probability that the sum of $\overrightarrow{j_{2}}$
and $\overrightarrow{j_{3}}$ has length in the interval $\left[j_{23},\, j_{23}+dj_{23}\right]$
is given by 
\begin{equation}
(2j_{12}+1)(2j_{23}+1)\left\{ \begin{array}{ccc}
j_{1} & j_{2} & j_{12}\\
j_{3} & J & j_{23}
\end{array}\right\} ^{2}dj_{23}\label{eq:6j-symb-and-probability}
\end{equation}

On the other hand, if $j_{1},\, j_{2},\, j_{3},\, j_{12},\, J$ are
fixed the dihedral angle $\theta$ between the planes spanned by $\overrightarrow{j_{1}},\,\overrightarrow{j_{2}}$
and $\overrightarrow{j_{3}},\,\overrightarrow{J}$ is undetermined.
Suppose that every configuration giving $\theta$ has equal probability
$|d\theta|/(2\pi)$, then the probability for the length $j_{23}$
to have the value between $j_{23}$ and $j_{23}+dj_{23}$ is given
by%
\footnote{The factor of 2 comes from the fact that there are two configurations
giving the same value of $j_{23}$.%
}
\begin{equation}
\left\{ \frac{2}{(2\pi)}\left|\frac{d\theta}{dj_{23}}\right|\right\} dj_{23}=\frac{1}{6\pi}\frac{j_{12}j_{23}}{V}\label{eq:probability-volume}
\end{equation}
where $V$ is the volume of the tetrahedron spanned by all vectors
involved and the following relation%
\footnote{\label{fn:Area-Volume-relation-footnote}This relation follows from
\[
A_{h}A_{k}sin\theta_{hk}=\frac{3}{2}Vj_{hk}\text{ and }cos\theta_{hk}=-\frac{9}{A_{h}A_{k}}\frac{\partial V^{2}}{\partial(j_{rs}^{2})}\text{ for }h\neq k\neq r\neq s
\]

given in \cite[Appendix B]{ponzano1969semiclassical}.%
} was used 
\[
\frac{d\theta_{hk}}{dj_{rs}}=-\frac{j_{rs}j_{hk}}{6V}\text{ for }h\neq k\neq r\neq s.
\]

Comparing \prettyref{eq:6j-symb-and-probability} with \prettyref{eq:probability-volume}
we obtain for large angular momenta a result, due to Wigner, relating
the volume $V$ of a tetrahedron with the square of its corresponding
$6j$-symbol:
\[
\left\{ \begin{array}{ccc}
j_{1} & j_{2} & j_{12}\\
j_{3} & J & j_{23}
\end{array}\right\} ^{2}\propto\frac{1}{24\pi V}
\]

This result is, however, unacceptable since the numerical calculations
show that the symbols are, in fact, rapidly oscillating functions
of the indices. Hence, the above proportionality suggests that it
is rather an approximation of the average of the 6j-symbol over a
large enough interval of values its indices. Ponzano and Regge claimed
that a correct approximation would be of the form
\[
\left\{ \begin{array}{ccc}
j_{1} & j_{2} & j_{12}\\
j_{3} & J & j_{23}
\end{array}\right\} \simeq\frac{1}{\sqrt{12\pi V}}\mathfrak{C}
\]
 where $\mathfrak{C}$ is a rapidly oscillating function such that
the average over a long interval of indices gives $ $$\left\langle \mathfrak{C}^{2}\right\rangle =\frac{1}{2}$.

In order to find out which form $\mathfrak{C}$ could have, Ponzano
and Regge gave heuristic arguments, which we will give in a very short
fashion. It is known, \cite{edmonds1996angular}, that following relation
between 6j-symbols and matrix representations%
\footnote{Here the index $f$ gives the dimension $(2f+1)$ of the representation
and $\delta,\,\delta'$ are the matrix indices.%
} of the rotation group holds
\begin{equation}
\left\{ \begin{array}{ccc}
c & a & b\\
f & b+\delta & a+\delta'
\end{array}\right\} \simeq\frac{(-1)^{a+b+c+f+\delta}}{\sqrt{(2a+1)(2b+1)}}d_{\delta,\delta'}^{(f)}(\theta)\label{eq:6j-symbol-matrix-elem-rotgrp}
\end{equation}
where $a,\, b,\, c\gg f,\,\delta,\,\delta'$ and $\theta$ is the
angle between the edges $a$ and $b$ intersecting at $f$ and given
by the law of cosines
\[
cos\theta=\frac{a(a+1)+b(b+1)-c(c+1)}{2\sqrt{a(a+1)b(b+1)}}\,;\text{ for }0\leq\theta\leq\pi.
\]

Relation \prettyref{eq:6j-symbol-matrix-elem-rotgrp} together with
the Biedenharn-Elliott identity \prettyref{eq:Biedenharn-Elliott}
give a result%
\footnote{cf. Appendix C in \cite{ponzano1969semiclassical}.%
} suggesting the following general formula for the asymptotic form
of the 6j-symbols
\begin{equation}
\left\{ \begin{array}{ccc}
a & b & c\\
d & e & f
\end{array}\right\} \simeq\frac{1}{\sqrt{12\pi V}}cos\left(\sum_{h,k=1}^{4}j_{hk}\theta_{hk}+\frac{\pi}{4}\right),\label{eq:Asymptotic-formula-6j-symbols}
\end{equation}
where $j_{hk}=j_{kh},\, j_{hh}=0,\text{ and }\theta_{hk}=\theta_{kh}$
is given by
\[
cos\theta_{hk}=-\frac{9}{A_{h}A_{k}}\frac{\partial V^{2}}{\partial(j_{rs}^{2})}
\]
$\text{ for }h\neq k\neq r\neq s$, and $A_{h}$ is the area of the
face opposite to the vertex $h$, cf. Footnote \prettyref{fn:Area-Volume-relation-footnote}.
\begin{rem*}
Even if \prettyref{eq:Asymptotic-formula-6j-symbols} has no strict
formal derivation, there are several arguments for accepting this
relation:\end{rem*}
\begin{enumerate}
\item Numerical accuracy which improves as the values of the angular momenta
increase.
\item Invariance under the full symmetry group of the 6j-symbols. 
\item The formula satisfies asymptotically the Biedenharn-Elliott identity
as well as all identities which are enough to derive all properties
and numerical values of the 6j-symbols.
\end{enumerate}

\subsection{The 3nj-symbols and their relation to general relativity\label{sub:The-3nj-symbols}}

We consider now the generalizations of the $6j$-symbols for systems
with a higher number of edges, called the $3nj$-symbols. These are
best described by the use of diagrams that provide combinatorial information
on how angular momenta are coupled. Since the coupling process is
always the same%
\footnote{ Two angular momenta are coupled, then a third is coupled to the resulting
one, giving a new angular momenta which may be coupled to a forth
one and so on; for instance, $(\overrightarrow{j_{1}}+\overrightarrow{j_{2}})+\overrightarrow{j_{3}}=\overrightarrow{j_{12}}+\overrightarrow{j_{3}}=\overrightarrow{J}$.%
} in a given scheme, where there are more than three angular momenta
involved, the recoupling coefficients arising can be expressed in
terms of 6j-symbols. Moreover, it is possible to write these symbols
in terms of four 3j-symbols%
\footnote{The 3j-symbols where defined by Wigner as the recoupling coefficients
of three angular momenta to a zero one. In the diagrammatic notation,
they correspond to a trivalent vertex as in \cite{edmonds1996angular},
or its dual diagram, a triangle, which is easier to relate to the
vector coupling picture of three vectors adding up to a zero vector.%
} as follows (cf. \cite[Sec. 6.2]{edmonds1996angular})
\begin{eqnarray*}
\left\{ \begin{array}{ccc}
j_{1} & j_{2} & j_{3}\\
j_{4} & j_{5} & j_{6}
\end{array}\right\}  & = & \sum_{all\, m}\left(\begin{array}{ccc}
j_{1} & j_{2} & j_{3}\\
m_{1} & m_{2} & m_{3}
\end{array}\right)\left(\begin{array}{ccc}
j_{1} & j_{5} & j_{6}\\
m'_{1} & m_{5} & m'_{6}
\end{array}\right)\times\\
 & \times & \left(\begin{array}{ccc}
j_{4} & j_{2} & j_{6}\\
m'_{4} & m'_{2} & m_{6}
\end{array}\right)\left(\begin{array}{ccc}
j_{4} & j_{5} & j_{3}\\
m_{4} & m'_{5} & m'_{3}
\end{array}\right)\left(\begin{array}{c}
j_{1}\\
m_{1}\, m'_{1}
\end{array}\right)\left(\begin{array}{c}
j_{2}\\
m_{2}\, m'_{2}
\end{array}\right)\times\\
 & \times & \left(\begin{array}{c}
j_{3}\\
m_{3}\, m'_{3}
\end{array}\right)\left(\begin{array}{c}
j_{4}\\
m_{4}\, m'_{4}
\end{array}\right)\left(\begin{array}{c}
j_{5}\\
m_{5}\, m'_{5}
\end{array}\right)\left(\begin{array}{c}
j_{6}\\
m_{6}\, m'_{6}
\end{array}\right),
\end{eqnarray*}
 where
\[
\left(\begin{array}{c}
j\\
m\, m'
\end{array}\right)=(-1)^{j+m}\delta_{m,-m'}.
\]

Furthermore, it is always possible to evaluate a recoupling graph
of a compact semi-simple group $G$ as a sum of products of Racah
coefficients of $G$. This result is known by the name of \textbf{Decomposition
Theorem}, cf. \cite[sec. 4.2]{moussouris1983quantum} and section
\prettyref{sub:The-Decomposition-Theorem}. Hence, a diagram $D$
is essentially a clear notation for the expansion of a 3nj-symbol
$\left[D\right]$ in terms of 3j-symbols:
\[
\left[D\right]=\sum_{all\, m}\left(\begin{array}{ccc}
j_{1} & j_{2} & j_{3}\\
m_{1} & m_{2} & m_{3}
\end{array}\right)\dots\left(\begin{array}{ccc}
j_{p} & j_{q} & j_{r}\\
m_{p} & m_{q} & m_{r}
\end{array}\right)\left(\begin{array}{c}
j_{r}\\
m_{r}\, m'_{r}
\end{array}\right)\left(\begin{array}{ccc}
j_{r} & j_{s} & j_{t}\\
m'_{r} & m_{s} & m_{t}
\end{array}\right)\dots
\]

In fact, $D$ is a 2-dimensional simplicial complex with a 1-to-1
correspondence between 1- and 2-simplexes in $D$ and angular momenta
$j_{i}$ and 3j-symbols respectively in the r.h.s. of $\left[D\right]$.
Furthermore, the boundary of a face (i.e. a 2-simplex) is the sum
of the three edges in the corresponding 3j-symbol and we disregard
the orientation of the simplexes. From the general form of $\left[D\right]$,
observe that there are 2n faces and 3n edges in $D$.%
\footnote{The value of $n$ is related to the Euler characteristic:
\[
\chi=f-e+v=n+v
\]

for $f$ faces, $e$ edges and $v$ vertices.%
} Since there is a certain degree of arbitrariness regarding the position
of the points, regard diagrams as equivalent whenever they yield the
same symbol $\left[D\right]$. If only topological properties are
considered, we use the word diagram to denote the equivalent class.
On the other hand a configuration is an element of the equivalent
class, i.e. a diagram with additional information about the angles
and/or length of the edges, which removes the ambiguities. 

Now we will give a rough description on how to calculate the 3nj-symbol
$\left[D\right]$ for any diagram $D$. First introduce a diagram
$\mathfrak{D}(D)$ defined as a 3-dimensional simplicial complex with
$\partial\mathfrak{D}(D)=D$. The vertices, edges and faces of $\mathfrak{D}(D)$
are called external if they belong to the boundary, otherwise internal.
We denote the cells of $\mathfrak{D}(D)$, which are 3-simplexes,
by $T_{k}\,(k=1,2,\dots,p)$ and all internal edges are labelled by
$x_{i}\,(i=1,2,\dots,q)$ and external edges by $l_{j}\,(j=1,2,\dots,r)$.
To each cell $T_{k}$ we associate a function $\left[T_{k}\right]$
of the internal and, if the case is given, external edges, which constitute
the diagram $T_{k}$. Then, form the product of all symbols $\left[T_{k}\right]$,
the resulting function will be a function of all internal labels:
\[
A(x_{1},\dots,x_{q}):=\prod_{k=1}^{p}\left[T_{k}\right](-1)^{\chi}\prod_{i=1}^{q}(2x_{i}+1)
\]

where%
\footnote{\label{fn:fixing-chi}This formula for the Euler characteristic is
only true if we assume that $D$ and $\mathfrak{D}(D)$ are homeomorphic
to $S^{2}$ and the 3-ball respectively, which is assumed in the following.%
} $\chi=\sum_{j=1}^{q}(n_{j}-2)x_{j}+\chi_{0}$. The constant $\chi_{0}$
is a fixed phase to make $\chi$ an integer and $n_{j}$ is the number
of tetrahedra $T$ with the common edge $x_{j}$. 

Now, consider the sum over all internal variables
\[
S:=\sum_{x_{1},\dots,x_{q}}A(x_{1},\dots,x_{q})
\]
 If there are no internal vertices, a case which is always possible
to realize, then the sum is finite and $S=\left[D\right]$. On the
other hand, in the more general case where there are internal vertices,
the sum becomes infinite but in some simple cases it is possible
to renormalize%
\footnote{In fact, the general case in which there are internal vertices was
not discussed in \cite{ponzano1969semiclassical}, but only the simple
case of a tetrahedron with an internal vertex was shown to converge
with the renormalization method suggested by Regge and Ponzano and
described in this section. A counter-example in which the limit of
the renormalized partition function via this method does not converge
is given in \cite[Sec. 2.4]{barrett2009ponzano}. For an extensive
discussion of different methods of renormalization of the Regge-Ponzano
partition function the reader is referred to \cite{barrett2009ponzano}.%
} it via a method described below to obtain $\left[D\right]$. This
case is interesting and deserves our attention since it gives a result
which suggests a formal analogy with the Feynman path integral formalism
in connection with the theory of relativity via Regge calculus. Before
discussing briefly this result, which is the most important one of
this section, we conclude the description of the calculation of $\left[D\right]$
in the case where there are no internal vertices. 

If we replace $\left[T_{k}\right]$ with the asymptotic formula \prettyref{eq:Asymptotic-formula-6j-symbols}
and express the cosine in terms of Euler functions, then $A(x_{1},\dots,x_{q})$
will be the sum of $2^{p}$ pairwise conjugate terms. We express also
the factor $(-1)^{\chi}$ in terms of the exponential function%
\footnote{Even if this procedure is only correct for integer $x_{j}$, it is
possible to extended it to half-integers as well.%
} $e^{\pm i\pi\chi}$ such that $A$ takes following form when we denote
by $x_{ik}$ the edge $x_{i}$ belonging to the tetrahedron $T_{k}$:
\begin{gather*}
A(x_{1},\dots,x_{q})=\frac{e^{i\pi\chi_{0}}}{2^{p}(12\pi)^{p/2}(V_{1}\dots V_{p})^{1/2}}\left(\prod_{j=1}^{q}(2x_{j}+1)e^{\pm i\pi(n_{j}-2)x_{j}}\right)\times\\
\times\prod_{k=1}^{p}\left(e^{i\left[\sum x_{ik}\theta_{ik}+\pi/4\right]}+e^{-i\left[\sum x_{ik}\theta_{ik}+\pi/4\right]}\right)
\end{gather*}
which can be rearrange as a sum of terms%
\footnote{The summation in the exponential is over all tetrahedra $T_{k}$ with
the common internal edge $x_{j}$.%
} of the form proportional to
\[
\prod_{j=1}^{q}(2x_{j}+1)exp\left\{ i\left[\sum_{k=1}^{p_{j}}(\pm\theta_{kj}-\pi)+2\pi\right]x_{j}\right\} .
\]

Then $S$ will be the sum over the internal edges and over the $2^{p}$
terms similar to the above one. Hence we have
\[
S=\sum_{2^{p}\, terms}\sum_{x_{1}}\dots\sum_{x_{q}}\frac{C}{(V_{1}\dots V_{p})^{1/2}}\prod_{j=1}^{q}(2x_{j}+1)exp\left\{ iF(\theta_{kj})x_{j}\right\} ,
\]
 where $F(\theta_{kj})$ is a linear function of all dihedral angles
of the tetrahedra in $\mathfrak{D}(D)$. If we replace the summation
over all internal variables by integrals, one should expect that the
most important contributions arise from points where the phase is
stationary with respect to the $x_{j}$'s, i.e. $F(\theta_{kj})\overset{!}{=}0$,
thus we have
\[
\sum_{k=1}^{p_{j}}(\pi\mp\theta_{kj})\overset{!}{=}2\pi,
\]
 meaning that the sum of internal angles around $x_{j}$ is $2\pi$,
which implies the existence of a configuration embedded in a three-dimensional
Euclidean space. Since each solution%
\footnote{Since we assumed that there are no internal vertices, the internal
edges in $\mathfrak{D}(D)$ connect external vertices of $D$ and
they are sufficient to specify completely the configuration.%
} of these stationary conditions represent a specific configuration.
we obtain as final result the sum of contributions from each configuration.
Notice that we assumed already at the beginning of the discussion
for the calculation of $S$ that the diagram $D$ was embeddable in
the 2-sphere. We did this by fixing the form of the Euler characteristic,
cf. footnote \prettyref{fn:fixing-chi}, thus it is not surprising
that the solutions of the stationary conditions give configurations
embeddable in a 3-dim Euclidean space; $\chi$ contains information
about the combinatorial manifold. The main idea here is that through
solutions of these stationary conditions for situations with general
Euler characteristics we obtain configurations of $D$ corresponding,
in the limit, to differentiable manifolds. 

Finally, consider the sum $S$ in a simple case where there are internal
vertices. In this case the sum is infinite, \cite{ponzano1969semiclassical,moussouris1983quantum}.
However the form of the factor responsible for this infinity makes
it easy to renormalize the sum such that for general $\left[D\right]$
we have
\[
\left[D\right]=\lim_{R\rightarrow\infty}\frac{1}{\mathfrak{R}(R)^{P}}\sum_{x_{1}<R}\dots\sum_{x_{q}<R}A(x_{1},\dots,x_{q})
\]
 where $P$ is the number of internal vertices and $\mathfrak{R}(R)\simeq\frac{1}{\pi}\left(\frac{4\pi R^{3}}{3}\right)$
is the factor in $S$ responsible for the infinity when $R\rightarrow\infty$. 

Furthermore, assume that the number of vertices and edges in $D$
and $\mathfrak{D}(D)$ is very high such that the simplicial complex
approaches a differential manifold $M$ with boundary $D$. According
to Regge the sum $\sum_{j=1}^{q}\left[\sum_{k=1}^{p_{j}}(\pi\mp\theta_{kj})\right]x_{j}$
which, as stated previously, gives a possible configuration of the
embedding of $D$ in some manifold, approaches $\mathcal{S}(M)=\frac{1}{16\pi}\int_{M}RdV$
where $R$ is the scalar curvature of $M$. Therefore the positive
frequency part of $S$ has the following form
\begin{equation}
S^{+}=\frac{1}{\mathfrak{R}^{P}}\int_{D\, fixed}e^{i\mathcal{S}(M)}d\mu(M),\label{eq:path-integral-over-geometries}
\end{equation}
 where the summations over the internal variables $x_{j}$ were interpreted
as an integration over all manifolds with a given fixed boundary $\partial M=D$.
The above integral is not defined in any precise mathematical sense
since its derivation was heuristic in nature and its merely function
is to show the strong resemblance with a Feynman path integral with
the same Lagrangian density $\mathcal{L}$ as in a 3-dimensional Einstein
theory. 

The discussion above is the link between the concepts exposed in the
previous sections and it gives us an idea of the possible, not yet
fully understood, relation between the abstract idea of space and
basic quantum mechanical objects such as quantum angular momenta.
This suggests the possibility of describing the notion of space in
a physical manner, where a richer structure than the one given by
the theory of general relativity arises. In \prettyref{cha:Invariants-of-3-Manifolds}
we will discuss the more general concepts formalizing the discussion
above about decompositions of 3-manifolds and the invariants defined
out of their combinatorial properties, for instance, in the above
case the Feynman path integral over all differential 3-manifolds with
a given boundary.

\[
\]

\chapter{Mathematical Framework\label{cha:Mathematical-Framework}}

The mathematical framework encoded in the diagrammatic language of
spin networks is related to the category of representations of the
(deformed) quantum enveloping Hopf algebra of the $sl{}_{2}$ Lie
algebra, denoted $U{}_{q}(sl_{2})$. In this section the mathematical
concepts are introduced briefly. 

First, the concept of Hopf algebras (which are bialgebras with an
antipode obeying some identities) is described as in \cite{majid2000foundations}.
This structure is a generalization of the concept of a group with
a linear map, the antipode, sending $g$ to its inverse $g{}^{-1}$.
The quantum enveloping algebra $U(\mathfrak{g})$ is then a non-commutative
Hopf algebra generated by $1$ and the generators of the Lie algebra
$\mathfrak{g}$. This algebra can then be ``deformed'' by a non-zero
parameter $q$ which enters the commutation relations of the generators
of $U(\mathfrak{g})$ in such manner that if $q\rightarrow1$, the
deformed enveloping algebra reduces to the bialgebra $U(\mathfrak{g})$
. 

Second, the main ideas of category theory are introduced in order
to understand the relation between spin networks and the category
of representations of $U{}_{q}(sl_{2})$, which is a monoidal category
$\mathcal{C}$ with some extra structure. The monoidal structure of
the category is given by the tensor product of representations and
the morphisms (which are intertwiners between representations) can
be expressed in a graphical way. The extra structure is given by
the dual-functor $\star:\,\mathcal{C}^{op}\rightarrow\mathcal{C}$,
where $\mathcal{C}^{op}$ is the canonical dual of the category $\mathcal{C}$.
This structure makes the category into a monoidal category with duals
and allows it to have a pivotal structure, which is a morphism $\epsilon_{A}:\, e\rightarrow A\otimes A^{\star}$
with some axioms and compatibility with the counit and the tensor
product. The pivotal property is equivalent to the requirement of
planar isotopy of the diagrams in the framework of spin networks,
cf. \cite{barrett1996invariants,barrett1999spherical} and \cite{major1999spin}.
If a pivotal category is invariant under diffeomorphisms of $S{}^{2}$
it is called spherical. Ultimately, it is this type of categories
which give the general framework for spin networks and the category
of representations of $U{}_{q}(sl_{2})$ is a special case which give
rise to the Turaev-Viro invariant described in \prettyref{cha:Invariants-of-3-Manifolds}.

Third, the diagrammatic language is given by the Temperley-Lieb recoupling
theory which is a powerful tool to evaluate networks and gives the
identification of the tetrahedron with the recoupling coefficients
as well as basic identities between (quantum) $6j$-symbols such as
the Biedenharn-Elliott and orthogonality relations. In this context,
we also give the formal definition of many of the concepts already
seen in \prettyref{sec:Spin-Networks}, for instance, the projectors
or $n$-edges and its loop-value.

To finalize this chapter, we give a very brief and general account
of the topological quantum field theory which relates, in our context,
concepts of combinatorial manifolds with the algebraical data needed
to construct the invariants of 3-manifolds. We will encounter this
theory explicitly in \prettyref{cha:Invariants-of-3-Manifolds} where
the state sum invariant of $U{}_{q}(sl_{2})$ is described.

\section{Hopf algebras and Quantum Groups\label{sec:Hopf-alg-and-QGrps}}

\subsection{Algebras, Coalgebras and Bialgebras}

An algebra $A(\cdot,+;k)$ over a field $k$ is a ring $(A,\cdot,+)$
which is also a vector space $(A,+;k)$ and the action of the field
$k$ on $A$ is compatible with the product and addition. The associativity
of the product and the existence of a unit element (if required) are
expressed as commutative diagrams%
\footnote{For the definition cf. Sec.\ref{sub:Category-Theory}.%
}. In this language, an algebra is a vector space $A$ with a product
(and a unit) such that these diagrams commute. Here we will always
assume the existence of a unit. If $A,B$ are algebras, then $A\otimes B$
is an algebra which has a vector space given by the tensor product
of the vector spaces $A,B$ and a product defined by $(a\otimes b)(c\otimes d)\equiv(ac\otimes bd)$.

The dual notion of an algebra is a coalgebra $(C,+,\Delta,\epsilon;k)$
over a field $k$, which is a vector space $(C,+;k)$ and a linear
map $\vartriangle:\, C\rightarrow C\otimes C$, called the coproduct,
which is coassociative and for which there exist a linear map $\epsilon:\, C\rightarrow k$,
called the counit. The commutative diagrams for coassociativity (on
the left) and counit (the two on the right) are the following

\begin{center}
\includegraphics[scale=0.4]{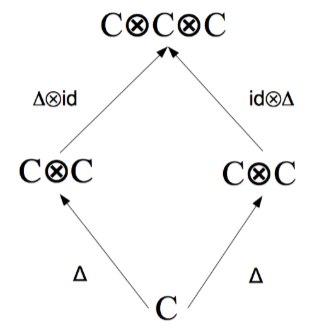} \includegraphics[scale=0.5]{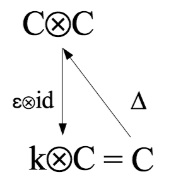}\includegraphics[scale=0.6]{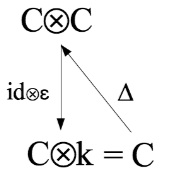}
\par\end{center}

Hence, since these diagrams commute, coassociativity means $\left(\Delta\otimes id\right)\circ\Delta=(id\otimes\Delta)\circ\Delta$
and the existence of a counit $\epsilon$ means $(\epsilon\otimes id)\circ\Delta=id=(id\otimes\epsilon)\circ\Delta$.
If we denote, for any $c\in C$,

\begin{center}
$\Delta(c)=\underset{i}{\sum}c_{i(1)}\otimes c_{i(2)}\in C\otimes C$
\par\end{center}

the coassociativity takes the explicit form

\begin{center}
$\underset{i}{\sum}c_{i(1)}\otimes c_{i(2)(1)}\otimes c_{i(2)(2)}=\underset{i}{\sum}c_{i(1)(1)}\otimes c_{i(1)(2)}\otimes c_{i(2)}$ 
\par\end{center}

and the counit means explicitly 

\begin{center}
$\underset{i}{\sum}\epsilon(c_{i(1)})\cdot c_{i(2)}=c$ 
\par\end{center}

by the isomorphism $k\otimes C\cong C$. From now on we will drop
the sum symbol and index and think of the sum implicitly in the notation
$\Delta(c)=c_{(1)}\otimes c_{(2)}$.

If a vector space $(H,+;k)$ is both an algebra and a coalgebra and
both structures are compatible, i.e. for all $g,\, h\in H$ we have

\begin{center}
$\Delta(hg)=\Delta(h)\Delta(g)$, $\Delta(1_{H})=1_{H}\otimes1_{H}$,
$\epsilon(hg)=\epsilon(h)\epsilon(g)$, $\epsilon(1_{H})=1_{k}$,
\par\end{center}

then $H$ is called a \textbf{bialgebra}. The compatibility of the
algebra and coalgebra structures is due to the fact that $\Delta,\,\epsilon$
are algebra maps and $\cdot:\, H\otimes H\rightarrow H,\text{ and}\,\eta:\, k\rightarrow H$
are coalgebra maps, i.e. respect the (co-)algebra structure. The map
$\cdot:\, H\otimes H\rightarrow H$ is the algebra multiplication
and the map $\eta:\, k\rightarrow H$ is a map used to express diagrammatically
the existence of a unit element in $H$ (cf. Sec. \prettyref{sub:Category-Theory}).

\subsection{Hopf algebras and some properties \label{sub:Hopf-algebras}}
\begin{defn}
A \textbf{Hopf algebra} $H$ is a bialgebra with a linear map $S:\, H\rightarrow H$
called \textbf{antipode}, which obeys $\cdot(S\otimes id)\circ\Delta=\cdot(id\otimes S)\circ\Delta=\eta\circ\epsilon$.
The axiom for the antipode is expressed as a commutative diagram in
the following way
\end{defn}
\begin{center}
\includegraphics[scale=0.5]{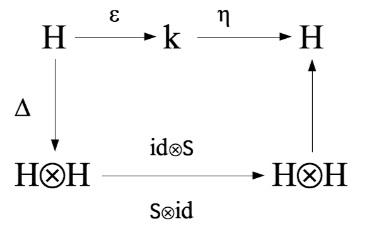}
\par\end{center}
\begin{rem*}
Although the antipode is the generalization of the concept of inverse
in a group, it is not required that $S{}^{2}=id$ or the existence
of $S^{-1}$, however, in the finite dimensional case $S^{-1}$ always
exists. Explicitly, we have from the counit and antipode axioms the
relations 
\[
h=h_{(1)}\epsilon(h_{(2)})=\epsilon(h_{(1)})h_{(2)}
\]
 and 
\[
h_{(1)}S(h_{(2)})=\epsilon(h)\cdot1_{H}=S(h_{(1)})h_{(2)};\ S(h)=\epsilon(h_{(1)})S(h_{(2)}),
\]
for any $h\in H$.
\end{rem*}
The uniqueness of the antipode $S$ of a given Hopf algebra follows
from the counit and antipode axioms and from the linearity of $S$.
The antipode is an anti-algebra and anti-coalgebra map, i.e. it transposes
the order of the factors; this means that, for any $g,\, h\in H$,
$S$ obeys $S(gh)=S(h)S(g)$, $S(1)=1$ and $(S\otimes S)\circ\Delta h=\tau\circ\Delta\circ Sh$,
$\epsilon Sh=\epsilon h$. 
\begin{example}
\label{exa:Weyl-algebra.}The \textbf{Weyl algebra}. Consider the
generators $1,\, X,\, g,\, g^{-1}$ of an algebra with relations $gg^{-1}=1=g^{-1}g,\; g^{-1}Xg=qX$,
where $q\in k$ is fixed and invertible. These become a Hopf algebra
with following relations
\[
\Delta X=X\otimes1+g\otimes X;\;\Delta g=g\otimes g,\;\Delta g^{-1}=g^{-1}\otimes g^{-1}
\]
\[
\epsilon(X)=0,\; S(X)=-g^{-\text{1}}X;\;\epsilon(g)=1=\epsilon(g^{-1}),\; S(g)=g^{-1},\; S(g^{-1})=g
\]
\end{example}
\begin{rem*}
\textit{i)} The defining operations are S and $\Delta$, since with
$\epsilon(h)=S(h_{(1)})h_{(2)}$, the counit follows directly from
those operators, e.g. $\epsilon(X)=S(X)1+S(g)X=-g^{-\text{1}}X+g^{-\text{1}}X=0$.
\textit{ii) }The comultiplication $\Delta$ and the antipode extend
as algebra and anti-algebra maps respectively, since $\Delta Xg=Xg\otimes g+g^{2}\otimes Xg=\Delta X\Delta g$
and they respect the defining relations, e.g. $\Delta Xg=q\Delta gX$.\end{rem*}
\begin{example}
\label{exa:universal-enveloping}The \textbf{universal enveloping
Hopf algebra} of $\mathfrak{g}$, denoted $U(\mathfrak{g})$:
\end{example}
Let $\mathfrak{g}$ be a Lie algebra over a field $k$, then the non-commutative
algebra $U(\mathfrak{g})$ generated by 1 and the generators of $\mathfrak{g}$
is a non-commutative Hopf algebra with coproduct, counit and antipode
given by $\Delta\xi=\xi\otimes1+1\otimes\xi,\:\epsilon(\xi)=0,\: S(\xi)=-\xi$
for any $\xi\in\mathfrak{g}$ extended as (anti-)algebra maps consistent
with $\Delta\left[\xi,\eta\right]=\left[\xi,\eta\right]\otimes1+1\otimes\left[\xi,\eta\right],\:\epsilon(\left[\xi,\eta\right])=0$
and $S(\left[\xi,\eta\right])=-\left[\xi,\eta\right]$.
\begin{rem*}
Since the notion of inverse in group theory is generalized as antipode
in the theory of Hopf algebras, the category of finite groups extends
to that of Hopf algebras. Similarly there is an association from the
category of Lie algebras to that of universal enveloping Hopf algebras
given by a forgetful functor, cf. Section \prettyref{sub:Category-Theory}. 
\end{rem*}
The action of a Hopf algebra $H$ on a vector space $V$ is defined
in the usual way. An algebra $A$ is an $H$-module algebra if $A$
is a left $H$-module and the action of $H$ for any $h\in H,\: a,\, b\in A$
is as follows

\[
h\vartriangleright(ab)=\sum(h_{(1)}\vartriangleright a)(h_{(2)}\vartriangleright b),\; h\vartriangleright1=\epsilon(h)\cdot1
\]

A coalgebra $C$ is a left $H$-module coalgebra if $C$ is an $H$-module
and for any $h\in H,\: c\in C$

\[
\Delta(h\vartriangleright c)=\sum h_{(1)}\vartriangleright c_{(1)}\otimes h_{(2)}\vartriangleright c_{(2)},\;\epsilon(h\vartriangleright c)=\epsilon(h)\epsilon(c)
\]

This means that the action $\vartriangleright:\, H\otimes C\rightarrow C$
is a coalgebra map, i.e. $\Delta(h\vartriangleright c)=(\Delta h)\vartriangleright(\Delta c)$.
\begin{example}
\label{exa:Tensor-product-of-modules}The adjoint action for $H=U(\mathfrak{g})$
is for all $\xi,\eta\in\mathfrak{g},\;\xi\vartriangleright\eta=\left[\xi,\,\eta\right]$.
The action of the universal enveloping Hopf algebra of $\mathfrak{g}$
on a $U(\mathfrak{g})$-module algebra $A$ is the usual notion (as
for Lie algebras), $\xi\vartriangleright\left(ab\right)=(\xi\vartriangleright a)b+a(\xi\vartriangleright b),\:\xi\vartriangleright1=0$
for any $\xi\in\mathfrak{g}$. 
\end{example}
If a Hopf algebra H acts on vector spaces V and W, then it also acts
on the tensor product space $V\otimes W$ by 
\[
h\vartriangleright(v\otimes w)=\sum h_{(1)}\vartriangleright v\otimes h_{(2)}\vartriangleright w,\;\,\forall h\in H,\, v\in V,\, w\in W
\]

Hence, the two actions have a tensor product and the tensor product
of $H$-modules is also an $H$-module. If a map $\phi:\, V\rightarrow W$
commutes with the corresponding actions, i.e. $\phi(h\vartriangleright v)=h\vartriangleright\phi(v)$
for all $v\in V$, then $\phi$ is called an intertwiner. This type
of maps constitute the morphisms in the category of Hopf algebra representations,
which are closely related to the theory of spin networks.

The dual notion of commutativity is cocommutativity, i.e. $\tau\circ\Delta=\Delta$,
where $\tau:\, H\otimes H\rightarrow H\otimes H$ is the transposition
map and can be weakened with the help of an element $\mathcal{R}\in H\otimes H$
called the quasitriangular structure. Then, the Hopf algebra H is
cocommutative only up to conjugation by $\mathcal{R}$.
\begin{defn}
A \textbf{quasitriangular bialgebra }is a pair $\left(H,\mathcal{R}\right)$
where $H$ is a bialgebra and $\mathcal{R}\in H\otimes H$ is invertible
and obeys
\begin{equation}
(\Delta\otimes id)\mathcal{R=R}_{13}\mathcal{R}_{23}\,,\;(id\otimes\Delta)\mathcal{R}=\mathcal{R}_{13}\mathcal{R}_{12}\label{eq:QTstructure}
\end{equation}
\begin{equation}
\tau\circ\Delta h=\mathcal{R}(\Delta h)\mathcal{R}^{-1},\;\forall h\in H
\end{equation}

With the notation $\mathcal{R}=\sum\mathcal{R}^{(1)}\otimes\mathcal{R}^{(2)}$,
$\mathcal{R}_{ij}\in H\otimes H\otimes...\otimes H$ is defined as
\[
\mathcal{R}_{ij}=\sum1\otimes\ldots\otimes\underset{i-th}{\mathcal{R}^{(1)}}\otimes\ldots\otimes\underset{j-th}{\mathcal{R}^{(2)}}\otimes\ldots\otimes1
\]

\end{defn}
To understand Eq. \eqref{eq:QTstructure} let us write the first equation
explicitly. On the l.h.s. we have
\[
(\Delta\otimes id)\mathcal{R}=\sum_{i}(\Delta\mathcal{R}_{i}^{(1)})\otimes\mathcal{R}_{i}^{(2)}=\sum_{i}(\sum_{j}\mathcal{R}_{(1)i,j}^{(1)}\otimes\mathcal{R}_{(2)i,j}^{(1)})\otimes\mathcal{R}_{i}^{(2)}
\]
and with $\mathcal{R}_{13}=\sum_{m}\mathcal{R}_{m}^{(1)}\otimes1\otimes\mathcal{R}_{m}^{(2)}$
and $\mathcal{R}_{23}=1\otimes\mathcal{R}=\sum_{k}1\otimes\mathcal{\tilde{R}}_{k}^{(1)}\otimes\mathcal{\tilde{R}}_{k}^{(2)}$
we have on the r.h.s.
\[
\mathcal{R}_{13}\mathcal{R}_{23}=(\sum_{m}\mathcal{R}_{m}^{(1)}\otimes1\otimes\mathcal{R}_{m}^{(2)})(\sum_{k}1\otimes\tilde{\mathcal{R}}_{k}^{(1)}\otimes\tilde{\mathcal{R}}_{k}^{(2)})=\sum_{m}\sum_{k}\mathcal{R}_{m}^{(1)}\otimes\tilde{\mathcal{R}}_{k}^{(1)}\otimes\mathcal{R}_{m}^{(2)}\tilde{\mathcal{R}}_{k}^{(2)}
\]

Comparing the r.h.s. and the l.h.s. we get
\[
\mathcal{R}_{(1)i,j}^{(1)}=\mathcal{R}_{m}^{(1)},\;\mathcal{R}_{(2)i,j}^{(1)}=\tilde{\mathcal{R}}_{k}^{(1)},\;\mathcal{R}_{i}^{(2)}=\mathcal{R}_{m}^{(2)}\tilde{\mathcal{R}}_{k}^{(2)}\;\, for\, some\, i,\, j,\, m,\, k.
\]

This means that the factors of the coproduct of $\mathcal{R}$ are
the factors of itself -in the first and second position- or a product
of its factors, in the third position. 

The quasitriangular structure $\mathcal{R}$  in a bialgebra $H$
obeys $\cdot(\epsilon\otimes id)\mathcal{R}=\cdot(id\otimes\epsilon)\mathcal{R}=1_{H}$.
If $H$ is a Hopf algebra then, it also obeys $(S\otimes id)\mathcal{R=R}^{-1}$
and $(id\otimes S)\mathcal{R}^{-1}=\mathcal{R}$. Hence it also satisfies
$(S\otimes S)\mathcal{R=R}$.

The defining characteristics of a quasitriangular bialgebra imply
the abstract quantum Yang-Baxter-Equation, 
\begin{equation}
\mathcal{R}_{12}\mathcal{R}_{13}\mathcal{R}_{23}=\mathcal{R}_{23}\mathcal{R}_{13}\mathcal{R}_{12}.\label{eq:Yang-Baxter-eq}
\end{equation}

Also, if $\left(H,\,\mathcal{R}\right)$ is a quasitriangular Hopf
algebra, the antipode $S$ is automatically invertible and satisfies
$S^{2}(h)=uhu^{\text{-1}},\,\forall h\in H$, where $u$ is an invertible
element in $H$ obeying
\[
u=\sum(S\mathcal{R}^{(2)})\mathcal{R}^{(1)},\; u^{-1}=\sum\mathcal{R}^{(2)}S^{2}\mathcal{R}^{(1)},\;\Delta u=Q^{-1}(u\otimes u)
\]
with $Q=\mathcal{R}_{21}\mathcal{R}$. Similarly, the antipode of
the previous element $v=Su$ satisfies $S^{-2}(h)=vhv^{\text{-1}},\,\forall h\in H$
and obeys 
\[
v=\sum\mathcal{R}^{(1)}S\mathcal{R}^{(2)},\; v^{-1}=\sum(S^{2}\mathcal{R}^{(1)})\mathcal{R}^{(2)},\;\Delta v=Q^{-1}(v\otimes v).
\]

Having $u,\, v\in H$, the element $uv=vu$ is central, i.e. commutes
with all elements%
\footnote{This follows from the fact that $\forall h\in H:\: S^{2}(h)=uhu^{-1}\text{ and }S^{-2}(h)=vhv^{-1}$,
thus, $h=(vu)h(vu)^{-1}$.%
} of $H$, and obeys the relation $\Delta(uv)=Q^{-2}(uv\otimes uv)$.
On the other hand, the element $w=uv^{-1}=v^{-1}u$ is group-like
and implements $S^{4}$ by conjugation%
\footnote{From $h=v^{-1}S^{-2}(h)v$, we have $S^{2}(S^{-2}(h))=v^{-1}S^{-2}(h)v$,
thus, $v^{-1}$ implements $S^{2}$ as well.%
}. If the element $uv$ has a central square root called the ribbon
element $\nu$, i.e. $\nu^{2}=uv$, satisfying $\Delta\nu=Q^{-1}(\nu\otimes\nu),\;\epsilon(\nu)=1,\; S(\nu)=\nu$,
then the quasitriangular Hopf algebra is called a \textbf{ribbon Hopf
algebra} (cf. Sec. \ref{sub:Spherical-categories}). 
\begin{example}
\label{exa:anyon-generating}The \textbf{anyon-generating quantum
group }is generated by $1,\, g,\, g^{-1}$ and the relation $g^{n}=1$
with the Hopf algebra structure $\Delta g=g\otimes g,\,\epsilon(g)=1,\, S(g)=g^{-1}$.
It has the following non-trivial quasitriangular structure
\[
\mathcal{R=}\frac{1}{n}\sum_{a,b=0}^{n-1}e^{-\frac{2\pi iab}{n}}g^{a}\otimes g^{b}
\]

The r.h.s. of the first equation in \prettyref{eq:QTstructure} is
\[
\mathcal{R}_{13}\mathcal{R}_{23}=\frac{1}{n^{2}}\sum_{a,b,c,d}e^{-\frac{2\pi i(ab+cd)}{n}}g^{a}\otimes g^{c}\otimes g^{b+d}
\]
which can be simplify with $b\text{\textasciiacute}=b+d$ and $n^{-1}\sum_{b=0}^{n-1}e^{-\frac{2\pi iab}{n}}=\delta_{a,0}$.
\end{example}
Since the notion of an algebra is dual to the one of a coalgebra,
the axioms of a Hopf algebra are self dual when interchanging $\Delta,\,\epsilon$
and $\cdot,\,\eta$. In terms of dual linear spaces, this symmetry
gives for every finite dimensional Hopf algebra $H$ a dual Hopf algebra
$H^{*}$ build on the vector space dual to $H$ (cf. Sec. 1.4 in
\cite{majid2000foundations}).

The axioms of a quasitriangular Hopf algebra are \textit{not self-dual
}but it is possible to extend the concept of a dual quasitriangular
structure as an ``invertible'' map $A\otimes A\rightarrow k$, where
$A$ is the Hopf algebra in the dual formulation of $H$; the invertibility
being defined in a suitable way (cf. Sec. 2.2 in \cite{majid2000foundations}).

The above description of Hopf algebras and quasitriangular structure
is intended to present briefly the quantum group $U{}_{q}(sl_{2})$
and it is not the intention to give a detailed description of any
of the concepts of this broad topic since this would go beyond the
scope of this dissertation.

\subsection{\label{sub:The-Quantum-Group}The Quantum Group $U_{q}(\mathfrak{sl}_{2})$}

Recall that the smallest simple Lie algebra is $\mathfrak{sl}_{2}$
with the following relations for the operators $H$ and $X_{+},\; X_{-}$
\[
\left[H,\, X_{\pm}\right]=\pm2X_{\pm},\;\left[X_{+},\, X_{-}\right]=H
\]

The deformation of the universal enveloping algebra of $\mathfrak{sl}_{2}$
(cf. Example \prettyref{exa:universal-enveloping}) with a parameter
$q\neq0$ defines the quantum group $U_{q}(\mathfrak{sl}_{2})$, which
is a non-commutative Hopf algebra generated by $1,\, X_{+},\, X_{-},\, q^{H/2},\, q^{-H/2}$
and the relations 
\begin{equation}
q^{\pm H/2}q^{\mp H/2}=1,\; q^{H/2}X_{\pm}q^{-H/2}=q^{\pm1}X_{\pm},\;\left[X_{+},\, X_{-}\right]=\frac{q^{H}-q^{-H}}{q-q^{-1}}\label{eq:CommutRelU(sl)}
\end{equation}
Similar to example \prettyref{exa:Weyl-algebra.}, this forms a Hopf
algebra with comultiplication, counit and antipode maps as follows
\[
\Delta q^{\pm H/2}=q^{\pm H/2}\otimes q^{\pm H/2},\;\Delta X_{\pm}=X_{\pm}\otimes q^{H/2}+q^{-H/2}\otimes X_{\pm}
\]
\[
\epsilon q^{\pm H/2}=1,\;\epsilon X_{\pm}=0;\; SX_{\pm}=-q^{\pm\text{1}}X_{\pm},\; Sq^{\pm H/2}=q^{\mp H/2}
\]

The relations between the generators are defined such that if $q\rightarrow1$
\prettyref{eq:CommutRelU(sl)} reduce to the ones defining the Lie
algebra $\mathfrak{sl}_{2}$. For example using the l'Hôpital rule
for the third equation of \prettyref{eq:CommutRelU(sl)} we have
\[
\left[X_{+},\, X_{-}\right]=\lim_{q\rightarrow1}\frac{q^{H}-q^{-H}}{q-q^{-1}}=\lim_{q\rightarrow1}\frac{H(q^{H-1}+q^{-H-1})}{1+q^{-2}}=H.
\]

The parameter $q$ is usually an element of $\mathbb{C}$, however,
if no calculations are needed it is useful to work over the field
of formal power series of a parameter t, denoted $\mathbb{C}\left[t\right]$
and we define $q=e^{t/2}$. If some calculations are needed it is
not possible to work over $\mathbb{C}\left[t\right]$ since $q$ may
not have a finite value (the formal power series might not converge)
and the precise value of $q$ is relevant for some aspects of the
theory, e.g. the case when $q$ is a root of unity, cf. \prettyref{sec:Some-More-Diagrammatics}
and \prettyref{cha:Invariants-of-3-Manifolds}.

The above defined Hopf algebra has a quasitriangular structure
\[
\mathcal{R}=q^{\frac{H\otimes H}{2}}\exp_{q^{-2}}\left\{ (1-q^{-2})q^{H/2}X_{+}\otimes q^{-H/2}X_{-}\right\} 
\]
where the $q$-exponential is defined as 
\begin{equation}
e_{q^{-2}}^{x}=\sum_{n=0}^{\infty}\frac{x^{n}}{\left[n;\, q^{-2}\right]!}\;;\;\left[n;\, q^{-2}\right]=\frac{1-q^{-2n}}{1-q^{-2}}\,,\;\left[n;\, q^{-2}\right]!=\left[n;\, q^{-2}\right]\dots\left[1;\, q^{-2}\right]\label{eq:q-exp-and-quantum-factorial}
\end{equation}
and follows similar properties as the usual exponential function with
operators (cf. \cite{majid2000foundations}, p. 86). 

The quantum group $U_{q}(\mathfrak{sl}_{2})$ acts on objects similar
to the Lie algebra $\mathfrak{sl}_{2}$-modules. In fact, for $q\text{\ensuremath{\in}}\mathbb{R}$
and each $j=0,\,\frac{1}{2},\,1,\dots$ the real form $U_{q}(\mathfrak{su}_{2})$
 has a $(2j+1)$-dimensional unitary irreducible representation $V_{j}=\left\{ |j,m>\,|\, m=-j,\dots,j\,\right\} $such
that
\begin{equation}
X_{\pm}|j,m>=\sqrt{\left[j\mp m\right]\left[j\pm m+1\right]}|j,m\pm1>;\; q^{H/2}|j,m>=q^{m}|j,m>\label{eq:reps-of-Uq(sl2)}
\end{equation}

If $q$ is a\textbf{ primitive root of unity}%
\footnote{Here, primitive means that $n>1$ and there is no smaller $n$ such
that $q^{n}=1$.%
} these formulas also apply, however, the number $j$ has to be replaced
by a ``quantum integer'' $\left[j\right]$ so the allowed range
of $j$ is restricted in a suitable way, cf. \prettyref{cha:Invariants-of-3-Manifolds}.
For now, we will work over $\mathbb{C}$ and denote the generators
by $K,\, K^{-1},\, X_{\pm}$ instead of $q^{H/2},\, q^{-H/2},\, X_{\pm}$.
Then the quotient of $U_{q}(\mathfrak{sl}_{2})$ denoted by $U_{q}^{(r)}(\mathfrak{sl}_{2})$
is the finite dimensional quasitriangular Hopf algebra generated by
$1,\, K,\, K^{-1},\, X_{\pm}$ with relations
\[
K^{\pm1}K^{\mp1}=1,\; KX_{\pm}K^{-1}=q^{\pm1}X_{\pm},\;\left[X_{+},\, X_{-}\right]=\frac{K^{2}-K^{-2}}{q-q^{-1}}\,;\underbrace{K^{4r}=1,\, X_{\pm}^{r}=0}_{relations\, which\, quotient\, U_{q}(\mathfrak{sl_{2})}}
\]

It has the operations inherited from $U_{q}(\mathfrak{sl}_{2})$ but
a different quasitriangular structure given by
\[
\mathcal{R=R}_{K}\sum_{m=0}^{r-1}(KX_{+})^{m}\otimes(K^{-1}X_{-})^{m}\frac{(1-q^{-2})^{m}}{\left[m;\, q^{-2}\right]!}\,;\;\mathcal{R}_{K}=\frac{1}{4r}\sum_{a,b=0}^{4r-1}q^{-ab/2}K^{a}\otimes K^{b}.
\]

The representations $V_{j}$ of $U_{q}^{(r)}(\mathfrak{sl}_{2})$
are the same as in \prettyref{eq:reps-of-Uq(sl2)}, however only for
spins in the range $j=0,\,\frac{1}{2},\,1,\dots,\,\frac{r-1}{2}$.
To see that, notice that if $q$ is a primitive root of unity with
$q^{r}=1$ the quantum integer is $\left[n\right]=\frac{q^{n}-q^{-n}}{q-q^{-1}}\neq0$
for $n=1,\dots,r-1$ but $\left[r\right]=0$, so the action of $X_{\pm}$
is well-defined if $j$ is in the allowed range $0\leq j-m\leq r-1;\, m=-j,\dots,j$,
cf. Section \prettyref{sub:Temperley-Lieb-Algebra}.

The above case is important since it allows the regularization of
the Ponzano-Regge partition function, which otherwise diverges for
some manifolds. The quantum group $U{}_{q}(sl_{2})$ at a primitive
root of unity causes the state sum to become finite, thus, it is possible
to define an invariant for general 3-manifolds, as described in \prettyref{cha:Invariants-of-3-Manifolds}.

\section{Category Theory\label{sub:Category-Theory}}

In this section, the main concepts of category theory will be introduced
briefly following \cite{MacLane1971} closely. This theory is the
general framework needed to understand the mathematics behind spin
networks and its relation with the representations of the quantum
group $U_{q}(\mathfrak{sl}_{2})$. Category theory, roughly speaking,
studies in an abstract way the properties of specific mathematical
concepts by gathering them in collections of objects and arrows (or
morphisms) which satisfy certain fundamental conditions. In other
words, it is the theory for dealing, in the most general and abstract
way, with concepts like sets, topological spaces, vector spaces, groups,
etc.

\subsection{Basic concepts of category theory}

Let us start by introducing one of the basic concepts of category
theory, the \textbf{commutative diagram}. Consider a diagram 

\begin{center}
\includegraphics[scale=0.5]{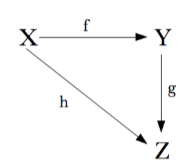}
\par\end{center}

where $f:\text{\,}x\mapsto fx$, $h:\, x\mapsto hx$ and $g:\, y\mapsto gy$
are morphisms. If $h=g\circ f$ the diagram is called commutative.
\begin{defn}
A \textbf{category }$\mathcal{C}$ is a collection of objects $a,b,c,...$
in $\mathcal{C}$ together with a collection $Hom_{\mathcal{C}}(a,\, b)$
of morphisms for any pair of objects $a,b$ in $\mathcal{C}$ with
a composition rule, such that for each $a,b,c$ in $\mathcal{C}$
and each pair of arrows 

\[
a\overset{f}{\rightarrow}b\;\text{and}\; b\overset{g}{\rightarrow c}
\]

there is a composite arrow $a\overset{g\circ f}{\longrightarrow}c$,
and for each object $a$ in $\mathcal{C}$ there is an identity arrow
$1_{a}:\, a\rightarrow a\,\in Hom_{\mathcal{C}}(a,\, a)$. The composition
rule and the identity arrow are subject to the following axioms:\end{defn}
\begin{itemize}
\item Associativity: given the arrows $f,\, g$ as above and $c\overset{h}{\rightarrow}d$,
the composition is always associative, i.e. $h\circ(g\circ f)=(h\circ g)\circ f$,
so the following diagram is commutative 
\end{itemize}
\begin{center}
\includegraphics[scale=0.4]{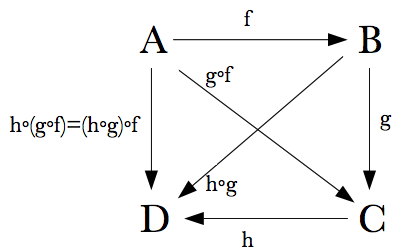}
\par\end{center}
\begin{itemize}
\item Unit law: for all arrows $a\overset{f}{\rightarrow}b\overset{g}{\rightarrow}c$
the composition with the unit arrow $1_{b}$ yields $1_{b}\circ f=f,\: g\circ1_{b}=g$,
i.e. the following diagram commutes
\end{itemize}
\begin{center}
\includegraphics[scale=0.35]{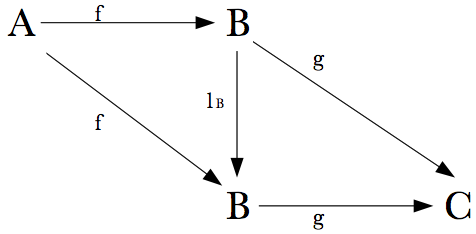}
\par\end{center}

The collections of morphisms are required to be pairwise disjunct,
i.e. $Hom_{\mathcal{C}}(a,\, b)$$\cap$$Hom_{\mathcal{C}}(c,\, d)=\emptyset$
for any objects $a,b,c,d$ with $a\neq c$ and $b\neq d$.
\begin{example*}
Typical examples of categories are:\end{example*}
\begin{itemize}
\item \textbf{Set}, with objects all sets and morphisms all functions between
them,
\item \textbf{Grp}, where the objects are groups and the arrows are group
homomorphisms,
\item \textbf{Top}, the category of all topological spaces and continuous
maps,
\item \textbf{Vect(k)}, with vector spaces over a field k and linear maps,
\item Let $A$ be a unital algebra and \textbf{$_{A}\mathcal{M}$ }the \textbf{category
of $A$-modules} with vector spaces, on which $A$ acts, as objects
and the morphisms are linear maps that commute with the action of
$A$, i.e. intertwines. \end{itemize}
\begin{defn}
A \textbf{functor} is a morphism between categories; more precisely,
for categories $\mathcal{C},\,\mathcal{B}$ a functor $T:\,\mathcal{C}\rightarrow\mathcal{B}$
consists of two related functions\end{defn}
\begin{enumerate}
\item \textbf{Object function} $T$: for any object $c$ in $\mathcal{C}$,
we have $c\mapsto Tc$, where $Tc$ is in $\mathcal{B}$.
\item \textbf{Arrow function} $T$: for any morphism $f:\, c\rightarrow c'$
of $\mathcal{C}$, we have $f\mapsto Tf$, where $Tf:\, Tc\rightarrow Tc'$
is an arrow of $\mathcal{B}$.
\end{enumerate}
such that $T(1_{c})=1_{Tc}$ and $T(g\circ f)=Tg\circ Tf$. 

A \textbf{covariant} functor $T:\,\mathcal{C}\rightarrow\mathcal{B}$
respects the structure of the categories, i.e. for any two morphisms
that can be composed $T$ sends $f\mapsto Tf$ and $g\mapsto Tg$
such that $T(g\circ f)=Tg\circ Tf$. On the other hand, a \textbf{contravariant}
functor $F:\,\mathcal{C}\rightarrow\mathcal{B}$ is defined as sending
$f:\, a\rightarrow b$ to $Ff:\, Fb\rightarrow Fa$ such that $F(g\circ f)=Ff\circ Fg$
for any two morphisms $f,g$ of $\mathcal{C}$ that can be composed,
\cite{majid2000foundations}.

An example of a functor is the forgetful functor, which, as the name
suggests, ``forgets'' some or all of the structure of an algebraic
object in a given category, e.g. the functor $U:\, Grp\rightarrow Set$
assigns to each group $G$ the set $UG$ of its elements and each
homomorphism the same function regarded as a function between sets.

Given some functors $\mathcal{C}\overset{T}{\rightarrow}\mathcal{B}\overset{S}{\rightarrow}\mathcal{A}$,
the composite functions on objects and arrows of $\mathcal{C}$ define
a functor, i.e. functors may be composed. Furthermore, the composition
is associative and for each category $\mathcal{C}$ there exists an
identity functor $I_{\mathcal{C}}:\,\mathcal{C}\rightarrow\mathcal{C}$
which acts as an identity for the composition of functors. Hence,
we can regard the collection of categories as a category itself where
the functors are the morphisms of this category.
\begin{defn*}
A functor $T:\,\mathcal{C}\rightarrow\mathcal{B}$ is an \textbf{isomorphism}
if and only if there is a functor $S:\,\mathcal{B}\rightarrow\mathcal{C}$
for which $S\circ T=I_{\mathcal{C}}$ and $T\circ S=I_{\mathcal{B}}$. 

A functor $T:\,\mathcal{C}\rightarrow\mathcal{B}$ is \textbf{faithful}
when for every pair $c,c'$ of objects in $\mathcal{C}$ and every
pair $f_{1},\, f_{2}:\, c\rightarrow c'$ of (parallel) arrows of
$\mathcal{C}$, $Tf_{1}=Tf_{2}$ implies $f_{1}=f_{2}$.

The functor $T:\,\mathcal{C}\rightarrow\mathcal{B}$ is called \textbf{full}
when for every pair $c,c'$ of objects in $\mathcal{C}$ and every
arrow $g:\, Tc\rightarrow Tc'$ of $\mathcal{B}$, there is an arrow
$f:\, c\rightarrow c'$ of $\mathcal{C}$ with $g=Tf$.\end{defn*}
\begin{defn}
A \textbf{natural transformation} is a morphism between functors.
Given two functors $S,\, T:\,\mathcal{C}\rightarrow\mathcal{B}$,
a natural transformation $\tau:\, S\rightarrow T$ is a function which
assigns to each object $c$ in $\mathcal{C}$ a morphism $\tau_{c}:\, Sc\rightarrow Tc$
of $\mathcal{B}$ such that every arrow $f:\, c\rightarrow c'$ of
$\mathcal{C}$ gives the following commutative diagram
\end{defn}
\begin{center}
\includegraphics[scale=0.5]{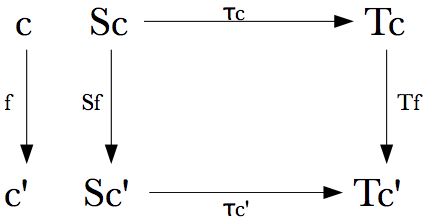}
\par\end{center}

In other words, if we regard $S$ as giving a picture of $\mathcal{C}$
in $\mathcal{B}$, then a natural transformation is a set of arrows
mapping the picture $S(\mathcal{C})$ in $\mathcal{B}$ to the picture
$T(\mathcal{C})$ in $\mathcal{B}$, i.e. the following diagram commutes

\begin{center}
\includegraphics[scale=0.5]{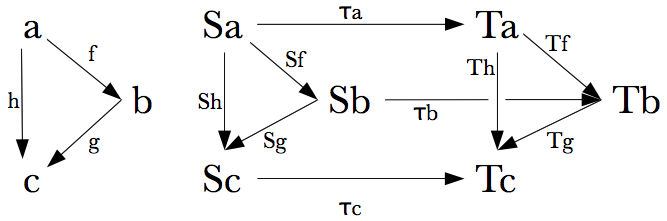}
\par\end{center}

The arrows $\tau_{a},\,\tau_{b},\,\tau_{c}$ of $\mathcal{B}$ are
called the components of $\tau$.

A \textbf{natural isomorphism} $\tau:\, S\cong T$ between functors
$S,\, T:\,\mathcal{C}\rightarrow\mathcal{B}$ is a natural transformation
$\tau$ such that every component $\tau_{c}$ is invertible in $\mathcal{B}$.
If this is the case, then $(\tau_{c})^{-1}$ are the components of
a natural isomorphism $\tau^{-1}:\, T\rightarrow S$. With the help
of this concept, one can define the equivalence between categories
$\mathcal{C}$ and $\mathcal{D}$. This is defined to be a pair of
functors $S:\,\mathcal{C}\rightarrow\mathcal{D}$, $T:\,\mathcal{D}\rightarrow\mathcal{C}$
with natural isomorphisms $I_{\mathcal{C}}\cong T\circ S$ and $I_{\mathcal{D}}\cong S\circ T$.
Notice that in this definition equality is \textbf{not} used since
$S$ and $T$ do not need to be isomorphisms.

There are some important cases where an object or arrows have specific
properties. For example, an arrow $e:\, a\rightarrow b$ is invertible
in $\mathcal{C}$ if there is an arrow $e':\, a\rightarrow b$ with
$e'e=1_{a}$ and $ee'=1_{b}$. If this arrow exists it is unique and
it is denoted $e'=e^{-1}$. In this case, the objects $a$ and $b$
are isomorphic in $\mathcal{C}$, $a\cong b$. If every arrow in a
category $\mathcal{G}$ is invertible, then $\mathcal{G}$ is called
a groupoid and each object $x$ in $\mathcal{G}$ determines a group
$Hom_{\mathcal{G}}(x,\, x)$. An arrow $f:\, x\rightarrow x'$ then
establishes a group isomorphism $Hom_{\mathcal{G}}(x,\, x)\cong Hom_{\mathcal{G}}(x',\, x')$
by conjugation, i.e. $\forall g\in Hom_{\mathcal{G}}(x,\, x):\: g\mapsto fgf^{-1}\in Hom_{\mathcal{G}}(x',\, x')$.

\subsection{Monoidal Categories}

The most important concept for us now is the one of monoidal categories,
which are basically a category supplied with a \textquotedbl{}product\textquotedbl{}
$\square$, e.g. the direct product $\times$, the direct sum $\oplus$
or the tensor product $\otimes$. There are to kinds of this category,
depending on the required strength of the associative law. A strict
monoidal category $\left\langle \mathcal{B},\,\square,\, e\right\rangle $
is a category $\mathcal{B}$ with a bifunctor $\square:\,\mathcal{B}\times\mathcal{B}\rightarrow\mathcal{B}$
which is \textbf{strictly associative}, this means that both functors
$\square\circ(\square\times1_{\mathcal{B}})$ and $\square\circ(1_{\mathcal{B}}\times\square)$
are \textbf{exactly equal}%
\footnote{Associativity of this bifunctor means that $\square$ is associative
for objects and for arrows.%
}, i.e. $(\mathcal{B}\times\mathcal{B})\times\mathcal{B}=\mathcal{B}\times(\mathcal{B}\times\mathcal{B})$.
This category also contains an object $e$ which is a left and right
\textbf{unit} for $\square$ such that $\square\circ(e\times1_{\mathcal{B}})=id_{\mathcal{B}}=\square\circ(1_{\mathcal{B}}\times e)$,
where $e\times1_{\mathcal{B}}:\,\mathcal{B}\rightarrow\mathcal{B}\times\mathcal{B}$
is the functor which sends an object $c$ in $\mathcal{B}$ to $\left\langle e,\, c\right\rangle $
\footnote{The unit law states for objects $e\square c=c=c\square e$ and for
arrows $1_{e}\square f=f=f\square1_{e}$.%
}. The product $\square$ assigns to each pair of objects $a,b$ in
$\mathcal{B}$ an object $a\square b$ in $\mathcal{B}$ and to each
pair of arrows $f:\, a\rightarrow a',\: g:\, b\rightarrow b'$ an
arrow $f\square g:\, a\square b\rightarrow a'\square b'$ such that
$1_{a}\square1_{b}=1_{a\square b}$ and $(f'\square g')\circ(f\square g)=(f'\circ f)\square(g'\circ g)$. 
\begin{defn}
A (relaxed) \textbf{monoidal category }$\left\langle \mathcal{B},\,\square,\, e,\,\alpha,\,\lambda,\,\rho\right\rangle $
is a category as the one described above, but with the associativity
law only up to a natural isomorphism $\alpha$ and the left and right
unit for $\square$ also up to natural isomorphisms $\lambda$ and
$\rho$ respectively, such that 
\[
\alpha_{a,b,c}:\: a\square(b\square c)\cong(a\square b)\square c\,;\:\lambda_{a}:\, e\square a\cong a\,,\,\rho_{a}:\, a\square e\cong a\;\forall a,b,c\:\text{in}\:\mathcal{B}.
\]
 \uline{and} the following diagrams commute for all $a,b,c,d$
in $\mathcal{B}$:
\end{defn}
\begin{center}
\includegraphics[scale=0.3]{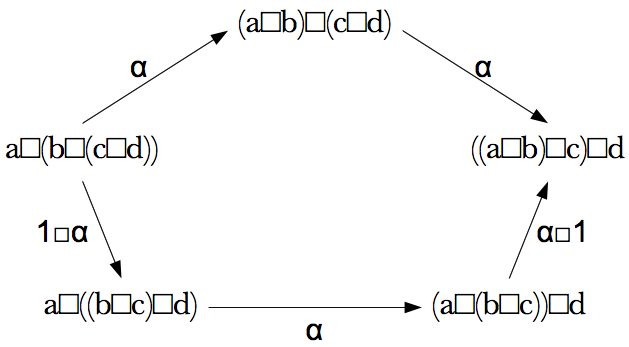} \includegraphics[scale=0.4]{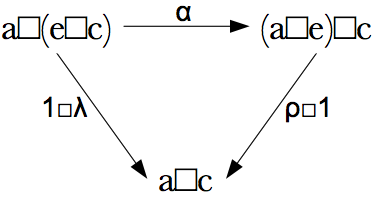}
\par\end{center}

The identification of multiple products of the category $\mathcal{B}$
is made in a coherent way with the help of the coherence theorem,
which states that all diagrams using the natural transformations $\alpha,\,\lambda,\,\rho$
commute (c.f. \cite{MacLane1971}, Sec. VII). This allows us to establish
an equivalence to a strict monoidal category and regard the associativity
law in a monoidal category as the usual one, but keeping always in
mind that we are dealing with equivalences rather than strict equalities.
We need this result since the category of representations of a Hopf
algebra is not strict, \cite{barrett1999spherical}. We can see this
by considering the more simple case of representations of the $SU(2)$
group. Recall that the coupling of three angular momenta $j_{1},\, j_{2},\, j_{3}$
depends on the way they are coupled, but the resulting vector spaces
are isomorphic to each other. We have then $(j_{1}\otimes j_{2})\otimes j_{3}\cong j_{1}\otimes(j_{2}\otimes j_{3})$.
\begin{defn}
A \textbf{category with duals} is a monoidal category $\left\langle \mathcal{C},\,\square,\, e,\,\alpha,\,\lambda,\,\rho\right\rangle $
with a functor $*:\,\mathcal{C}^{op}\rightarrow\mathcal{C}$ and the
following natural transformations\end{defn}
\begin{enumerate}
\item $\tau:\,1\rightarrow**$ i.e. $\phi^{**}\cong\phi$, where $\phi$
is an object or arrow of $\mathcal{C}$.
\item $\gamma:\,(*\times*)\circ\otimes\rightarrow*\circ\otimes^{op}$ i.e.
$\phi^{*}\otimes\psi^{*}\cong(\psi\otimes\phi)^{*}$ where $\phi,\,\psi$
are objects or arrows of $\mathcal{C}$.
\item $\nu:\, e\rightarrow e^{*}$, with $e^{*}\cong e$.
\end{enumerate}
where $\mathcal{C}^{op}$ is the category where the arrows of $\mathcal{C}$
are inverted and the objects are the same. For the natural transformations
in 1. and 2. is required each of their components to be isomorphisms.
\begin{rem*}
A category with \uline{strict} duals is a category in which $\tau,\,\gamma,$
and $\nu$ are the identity maps, \cite{barrett1999spherical}. 
\end{rem*}
Before further specification of the categories we are interested in,
we give the definition of a monoidal functor, which is a functor $F:\,\left\langle \mathcal{B},\,\square,\, e,\,\alpha,\,\lambda,\,\rho\right\rangle \rightarrow\left\langle \mathcal{V},\,\square',\, e',\,\alpha',\,\lambda',\,\rho'\right\rangle $
between monoidal categories which respects the monoidal product in
the sense that for all objects $a,b,c$ and arrows $f,g$ in $\mathcal{B}$
we have
\begin{eqnarray*}
F(a\square b)=Fa\square'Fb\,, & \: F(f\square g)=Ff\square'Fg\,, & Fe=e'\:;\\
F\alpha_{a,b,c}=\alpha'_{Fa,Fb,Fc}\,, & F\lambda_{a}=\lambda'_{Fa}\,, & F\rho_{a}=\rho'_{Fa}.
\end{eqnarray*}

\subsection{Spherical categories \label{sub:Spherical-categories}}

In this section we follow \cite{barrett1999spherical,barrett1996invariants}
to establish the relation between the graphs and their properties
described in \prettyref{sec:Spin-Networks} and a special type of
monoidal categories, the spherical categories, which are defined to
be a pivotal category satisfying the extra condition that the right
and left traces of the endomorphisms of all objects are equal. Hence,
in spherical categories closed planar graphs are equivalent under
isotopies of $S^{2},$ while in a pivotal category planar graphs are
equivalent under isotopies of a plane. Recall that a sphere can be
regarded as a plane with the infinity points identified by the stereographic
projection. It is then natural to regard closed graphs on a plane
as closed graphs on a sphere, however, the extra condition mentioned
is needed to allow us to pass an edge of a closed graph through the
point at infinity such that the resulting and the original closed
networks represent the same value.

In a more mathematical language, the representations of Hopf algebras
with a distinguished element satisfying some conditions, like involutory
or ribbon Hopf algebras (cf. Sec. \ref{sub:Hopf-algebras}) form in
the non-degenerate case spherical categories and the morphisms of
these categories are represented by planar graphs. If the category
of representations of a Hopf algebra is degenerate, one has to take
a non-degenerate quotient category. This quotient, however, is not
the category of representations of a finite dimensional Hopf algebra
since it is not possible to assign a dimension to each object, i.e.
a positive integer which is additive under direct sum and multiplicative
under tensor product, \cite{barrett1996invariants}. 
\begin{defn}
\label{def:pivotal-category}A \textbf{pivotal category} is a category
$\mathcal{C}$ with duals and a morphism $\epsilon(c):\, e\rightarrow c\otimes c^{*}$
for each object $c$ in $\mathcal{C}$ satisfying the following conditions\end{defn}
\begin{enumerate}
\item For all $f:\, a\rightarrow b$ in $\mathcal{C}$ following diagram
commutes:\\
\\
\includegraphics[scale=0.4]{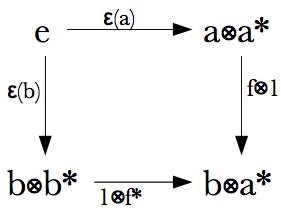}
\item For all objects $a$ in $\mathcal{C}$ following composition is $1_{a^{*}}$:\\
\\
\includegraphics[scale=0.4]{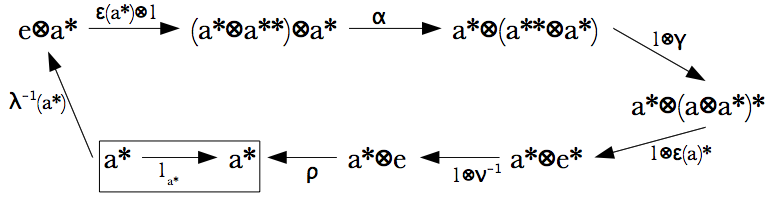}
\item For all objects $a,b$ in $\mathcal{C}$ following composite is required
to be $\epsilon(a\otimes b)$ :\\
\\
\includegraphics[scale=0.4]{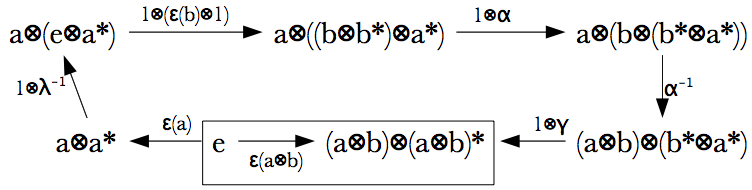}
\end{enumerate}
The morphism $\epsilon(c)$ in the above definition corresponds to
$^{c}\bigcup^{c^{*}}$. This correspondence together with the identification
of the identity $1_{a}$ with a straight line \textbf{$\biggl|_{a}^{a}$}
serves to understand the above commuting diagrams. Notice that the
dual $f^{*}$ of $f$, according to $f:\, a\rightarrow b$ in $\mathcal{C}$
and the functor $*:\,\mathcal{C}^{op}\rightarrow\mathcal{C}$, is
$f^{*}\equiv(f^{op})^{*}:\, b^{*}\rightarrow a^{*}$ since $f^{op}:\, b\rightarrow a$.
Hence we have for the diagrammatic expression of the first condition

\begin{center}
\includegraphics[scale=0.5]{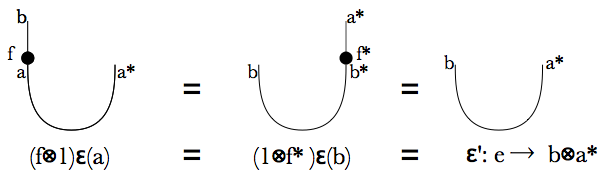}
\par\end{center}

where $\epsilon':\, e\rightarrow b\otimes a^{*}$ is a morphism in
$\mathcal{C}$. 

Now, notice that in the second condition of definition \ref{def:pivotal-category}
the dual of $\epsilon(c)$ is $\epsilon(c)^{*}:\, c^{**}\otimes c^{*}\rightarrow e^{*}$,
thus, the resulting diagram corresponding to this property of pivotal
categories is

\begin{center}
\includegraphics[scale=0.4]{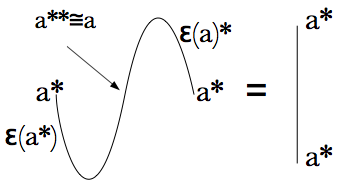}
\par\end{center}

The third condition expresses the compatibility of the pivotal structure
with the monoidal product $\otimes$ and the counit $e$:

\begin{center}
\includegraphics[scale=0.5]{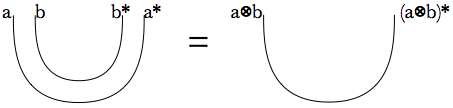}
\par\end{center}
\begin{rem*}
\textit{(i)} In a pivotal category the following composite is the
dual $f^{*}:\, b^{*}\rightarrow a^{*}$ of any morphism $f:\, a\rightarrow b$
of $\mathcal{C}$:

\includegraphics[scale=0.5]{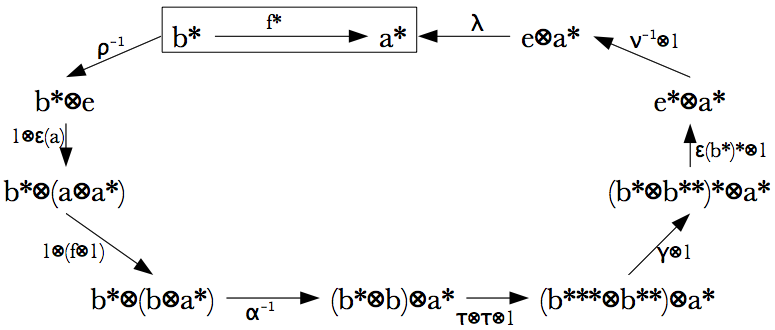}

This takes the following diagrammatic form

\includegraphics[scale=0.6]{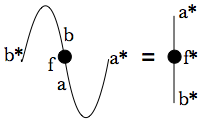}

\textit{(ii) }In this type of category for all $a,b,c$ in $\mathcal{C}$
there are natural isomorphisms such that
\begin{eqnarray*}
Hom(a\otimes b,\, c)\cong Hom(b,\, a^{*}\otimes c) & ; & Hom(a,\, b\otimes c)\cong Hom(a\otimes c^{*},\, b)\\
Hom(a\otimes b,\, c)\cong Hom(a,\, c\otimes b^{*}) & ; & Hom(a,\, b\otimes c)\cong Hom(b^{*}\otimes a,\, c)
\end{eqnarray*}

This natural isomorphisms can be described diagrammatically as following
in the case of a morphism $f\in Hom(a\otimes b,\, c)\cong Hom(a,\, c\otimes b^{*})\ni g$

\includegraphics[scale=0.5]{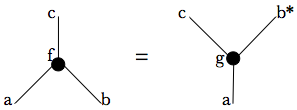}. 

This implies that given a trivalent planar graph with an orientation
of each edge, a distinguished edge at each vertex, a map from edges
to objects and a map from vertices to morphisms, then the graph can
be evaluated to obtain a morphism. The fact that this data corresponds
to a pivotal category implies that the resulting morphism is dependent
only on the isotopy class of the graph, \textit{if} the data is carried
along with the isotopy. In \prettyref{cha:Invariants-of-3-Manifolds}
we will see that this is used to construct invariants of 3-manifolds
which represent partition functions in analogy to the ones constructed
in \prettyref{sec:Connection-between-GR-SN}.

\textit{(iii)} Main examples of pivotal categories are categories
of representations of Hopf algebras which are in general not strict.
The difference between a pivotal and \textit{strict} pivotal category
is rather technical. Objects that are canonically isomorphic in a
pivotal category are equal in a strict pivotal category. We will only
consider the latter ones, however, every pivotal category is equivalent
to a strict one, hence, there is no loss of generality, \cite{barrett1996invariants,barrett1999spherical,MacLane1971}.
\end{rem*}
Now, in order to evaluate graphs one needs to define a way of mapping
a graph, which represents basically a morphism, into a scalar. More
general,%
\footnote{See also \prettyref{sec:Invariants-from-Spherical}.%
} one defines a so-called trace map as follows:
\begin{defn}
For any object $a$ in a pivotal category $\mathcal{C}$, the monoid%
\footnote{A monoid is an algebraic structure with an associative multiplication
and an identity element, e.g. a semi-group, a monoidal category with
one object, cf. \cite{MacLane1971}.%
} $End(a)$ has two \textbf{trace maps} $tr_{L},\, tr_{R}:\, End(a)\rightarrow End(e);\, f\mapsto tr_{L,R}(f)$,
which are defined to be the following composites respectively\smallskip{}

\includegraphics[scale=0.4]{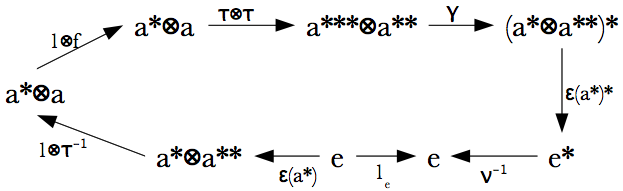}

and\smallskip{}

\includegraphics[scale=0.4]{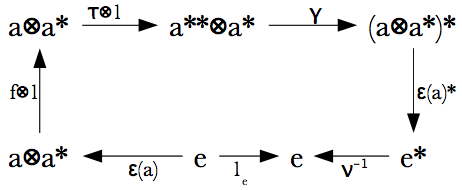}.
\end{defn}
In a strict pivotal category these definitions simplify to
\begin{eqnarray*}
tr_{L}(f) & = & \epsilon(a^{*})(1\otimes f)\epsilon(a^{*})^{*}\\
tr_{R}(f) & = & \epsilon(a)(f\otimes1)\epsilon(a)^{*}
\end{eqnarray*}
since $a^{**}=a,\: a\otimes a^{*}=(a\otimes a^{*})^{*}$.
\begin{defn}
A \textbf{spherical category} is a pivotal category for which both
trace maps coincide for all objects $a$ in $\mathcal{C}$ and all
morphisms $f\in End(a)$ of $\mathcal{C}$, i.e. 
\[
tr_{L}(f)=tr_{R}(f)\,;\,\forall a\,\text{in}\,\mathcal{C},\,\forall f\in End(a).
\]
\end{defn}
\begin{rem*}
This is equivalent to $tr_{L}(f)=tr_{L}(f^{*})\,,\,\forall f\in End(a)$.
Also, in a spherical category $tr_{L}(f\otimes g)=tr_{L}(f)\cdot tr_{L}(g)\,,\,\forall f\in End(a),\,\forall g\in End(b)$,
where $\cdot$ is the binary operation in $End(e)$.
\end{rem*}
With the definition of trace and the condition of spherical categories
we are able to define for each object $c$ in a spherical category
its quantum dimension by $dim_{q}(c)=tr_{L}(1_{c})$, cf. \prettyref{sec:Spin-Networks}
and \vref{eq:loop-value-in-TL-alg}. The spherical condition implies
then $dim_{q}(c)=dim_{q}(c^{*})$.

Now we consider categories of finitely generated modules of spherical
Hopf algebras which are additive%
\footnote{In this case ``additive'' means that all Hom-sets are finitely generated
abelian groups w.r.t. pointwise addition and that the spherical structure
is compatible with the additive structure. We assume as well that
$End(e)=\mathbb{F}$ is a field and that each Hom-set is a finite
dimensional vector space over $\mathbb{F}$, cf. \cite{barrett1999spherical}.%
} spherical categories. 
\begin{defn}
A \textbf{spherical Hopf algebra} over a ring%
\footnote{In any additive monoidal category $\mathbb{F\equiv}End(e)$ is a commutative
ring, cf. \cite{KellyLaplaza1980} as referred to in \cite{barrett1999spherical}.%
} $\mathbb{F}$ is a Hopf algebra $H$ with an antipode $S$ and an
additional structure given by an element $w\in H$ that satisfies
following conditions:\end{defn}
\begin{enumerate}
\item $S^{2}(a)=waw^{-1},\:\forall a\in H$
\item $\Delta w=w\otimes w$, i.e. $w$ is a group-like element.
\item $tr(\theta w^{-1})=tr(\theta w),\:\forall\theta\in End_{H}(V)$; where
$V$ is any finitely generated left $H$-module.
\end{enumerate}
Notice that a Hopf algebra with a group-like element that satisfies
the first condition above is spherical if either $w^{2}=1$ or all
modules are isomorphic to their dual, since the first case is trivial
and in the latter case we have from the spherical condition $tr_{L}(f)=tr_{L}(f^{*})$,
thus $tr(\theta w)=tr((\theta w)^{*})=tr(w^{*}\theta^{*})=tr(\theta w^{-1})$
by the cyclic property of the trace and the fact that the duals are
given by the antipode of $H$%
\footnote{One can see this by regarding the category with only one object, the
Hopf algebra itself, and noticing that the conditions in the definition
of a category with duals allow the correspondence between the dualisation
and the antipode. %
}. 

J. Barrett and B. Westbury proved in \cite{barrett1999spherical},
that if $H$ is a spherical Hopf algebra over $\mathbb{F}$, then
the category of left finitely generated $H$-modules, which are free%
\footnote{Here, ``free'' means in this context that the H-module is generated
by a finite linearly independent (over $\mathbb{F}$) set of elements
of the H-module. In other words the H-module has a basis, cf. \cite{Smith2009}.%
} as $\mathbb{F}$-modules, is a spherical category since an element
$w$ in a Hopf algebra satisfying the first and second condition above
determines a pivotal structure for the category of modules, which
as we saw in example \vref{exa:Tensor-product-of-modules} is a monoidal
category and the two trace maps are given by $tr_{L}(\theta)=tr(\theta w)$
and $tr_{R}(\theta)=tr(\theta w^{-1})$.

If we assign trivalent planar graphs to categories, it is the structure
just discussed which allows us to evaluate the graphs to give morphisms
depending only on the isotopy class of the graph. Furthermore, spherical
categories determine an invariant of isotopy classes of closed graphs
embedded on the sphere.

Finally, we describe some conditions needed in order to obtain a spherical
category from which one may construct an invariant of (closed) 3-manifolds.
These conditions allow to construct an invariant from a finite summation
over some objects of the spherical category with the trace of some
map as summands, cf. \prettyref{cha:Invariants-of-3-Manifolds}.

The first condition is non-degenerancy which allows the construction
of the second condition, which is that of a spherical category being
semisimple. This allows an isomorphism between a sum of tensor products
of vector spaces given by the Hom-sets and its corresponding Hom-set
with a vector space structure, see \prettyref{eq:semi-simple-condition-sph-cat}.

Following definitions are given in \cite{barrett1996invariants},
\begin{defn}
For any two objects $a,b$ in $\mathcal{C}$, there is a \textbf{bilinear
pairing}
\[
\Theta:\, Hom(a,b)\otimes Hom(b,a)\rightarrow\mathbb{F}
\]
defined by $\Theta(f,g)=tr_{L}(fg)=tr_{L}(gf)$. An additive spherical
category is non-degenerate if, for all objects $a$ and $b$, the
pairing $\Theta$ is non-degenerate in the usual sense.

A \textbf{semisimple} spherical category $\mathcal{C}$ is \textit{additive
and non-degenerate} such that there exists a set of inequivalent non-zero%
\footnote{An object $a$ is non-zero if $End(a)\neq0$.%
} objects of a set $J$, so that for any two objects $x,y$ in $\mathcal{C}$,
the natural map given by
\begin{equation}
\bigoplus_{a\in J}Hom(x,a)\otimes Hom(a,y)\rightarrow Hom(x,y)\label{eq:semi-simple-condition-sph-cat}
\end{equation}
is an isomorphism. An object is called simple if its endomorphism
ring is isomorphic to $\mathbb{F}$.
\end{defn}
The set $J$ is fixed by the category since there is an isomorphism
between any simple object in the category and its corresponding (unique)
element in $J$. Thus, every element in $J$ is simple. A semisimple
spherical category is called finite if $J$ is finite, i.e. if the
set of isomorphism classes of simple objects is finite. The dimension
of this finite category is then defined by $K=\sum_{a\in J}\text{dim}_{q}^{2}(a)$.

Next theorem ensures that it is always possible to construct a non-degenerate
additive spherical category from a category which is only additive
and spherical. The proof can be found in \cite{barrett1999spherical}.
\begin{thm}
\label{thm:non-degenerate-quotient}For a given additive spherical
category $\mathcal{C}$, the additive subcategory $\mathcal{J}$ defined
to have the same set of objects and the Hom-sets defined by
\[
Hom_{\mathcal{J}}(a,b)=\left\{ f\in Hom_{\mathcal{C}}(a,b)\,;\,\Theta(f,g)=tr_{L}(fg)=0\text{ for all }g\in Hom_{\mathcal{C}}(b,a)\right\} 
\]
gives a quotient $\mathcal{C}/\mathcal{J}$ which is a non-degenerate
additive spherical category.
\end{thm}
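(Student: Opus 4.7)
The plan is to verify in turn that (i) $\mathcal{J}$ is a two-sided tensor ideal of $\mathcal{C}$, (ii) the quotient $\mathcal{C}/\mathcal{J}$ inherits the additive and spherical structure, and (iii) the induced pairing is non-degenerate by construction. The additive closure of each $Hom_{\mathcal{J}}(a,b)$ is immediate from the bilinearity of $\Theta$. For closure under composition, if $f\in Hom_{\mathcal{J}}(a,b)$ and $h\in Hom_{\mathcal{C}}(b,c)$, then for every $g\in Hom_{\mathcal{C}}(c,a)$ the cyclic property of $tr_L$ gives $\Theta(h\circ f,g)=tr_L(h\circ f\circ g)=tr_L(f\circ g\circ h)=\Theta(f,g\circ h)=0$, since $g\circ h\in Hom_{\mathcal{C}}(b,a)$; pre-composition is analogous.

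The main technical step, and the one I expect to be the real obstacle, is closure under tensor products: given $f\in Hom_{\mathcal{J}}(a,b)$ and $h\in Hom_{\mathcal{C}}(c,d)$, I must show $tr_L\bigl((f\otimes h)\circ g\bigr)=0$ for every $g\in Hom_{\mathcal{C}}(b\otimes d,a\otimes c)$. The strategy is to apply the pivotal natural isomorphisms listed in remark (ii) after Definition \ref{def:pivotal-category} to transport $g$ to a morphism $\tilde g\colon b\to a\otimes c\otimes d^{*}$, and then to contract the $c\otimes d^{*}$ legs against $h$ by taking a partial trace. This partial trace is unambiguously defined precisely because the category is spherical, so left and right traces on the contracted object coincide; the outcome is a morphism $g'\in Hom_{\mathcal{C}}(b,a)$ satisfying $tr_L\bigl((f\otimes h)\circ g\bigr)=tr_L(f\circ g')=\Theta(f,g')=0$. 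The bookkeeping for this reduction is best performed in the graphical calculus for strict pivotal categories (which is available without loss of generality, by the remark following Definition \ref{def:pivotal-category}).

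Granted that $\mathcal{J}$ is a two-sided tensor ideal, define $\mathcal{C}/\mathcal{J}$ to have the same objects as $\mathcal{C}$ and $Hom_{\mathcal{C}/\mathcal{J}}(a,b)=Hom_{\mathcal{C}}(a,b)/Hom_{\mathcal{J}}(a,b)$, with composition, tensor product, the dual functor $\ast$, and the pivotal morphisms $\epsilon(c)$ all obtained by pushing the $\mathcal{C}$-data through the projection; these operations are well defined exactly because $\mathcal{J}$ is a two-sided tensor ideal. The pivotal axioms 1--3 hold in $\mathcal{C}/\mathcal{J}$ because they hold in $\mathcal{C}$. Both traces descend to the quotient since $f\in Hom_{\mathcal{J}}(a,a)$ forces $tr_L(f)=\Theta(f,1_a)=0$ and, by sphericity of $\mathcal{C}$, also $tr_R(f)=tr_L(f)=0$; the equality $tr_L=tr_R$ on $\mathcal{C}/\mathcal{J}$ is then inherited from $\mathcal{C}$, so the quotient is spherical.

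Finally, non-degeneracy of the induced pairing $\bar\Theta$ on $\mathcal{C}/\mathcal{J}$ is tautological: $\bar\Theta([f],[g])=\Theta(f,g)$ is well defined by the previous step, and if $\bar\Theta([f],[g])=0$ for every $[g]\in Hom_{\mathcal{C}/\mathcal{J}}(b,a)$ then $\Theta(f,g)=0$ for every $g\in Hom_{\mathcal{C}}(b,a)$, so $f\in Hom_{\mathcal{J}}(a,b)$ and $[f]=0$. Thus the only substantive work is the tensor-ideal property of $\mathcal{J}$; every other clause is routine inheritance through the projection functor.
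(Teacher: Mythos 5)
Your proposal is correct and follows essentially the standard argument of Barrett--Westbury, which is where the paper itself defers for the proof (it only cites \cite{barrett1999spherical}): $\mathcal{J}$ is the ideal of negligible morphisms, the tensor-ideal property via partial traces is indeed the one step where sphericity is genuinely used, and non-degeneracy of the induced pairing on the quotient is tautological. The only point worth adding is that for the duality functor to descend to $\mathcal{C}/\mathcal{J}$ you also need $\mathcal{J}$ to be closed under $f\mapsto f^{*}$, which does not follow formally from the two-sided tensor-ideal property alone but is immediate from the spherical identity $tr_{L}(w)=tr_{L}(w^{*})$ applied to $w=g\circ f$.
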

The above theorem is very important since this construction is the
general way of attaining the semisimple condition needed for the construction
of invariants of closed 3-manifolds. This result is proved in \cite{barrett1999spherical}
and states that, given a spherical Hopf algebra $H$ over a field
$\mathbb{F}$, the non-degenerate quotient of the spherical category
of finitely generated left $H$-modules is semisimple. 

We will see in \prettyref{cha:Invariants-of-3-Manifolds} that the
above gives rise to the generalized construction of the Turaev-Viro
invariant as a state sum over the quotient of the category of representations
of the deformed quantum enveloping Hopf algebra $U_{q}(\mathfrak{sl}_{2})$,
which can be made a spherical Hopf algebra. Hence, these categories
can be seen as the generalization of the objects introduced in \prettyref{sec:Spin-Networks}.

\section{Some More Diagrams: The Temperley-Lieb Recoupling Theory \label{sec:Some-More-Diagrammatics}}

In this section, the diagrams of the previous sections are formalized
and it is intended as a short introduction in this very broad area
of knot theory. We follow \cite{kauffman2001knots} for the definition
of the bracket polynomial and a discussion of the Jones polynomial,
as well as \cite{kauffman1994temperley} for the discussion on Temperley-Lieb
recoupling theory. 

First we introduce the bracket polynomial to associate a knot to an
invariant in order to evaluate the given diagram, which is defined
as an schematized picture of the given knot in a plane. The diagram
is composed of curves that cross in 4-valent vertices. Each of these
vertices are equipped with the extra structure given by an under-
or over-crossing:

\begin{center}
\begin{eqnarray*}
\xygraph{!{0;/r2.0pc/:}[u(0.5)][ll]!{\xunderv}} & \text{and} & \xygraph{!{0;/r2.0pc/:}[u(0.5)][r]!{\xoverv}}\\
Overcrossing &  & Undercrossing
\end{eqnarray*}

\par\end{center}

Now, let us consider a crossing in an unoriented diagram, then two
associated diagrams can be obtained by splicing the crossing in two
ways as
\[
\underset{B}{\xygraph{!{0;/r1.75pc/:}[u(0.5)]!{\xunderv->{B}|{A}<{A}}}}\rightarrow\begin{cases}
\xygraph{!{0;/r1.0pc/:}[ld(0.5)]!{\vcap}[u(1.5)]!{\vcap-}} & A\\
\xygraph{!{0;/r1.0pc/:}[lu(0.5)]!{\hcap}[r(1.5)]!{\hcap-}} & B
\end{cases}
\]
where $A$ or $B$ denote the type of splitting, i.e. an $A-/B-$split
joints the regions labelled $A/B$ at the crossing, \cite{kauffman2001knots}.
With this convention a split crossing labelled $A$ or $B$ in a diagram
can be reconstructed to form the original crossing. Hence, by keeping
track of the labelling one can reconstruct the original link from
\emph{any }of its descendants. The primitive descendants of a link
$K$, those which have no crossings left, are collections of Jordan
curves in the plane%
\footnote{Jordan curves are closed loops in the plane homeomorphic to $S^{1}$.%
} and are called the \textbf{states of }$K$. The labelling of the
above mentioned splitting process makes the algorithm to evaluate
a link unambiguous since each primitive descendant can be associated
in the same way to its ancestral link. From these states we are able
to construct invariants of knots by averaging over them.
\begin{defn}
The \textbf{bracket polynomial} is defined to be the following formula
\begin{equation}
\left\langle K\right\rangle =\sum_{\sigma}\left\langle K\,|\sigma\right\rangle d^{\left\Vert \sigma\right\Vert }\label{eq:bracket-polynomial}
\end{equation}
where $\sigma$ denotes a state of $K$, $\left\Vert \sigma\right\Vert =(\text{loops in }\sigma)-1$
and $\left\langle K\,|\sigma\right\rangle $ the product of (commutative)
labels of $\sigma$, i.e. the product of A's and B's labelling a given
state. Here $d$ and the labels are commuting algebraic variables.
Hence, in the case relevant for us where $B=A^{-1}$ and $d=-A^{2}-A^{-2}$
we have that $\left\langle K\right\rangle \in\mathbb{Z}\bigl[A,\, A^{-1}\bigr]$
is a polynomial in $A\text{ and }A^{-1}$ with integer coefficients.
\end{defn}
As defined, the bracket polynomial is not a topological invariant,
since in this form its value changes under Reidemeister moves. To
see this notice that for an over-crossing $\UseComputerModernTips\xygraph{!{0;/r1.0pc/:}[u(0.5)]!{\htwist}}$
we have%
\footnote{This is understood as regarding the over-crossing and the splits as
part of a bigger diagram, i.e. we consider the splitting process locally
and the rest of the diagram stays unchanged. For the proof of this
relation and the following ones see \cite[Part I, Sec. 3]{kauffman2001knots}. %
} 
\begin{equation}
\left\langle \UseComputerModernTips\xygraph{!{0;/r1.0pc/:}[u(0.5)]!{\htwist}}\right\rangle =A\left\langle \UseComputerModernTips\xygraph{!{0;/r1.0pc/:}[u(0.5)]!{\huntwist}}\right\rangle +B\left\langle \UseComputerModernTips\xygraph{!{0;/r1.0pc/:}[u(0.5)]!{\vuntwist}}\right\rangle \label{eq:splitting}
\end{equation}

The reiterated use of the above relation%
\footnote{Its analog one for the under-crossing is not needed since switching
the crossings exchanges the roles of $A$ and $B$.%
} is all is needed to compute the bracket. Now, it is easy to see that
the diagrams in the first and second Reidemeister moves give the following
bracket relations
\[
\begin{array}{ccc}
\left\langle \xygraph{!{0;/r1.0pc/:}[u]!{\vover}!{\vunder-}[r]!{\xcaph@(0)}[luu]!{\xcaph@(0)}[lll]!{\xcaph@(0)}[ldd]!{\xcaph@(0)}}\right\rangle  & = & AB\left\langle \xygraph{!{0;/r1.0pc/:}[u(0.5)]!{\vuntwist}}\right\rangle +AB\left\langle \xygraph{!{0;/r1.0pc/:}[u(0.5)]!{\huntwist}[d(0.5)]!{\vcap[-0.7]}!{\vcap[0.7]}}\right\rangle +(A^{2}+B^{2})\left\langle \xygraph{!{0;/r1.0pc/:}[u(0.5)]!{\huntwist}}\right\rangle \\
\left\langle \xygraph{!{0;/r1.0pc/:}[u(0.75)]!{\vover}!{\vcap-}[lu]!{\xcaph@(0)}[r]!{\xcaph@(0)}}\right\rangle  & = & (Ad+B)\left\langle \xygraph{!{0;/r1.0pc/:}[u]!{\xcaph[3]@(0)}}\right\rangle \\
\left\langle \xygraph{!{0;/r1.0pc/:}[u(0.75)]!{\vunder}!{\vcap-}[lu]!{\xcaph@(0)}[r]!{\xcaph@(0)}}\right\rangle  & = & (A+Bd)\left\langle \xygraph{!{0;/r1.0pc/:}[u]!{\xcaph[3]@(0)}}\right\rangle 
\end{array}
\]

So, if $B=A^{-1}$ and $d=-(A^{2}+A^{-2})$, then we obtain 

\[
\begin{array}{ccc}
\left\langle \xygraph{!{0;/r0.75pc/:}[u]!{\vover}!{\vunder-}[r]!{\xcaph@(0)}[luu]!{\xcaph@(0)}[lll]!{\xcaph@(0)}[ldd]!{\xcaph@(0)}}\right\rangle  & = & \left\langle \xygraph{!{0;/r1.8pc/:}[u(0.5)]!{\hcap}[r(1.2)]!{\hcap-}}\right\rangle \\
\left\langle \xygraph{!{0;/r1.0pc/:}[u(0.75)]!{\vover}!{\vcap-}[lu]!{\xcaph@(0)}[r]!{\xcaph@(0)}}\right\rangle  & = & (-A^{3})\left\langle \xygraph{!{0;/r1.0pc/:}[u]!{\xcaph[1.5]@(0)}}\right\rangle \\
\left\langle \xygraph{!{0;/r1.0pc/:}[u(0.75)]!{\vunder}!{\vcap-}[lu]!{\xcaph@(0)}[r]!{\xcaph@(0)}}\right\rangle  & = & (-A^{-3})\left\langle \xygraph{!{0;/r1.0pc/:}[u]!{\xcaph[1.5]@(0)}}\right\rangle 
\end{array}
\]

Note that from the definition of the bracket polynomial, it follows
that $\left\langle OK\right\rangle =d\left\langle K\right\rangle $
where $OK$ denotes the disjoint union of a closed loop to the diagram
$K$. From this, it follows that the bracket with the conditions $B=A^{-1},\, d=-(A^{2}+A^{-2})$
is an invariant of regular isotopy, i.e. under the moves II and III%
\footnote{The invariance under move III follows from the invariance under move
II, \cite{kauffman2001knots}.%
}. In order to have an invariant of ambient isotopy, i.e. under moves
I, II, III, we need to normalize the bracket polynomial. The normalized
bracket $\mathcal{L}_{K}$ for oriented links $K$ is defined by
\begin{equation}
\mathcal{L}_{K}=(-A^{3})^{-w(K)}\left\langle K\right\rangle \label{eq:normalized-BP}
\end{equation}
 where $w(K)$ is the writhe of $K$ defined by $w(K)=\sum_{p}\epsilon(p)$
where $p$ runs over all crossings in $K$ and $\epsilon(p)=\pm1$
is the sign of the over- and under-crossing respectively. Hence, \prettyref{eq:normalized-BP}
is an invariant of ambient isotopy since $w(K)$ is an invariant of
regular isotopy and
\[
\begin{array}{ccc}
w\Bigl(\xygraph{!{0;/r0.7pc/:}[u(0.75)]!{\vover}!{\vcap-}[lu]!{\xcaph@(0)}[r]!{\xcaph@(0)}}\Bigr) & = & 1+w(\xygraph{!{0;/r1.0pc/:}[u]!{\xcaph@(0)}})\\
w\Bigl(\xygraph{!{0;/r0.7pc/:}[u(0.75)]!{\vunder}!{\vcap-}[lu]!{\xcaph@(0)}[r]!{\xcaph@(0)}}\Bigr) & = & -1+w(\xygraph{!{0;/r1.0pc/:}[u]!{\xcaph@(0)}})
\end{array}
\]

\begin{rem*}
The mirror image $K^{*}$ of a link $K$ is the link resulting from
the interchange of over- and under-crossings. Let $K^{*}$ be the
mirror image of an oriented link $K$, then 
\[
\begin{array}{ccc}
\left\langle K^{*}\right\rangle (A) & = & \left\langle K\right\rangle (A^{-1})\\
\mathcal{L}_{K^{*}}(A) & = & \mathcal{L}_{K}(A^{-1})
\end{array}
\]

Hence, if $\mathcal{L}_{K}(A)\neq\mathcal{L}_{K}(A^{-1})$, then $K$
is not ambient isotopic to its mirror image $K^{*}$.
\end{rem*}
We now give the definition of the Jones polynomial as well as its
relation to the normalized bracket polynomial, which ensures the existence
and well-definiteness of the one-variable Jones polynomial.
\begin{defn}
The one-variable Jones polynomial $V_{K}(t)$ is a polynomial in $t^{1/2}$
with finitely many positive and negative powers of $t^{1/2}$ (i.e.
a Laurent polynomial) associated to an oriented link $K$ and which
satisfies the following properties:\end{defn}
\begin{enumerate}
\item If $K$ is ambient isotopic to $K'$, $V_{K}(t)=V_{K'}(t).$
\item $V_{O^{+}}(t)=1$, where $O^{+}$ denotes the loop having clockwise
orientation: $\UseComputerModernTips\xygraph{!{0;/r1.5pc/:}[u(0.5)]!{\hcap-}!{\hcap=<}}$ 
\item $\UseComputerModernTips\knottips{FT}t^{-1}V_{t}\Bigl[\xygraph{!{0;/r1.0pc/:}[u(0.5)]!{\htwist}}\Bigr]-tV_{t}\Bigl[\xygraph{!{0;/r1.0pc/:}[u(0.5)]!{\htwistneg}}\Bigr]=\Bigl(t^{1/2}-t^{-1/2}\Bigr)V_{t}\Bigl[\xygraph{!{0;/r1.0pc/:}[u(0.5)]!{\huntwist}}\Bigr]$,
where the brackets are a notation for links which differ from each
other only in the showed crossing.
\end{enumerate}
With this definition of $V_{K}(t)$ and \prettyref{eq:normalized-BP},
we have that the normalized bracket relates to the 1-variable Jones
polynomial as
\[
\mathcal{L}_{K}(t^{-1/4})=V_{K}(t).
\]
This shows that even if the definition above is not obviously well-defined
we can take it as given since $\mathcal{L}_{K}$ exists and is well-defined,
\cite{kauffman2001knots}. 

The Jones polynomial has been generalized in many ways. One way involves
choosing a compact Lie group $G$ and an irreducible representation.
Then a polynomial invariant of oriented links corresponding to these
irreducible representations is constructed by using solutions of the
Yang-Baxter equations \prettyref{eq:Yang-Baxter-eq}. With this method,
the original Jones polynomial corresponds to choosing $G=SU(2)$ with
its standard representation on $\mathbb{C}^{2}$, \cite{atiyah1990geometry}.
In fact, the quantum group $U_{q}(\mathfrak{sl}_{2})$ gives rise
to the Jones polynomial and the Yang-Baxter equations are the algebraical
form of Move III in \prettyref{sec:Spin-Networks}, \cite{sawin1996links}.

\subsection{Temperley-Lieb Algebra\label{sub:Temperley-Lieb-Algebra}}

Now we proceed with the introduction%
\footnote{The following can be found in \cite{kauffman1994temperley}.%
} of an algebra which allows us to construct the bracket polynomial
and the projectors given in definition \vref{def:The-weaving}. For
this, first define the elementary tangles $U_{0}=1_{n},\, U_{1},\, U_{2},\dots,\, U_{n-1}\in T_{n}$
of the $n$-strand algebra $T_{n}$. We can think of each of these
tangles as being $n$ strands fixed at their ends in a box having
$n$ fixing points at the upper margin (outputs) and $n$ fixing points
at the lower margin (inputs), such that in $U_{i}$ the $k^{th}$
output is connected for $k\neq i,\, i+1$ to the $k^{th}$ input and
the $i^{th}$ input and output are connected to the $(i+1)-th$ input
and output respectively. The multiplication of the elements is given
by attaching the output with the input, i.e. by stacking the boxes
one above the other. The algebra is given by the following defining
relations:
\begin{enumerate}
\item $U_{i}^{2}=dU_{i}$, where $d$ is a value assigned to a closed loop.
\item $U_{i}U_{i+1}U_{i}=U_{i}$
\item $U_{i}U_{j}=U_{j}U_{i}$ for $|i-j|>1.$\end{enumerate}
\begin{example*}
Consider the case $n=4$. The elementary tangles in $T_{4}$ are given
by 
\[
1_{4}\equiv\xygraph{!{0;/r2.0pc/:}[u(0.5)]!{\xcapv@(0)}[r(0.3)u]!{\xcapv@(0)}[r(0.3)u]!{\xcapv@(0)}[r(0.3)u]!{\xcapv@(0)}};\; U_{1}\equiv\xygraph{!{0;/r2.0pc/:}[d(0.5)]!{\vloop[0.3]}[u]!{\vloop[-0.3]}[r(0.5)]!{\xcapv@(0)}[r(0.3)u]!{\xcapv@(0)}};\; U_{2}\equiv\xygraph{!{0;/r2.0pc/:}[u(0.5)]!{\xcapv@(0)}[r(0.2)]!{\vloop[0.3]}[u]!{\vloop[-0.3]}[r(0.5)]!{\xcapv@(0)}};\; U_{3}\equiv\xygraph{!{0;/r2.0pc/:}[u(0.5)]!{\xcapv@(0)}[r(0.3)u]!{\xcapv@(0)}[r(0.25)]!{\vloop[0.3]}[u]!{\vloop[-0.3]}}
\]
 with the relations, say 
\[
\xygraph{!{0;/r1.7pc/:}[r(4)][d]!{\vloop[0.3]}[u]!{\vloop[-0.3]}[r(0.5)]!{\xcapv@(0)}[r(0.3)u]!{\xcapv@(0)}[u][l(0.8)]!{\vloop[0.3]}[u]!{\vloop[-0.3]}[r(0.5)]!{\xcapv@(0)}[r(0.3)u]!{\xcapv@(0)}}\cong\,\xygraph{!{0;/r2.0pc/:}[r]!{\vcap[-0.3]}!{\vcap[0.3]}[rd(0.5)]!{\vloop[0.3]}[u]!{\vloop[-0.3]}[r(0.5)]!{\xcapv@(0)}[r(0.3)u]!{\xcapv@(0)}}
\]
\[
\xygraph{!{0;/r2.0pc/:}[r][d(1.5)]!{\vloop[0.3]}[u]!{\vloop[-0.3]}[r(0.55)]!{\xcapv@(0)}[r(0.3)u]!{\xcapv@(0)}[uu][l(0.85)]!{\xcapv@(0)}[r(0.3)]!{\vloop[0.265]}[u]!{\vloop[-0.265]}[r(0.54)]!{\xcapv@(0)}[u][l(0.85)]!{\vloop[0.3]}[u]!{\vloop[-0.3]}[r(0.57)]!{\xcapv@(0)}[r(0.29)u]!{\xcapv@(0)}}\cong\;\xygraph{!{0;/r2.0pc/:}[d(0.5)]!{\vloop[0.3]}[u]!{\vloop[-0.3]}[r(0.5)]!{\xcapv@(0)}[r(0.3)u]!{\xcapv@(0)}}
\]

\[
\xygraph{!{0;/r2.0pc/:}!{\xcapv@(0)}[r(0.29)u]!{\xcapv@(0)}[r(0.25)]!{\vloop[0.3]}[u]!{\vloop[-0.3]}[l(0.55)]!{\vloop[0.283]}[u]!{\vloop[-0.283]}[r(0.53)]!{\xcapv@(0)}[r(0.3)u]!{\xcapv@(0)}}\cong\;\xygraph{!{0;/r2.0pc/:}[u]!{\xcapv@(0)}[r(0.29)u]!{\xcapv@(0)}[r(0.25)]!{\vloop[0.3]}[u]!{\vloop[-0.3]}[l(0.55)][dd]!{\vloop[0.283]}[u]!{\vloop[-0.283]}[r(0.53)]!{\xcapv@(0)}[r(0.3)u]!{\xcapv@(0)}}
\]

\end{example*}
Where $\simeq$ means that the tangles are equivalent. This is the
case when they are regularly isotopic relative to their (fixed) endpoints.
Every planar non-intersecting $n$-tangle is equivalent to a product
of the $n$ elementary tangles and two such products represent equivalent
tangles if and only if one product can be obtain from the other by
the relations above. 
\begin{defn}
The \textbf{Temperley-Lieb algebra} $T_{n}$ is a free additive algebra
over the set of rational functions with numerator and denominator
in $\mathbb{Z}\bigl[A,\, A^{-1}\bigr]$ and with multiplicative generators
$1_{n},\, U_{1},\dots,\, U_{n-1}$ and the relations given above.
The value of the loop is $d=-A^{2}-A^{-2}$ and since $A\text{ and }A^{-1}$
commute with all elements of $T_{n}$, thus $d$ too. 
\end{defn}
In order to evaluate an $n$-tangle $x$ we define a trace map $tr:\, T_{n}\rightarrow\mathbb{Z}\bigl[A,\, A^{-1}\bigr]$
defined by $tr(x)=\langle\bar{x}\rangle$ and $tr(x+y)=tr(x)+tr(y)$
where $\bar{x}$ denotes the standard closure of $ $$x$ obtained
by joining the $k^{th}$ input and output from outside the tangle,
such that each strand forms a loop going from the bottom of the n-tangle
to the top of the n-tangle. This trace map is defined also for the
generalization of $T_{n}$ to the tangle algebra generated multiplicatively
by $n$-strand tangles of general form, e.g. there can be crossings
inside the box representing the tangle. Note that the cyclic property
of the trace map, i.e. $tr(ab)=tr(ba)$, is a direct consequence of
the properties of the bracket polynomial%
\footnote{Recall that the coefficients $\left\langle \bar{ab}\left|K\right.\right\rangle $
corresponding to the states of $K$ are products of commutating factors.%
} and the standard closure:
\begin{gather*}
tr(ab)=\langle\overline{ab}\rangle=\sum_{\sigma}\langle\overline{ab}|\sigma\rangle d^{\|\sigma\|}=\sum_{\sigma}\left\langle \bar{a}|\sigma\right\rangle \left\langle \bar{b}|\sigma\right\rangle d^{\|\sigma\|}\\
=\sum_{\sigma}\left\langle \bar{b}|\sigma\right\rangle \left\langle \bar{a}|\sigma\right\rangle d^{\|\sigma\|}=\sum_{\sigma}\langle\overline{ba}|\sigma\rangle d^{\|\sigma\|}=tr(ba)
\end{gather*}
 
\begin{rem*}
In the case where $x\in T_{n}$, i.e. $x$ is a product of $U_{i}$'s,
we have that $\bar{x}$ is a disjoint union of Jordan curves so $tr(x)=d^{\|\bar{x}\|}$,
where $\|\bar{x}\|$ is the number of loops in the plane. Thus, each
product of $U_{i}$'s correspond to a single bracket state.
\end{rem*}
Consider now the Artin braid group $B_{n}$ which is one special case
of the above generalization and is useful to formalize the concept
of $n$-edge in \prettyref{sec:Spin-Networks}. As the name suggests
a braid in $B_{n}$ is a collection of $n$ strands woven into a single
$n$-tangle. Notice that weaving two strands, say the $i^{th}$ and
$(i+1)^{th}$, is nothing more than crossing one above the other.
There are, however, two ways of doing this, weaving the $i^{th}$
over the $(i+1)^{th}$ forming an over-crossing or vice versa, forming
an under-crossing. Denote these two ways, $\sigma_{i}$ and $\sigma_{i}^{-1}$
respectively. Denoting $\sigma_{i}^{-1}$ as the inverse of $\sigma_{i}$
makes sense since they are, in fact, inverse to each other:
\[
\sigma_{i}=\xygraph{!{0;/r2.0pc/:}[u(0.5)]!{\xcapv@(0)|{...}<{1}}[r(0.5)u]!{\xcapv@(0)}[r(0.25)u]!{\xunderv[0.8]>{i}}[r][u]!{\xcapv@(0)|{...}}[r(0.5)u]!{\xcapv@(0)<{n}}}\text{and }\quad\sigma_{i}^{-1}=\xygraph{!{0;/r2.0pc/:}[u(0.5)]!{\xcapv@(0)|{...}<{1}}[r(0.5)u]!{\xcapv@(0)}[r(0.25)u]!{\xoverv[0.9]<{i}}[r(1.1)][u]!{\xcapv@(0)|{...}}[r(0.5)u]!{\xcapv@(0)<{n}}}
\]
 such that
\[
\sigma_{i}\sigma_{i}^{-1}=\qquad\xygraph{!{0;/r2.0pc/:}[u]!{\xcapv[1.8]@(0)<{1}|{...}}[r(0.5)u]!{\xcapv[1.8]@(0)}[r(0.25)][u(0.9)]!{\xunderv[0.8]}[u(0.2)]!{\xoverv[0.8]}[r][u(2)]!{\xcapv[1.8]@(0)|{...}}[r(0.5)u]!{\xcapv[1.8]@(0)<{n}}}\cong1_{n}
\]
 due to the second Reidemeister move.

Now, considering \prettyref{eq:splitting} each $\sigma_{i}$ has
an horizontal type $A$ smoothering $H(\sigma_{i})=U_{i}$ and a vertical
type $A^{-1}$ smoothering $V(\sigma_{i})=1_{n}$ and for $\sigma_{i}^{-1}$
the opposite types. Hence, each bracket state of the closure $\bar{b}$
of a braid $b$ corresponds to the closure of an element of the Temperley-Lieb
algebra. This means that there exists a representation $\rho:\, B_{n}\rightarrow T_{n}$
given by 
\[
\begin{array}{ccc}
\rho(\sigma_{i}) & = & AU_{i}+A^{-1}1_{n}\\
\rho(\sigma_{i}^{-1}) & = & A^{-1}U_{i}+A1_{n}
\end{array}
\]
 which allows us to compute $\langle\bar{b}\rangle$ as a sum of trace
evaluations of elements of $T_{n}$, i.e. $tr(\rho(b))=\langle\bar{b}\rangle$.
Hence, the $n$-edges introduced in \prettyref{sec:Spin-Networks}
are related to the Temperley-Lieb algebra via this representation
and, moreover, its value defined as the value of the bracket polynomial
\prettyref{eq:bracket-polynomial} of its closure is well-defined
and computable via the above representation. 

Before discussing some properties of the projectors, or $n$-edges,
and the 3-vertex as defined in \prettyref{sec:Spin-Networks}, we
give a more general and formal definition of the projectors as a sum
of $n$-tangles.

First, consider the element $f_{i}\in T_{n}$ defined inductively
for $i=0,\,1,\dots,\, n-1$ by
\[
\begin{array}{ccc}
(i)\quad f_{0} & = & 1_{n}\\
(ii)\; f_{i+1} & = & f_{i}-\mu_{i+1}f_{i}U_{i+1}f_{i}
\end{array}
\]
 where $\mu_{1}=d^{-1},\,\mu_{i+1}=(d-\mu_{i})^{-1}$.

With this definition the elements $f_{i}$ have following properties
for $i\in\{0,\,1,\dots,\, n-1\}$, \cite{kauffman1994temperley}:
\begin{enumerate}
\item $f_{i}^{2}=f_{i}$ which corresponds to the projection property in
\prettyref{sec:Spin-Networks}
\item $f_{i}U_{j}=U_{j}f_{i}=0$ for $j\leq i$, corresponds to the irreducibility
of an $i$-edge as in \prettyref{sec:Spin-Networks}
\item $tr(f_{n-1})=\Delta_{n}\bigl(-A^{2}\bigr)$ and $\mu_{i+1}=\Delta_{i}/\Delta_{i+1}$
with $\Delta_{0}=1$ and 
\[
\Delta_{n}(x)=\frac{x^{n+1}-x^{-n-1}}{x-x^{-1}}
\]
 is called the $n^{th}$ Chebyschev polynomial. Furthermore, it holds
that $tr(f_{i-1})=\Delta_{i}$ when the trace is taken with respect
to $i$-tangles since $T_{1}<T_{2}<\dots<T_{n}<\dots$
\end{enumerate}
With respect to the algebra $T_{n}$ there is a unique non-zero element
given by $f_{n-1}\in T_{n}$ such that the second property holds for
all $j=1,\,\dots,\, n-1$. Hence, with the definition of the projector
$\xygraph{!{0;/r0.5pc/:}[u(2)]!{\xcapv@(0)}[d]n*\frm{-}[d]!{\xcapv@(0)}}$
given below, we have%
\footnote{Since any $n$-tangle can be seen as a sum of elements in the Temperley-Lieb
algebra we can regard $\xygraph{!{0;/r0.5pc/:}[u(2)]!{\xcapv@(0)}[d]n*\frm{-}[d]!{\xcapv@(0)}}$
as in $T_{n}$.%
} $f_{n-1}=\xygraph{!{0;/r0.5pc/:}[u(2)]!{\xcapv@(0)}[d]n*\frm{-}[d]!{\xcapv@(0)}}$. 
\begin{defn}
\label{def:Projector-general-def}For a given positive integer $n$,
define the $n$-tangle obtained from the sum of elements of the minimal
representations of the symmetric group $S_{n}$, as a product of transpositions,
as follows:
\[
\xygraph{!{0;/r0.5pc/:}[u(2)]!{\xcapv@(0)}[d]n*\frm{-}[d]!{\xcapv@(0)}}=\frac{1}{[n;A^{-4}]!}\sum_{\sigma\in S_{n}}\bigl(A^{-3}\bigr)^{\|\sigma\|}\xygraph{!{0;/r0.5pc/:}[u(2)][r(1.5)]!{\xcapv@(0)}[d]{\hat{\sigma}}*\frm{-}[d]!{\xcapv@(0)}}
\]
where $\|\sigma\|\in\mathbb{N}$ is the number of transpositions in
the minimal representation of $\sigma$ and $\left[n;A^{-4}\right]!=\sum_{\sigma\in S_{n}}\bigl(A^{-4}\bigr)^{\|\sigma\|}=\prod_{k=1}^{n}\frac{1-A^{-4k}}{1-A^{-4}}$
is the $q$-deformed factorial with $q=A^{2}$ and the property that
for $A=\pm1\;\text{we have }[n]!=n!$. Notice that $\sigma\in S_{n}$
is represented by an element $\hat{\sigma}\in B_{n}$. 
\end{defn}
This is a more general definition as the one given in \prettyref{sec:Spin-Networks}.
Here the braiding of the $n$ strands also correspond to permutations,
but the concept has now being formalized by identifying a transposition
$(i\: i+1)\in S_{n}$ with the generator $\sigma_{i}$ of the Artin
braid group $B_{n}$.
\begin{example*}
In the case where $n=2$ the explicit expansion of the 2-edge is as
follows: 
\begin{multline*}
\xygraph{!{0;/r0.5pc/:}[u(2)]!{\xcapv[1.5]@(0)}[ru]!{\xcapv[1.5]@(0)}[d][l(0.5)]{\hspace{1em}}*\frm{-}[l(0.5)][d(0.6)]!{\xcapv[1.5]@(0)}[ru]!{\xcapv[1.5]@(0)}}=\frac{1}{[2;A^{-4}]!}\biggl[\ \ \ \xygraph{!{0;/r1.0pc/:}[u(0.75)]!{\xcapv[1.5]@(0)}[r(0.5)][u]!{\xcapv[1.5]@(0)}[l]}+A^{-3}\ \xygraph{!{0;/r1.0pc/:}[u(0.5)][ll]!{\xunderv}}\biggr]\\
=\frac{1}{1+A^{-4}}\biggl[\ \ \ \xygraph{!{0;/r1.0pc/:}[u(0.75)]!{\xcapv[1.5]@(0)}[r(0.5)][u]!{\xcapv[1.5]@(0)}[l]}+A^{-3}\ \biggl[A\xygraph{!{0;/r1.0pc/:}[d(0.5)]!{\xcaph[1.5]}[u(1.25)][l]!{\xcaph[-1.5]}}+A^{-1}\xygraph{!{0;/r1.0pc/:}[u]!{\xcapv[1.5]}[ur]!{\xcapv[-1.5]}}\biggr]\biggr]\ \ \ \ \ \ \ \ \ \ \ \ \ \\
=\xygraph{!{0;/r1.0pc/:}[u(0.75)]!{\xcapv[1.5]@(0)}[r(0.5)][u]!{\xcapv[1.5]@(0)}[l]}-\frac{1}{d}\xygraph{!{0;/r1.0pc/:}[d(0.5)]!{\xcaph[1.5]}[u(1.25)][l]!{\xcaph[-1.5]}}\ \ \ =f_{1}\ \ \ \ \ \ \ \ \ \ \ \ \ \ \ \ \ \ \ \ \ \ \ \ \ \ \ \ \ \ \ \ \ \ \ \ \ \ \ \ \ \ \ \ \ \ \ \ \ \ \ \ \ \ \ \ \ \ \ 
\end{multline*}
where $d=-A^{2}-A^{-2}$ .
\end{example*}
With the diagrammatic notation, the recursion relation defining $f_{n-1}=\xygraph{!{0;/r0.5pc/:}[u(2)]!{\xcapv@(0)}[d]n*\frm{-}[d]!{\xcapv@(0)}}$
is given by
\begin{equation}
\xygraph{!{0;/r0.5pc/:}[u(2)]!{\xcapv[1.5]@(0)|{n+1}}[d]{\hspace{1em}}*\frm{-}[d(0.6)]!{\xcapv[1.5]@(0)}}\;=\quad\xygraph{!{0;/r0.5pc/:}[u(2)]!{\xcapv[1.5]@(0)|{n}}[d]{\hspace{1em}}*\frm{-}[d(0.6)]!{\xcapv[1.5]@(0)}[r(2)][u(3.5)]!{\xcapv[4]@(0)}}-\frac{\Delta_{n-1}}{\Delta_{n}}\,\xygraph{!{0;/r0.5pc/:}[ru(3.5)]!{\xcapv@(0)}[d(0.7)]{\hspace{1em}}*\frm{-}[d(0.7)]!{\xcapv[-1.6]@(0)|{n-1}}[u(3.4)][r(0.75)]!{\xcapv@(0)}[d(1.4)]!{\vcap-}[r][u(2.1)]!{\xcapv[2]@(0)}[d(3.2)][l(1.8)]{\hspace{1em}}*\frm{-}[d(0.6)]!{\xcapv@(0)}[u][r(0.75)]!{\xcapv@(0)}[u(2.1)]!{\vcap}[r]!{\xcapv[2]@(0)}}\label{eq:diagram-recursion-relation}
\end{equation}
 from which the recursion relation for the evaluation of its closure
follows, cf. Footnote \vref{fn:loop-value-recursion-relation}:
\[
\Delta_{n+1}=\Delta_{n}d-\Delta_{n-1}
\]

Both relations above are very useful for calculations of closed diagrams,
for instance, the theta-value of a trivalent vertex discussed at the
end of this section. 

Finally we discuss some properties of the projectors and 3-vertices
which are important for the general discussion of the recoupling theory
and the $q$-deformed $6j$-symbols. In this case we impose $A^{2}=q$
where $q$ is a $2r$-th primitive root of unity such that $q^{r}=-1$,
$ $$q=e^{i\pi/r}$ and $d=-q-q^{-1}$. Thus, the trace evaluation
of the element $f_{r-2}$, given by 
\[
tr(f_{r-2})=\Delta_{r-1}=(-1)^{r-1}\frac{q^{r}-q^{-r}}{q-q^{-1}},
\]

vanishes. In fact, the stronger%
\footnote{Stronger in the sense that if any closed network contains $\xygraph{!{0;/r0.5pc/:}[u(2)]!{\xcapv[1.5]@(0)|{r-1}}[d]{\hspace{1em}}*\frm{-}[d(0.6)]!{\xcapv[1.5]@(0)}}$,
then its evaluation vanishes.%
} identity $\xygraph{!{0;/r0.5pc/:}[u(2)]!{\xcapv[1.5]@(0)|{r-1}}[d]{\hspace{1em}}*\frm{-}[d(0.6)]!{\xcapv[1.5]@(0)}}=0$
holds, \cite{kauffman1994temperley}. Furthermore, since $q=e^{i\pi/r}$
the trace of $\xygraph{!{0;/r0.5pc/:}[u(2)]!{\xcapv[1.5]@(0)|{l}}[d]{\hspace{1em}}*\frm{-}[d(0.6)]!{\xcapv[1.5]@(0)}}$
does not vanish for $0\leq l\leq r-2$ and in this range we have
\begin{equation}
\Delta_{l}=\hspace{1em}\xygraph{!{0;/r1.5pc/:}[r]!{\vcap>{l}}[d(0.2)]{\hspace{1em}}*\frm{-}[d(0.2)]!{\vcap-}[r][u(0.4)]!{\xcapv[0.4]@(0)}}=(-1)^{l}\frac{\sin(\pi(l+1)/r)}{\sin(\pi/r)}.\label{eq:loop-value-in-TL-alg}
\end{equation}
Using l'Hôpital's rule we see that the above expression delivers the
loop value as in \prettyref{sec:Spin-Networks}, where $r\rightarrow\infty$
and the $n$-edges are representations of $SU(2)$. 

For generic $d$ we remark that the coefficients of left-right symmetric
terms in the expansion of the projectors are equal. In addition,
one useful expression of the projection is given by $\xygraph{!{0;/r0.5pc/:}[u(2)]!{\xcapv[1.5]@(0)|{n}}[d]{\hspace{1em}}*\frm{-}[d(0.6)]!{\xcapv[1.5]@(0)}}=1_{n}+\mathcal{U}_{n}$
where $\mathcal{U}_{n}$ is a sum of products of the generators of
$T_{n}$, hence, it has strands turning back. From this and the irreducibility
of projectors we obtain the following identity
\[
\xygraph{!{0;/r1pc/:}[u]!{\xcapv[-1]@(0)|{m}}[d(0.25)]{\hspace{1em}\hspace{1em}}*\frm{-}[d(0.25)]!{\xcapv[0.5]@(0)}[u(0.25)]{\;}*\frm{-}[d(0.25)]!{\xcapv[-0.5]@(0)|{m}}[r(0.5)][u(3.5)]!{\xcapv@(0)|{n}}[d(0.5)]!{\xcapv[1.5]@(0)}}=\hspace{1em}\xygraph{!{0;/r1pc/:}[u]!{\xcapv[-1]@(0)|{m}}[d(0.25)]{\hspace{1em}\hspace{1em}}*\frm{-}[d(0.25)]!{\xcapv[1.5]@(0)}[r(0.5)][u(2.5)]!{\xcapv@(0)|{n}}[d(0.5)]!{\xcapv[1.5]@(0)}}
\]
 Now, consider two trivalent vertices $(a,\, d,\, c)$ and $(c,\, d,\, b)$
joint together at the edges $c$ and $d$. An important property of
this object -useful to prove the orthogonality and the Biedenharn-Elliott
identity of $q-6j$-symbols and their relation to tetrahedra- is the
following relation to the projector. Assume that $\Delta_{a}\neq0$
and denote the theta-evaluation of the 3-vertex $(a,\, c,\, d)$ by
$\xygraph{!{0;/r1.5pc/:}!{\vcap|{a}}!{\vcap-|{^{d}}}*{\bullet}!{\xcaph@(0)|{c}*{\bullet}}}$,
then 
\begin{equation}
\xygraph{!{0;/r1.5pc/:}[u(0.5)]!{\hcap|{d}}!{\hcap-|{c}}[u]!{\xcapv@(0)|{a}*{\bullet}}[d]*{\bullet}!{\xcapv@(0)|{b}}}=\delta_{b}^{a}\left(\frac{\xygraph{!{0;/r1.5pc/:}!{\vcap|{b}}!{\vcap-|{^{d}}}*{\bullet}!{\xcaph@(0)|{c}*{\bullet}}}}{\xygraph{!{0;/r1pc/:}[r]!{\vcap}[d(0.2)]{\hspace{1em}}*\frm{-}[d(0.2)]!{\vcap-}[r][u(0.4)]!{\xcapv[0.4]@(0)|{b}}}}\right)\xygraph{!{0;/r1.5pc/:}[u(2)]!{\xcapv[1.75]@(0)|{b}}[d]{\hspace{1em}\hspace{1em}}*\frm{-}[d(0.25)]!{\xcapv[1.75]@(0)}}\label{eq:Edge-with-loop}
\end{equation}
 Hence, it vanishes whenever $a\neq b$.

To conclude with this section, we mention shortly%
\footnote{For the derivation of the following formulas and a more detailed discussion
of the theta-evaluations see \cite{kauffman1994temperley}.%
} the evaluation of the theta-net $\theta(a,\, b,\, c)=\xygraph{!{0;/r1.5pc/:}!{\vcap|{a}}!{\vcap-|{^{c}}}*{\bullet}!{\xcaph@(0)|{b}*{\bullet}}}$
in the Temperley-Lieb algebra for the case $q=e^{i\pi/r}$. First,
notice that the theta-function can be written in terms of three projectors
as follows
\begin{equation}
\theta(a,\, b,\, c)=\xygraph{!{0;/r1.5pc/:}[u(0.25)]!{\vcap[1.2]<{m}}[d(0.2)]{\hspace{1em}}*\frm{-}[d(0.2)]!{\vcap[-1.2]}[r(1.25)][u(0.2)]{\hspace{1em}\;}*\frm{-}[r(0.25)][u(0.2)]!{\vcap[1.2]>{n}}[d(0.4)]!{\vcap[-1.2]}[u(0.2)][r(1.2)]{\hspace{1em}}*\frm{-}[l(2.85)][d(0.2)]!{\vcap[-3]<{^{p}}}[u(0.4)]!{\vcap[3]}}\label{eq:theta-function}
\end{equation}
where 
\[
m=\frac{a+b-c}{2},\; n=\frac{b+c-a}{2},\; p=\frac{c+a-b}{2}.
\]
 Hence, with the help of relation \prettyref{eq:diagram-recursion-relation}
one can find, after a tedious calculation, a general formula for the
evaluation of a trivalent vertex given by
\[
\theta(a,\, b,\, c)=(-1)^{m+n+p}\frac{[m+n+p+1]![n]![m]![p]!}{[n+p]![m+p]![m+n]!}
\]
with $m,n,p$ given above and $[n]=(-1)^{n-1}\Delta_{n-1},\:[n]!=[1][2]\dots[n]$.

An immediate consequence of the above formula is that the evaluation
of $\xygraph{!{0;/r1.5pc/:}[u]!{\xcapv@(0)>{a}}*{\bullet}[lr(0.1)]!{\sbendv@(0)|{^{c}}}[ll]!{\sbendh-@(0)|{b}}}$
where $a,b,c\leq r-1$ but $a+b+c\geq2(r-1)$ vanishes since $[m+n+p+1]=(-1)^{m+n+p}\Delta_{m+n+p}=0$
whenever $m+n+p=r-1$ and hence the quantum factorial vanishes for
higher values of this sum.

\subsection{\label{sub:Recoupling-Theory}Recoupling Theory }

In this section we discuss one of the most important theorems presented
here which allows the proof of the orthogonality relation and the
Biedenharn-Elliott identity of the ($q$-deformed) $6j$-symbols.
We discuss the case where $q=e^{i\pi/r}$.
\begin{defn}
A triple of non-negative integers $(a,b,c)$ is called $q$-admissible
if 
\begin{equation}
\begin{array}{cc}
(i) & a+b+c\equiv0\, mod2\\
(ii) & b+c-a\geq0,\, a+b-c\geq0,\, c+a-b\geq0\\
(iii) & a+b+c\leq2r-4
\end{array}\label{eq:q-admissibility}
\end{equation}

The set of $q$-admissible triples is denoted by $ADM_{q}$.
\end{defn}
Consider the set $\mathcal{T}\Bigl[\begin{array}{cc}
a & b\\
c & d
\end{array}\Bigr]$ of all tangles $T$ of the form
\[
\xygraph{!{0;/r1.5pc/:}!{\sbendv@(0)<{b}}[dr(0.4)]{T}*\frm<0.8pc>{o}[rd(0.4)]!{\sbendv-@(0)<{d}}[l(2.75)]!{\zbendv@(0)<{a}}[r(0.75)][u(0.75)]!{\zbendv-@(0)<{c}}}
\]

We can regard this tangle as a functional on tangles $T'$ dual to
it, cf. \prettyref{sec:Spin-Networks}. In doing so, we obtain an
inner-product $\langle\text{·},\text{·}\rangle:\,\mathcal{T}\Bigl[\begin{array}{cc}
a & b\\
c & d
\end{array}\Bigr]\times\mathcal{T}'\Bigl[\begin{array}{cc}
a & b\\
c & d
\end{array}\Bigr]\rightarrow\mathbb{C}$, which allows us to define the concept of equality of elements in
$\mathcal{T}\Bigl[\begin{array}{cc}
a & b\\
c & d
\end{array}\Bigr]$, i.e. two tangles are equal if they are equal as functionals on the
dual tangles in $\mathcal{T}'\Bigl[\begin{array}{cc}
a & b\\
c & d
\end{array}\Bigr]$, in other words, if the inner-product gives the same result for all
$T'\in\mathcal{T}'\Bigl[\begin{array}{cc}
a & b\\
c & d
\end{array}\Bigr]$.

Now, the addition of tangles makes $\mathcal{T}\Bigl[\begin{array}{cc}
a & b\\
c & d
\end{array}\Bigr]$ into a vector space over $\mathbb{C}$ where the set of tangles of
the form
\[
T_{j}=\xygraph{!{0;/r2pc/:}[d]!{\sbendh-@(0)>{a}}[ul]!{\sbendv@(0)<{b}}*{\bullet}!{\xcaph[1.5]@(0)|{j}}[r(0.5)]*{\bullet}!{\sbendh@(0)>{c}}[dl]!{\sbendv-@(0)<{d}}}\text{ with }(a,b,j),\,(c,d,j)\in ADM_{q}
\]
 is a basis%
\footnote{The linear independence comes from using equation \prettyref{eq:Edge-with-loop}
twice.%
} for $\mathcal{T}\Bigl[\begin{array}{cc}
a & b\\
c & d
\end{array}\Bigr]$. 

Similarly, the tangles of the form
\[
\tilde{T_{i}}=\xygraph{!{0;/r2pc/:}[u(1.55)]!{\sbendv@(0)<{b}}*{\bullet}!{\xcapv[1.5]@(0)|{i}}[u]!{\sbendh@(0)>{c}}[d(3.5)][l(2)]!{\sbendh-@(0)>{a}}*{\bullet}!{\sbendv-@(0)<{d}}}\text{ with }(b,c,i),\,(a,d,i)\in ADM_{q}
\]
form also a basis for $\mathcal{T}\Bigl[\begin{array}{cc}
a & b\\
c & d
\end{array}\Bigr]$.

The recoupling theorem is then a statement about the change of basis:
\begin{thm}
The Recoupling Theorem:

Let $(a,b,j),\,(c,d,j)\in ADM_{q}$, then there exists \uline{unique}
real numbers $\alpha_{i},\,(0\leq i\leq r-2)$, such that
\begin{equation}
\xygraph{!{0;/r2pc/:}[d]!{\sbendh-@(0)>{a}}[ul]!{\sbendv@(0)<{b}}*{\bullet}!{\xcaph[1.5]@(0)|{j}}[r(0.5)]*{\bullet}!{\sbendh@(0)>{c}}[dl]!{\sbendv-@(0)<{d}}}=\sum_{i}\hspace{1em}\alpha_{i}\xygraph{!{0;/r2pc/:}[l][u(1.55)]!{\sbendv@(0)<{b}}*{\bullet}!{\xcapv[1.5]@(0)|{i}}[u]!{\sbendh@(0)>{c}}[d(3.5)][l(2)]!{\sbendh-@(0)>{a}}*{\bullet}!{\sbendv-@(0)<{d}}}\label{eq:Recoupling-Thm}
\end{equation}
 where the sum goes over all non-negative integers $i$ such that
$(a,d,i),\,(b,c,i)\in ADM_{q}$ and all networks are evaluated at
$q=e^{i\pi/r}$. Furthermore, the coefficients $\alpha_{i}=\Bigl\{\begin{array}{ccc}
a & b & i\\
c & d & j
\end{array}\Bigr\}_{q}$ are the quantum q-6j-symbols, \cite{kauffman1994temperley}.  \end{thm}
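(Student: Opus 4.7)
The plan is to reduce the statement to a change-of-basis in a finite-dimensional vector space via the two basis claims made in the paragraph immediately preceding the theorem, and then identify the change-of-basis coefficients with quantum $6j$-symbols by closing up both sides of \eqref{eq:Recoupling-Thm} against a suitable dual tangle and applying the loop identity \eqref{eq:Edge-with-loop}.

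For existence and uniqueness, the tangles $\{T_j\}$ ranging over admissible $(a,b,j)$ and $(c,d,j)$, and the tangles $\{\tilde{T}_i\}$ ranging over admissible $(b,c,i)$ and $(a,d,i)$, are both stated to be bases of the same finite-dimensional space $\mathcal{T}\bigl[\begin{smallmatrix}a&b\\c&d\end{smallmatrix}\bigr]$. Hence $T_j$ has a unique expansion $T_j=\sum_i \alpha_i \tilde{T}_i$, and this is already the content of \eqref{eq:Recoupling-Thm}; no diagrammatic argument is needed for this part.

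To identify the coefficients, for each admissible $k$ I would cap both sides of \eqref{eq:Recoupling-Thm} by attaching at the top a trivalent vertex $(b,c,k)$ and at the bottom a trivalent vertex $(a,d,k)$, the two new vertices being joined by a $k$-edge that wraps around the tangle box. On the left, the four external strands of $T_j$ together with the internal $j$-edge and the new $k$-edge close to form a tetrahedral spin network with edge labels $(a,b,c,d,j,k)$; denote its evaluation by $\mathrm{Tet}\bigl[\begin{smallmatrix}a&b&j\\c&d&k\end{smallmatrix}\bigr]$. On the right, each $\tilde{T}_i$ summand contains two ``bigons''---one pairing $(b,c)$ between the $(b,c,i)$-vertex of $\tilde{T}_i$ and the new $(b,c,k)$-vertex, the other pairing $(a,d)$---and applying \eqref{eq:Edge-with-loop} to each one collapses the $i$-th summand to $\delta_{ik}\,\theta(b,c,k)\,\theta(a,d,k)/\Delta_k$ times a residual closed $k$-loop of value $\Delta_k$. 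Equating the two sides gives
\[
\alpha_k \;=\; \frac{\mathrm{Tet}\bigl[\begin{smallmatrix}a&b&j\\c&d&k\end{smallmatrix}\bigr]\,\Delta_k}{\theta(b,c,k)\,\theta(a,d,k)},
\]
which is, by definition, the quantum $6j$-symbol $\Bigl\{\begin{smallmatrix}a&b&k\\c&d&j\end{smallmatrix}\Bigr\}_q$. Reality of $\alpha_k$ is automatic: at $q=e^{i\pi/r}$ every quantum integer $[n]=\sin(\pi n/r)/\sin(\pi/r)$ is real, and both the $\theta$-evaluation (closed formula at the end of Section~\ref{sub:Temperley-Lieb-Algebra}) and the tetrahedral evaluation (a Racah-type sum of products of $q$-factorials) are rational expressions in the $[n]$'s.

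Conceptually the theorem is just a basis change; the real effort lies in the diagrammatic bookkeeping of the closure, and this is where I expect the main obstacle. One must verify that (i) closing $T_j$ produces a genuine tetrahedral network with the expected labels rather than a non-planar variant, (ii) the two applications of \eqref{eq:Edge-with-loop} on the right are simultaneously legitimate, which uses all four admissibility hypotheses $(a,b,j),(c,d,j),(b,c,i),(a,d,i)\in ADM_q$, and (iii) the residual $k$-projector after each application glues correctly into the remaining diagram rather than producing an extra spurious loop factor.
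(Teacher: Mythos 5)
Your proposal is correct and follows essentially the same route the paper relies on (the paper itself defers the proof to Kauffman--Lins, but the surrounding machinery is exactly your argument): existence and uniqueness come from the two bases $\{T_j\}$ and $\{\tilde{T}_i\}$ of $\mathcal{T}\bigl[\begin{smallmatrix}a&b\\c&d\end{smallmatrix}\bigr]$, whose linear independence the paper attributes to a double application of \prettyref{eq:Edge-with-loop}, and your capping computation reproduces precisely the paper's formula \prettyref{eq:q6j-symbol-and-Tetrahedron} identifying $\alpha_k$ with $\Delta_k\cdot\mathrm{Tet}/(\theta(b,c,k)\,\theta(a,d,k))$, i.e.\ the quantum $6j$-symbol.
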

\begin{rem*}
\textit{(i)} Note that the meaning of this equality is that both sides
are interchangeable in all bracket evaluations having these sums inside
larger networks.\textit{ (ii)} For general value of $q$ the summation
is over all admissible triples, i.e. the third condition in \prettyref{eq:q-admissibility}
is not needed.
\end{rem*}
Next, we give the diagrammatic form of the $q-6j$-symbols in order
to appreciate finally its connection with the tetrahedron which, as
seen in \prettyref{sec:Connection-between-GR-SN}, is an important
object to build invariants such as the combinatorial analogous of
the ``path integral'' \prettyref{eq:path-integral-over-geometries}
over geometries with the exponential of the Hilbert-Einstein action
as integrand, cf. \prettyref{cha:Invariants-of-3-Manifolds}. This
connection is a direct consequence of the recoupling theorem and the
relation \prettyref{eq:Edge-with-loop}. The formula holds under conditions
\textit{(i) and (ii)} in \prettyref{eq:q-admissibility} for generic
$q$. The $q-6j$-symbols defined via the above theorem are given%
\footnote{Notice that Moussouris definition of the symbols in \cite{moussouris1983quantum}
is different. For this reason in section \prettyref{sub:The-Decomposition-Theorem}
the $6j$-symbols are defined without the loop value and with the
theta-net value set to 1, hence, the identity given by the Recoupling
theorem is slightly different, namely, the recoupling coefficients
are given by the loop value and the evaluation of the tetrahedral
network. Because of the different definitions in the literature, extra
care in the use of identities is needed when evaluating spin networks.%
} by
\begin{equation}
\Bigl\{\begin{array}{ccc}
a & b & i\\
c & d & j
\end{array}\Bigr\}_{q}=\frac{\Bigl[\xygraph{!{0;/r1pc/:}[r]!{\vcap>{i}}[d(0.2)]{\hspace{1em}}*\frm{-}[d(0.2)]!{\vcap-}[r][u(0.4)]!{\xcapv[0.4]@(0)}}\Bigr]_{q}}{\Bigl[\xygraph{!{0;/r1.5pc/:}!{\vcap|{b}}!{\vcap-|{^{i}}}*{\bullet}!{\xcaph@(0)|{c}*{\bullet}}}\Bigr]_{q}\Bigl[\xygraph{!{0;/r1.5pc/:}!{\vcap|{i}}!{\vcap-|{^{a}}}*{\bullet}!{\xcaph@(0)|{d}*{\bullet}}}\Bigr]_{q}}\Biggl[\qquad\xygraph{!{0;/r1.5pc/:}!{\sbendh@(0)|{^{b}}}*{\bullet}!{\sbendv@(0)|{^{c}}}[l(2)]*{\bullet}!{\xcaph[2]@(0)|{j}}[r]*{\bullet}[ll]!{\sbendv@(0)|{a}}*{\bullet}!{\sbendh@(0)|{d}}[lu]!{\vcap[1.5]}[r(1.5)]!{\xcapv[2]@(0)|{i}}[d][l(1.5)]!{\vcap[-1.5]}}\Biggr]_{q}\label{eq:q6j-symbol-and-Tetrahedron}
\end{equation}
where $[\dots]_{q}$ denotes the evaluation of the diagram at a given
$q$.

Now, regard the labels $a,b,c,d,j$ as parameters of a function $\bigl\{\dots\bigr\}_{q}:\:\bigl\{0,1,\dots,r-2\bigr\}^{6}\rightarrow\mathbb{R}$
given by the coefficients above, then from a double use of the recoupling
theorem we obtain the orthogonality relation
\begin{equation}
\sum_{i=0}^{r-2}\Bigl\{\begin{array}{ccc}
a & b & i\\
c & d & j
\end{array}\Bigr\}_{q}\Bigl\{\begin{array}{ccc}
d & a & k\\
b & c & i
\end{array}\Bigr\}_{q}=\delta_{j}^{k}.\label{eq:Orthogonality-Relation}
\end{equation}

Finally we present an important identity called the Biedenharn-Elliott
identity, sometimes also called the pentagon identity. Consider the
following diagram
\[
\xygraph{!{0;/r2pc/:}!{\xcapv@(0)<{a}}[u]*{\bullet}[lu]!{\sbendv@(0)|{i}}!{\sbendh@(0)|{j}}[ll][ul]!{\sbendv@(0)<{b}}[ul(0.25)]*{\bullet}!{\sbendh@(0)>{c}}[dr(1.25)]!{\sbendh@(0)>{e}}[dl(0.8)]*{\bullet}[ul]!{\sbendv-@(0)>{d}}}
\]
expressed in terms of 
\[
\xygraph{!{0;/r2pc/:}!{\xcapv@(0)<{a}}[u]*{\bullet}[lu]!{\sbendv@(0)}!{\sbendh@(0)|{k}}[ll][ul]!{\sbendv@(0)<{b}}[rr]!{\sbendh@(0)>{e}}[dl]*{\bullet}[ul(0.75)]!{\sbendv[0.75]@(0)|{l}}[lu]*{\bullet}!{\sbendh[0.75]@(0)>{d}}[l(1.75)][d(0.25)]!{\sbendv[0.75]@(0)<{c}}}
\]
There are two ways of doing this, one by two consecutive applications
of the recoupling theorem, the other one by three applications. Since
both ways must give the same result, we obtain the following relation
\begin{equation}
\sum_{m=0}^{r-2}\Bigl\{\begin{array}{ccc}
a & i & m\\
d & e & j
\end{array}\Bigr\}_{q}\Bigl\{\begin{array}{ccc}
b & c & l\\
d & m & i
\end{array}\Bigr\}_{q}\Bigl\{\begin{array}{ccc}
b & l & k\\
e & a & m
\end{array}\Bigr\}_{q}=\Bigl\{\begin{array}{ccc}
b & c & k\\
j & a & i
\end{array}\Bigr\}_{q}\Bigl\{\begin{array}{ccc}
k & e & l\\
d & c & j
\end{array}\Bigr\}_{q}.\label{eq:Biedenharn-Elliott}
\end{equation}
 These two last properties are very important for the 3-manifold invariants
discussed in \prettyref{cha:Invariants-of-3-Manifolds} and have several
consequences as seen earlier. Before coming to the mentioned invariants,
we will first discuss briefly other concepts related to the invariants
of 3-manifolds, namely, the topological quantum field theory.

\section{Atiyah's Axiomatic Topological Quantum Field Theory\label{sec:TQFT}}

In \cite{atiyah1988topological} M. Atiyah gave an axiomatic approach
to topological quantum field theory (TQFT), which will be discussed
briefly in this section. The TQFT described in this section is only
defined for manifolds with fixed dimension using the concept of cobordisms%
\footnote{A cobordism $W=(M;F_{+},F_{-};i_{+},i_{-})$ between $d$-dimensional
manifolds $F_{+}$ and $F_{-}$ is a $(d+1)$-dimensional compact
manifold $M$, such that $i_{+}:\, F_{+}\rightarrow\partial M$ and
$i_{\lyxmathsym{\textminus}}:\, F_{\lyxmathsym{\textminus}}\rightarrow\partial M$
are embeddings with $\partial M=i_{+}(F_{+})\cup i_{-}(F_{-})$ and
$i_{+}(F_{+})\cap i_{\lyxmathsym{\textminus}}(F_{\lyxmathsym{\textminus}})=\emptyset$.
If we regard closed manifolds as objects in a category, then the cobordisms
can be considered as the morphisms of this category, the \textbf{category
of cobordisms}, \cite{turaev1992state}. %
} as morphisms ``propagating'' a manifold to another manifold of
the same dimension but possibly with a different topology. For instance,
the figure below shows a cobordism $W=(M;i_{+},i_{-})$ with $i_{+}:\, S^{1}\rightarrow\partial M$
and $i_{-}:\, S^{1}\cup S^{1}\rightarrow\partial M$, where $\cup$
denotes the disjoint union of two copies of $S^{1}$.

\selectlanguage{british}%
\bigskip{}

\begin{center}
\xy 
(0,-6)*\ellipse(3,1){.};
(3,3)*\ellipse(3,1){-};
(0,-6)*\ellipse(3,1)__,=:a(180){-};
(-3,3)*\ellipse(3,1){-};              
(-3,6)*{}="1";                        
(3,6)*{}="2";
"1";"2" **\crv{(-3,-3) & (3,-3)};
(-9,6)*{}="A2";
(9,6)*{}="B2";
(-3,-6)*{}="A"; 
(3,-6)*{}="B";
(-3,-12)*{}="A1";
(3,-12)*{}="B1";
"A";"A1" **\dir{-};
"B";"B1" **\dir{-};
"B2";"B" **\crv{(6,-5) & (3,-3)};
"A2";"A" **\crv{(-6,-5) & (-3,-3)}; 
\endxy
\par\end{center}

\bigskip{}

\selectlanguage{english}%
\begin{defn}
A \textbf{topological quantum field theory} in dimension $d$ defined
over a ring $\Lambda$ consists of the following data:\end{defn}
\begin{itemize}
\item To each oriented closed smooth $d$-dimensional manifold $\Sigma$
we associate a finitely generated $\Lambda$-module $Z(\Sigma)$.
\item To each oriented smooth $(d+1)$-dimensional manifold $M$ with boundary
we associate an element $Z(M)\in Z(\partial M)$.
\end{itemize}
subject to the following axioms%
\footnote{These axioms, excluding the first, are taken as in \cite{atiyah1990geometry}.%
}
\begin{enumerate}
\item $Z$ is a functor from the category of compact oriented smooth manifolds,
with orientation preserving diffeomorphisms as arrows, to the category
of $\Lambda$-modules. Another way of defining this functor is from
the category of cobordisms to the category of $\Lambda$-modules,
as in \prettyref{cha:Invariants-of-3-Manifolds}.
\item Involutory: $Z(\Sigma^{*})=Z(\Sigma)^{*}$ , where $\Sigma^{*}$ denotes
$\Sigma$ with opposite orientation and $Z(\Sigma)^{*}$ is the dual
space.
\item Multiplicativity: $Z(\Sigma_{1}\cup\Sigma_{2})=Z(\Sigma_{1})\otimes{}Z(\Sigma_{2})$
where $\cup$ is the disjoint union.
\item Associativity: For a composite cobordism $M=M_{1}\cup{}_{\Sigma_{3}}M_{2}$,
see figure below, we have 
\begin{equation}
Z(M)=Z(M_{2})Z(M_{1})\in{}Hom(Z(\Sigma_{1}),Z(\Sigma_{2}))
\end{equation}
 where $\cup_{\Sigma_{3}}$ denotes the union of two manifolds with
a common component $\Sigma_{3}$ of their boundary, i.e. $\Sigma_{3}\subseteq\partial M_{1}$
and $\Sigma_{3}^{*}\subseteq\partial M_{2}$.
\item Non-triviality axioms%
\footnote{Note that if $\Sigma=\emptyset$, then the vector space associated
to it is idempotent, i.e. $Z(\emptyset)=Z(\emptyset\cup\emptyset)=Z(\emptyset)\otimes Z(\emptyset)$,
thus it is zero or canonically isomorphic to $\Lambda$. For similar
reasons, if $M=\emptyset$ we have for the $(d+1)$-dimensional empty
manifold $Z(\emptyset_{d+1})=1$.%
}: $Z(\emptyset)=\Lambda$ and $Z(\Sigma\times{}\mathbb{I})=id_{Z(\Sigma)}$
is the identity endomorphism of $Z(\Sigma)$.
\end{enumerate}
The first axiom states that if $f:\,\Sigma\rightarrow\Sigma'$ is
an orientation preserving diffeomorphism, i.e. $f\in Diff^{+}(\Sigma,\Sigma')$,
then $f$ induces an isomorphism $Z(f):\, Z(\Sigma)\rightarrow Z(\Sigma')$
and for $g:\,\Sigma'\rightarrow\Sigma'',\: Z(g\circ f)=Z(g)\circ Z(f)$.
Moreover, if $f$ extends to an orientation preserving diffeomorphism
$f^{*}$ from $M$ to $M'$, then $Z(f^{*}):\, Z(M)\mapsto Z(M')$.
Notice that $Z(f^{*})$ maps the element $Z(M)$ associated to $M$
to an element $Z(M')$ associated to $M'$. Another way of looking
at this is by regarding the category of cobordisms with objects closed
manifolds and morphisms cobordisms and the category of $\Lambda$-modules
with homomorphisms. In this case the cobordism $M$ is associated
to a homomorphism $Z(M)\in Z(\partial M)$. Notice that if $\partial M=\Sigma_{1}\cup\Sigma_{2}$
the morphism is, in fact, a homomorphism from $Z(\Sigma_{1})$ to
$Z(\Sigma_{2})$. For instance, if $\Sigma_{1}=\Sigma_{2}$, then
$Z(M)\in Z(\Sigma_{1})$ and $Z(M)$ is an endomorphism by the action
of this element on the module.

When $\Lambda$ is a field, $Z(\Sigma)$ and $Z(\Sigma)^{*}$ are
dual vector spaces. This case is the most important for physical examples
with $\Lambda=\mathbb{C},\,\mathbb{R}$ and we will assume this from
now on.

Now, the third axiom states that if $\partial M_{1}=\Sigma_{1}\cup\Sigma_{3},\,\partial M_{2}=\Sigma_{2}\cup\Sigma_{3}^{*}$
and $M=M_{1}\cup_{\Sigma_{3}}M_{2}$, as shown below, then we require
the natural pairing 
\[
\langle\text{·},\text{·}\rangle:\, Z(\Sigma_{1})\otimes Z(\Sigma_{3})\otimes Z(\Sigma_{3})^{*}\otimes Z(\Sigma_{2})\rightarrow Z(\Sigma_{1})\otimes Z(\Sigma_{2})
\]
 to be defined by 

\begin{equation}
Z(M)=\langle Z(M_{1}),\, Z(M_{2})\rangle\label{eq:Pairing-in-TQFT}
\end{equation}
where $Z(M_{1})\in Z(\Sigma_{1})\otimes Z(\Sigma_{3})\text{ and }Z(M_{2})\in Z(\Sigma_{3})^{*}\otimes Z(\Sigma_{2})$,
hence, $Z(M)\in Z(\Sigma_{1})\otimes Z(\Sigma_{2})$.

\medskip{}

\begin{center}
\xy
(-9,4.5)*\ellipse(9,3){-};
(9,4.5)*\ellipse(9,3){-};
(0,-5.125)*\ellipse(1.5,7.75){.};                
(0,-2.4)*{}="B1";
(0,-18)*{}="B2";
"B1";"B2" **\crv{(-1.95,-2.4)&(-1.95,-18)};
(-9,9)*{}="X1"; 
(9,9)*{}="X2"; 
"X1";"X2" **\crv{(-9,-6) & (9,-6)};
(-27,9)*{}="X1";
(27,9)*{}="X2";
"X1";"X2" **\crv{(-27,-27) & (27,-27)};             
(0,-20.5)*{\Sigma_3};
(-18,14)*{\Sigma_1};
(18,14)*{\Sigma_2};
(-15,0)*{M_1};
(15,0)*{M_2};
\endxy
\par\end{center}

\bigskip{}

Thus if $\Sigma_{1}=\Sigma_{2}=\emptyset$, i.e. the $(d+1)$-dimensional
manifold $M$ is closed, the pairing gives an element of $\Lambda$
which is independent of the choice of $\Sigma_{3}$. This means that
the numerical invariants of closed manifolds are independent of their
decomposition $M=M_{1}\cup_{\Sigma_{3}}M_{2}$ and can be computed
in term of this decomposition via the above relation. Note that when
$\Sigma_{3}=\emptyset$, i.e. $M$ is the disjoint union of $M_{1}$
and $M_{2}$, the pairing \prettyref{eq:Pairing-in-TQFT} reduces
to 
\[
Z(M)=Z(M_{1})\otimes Z(M_{2})
\]

This means that disjoint unions of $(d+1)$-manifolds are translated
into tensor products of $\Lambda$-modules respecting the associations
made for the distinguished elements $Z(M_{i})\,(i=1,2)$ to each component
and extending it naturally to their tensor product. We have been working
with this concept from the beginning on, associating a point in the
plane to representations of $SU(2)$, in \prettyref{sec:Spin-Networks}
or Hopf algebras in Section \prettyref{sub:Spherical-categories}.
In fact, the second and third axioms are used to view $Z(M_{1})$
and $Z(M_{2})$ as homomorphisms $Z(\Sigma_{1})\rightarrow Z(\Sigma_{3})$
and $Z(\Sigma_{3})\rightarrow Z(\Sigma_{2})$ respectively, \cite{atiyah1990geometry},
for instance, by $Z(M_{1})\triangleright Z(\Sigma_{1})=\Lambda\otimes Z(\Sigma_{3})\cong Z(\Sigma_{3})$
and $Z(M_{2})\triangleright Z(\Sigma_{3})\cong Z(\Sigma_{2})$. Hence,
\prettyref{eq:Pairing-in-TQFT} means that $Z(M):\, Z(\Sigma_{1})\rightarrow Z(\Sigma_{2})$
is transitive when cobordisms are composed. This corresponds to the
previous assignment of the 3-valent vertex to a homomorphism between
representations of (Hopf) algebras.

Now, consider $f\in Diff^{+}(\Sigma,\Sigma)$ and identify opposite
ends of $\Sigma\times\mathbb{I}$ by $f$, such that we obtain a manifold
$\Sigma_{f}$ with 
\[
Z(\Sigma_{f})=Tr(Z_{f})
\]
where $Z_{f}:\, Z(\Sigma)\rightarrow Z(\Sigma)$ is an induced automorphism.
For example, we can construct $\Sigma\times S^{1}$ by identifying
the opposite ends of $\Sigma\times\mathbb{I}$ and we obtain, \cite{atiyah1988topological},
\[
Z(\Sigma\times S^{1})=Tr(id_{Z(\Sigma\times\mathbb{I})})=\text{dim}Z(\Sigma\times\mathbb{I}).
\]

Compare this result with the loop value in \prettyref{sec:Spin-Networks}.

To finalize this section we describe shortly the physical interpretation
of this theory. It is important to note, however, that there is no
relation between the invariants $Z(M)$ and $Z(M^{*})$ for closed
$(d+1)$-manifolds given by the axioms. We can, however, consider
the additional assumption that the vector spaces $Z(\Sigma)$ posses
a non-degenerate Hermitian structure relative to some conjugation
on $\Lambda$, which gives an isomorphism $Z(\Sigma^{*})\rightarrow\overline{Z(\Sigma)}$,
where $\overline{Z(\Sigma)}$ denotes $Z(\Sigma)$ with the conjugate
action of $\Lambda$. This structure lets us consider a further Hermitian
axiom, 
\[
Z(M^{*})=\overline{Z(M)}
\]
which means that $Z(M^{*})$, regarded as a linear transformation
between Hermitian vector spaces, is the adjoint of $Z(M)$. Hence,
the numerical invariants of a closed manifold are sensible to changes
of orientation, unless their value is real. Furthermore, with the
Hermitian structure is possible to form a closed manifold $M\cup_{\Sigma}M^{*}$
from a manifold $M$ with $\partial M=\Sigma$, such that
\[
Z(M\cup_{\Sigma}M^{*})=|Z(M)|^{2}
\]
 where the r.h.s. is the norm in the Hermitian metric, \cite{atiyah1988topological}.

\subsection*{Physical interpretation of the axioms}

In these axioms $\Sigma$ is meant to indicate the physical space
and the extra dimension in $\Sigma\times\mathbb{I}$ is the ``imaginary''
time. Then one can think of the space $Z(\Sigma)$ as the Hilbert
space of the theory on $\Sigma$. The endomorphism $End(Z(\Sigma))$
given by $Z(\Sigma\times\mathbb{I})$ should be the imaginary time
evolution operator $e^{tH}\text{ where }t\in{}\mathbb{I}$, but the
second non-triviality axiom does not allow any dynamics, since $Z(\Sigma\times\mathbb{I})=id_{Z(\Sigma)}$
implies $H=0$. There is, however, a ``topological propagation''
across a non-trivial cobordism $M$ which changes the topology of
$\Sigma$. Then, for a closed $(d+1)$-manifold $M$, the invariant
$Z(M)$ is the partition function%
\footnote{If $\partial M=\Sigma$, the distinguished vector $Z(M)\in Z(\Sigma)$
is interpreted as the vacuum state defined by the topology of $M$,
\cite{atiyah1988topological}. %
} given by some Feynman integral, i.e. with a special Lagrangian that
gives rise to a topological invariant partition function. Relativistic
invariance assures that the numerical invariants $Z(M)$ are independent
of the decomposition of $M$, i.e. of the time variable chosen to
slice the cobordism, \cite{atiyah1990geometry}. 

The importance of this broad theory will be seen in the next chapter,
where invariants of 3-manifolds and their calculation via $q-6j$-symbols
are described.

\chapter{Invariants of 3-Manifolds\label{cha:Invariants-of-3-Manifolds} }

In this chapter we discuss the Turaev-Viro invariants of 3-manifolds
and their relation to the concept of spherical categories. Most of
the ideas are taken from \cite{turaev1992state,barrett1996invariants,kauffman1994temperley}.

In the first section, the invariant of a manifold is defined as a
state sum based on quantum $6j$-symbols, which are associated with
the quantized universal enveloping algebra $U_{q}(\mathfrak{sl}(2,\mathbb{C}))$.
To define a state sum on a triangulation $X$ of a compact 3-manifold
$M$ assume that there are colorings of $X$ associating elements
of the set of colors $\{0,1/2,1,...,(r-2)/2\}$ with edges of the
triangulation. This naturally leads to a one-to-one association of
colored 3-simplexes of $X$ with $q-6j$-symbols, which are multiplied%
\footnote{More precisely, the \textquotedbl{}multiplication\textquotedbl{} is
in fact a tensor contraction.%
} over all simplexes of the triangulation. The resulting weighted products
are then summed over all colorings of $X$ which are, in a sense defined
below, admissible. These concepts lead to a 3-dimensional non-oriented
topological quantum field theory where each closed surface $F$ is
associated with a finite-dimensional vector space $Z(F)$ over $\mathbb{C}$,
as in the previous section. However, to defi{}ne this vector space
we have to fi{}x a triangulation of $F$ and show a posteriori that
$Z(F)$ does not depend on the choice of triangulation, \cite{turaev1992state}. 

Although the state sums are computed on a triangulation of the manifold,
they are independent of the choice of triangulation since some transformations
of polyhedra, called Alexander moves, allow us to relate combinatorial
equivalent triangulations leaving the evaluation of the state sum
invariant. The number of transformations is infinite, however, in
the case of triangulations of manifolds one can pass to the dual complex,
called the cell subdivision. This dualisation, described in the second
section, transforms the Alexander moves into certain operations on
cell complexes, which can be presented as compositions of certain
finite set of local moves. In a 3-manifold there are three such moves
called the Matveev-Piergallini moves. This fact simplifies the task
of checking the invariance of the state sums since there are only
three identities to be verified. It turns out that these identities
follow directly from the basic properties of the $q-6j$-symbols.

After redefining the state sum for the simple 2-polyhedra forming
the cell subdivision, we give an informal identification of the constituent
terms of the state sum with the diagrammatic language presented in
\prettyref{sec:Some-More-Diagrammatics}. This identification allows
us then to give an explicit expression of the invariant for the case
when the objects used to construct it are representations of $U_{q}(\mathfrak{sl}_{2})$
with $q$ a root of unity.

Finally, a more general invariant of 3-manifolds is given briefly
in \prettyref{sec:Invariants-from-Spherical}, where the only structure
assumed is the one described in \prettyref{sub:Spherical-categories}.
Hence, the Turaev-Viro invariant defined in the next section is a
special case which, in fact, satisfies two extra conditions. First,
without going into detail, this invariant is defined for unoriented
manifolds and second, that there exist a TQFT associated to the invariant
given by the fact that each self-dual simple object of the category
involved in the construction of the invariant is orthogonal%
\footnote{An self-dual simple object $a$ is called orthogonal, if for its isomorphism
$\phi:\, a\rightarrow\hat{a}$ we have $\phi=\hat{\phi}$.%
}, \cite{barrett1996invariants}.

\section{State Sum Invariants\label{sec:State-Sum-Invariants}}

First, the initial data and the conditions on it needed to define
an invariant of 3-manifolds are given. Then, we proceed with the definition
of the state sum models for closed 3-manifolds and its relation to
topological quantum field theory is discussed.

For the initial data consider a commutative ring $K$ with unity and
denote by $K^{*}$ the group of invertible elements of $K$. The data
consists of five objects besides the ring $K$:
\begin{itemize}
\item A finite%
\footnote{Notice here the importance of $U_{q}(\mathfrak{sl}_{2})$ at a root
of unity to make this color set invariant.%
} set $I$ of ``colors''.
\item A function $f:\, I\rightarrow K^{*};\: i\mapsto f(i)=w_{i}$.
\item A distinguished element $w\in K^{*}$. 
\item A set $adm$ of unordered triples of elements of $I$, $adm\subset I^{3}$,
for which there are no further conditions imposed. The triples belonging
to $adm$ are called admissible.
\item A set of ordered 6-tuples $(i,j,k,l,m,n)\in I^{6}$ which are admissible,
meaning that the unordered triples $(i,j,k),\,(k,l,m),\,(m,n,i),\,(j,l,n)$
are admissible. Furthermore, we assume that each of these 6-tuple
is associated with an element of $K$ called the symbol and denoted
by 
\[
\left|\begin{array}{ccc}
i & j & k\\
l & m & n
\end{array}\right|\in K
\]
These symbols are assumed to have the same symmetries as the usual
$6j$-symbols in previous sections. From these symmetries we conclude
that by permutation and interchange of the upper and lower arguments
of any two columns respectively one can obtain different 6-tuples
which correspond to symbols with the same value. Denote this common
value of the symbols by $|T|$.
\end{itemize}
The initial data is assumed to follow four conditions. The first two
of them axiomatise the orthogonality and the Biedenharn-Elliot identities
for $q-6j$-symbols.

The data satisfies condition (I) if for any six elements $j_{1},j_{2},j_{3},j_{4},j_{5},j_{6}$
in $I$ such that $(j_{1},j_{3},j_{4})$, $(j_{1},j_{3},j_{6})$,
\foreignlanguage{british}{$(j_{2},j_{4},j_{5})$} and $(j_{2},j_{5},j_{6})$
are admissible we have
\begin{equation}
\sum_{j}w_{j}^{2}w_{j_{4}}^{2}\left|\begin{array}{ccc}
j_{2} & j_{1} & j\\
j_{3} & j_{5} & j_{4}
\end{array}\right|\left|\begin{array}{ccc}
j_{3} & j_{1} & j_{6}\\
j_{2} & j_{5} & j
\end{array}\right|=\delta_{j_{4},j_{6}}.\label{eq:Cond-I}
\end{equation}
where we sum up over $j$ such that the symbols involved in the sum
are defined, i.e. the 6-tuples in the sum are admissible.

The data satisfies the condition (II) if for any pair of admissible
6-tuples $(j_{23},a,e,j_{1},f,b)$ and $(j_{3},j_{2},j_{23},b,f,c)$
the following relation holds
\begin{equation}
\sum_{j}w_{j}^{2}\left|\begin{array}{ccc}
j_{2} & a & j\\
j_{1} & c & b
\end{array}\right|\left|\begin{array}{ccc}
j_{3} & j & e\\
j_{1} & f & c
\end{array}\right|\left|\begin{array}{ccc}
j_{3} & j_{2} & j_{23}\\
a & e & j
\end{array}\right|=\left|\begin{array}{ccc}
j_{23} & a & e\\
j_{1} & f & b
\end{array}\right|\left|\begin{array}{ccc}
j_{3} & j_{2} & j_{23}\\
b & f & c
\end{array}\right|.\label{eq:Cond-II}
\end{equation}

The condition (III) is satisfied if for any $j\in I$ we have
\begin{equation}
w^{2}=w_{j}^{-2}\sum_{k,l:(j,k,l)\in adm}w_{k}^{2}w_{l}^{2}.\label{eq:Cond-III}
\end{equation}
Finally, the initial data is said to be irreducible, if for any $j,k$
in $I$ there exists a sequence $l_{1},l_{2},\dots,l_{n}$ with $l_{1}=j,l_{n}=k$
such that the triple $(l_{i},l_{i+1},l_{i+2})$ is admissible for
any $i=1,...,n-2$. If we have irreducible initial data satisfying
condition (I), then the r.h.s. of \prettyref{eq:Cond-III} is independent
of $j\in I$, i.e. condition (III) is automatically satisfied. 

Now, consider a tetrahedron with edges labelled by elements of the
set $I$. Such a 3-simplex will be called a colored tetrahedron and
is said to be admissible if for any of its 2-simplexes $A$ the labels,
or colors, of the three edges in $A$ are in $adm$. From this we
can understand geometrically the notion of an admissible 6-tuple.
As mentioned before, the symbols corresponding to the admissible 6-tuples
in the initial data are regarded from a geometrical point of view
as colored tetrahedra. Thus, in this case, admissibility means the
condition for the existence of a tetrahedron with positive volume,
as in \prettyref{sec:Connection-between-GR-SN}. We stress here the
fact that each admissible colored tetrahedron corresponds to \textit{a
set} of admissible 6-tuples. There are 24 admissible 6-tuples for
a given tetrahedron $T$, which may be obtained from each other by
the obvious action of the symmetry group $S_{4}$ of $T$.

We now proceed to discuss the state model for \textbf{closed} 3-manifolds,
which leads to an invariant of the manifold with respect to triangulations.
Consider a closed triangulated 3-manifold $M$. Let $M$ have $b$
edges denoted by $E_{1},E_{2},\dots,E_{b}$. A coloring of $M$ is
defined to be an arbitrary mapping $\phi:\,\{E_{1},E_{2},\dots,E_{b}\}\rightarrow I$,
i.e. we label all edges of a given triangulation of $M$. Denote the
admissible colorings%
\footnote{As before, admissible means that for any 2-simplex $A$ of $M$ the
colors of its three edges form an admissible triple.%
} of $M$ by $adm(M)$. It is obvious that each $\phi\in adm(M)$ induces
an admissible coloring of each tetrahedra $T_{i}$ of $M$, denoted
by $T_{i}^{\phi}$.

As we saw in section \prettyref{sub:The-3nj-symbols} the $3n-j$-symbols
can be expressed as a product of $6j$-symbols. The definition of
a state $|M|_{\phi}$ of the manifold $M$ is defined in the same
fashion. For a given coloring $\phi\in adm(M)$ we set
\begin{equation}
|M|_{\phi}=w^{-2a}\prod_{r=1}^{b}w_{\phi(E_{r})}^{2}\prod_{t=1}^{d}|T_{t}^{\phi}|\,;\;|M|_{\phi}\in K\label{eq:state-of-M}
\end{equation}
 where $a$ is the number of vertices and $d$ the number of tetrahedra
in $M$. Then, the invariant $|M|$ of $M$ is defined as the sum
over all admissible colorings for a given triangulation:
\begin{equation}
|M|=\sum_{\phi\in adm(M)}|M|_{\phi}.\label{eq:Def-of-invariant-of-M}
\end{equation}

Next we present a theorem proved by Turaev and Viro, \cite{turaev1992state},
giving a scheme to define topological invariants of 3-manifolds. In
principle it is defined as \prettyref{eq:Def-of-invariant-of-M},
however, to realize the invariant one needs concrete initial data.
\begin{thm}
\label{thm:Invariance-wrt-triangulations}If the initial data satisfies
the conditions (I), (II) and (III), then |M| does not depend on the
choice of triangulation of M.
\end{thm}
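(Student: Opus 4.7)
The plan is to reduce the theorem to checking invariance of the state sum \eqref{eq:Def-of-invariant-of-M} under a finite list of \emph{local} combinatorial moves on the triangulation. By a classical result of Alexander (equivalently, by Pachner's theorem for PL manifolds, which the excerpt anticipates in its discussion of Alexander and Matveev--Piergallini moves), any two triangulations of a closed PL $3$-manifold are related by a finite sequence of bistellar flips, namely the $2$-$3$ move (two tetrahedra sharing a triangular face are replaced by three tetrahedra sharing a new interior edge) and the $1$-$4$ move (one tetrahedron is replaced by four tetrahedra meeting at a new interior vertex). Since the contributions of all tetrahedra, edges, and vertices outside the small ball affected by a move are common to both sides, it suffices to verify that the local factor of $|M|$ is preserved by each move.

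For the $2$-$3$ move the bipyramid has five vertices on both sides, so the prefactor $w^{-2a}$ is unchanged; only one internal edge is created (with new colour $j$, summed over admissibly) and the number of tetrahedra jumps from $2$ to $3$. Writing out the local contribution to $|M|_\phi$ one side is a product of two symbols $|T|$ and the other is $\sum_j w_j^2$ times a product of three symbols $|T|$; after matching the indices dictated by the common boundary of the bipyramid, this is exactly the identity \eqref{eq:Cond-II}, i.e.\ condition (II). Hence the $2$-$3$ move leaves $|M|$ invariant.

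For the $1$-$4$ move the right-hand side has one extra vertex, four new internal edges with colours $k_1,\ldots,k_4$, six new internal faces, and four new tetrahedra in place of the original one. The local factor on the right is therefore
\begin{equation*}
w^{-2}\sum_{k_{1},k_{2},k_{3},k_{4}} w_{k_{1}}^{2}w_{k_{2}}^{2}w_{k_{3}}^{2}w_{k_{4}}^{2}\,\bigl|T_{1}\bigr|\bigl|T_{2}\bigr|\bigl|T_{3}\bigr|\bigl|T_{4}\bigr|,
\end{equation*}
where each $|T_i|$ is a $6j$-symbol containing three of the new colours together with three of the six original edge colours. One collapses two of the four sums by applying the orthogonality relation \eqref{eq:Cond-I} twice, reducing the expression to a single symbol $|T^\phi|$ times a residual sum of the form $w_j^{-2}\sum_{k,l} w_k^2 w_l^2$ over admissible triples $(j,k,l)$. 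By condition (III) this residual sum equals $w^{2}$, which exactly cancels the $w^{-2}$ from the extra vertex and leaves $|T^\phi|$, the contribution of the single original tetrahedron.

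The main obstacle is the bookkeeping of the $1$-$4$ move: one has to identify which pairs of new tetrahedra sit in the configuration needed to invoke \eqref{eq:Cond-I}, keep track of the six induced admissibility conditions, and verify that the combinatorial counts of vertices, edges, faces and tetrahedra produce precisely the powers of $w$ and $w_i$ which conditions (I) and (III) consume. Once this is done, Alexander's theorem combines the two local verifications into invariance of $|M|$ under arbitrary change of triangulation, completing the proof.
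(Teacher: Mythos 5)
Your outline is sound, but it follows a genuinely different route from the one the paper relies on. The paper does not prove the theorem itself; it defers to Turaev--Viro, whose argument (as summarised in \prettyref{sec:Moves-on-Triangulations}) starts from Alexander's theorem on star subdivisions, observes that the Alexander moves on triangulations do \emph{not} factor into a finite list of elementary moves, and therefore passes to the dual cell subdivision (special spines), where the relevant transformations become the three Matveev--Piergallini moves $\mathcal{L}$, $\mathcal{M}$ and $\mathcal{B}$; conditions (I), (II) and (III) are then matched to these three moves. You instead invoke Pachner's theorem to work directly on triangulations with the $2$-$3$ and $1$-$4$ bistellar flips, matching the $2$-$3$ move to \prettyref{eq:Cond-II} and the $1$-$4$ move to a combination of \prettyref{eq:Cond-I} and \prettyref{eq:Cond-III}. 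This is a legitimate and in fact now-standard alternative: it avoids the dualisation machinery entirely and needs only two local verifications. What it costs is reliance on Pachner's theorem (a later and independent result --- note that it is \emph{not} ``equivalent'' to Alexander's theorem in the sense your parenthetical suggests, precisely because Alexander moves are infinite in number and cannot be reduced to finitely many elementary moves on triangulations; that non-equivalence is the whole reason Turaev and Viro dualised), and it leaves the $1$-$4$ bookkeeping --- which pairs of the four new tetrahedra admit an application of \prettyref{eq:Cond-I}, and why the residual admissible sum is exactly the one appearing in \prettyref{eq:Cond-III} --- as the genuinely laborious step, which you correctly identify but do not carry out. The spine approach buys, in exchange for its extra setup, the simple $2$-polyhedron picture that the paper then reuses to connect the invariant to the Temperley--Lieb partition function, something the bistellar route does not provide.
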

The proof of this theorem can be found in \cite[Sec. 5]{turaev1992state}.

Now, let us consider the more general case where $M$ is a \textbf{compact}
triangulated 3-manifold. Suppose that $e$ of the $a$ vertices and
the first $f$ of the $b$ edges of $M$ lay on the boundary $\partial M$.
All the same concepts as above apply here as well, so that a coloring
of $\partial M$ means an arbitrary mapping $\alpha:\,\{E_{1},E_{2},\dots,E_{f}\}\rightarrow I$.
The formula for a state $|M|_{\phi}$, however, has to be modified
as follows to account for the boundary. For any $\phi\in adm(M)$
define
\begin{equation}
|M|_{\phi}=w^{-2a+e}\prod_{r=1}^{f}w_{\phi(E_{r})}\prod_{s=f+1}^{b}w_{\phi(E_{s})}^{2}\prod_{t=1}^{d}|T_{t}^{\phi}|\in K.\label{eq:phi-summand-of-invariant}
\end{equation}

For $\alpha\in adm(\partial M)$, denote by $adm(\alpha,M)\subseteq adm(M)$
the set of all colorings $\phi\in adm(M)$ of $M$ which extend $\alpha$,
i.e. which have $\alpha$ as a restriction of $\phi$ on $\partial M$.
Define
\[
\Omega_{M}(\alpha)=\sum_{\phi\in adm(\alpha,M)}|M|_{\phi}.
\]
 Hence, for an admissible coloring $\alpha$ of $\partial M$ the
invariant is $\lyxmathsym{\textgreek{W}}(\alpha)$ and it is dependent
on the coloring $\alpha$. 
\begin{thm}
\label{thm:generalization-invariance}If the initial data satisfies
the conditions (I), (II) and (III), then for any compact 3-manifold
$M$ with triangulated boundary and any admissible coloring $\alpha$
of $\partial M$, all extensions of the triangulation of $\partial M$
to $M$ yield the same $\lyxmathsym{\textgreek{W}}_{M}(\alpha)$,
\cite{turaev1992state}.
\end{thm}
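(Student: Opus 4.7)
The plan is to reduce \prettyref{thm:generalization-invariance} to \prettyref{thm:Invariance-wrt-triangulations} by exploiting the fact that two triangulations of $M$ restricting to the same triangulation on $\partial M$ differ by a sequence of local moves supported in the interior of $M$. Since $\alpha$ is fixed and the boundary triangulation is fixed, all the boundary-dependent factors in \prettyref{eq:phi-summand-of-invariant} remain unchanged across such moves and can be pulled outside the sum over extensions; what remains is governed by exactly the identities (I), (II), (III) that handled the closed case.

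First, I would invoke the standard PL-topology result (or equivalently its dual form via the Matveev-Piergallini moves discussed in \prettyref{sec:Moves-on-Triangulations}) that any two triangulations of a compact PL 3-manifold $M$ which coincide on $\partial M$ are related by a finite sequence of interior Alexander moves, i.e.\ by 2-3 and 1-4 Pachner moves performed on simplexes none of whose modified cells lie in $\partial M$. This is the geometric input that isolates the boundary from the interior deformation.

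Second, I would verify invariance of $\Omega_M(\alpha)$ under each local move exactly as in the closed case. For an interior 2-3 move, two tetrahedra sharing an internal triangle are replaced by three tetrahedra sharing a new internal edge; the number of vertices is unchanged, no boundary edge is touched, and summing over the color of the new interior edge reproduces the Biedenharn-Elliott identity \prettyref{eq:Cond-II}. For an interior 1-4 move, a tetrahedron is subdivided by introducing a new interior vertex, four new tetrahedra, four new interior edges and four new internal 2-simplexes; the extra interior vertex produces a factor $w^{-2}$, the extra interior edges produce factors $w_{\phi(E)}^2$, and the new $|T|$ factors, combined via condition (II) and then summed using \prettyref{eq:Cond-I} and \prettyref{eq:Cond-III}, collapse to the original $|T|$. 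This is exactly the calculation in \cite{turaev1992state} for the closed case, performed verbatim in the interior.

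Third, because every move occurs away from $\partial M$, the prefactor $w^{-2a+e}\prod_{r=1}^{f}w_{\phi(E_r)}$ in \prettyref{eq:phi-summand-of-invariant} is the same for every extension of $\alpha$: boundary vertices, boundary edges, and their $\phi$-values are untouched. Writing
\[
\Omega_M(\alpha) \;=\; w^{e}\Bigl(\prod_{r=1}^{f} w_{\alpha(E_r)}\Bigr)\sum_{\phi\in adm(\alpha,M)} w^{-2a}\prod_{s=f+1}^{b} w_{\phi(E_s)}^{2}\prod_{t=1}^{d}\lvert T_t^{\phi}\rvert,
\]
the sum on the right transforms under each interior move exactly as in the proof of \prettyref{thm:Invariance-wrt-triangulations}; hence $\Omega_M(\alpha)$ is invariant under the connecting sequence of moves and therefore independent of the chosen extension.

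The main obstacle is the geometric step: ensuring that the connecting sequence of Alexander/Pachner moves can always be chosen to be purely interior, so that no move ever disturbs the fixed boundary triangulation. Equivalently, one must check that boundary-relative triangulations form a single equivalence class under interior bistellar moves. Once this is granted, the algebraic invariance is just a localized rerun of the closed-case computation, and the separation of boundary and interior weights in \prettyref{eq:phi-summand-of-invariant} is precisely designed to make this localization transparent.
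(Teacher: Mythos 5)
Your strategy is sound in outline but it is a genuinely different route from the one the paper (following Turaev--Viro) actually takes, and it leans on a geometric input that you name but do not supply. The paper is explicit that one \emph{cannot} carry out this programme with Alexander moves on triangulations, because Alexander moves do not factor into a finite list of local types (the move along an inner edge depends on how many tetrahedra surround it); this is precisely why Turaev--Viro pass to the dual special spine and verify invariance of $\Omega_X(\alpha)$ under the three Matveev--Piergallini moves, which preserve $\partial X$ and whose invariance is exactly conditions (I), (II) and (III) (lune, Matveev and bubble move respectively). Your proposal instead works directly on triangulations with bistellar (Pachner) moves, and your phrase ``interior Alexander moves, i.e.\ 2--3 and 1--4 Pachner moves'' conflates two distinct move calculi: the 1--4 move is the Alexander star subdivision centred at a tetrahedron, but the 2--3 move is not an Alexander move, and Alexander moves along edges are not Pachner moves. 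If you commit to the Pachner route, the ``main obstacle'' you flag is the relative Pachner theorem --- that two triangulations of a compact PL $3$-manifold inducing the same triangulation of $\partial M$ are connected by \emph{internal} bistellar moves --- which is a genuine theorem in its own right (postdating the classical Alexander theorem and requiring separate proof or citation), not something that follows from the material in \prettyref{sec:Moves-on-Triangulations}. Until that is supplied, the reduction to \prettyref{thm:Invariance-wrt-triangulations} is incomplete.

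The algebraic half of your argument is essentially right and is indeed a localized rerun of the closed case: the boundary prefactor $w^{e}\prod_{r=1}^{f}w_{\alpha(E_r)}$ in \prettyref{eq:phi-summand-of-invariant} is untouched by interior moves even when the affected tetrahedra have faces on $\partial M$, the 2--3 move is condition (II), and the 1--4 move is where condition (III) (together with (I) and (II)) is indispensable to absorb the new vertex factor $w^{-2}$ and the sums over the colours of the new interior edges. One small correction: a 1--4 move creates six new internal 2-simplexes (and four new edges, four tetrahedra in place of one), not four; this does not affect the weights, only the admissibility bookkeeping.
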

This is a generalization of \prettyref{thm:Invariance-wrt-triangulations}
and means that for a given triangulation and coloring of the boundary
of a compact 3-manifold there is a state sum which is invariant under
Alexander moves on the extensions of the triangulation of the boundary
to all the manifold, i.e. on simplexes not lying on the boundary.
In this case, the state sums are called relative invariants. 

As mentioned above, this type of initial data relates to a topological
quantum field theory (TQFT). In what follows we discuss these relations
by describing the role that the invariants defined above take in the
theory and how the modules associated with the boundary of a cobordism
arise. The construction of a functor, which defines the TQFT is also
discussed briefly following \cite{turaev1992state}.

Consider a triangulated closed surface $F$. Since there is an element
$j$ of the set $I$ attached to each edge of the triangulation and
a function $f:\, I\rightarrow K^{*}$  with $j\mapsto w_{j}$ for
each $j\in I$, each admissible coloring gives a set of elements
of the group $K^{*}\subseteq K$ which corresponds by multiplication
to an element of $K$. Hence, each triangulated closed surface $F$
defines a $K$-module $C(F)$, which is the module freely generated
over $K$ by admissible colorings of $F$, i.e. each coloring gives
an element of the set generating $C(F)$. If we equip $C(F)$ with
a scalar product $C(F)\times C(F)\rightarrow K$ we can make the set
of admissible colorings an orthonormal basis of $C(F)$. According
to the convention that there exists exactly one map $\emptyset\rightarrow I$,
we set $C(F)=K$ if $F=\emptyset$, cf. \prettyref{sec:TQFT}.

Consider a cobordism $W=(M;F_{+},F_{-};i_{+},i_{-})$ between triangulated
surfaces $F_{+}$ and $F_{-}$ and define a homomorphism $\Phi_{W}:\, C(F_{+})\rightarrow C(F_{-})$
by
\begin{equation}
\Phi_{W}(\alpha)=\sum_{\beta\in adm(F_{-})}\Omega_{M}(i_{+}(\alpha)\cup i_{-}(\beta))\beta\label{eq:homomorph-in-cobordism}
\end{equation}
 where $\alpha\in adm(F_{+})$ and $i_{+}(\alpha)\cup i_{-}(\beta)\in adm(\partial M)$
is the coloring determined by $\alpha$ and $\beta$. From the above
discussion regarding the construction of $C(F)$, we can regard $\Phi_{W}$
as a homomorphism having as matrix elements $\Omega_{M}(i_{+}(\alpha)\cup i_{-}(\beta))$
with respect to the natural bases of $C(F_{\pm})$. For a closed%
\footnote{Closed manifolds can be considered as cobordisms between empty manifolds.%
} $M$, $\Phi_{W}$ acts in $K$ as multiplication by $|M|$, while
for a compact $M$, where the cobordism is $W=(M;id:\partial M\rightarrow\partial M,\emptyset\rightarrow\partial M)$,
the $\Omega_{M}(\alpha)$ are the matrix elements for $\Phi_{W}$.
Hence, the invariants defined above give the homomorphisms defined
in \prettyref{eq:homomorph-in-cobordism}.

Thus, from theorem \prettyref{thm:generalization-invariance} one
can conclude that for any cobordism $W=(M;i_{+},i_{-})$ between triangulated
surfaces, the homomorphism $\Phi_{W}$ does not depend on the extension
of triangulations of the surfaces to the triangulation of $M$.

Since each cobordism $W$ between surfaces $F_{+}$ and $F_{-}$ is
a morphism $F_{+}\rightarrow F_{-}$ of a category with objects closed
manifolds, the composition of cobordisms $W_{1}=(M_{1};i_{1}:F_{1}\rightarrow\partial M,i_{2}:F_{2}\rightarrow\partial M)$
and $W_{2}=(M_{2};j_{2}:F_{2}\rightarrow\partial M,j_{3}:F_{3}\rightarrow\partial M)$
is again a cobordism $W_{2}\circ W_{1}=(M_{1}\cup_{F_{2}}M_{2};i_{1},j_{3})$
obtained by gluing $M_{1}$ and $M_{2}$ along $F_{2}$. From this,
it is straightforward%
\footnote{For a detailed discussion see \cite{turaev1992state}.%
} to conclude that $\Phi_{W_{2}\circ W_{1}}=\Phi_{W_{2}}\circ\Phi_{W_{1}}$
which describes the multiplicativity of invariants.

We have now the ingredients to construct the mentioned topological
3-dimensional QFT related to the initial data. For this, notice that
the association $F\mapsto C(F),W\mapsto\Phi_{W}$ is not a functor
since the induced homomorphism for the unit cobordism, which is simply
the cylinder $F\times[0,1]$, is not always the identity. In \cite{turaev1992state}
Turaev and Viro constructed a functor by building the quotient $Q(F)=C(F)/Ker\Phi_{id_{F}}$,
where%
\footnote{Here $i_{t}:F\rightarrow\partial(F\times[0,1])$ is defined by $i_{t}(x)=(x,t)$.%
} $id_{F}=(F\times[0,1];i_{0},i_{1})$. As a consequence of the multiplicativity
of the invariants, $\Phi_{W}:C(F_{+})\rightarrow C(F_{-})$ induces
a $K$-linear homomorphism $\Psi_{W}:Q(F_{+})\rightarrow Q(F_{-})$
which is also multiplicative and satisfies $\Psi_{id_{F}}=id_{Q(F)}$.
Hence, $F\mapsto Q(F),W\mapsto\Psi_{W}$ is a functor from the category
of cobordisms of triangulated 2-manifolds to the category of $K$-modules. 

Since for any two triangulations of $F$ there exists a triangulation
of the cylinder $F\times[0,1]$ which coincides on $F\times0$ and
$F\times1$ with these given triangulations, an isomorphism between
the spaces $Q(F)$ defined via the triangulations of $F$ is determined
completely. Hence, $Q(F)$ does not depend on the triangulations up
to isomorphism. In other words, each triangulation of $F$ gives a
$K$-module isomorphic to another module $Q'(F)$ given by a different
triangulation of $F$. Moreover, the isomorphism does not depend on
the triangulation of $F\times[0,1]$ either, so all modules $Q(F)$,
for a given surface $F$, can be identified by this isomorphism. From
this, we can conclude that the theory discussed here can be generalized
even further to a functor from the category of cobordisms of topological%
\footnote{The distinction here from the last sentence in the previous paragraph
regarding the functoriality of $\Psi$ is that the surfaces are non-triangulated. %
} surfaces to the category of $K$-modules. This theory is then called
$(2+1)$-dimensional TQFT. 

In the next section the Matveev-Piergallini moves are introduced as
well as the diagrammatic correspondence up to a normalization factor
to the invariants described in this section.

\section{Moves on Triangulations, Simple 2-Polyhedra and TL-Recoupling Theory\label{sec:Moves-on-Triangulations}}

To understand the independence of the constructions in the previous
section we have to study some concepts on simplicial complexes and
transformations of triangulations. We start with some basic definitions,
and continue then with the dualisation of the triangulations of 3-manifolds
and the Alexander moves to obtain simple 2-polyhedra and the Matveev-Piergallini
moves. Finally, we give the diagrammatic form of the invariants, which
is best understood in the dual version of the theory.
\begin{defn}
The \textbf{join} $X*Y$ of spaces $X$ and $Y$ is the quotient space
of $X\times Y\times[0,1]$ obtained by contraction of subsets $pt\times Y\times0$
and $X\times pt\times1$. 
\end{defn}
This is a formal definition but a join $X*Y$ can be regarded as the
union of segments joining $X$ and $Y$ such that any pair of segments
intersect at most at their end points. The join of two simplexes is
again a simplex, thus, triangulations of the spaces $X$ and $Y$
define a triangulation of their join, \cite{viro1992moves}.
\begin{defn}
The \textbf{link} $lk(\sigma,K)$ of a simplex $\sigma$ in a simplicial
complex $K$ is defined%
\footnote{An equivalent definition would be that the link of a simplex $\sigma$
is the union of all closed simplexes contained in its star which do
not intersect $\sigma$, cf. \cite{viro1992moves}.%
} by $lk(\sigma,K)=\{\tau\in K:\sigma*\tau\in K\}$, \cite{lickorish1999simplicial}.
\end{defn}
From the above definitions we are able now to define the concept of
star of $\sigma$ in $K$, which is the union of all closed simplexes
containing $\sigma$ and it is denoted $St(\sigma)$. More formal,
\begin{defn}
The \textbf{star} $St(\sigma)$ of a simplex $\sigma$ is defined
by $St(\sigma)=lk(\sigma)*\sigma$.
\end{defn}
From this, the boundary of the star is $\partial St(\sigma)=lk(\sigma)*\partial\sigma$,
\cite{viro1992moves}. 

To describe the transformations needed in this section, consider a
link of a $p$-simplex $\sigma_{p}$ in a triangulated $n$-dimensional
manifold isomorphic to the boundary of a $q$-simplex, $lk(\sigma_{p})\cong\partial\sigma_{q}$,
where $n=p+q$. For example, in the case of a triangulated surface
take a 1-simplex $\sigma_{1}$. In this case, the link is the set
of the two vertices, denoted by $*$, opposite to $\sigma_{1}$ belonging
to both 2-simplexes containing $\sigma_{1}$. Thus, the $q$-simplex
mentioned above is, in this example, a 1-simplex containing these
two points: 
\[
\xy/r1.5pc/:+(0,.5)="A",+(0,-1)="B",+(0,.2)="Z","A",{\xypolygon3"C"{~>{.}}},"B",{\xypolygon3"D"{~:{(-1,0):}~>{.}}},"C1"*{*},"D1"*{*},"C2";"C3"**@{-},"Z"*{\sigma_{1}}\endxy
\]

Then, the boundary of the star of $\sigma_{p}$, $\partial St(\sigma_{p})=lk(\sigma_{p})*\partial\sigma_{p}$,
is isomorphic to the join of the boundaries of a $q$-dimensional
and a $p$-dimensional simplexes, i.e. $\partial St(\sigma_{p})\cong\partial\sigma_{q}*\partial\sigma_{p}$.
Now, the transformation, called simplex move of index $p$, is the
replacement%
\footnote{An extensive description of this type of moves can be also found in
\cite{lickorish1999simplicial}, where the transformations are called
bistellar moves.%
} 
\[
St(\sigma_{p})\mapsto\sigma_{q}*\partial\sigma_{p}.
\]
 In our example, the transformation results in two 2-simplexes too,
but with the $q$-simplex as the common edge: \foreignlanguage{british}{
\[
\xy/r1.5pc/:+(0,.5)="A",+(0,-1)="B","A",{\xypolygon3"C"{}},"B",{\xypolygon3"D"{~:{(-1,0):}}}\endxy\quad\rightleftarrows\quad\xy/r1.5pc/:+(0,.5)="A",+(0,-1)="B","A",{\xypolygon3"C"{~>{}}},"B",{\xypolygon3"D"{~:{(-1,0):}~>{}}},"C1";"D1"**@{-},"C1";"D2"**@{-},"C1";"D3"**@{-},"C2";"D1"**@{-},"C3";"D1"**@{-}\endxy.
\]
}The inverse transformation to the general one given above is the
transformation of index $q=n-p$.

In the case where $p=n$ the transformation is a star subdivision
centered at the given $n$-simplex $\sigma_{n}$, \cite{viro1992moves}.
A star subdivision replaces the star $St(\sigma_{n})$ of a simplex
$\sigma_{n}$ in a triangulation $T$ by the cone%
\footnote{A cone of $K$ is defined to be the join of a simplicial complex $K$
and a single point, \cite{ito1993encyclopedic}.%
} of $\partial St(\sigma_{n})$ centered in a point $b\in\sigma_{n}$
leaving the rest of the triangulation $T\backslash St(\sigma_{n})$
unchanged. For instance, 
\[
\xy/r2.0pc/:\xypolygon3{}\endxy\quad\rightleftarrows\quad\xy/r2.0pc/:\xypolygon3{~<{-}}\endxy
\]
The above transformations are also called Alexander moves%
\footnote{These moves are the dual version of the operations involved in the
evaluation of spin networks in terms of $6j$-symbols via Moussouris'
algorithm, cf. Section \prettyref{sub:The-Decomposition-Theorem}.%
}. J. W. Alexander showed that for any dimensionally homogeneous%
\footnote{This means that $P$ is a union of closed simplexes of the same dimension.%
} polyhedron $P$ any of its triangulations can be transformed to any
other by a finite sequence of Alexander moves, \cite{turaev1992state}.

However, the number of Alexander moves is infinite%
\footnote{They are determined by the combinatorics of the star of the simplex
$\sigma_{p}$ in a triangulated space, e.g. by the number of simplexes
containing $\sigma_{p}$.%
} and, in the case of triangulations, one can not factorize them into
a finite number of elementary ones. Thus, in order to verify the invariance
of the above state sums under this type of transformations one has
to dualise the moves in the sense described next. Each triangulation
of a manifold $M$ induces a cell subdivision dual to that triangulation,
which can be constructed with help of the notion of barycenter of
the simplexes involved as follows. Take a strictly increasing sequence
$\sigma_{0}\subset\sigma_{1}\subset\dots\subset\sigma_{m}$ of simplexes
of the triangulation of a manifold $M$ and associate an $m$-dimensional
simplex $[\beta_{0},\beta_{1},\dots,\beta_{m}]\subset M$ whose vertices
are the barycenters of $\sigma_{0},\dots,\sigma_{m}$. For a simplex
$\sigma$ of $M$, the union $\sigma^{*}$ of all simplexes $[\beta_{0},\dots,\beta_{m}]$
where $\beta_{0}$ is the barycenter of $\sigma$, is a combinatorial
cell of dimension $\text{dim}M-\text{dim}\sigma$ called the barycentric
star of $\sigma$. The cells $\{\sigma^{*}\}_{\sigma}$, where $\sigma$
goes over all simplexes of $M$ form a cell subdivision of $M$, cf.
\cite{turaev1992state}. In a less technical way and in the here
relevant 3-dimensional case, the dual cell complex to a tetrahedron
is a collection of six 2-dimensional cells sharing a single vertex,
which is the barycenter of the tetrahedron. Each of the edges of the
tetrahedron intersects a 2-cell at exactly one point as in the figure
below.
\[
\xy/r5pc/:="A",+(0,1.5)="B","A",{\xypolygon3"K"{~:{(1.25,0):(-.2,-.5)::}~<>{;"B"**@{.}}~>{.}}},
"A",{\xypolygon3"X"{~:{(.68,.06):(.1,.45)::}~>{}}},"A";"X1"+(.15,0)**@{--},"A";"X2"+(.025,.025)**@{-},"A";"X3"**@{-},%
"X3",+(-.25,.5),{\xypolygon3"Y"{~:{(.9,0):(.1,1.4)::}~={-4}~>{}}},"B";"K1"**{}?<>(.43)="Y4","Y4";"Y0"**@{-},"B";"K3"**{}?<>(.5)="Y5","Y5";"Y0"**@{-},"X3";"Y0"**@{-},"X2",+(.2,.6),{\xypolygon3"Z"{~:{(-.55,0):(.3,2.2)::}~={3}~>{}}},"B";"K1"**{}?<>(.43)="Z4","Z4";"Z0"**@{-},"B";"K2"**{}?<>(.5)="Z5","Z5";"Z0"**@{-},"X2"+(.025,.025);"Z0"**@{-},%
"X0"+(0.1,.7),{\xypolygon3"L"{~:{(-1.25,0):(0.06,-.65)::}~>{}}},"B";"K2"**{}?<>(.5)="L4","L4";"L0"**@{-},"B";"K3"**{}?<>(.5)="L5","L5";"L0"**@{-},"X1"+(.15,0);"L0"**@{--},"K2";"K3"**{}?<>(.5)="K4","K4"+(-.152,.05)*@{*};"X0"**@{-},"K4"+(-.152,.05);"Y0"**@{-},"K4"+(-.152,.05);"Z0"**@{-},"K4"+(-.152,.05);"L0"**@{}?!{"Y4";"Y0"}="I","I";"L0"**@{-},"I";"K4"+(-.152,.05)**@{.}\endxy
\]
In this way, a triangulated 3-manifold gives rise globally to a dual
cell complex with 3-cells homeomorphic to balls, called special spine,
\cite{kauffman1994temperley}.

With the help of this dualisation it is possible to factorize the
dual form of the star subdivisions. This is achieved by local modifications
of the special spines called Matveev-Piergallini moves. There are
three transformations of this kind, the bubble move%
\footnote{In the case of three incident planes defining an edge, cf. (2) in
def. \prettyref{def:2-dimensional-polyhedron}, this move is called
edge dilation. In the presence of $\mathcal{L}$- and $\mathcal{M}$-moves
both versions of the $\mathcal{B}$-move are equivalent, \cite[Ch. 10]{kauffman1994temperley}.%
} denoted by $\mathcal{B}$, the lune move denoted by $\mathcal{L}$
and the Matveev move denoted by $\mathcal{M}$:

\bigskip{}

\selectlanguage{british}%
$\mathcal{B}\triangleq$
\xy/r2pc/:\xypolygon4{~:{(3,0):(.3,.3)::}}\endxy
$\quad\rightleftarrows\quad$
\xy/r2pc/:{\xypolygon4"B"{~:{(3,0):(.3,.3)::}~>{}}},
"B1";"B4"**@{-},"B2";"B3"**@{-},
"B3";"B4"**@{-},(.5,0)*\ellipse(1,.2){.},(.5,0)*\ellipse(1,.2)_=:a(180){-},
(.5,0)*\ellipse(1,1)^=:a(180){-},
(2,0);(1,2.5)**{}?!{"B2";"B1"}="I1",
(0,0);(1,2.5)**{}?!{"B2";"B1"}="I2",
"B1";"I1"**@{-},"B2";"I2"**@{-},"I2";"I1"**@{--}
\endxy

\selectlanguage{english}%
\medskip{}

\selectlanguage{british}%
$\mathcal{L}\triangleq$
\xy/r2pc/:{\xypolygon4"L"{~:{(3,0):(.3,.3)::}~>{}}},
"L1";"L4"**@{-},"L2";"L3"**@{-},
"L3";"L4"**@{-},
"L1";"L2"**{}?<>(.33)="U1",
"L1";"L2"**{}?<>(.6)="U2",
"L1";"L2"**{}?<>(.9)="U3",
"L4";"L3"**{}?<>(.6)="D2",
"L4";"L3"**{}?<>(.33)="D1",
"L1";"U1"**@{-},"U1";"U2"**@{-},"U1";"D1"**@{-},"U2";"D2"**@{-},"U3"+(0,.2);"D2"**@{-},
"U3";"L2"**@{-},"U2"+(0,1.2)="V1";"U3"+(0,.2)**@{-},"V1";"U2"**@{-},"U3";"U2"**@{.},
"D1";"D1"+(0,-1.25)="W1"**@{-},"W1";"U1"+(0,-1.25)="W2"**@{-},"W2";"U1"**@{.}
\endxy
$\quad\rightleftarrows\quad$
\xy/r2pc/:{\xypolygon4"L"{~:{(3,0):(.3,.3)::}~>{}}},
"L1";"L4"**@{-},"L2";"L3"**@{-},
"L3";"L4"**@{-},
"L1";"L2"**{}?<>(.33)="U1",
"L1";"L2"**{}?<>(.6)="U2",
"L1";"L2"**{}?<>(.9)="U3",
"L4";"L3"**{}?<>(.6)="D2",
"L4";"L3"**{}?<>(.33)="D1",
"D1";"D1"+(0,-1.25)="W1"**@{-},"W1";"U1"+(0,-1.25)="W2"**@{-},"W2";"U1"**@{.},
"L1";"U1"**@{-},"U1";"D1"**@{-},"U3"+(0,.2);"D2"**@{-},
"U3";"L2"**@{-},"U2"+(0,1.2)="V1";"U3"+(0,.2)**@{-},
"V1";"U2"**{}?<>(.4)="V2","U2";"V2"**@{.},"V1";"V2"**@{-},
"U3";"U2"**@{.},
(2,0)="C";"D2"**\crv{(2,-.3)&(0,0)},
"C";"V1"+(-.1,-.2)**\crv{(2,.3)&"V1"+(-.1,-.6)}?!{"U2";"U1"}="I2","U1";"I2"**@{-},"U2";"I2"**@{.},
"C";"U2"**\crv{~*=<3pt>{.}(2,.3)&"U2"-(0,.3)}
\endxy

\medskip{}

$\mathcal{M}\triangleq$
\xy/r2pc/:{\xypolygon4"L"{~:{(3,0):(.3,.3)::}~>{}}},
"L1";"L4"**@{-},"L2";"L3"**@{-},"L3";"L4"**@{-},
"L1";"L2"**{}?<>(.33)="U1",
"L1";"L2"**{}?<>(.6)="U2",
"L1";"L2"**{}?<>(.9)="U3",
"L4";"L3"**{}?<>(.6)="D2",
"L4";"L3"**{}?<>(.33)="D1",
"L1";"U1"**@{-},"U1";"U2"**@{-},"U2";"D2"**@{-},"U3"+(0,.2);"D2"**@{-},
"U3";"L2"**@{-},"U2"+(0,1.2)="V1";"U3"+(0,.2)**@{-},"V1";"U2"**@{-},"U3";"U2"**@{.},
"L0";"L1"**@{-},"L0";"L4"**@{-},
"L2";"L3"**{}?<>(.5)="L5","L5";"L0"**@{}?!{"U3";"D2"}="L6","L6";"L0"**@{}?!{"U2";"D2"}="L7",
"L5";"L6"**@{-},"L6";"L7"**@{.},"L0";"L7"**@{-},
"L0"-(0,1.5),{\xypolygon4"B"{~:{(3,0):(.3,.3)::}~>{}}},
"B0";"B4"**@{-},"B2";"B3"**{}?<>(.5)="B5","B5";"B0"**@{-},"L4";"B4"**@{-},"L1";"B1"**@{-},
"L5";"B5"**{}?!{"L3";"L4"}="M1","L0";"B0"**{}?!{"L3";"L4"}="M2","L5";"M1"**@{.},"L0";"M2"**@{.},"B5";"M1"**@{-},"B0";"M2"**@{-},
"B1";"B0"**{}?!{"B4";"L4"}="M3","B0";"M3"**@{.},"B1";"M3"**@{-},
\endxy
$\quad\rightleftarrows\quad$
\xy/r2pc/:{\xypolygon4"L"{~:{(3,0):(.3,.3)::}~>{}}},
"L1";"L4"**@{-},"L2";"L3"**@{-},"L3";"L4"**@{-},"L0"*{\bullet},
"L1";"L2"**{}?<>(.33)="U1",
"L1";"L2"**{}?<>(.6)="U2",
"L1";"L2"**{}?<>(.9)="U3",
"L4";"L3"**{}?<>(.6)="D2",
"L4";"L3"**{}?<>(.33)="D1",
"L2";"L3"**{}?<>(.5)="L5","L5";"L0"**@{}?!{"U3";"D2"}="L6",
"L5";"L6"**@{-},"L6";"L0"**@{.},
"L0"-(0,1.5),{\xypolygon4"B"{~:{(3,0):(.3,.3)::}~>{}}},
"B0";"B4"**@{-},"B2";"B3"**{}?<>(.5)="B5","B5";"B0"**@{-},"L4";"B4"**@{-},"L1";"B1"**@{-},
"L5";"B5"**{}?!{"L3";"L4"}="M1","L0";"B0"**{}?!{"L3";"L4"}="M2","L5";"M1"**@{.},"L0";"M2"**@{.},"B5";"M1"**@{-},"B0";"M2"**@{-},
"B1";"B0"**{}?!{"B4";"L4"}="M3","B0";"M3"**@{.},"B1";"M3"**@{-},
"L1";"U1"**@{-},"U3"+(0,.2);"D2"**@{-},
"U3";"L2"**@{-},"U2"+(0,1.2)="V1";"U3"+(0,.2)**@{-},
"V1";"U2"**{}?<>(.4)="V2","U2";"V2"**@{.},"V1";"V2"**@{-},
"U3";"U2"**@{.},
(2,0)="C";"D2"**\crv{(2,-.3)&(0,0)}?!{"L0";"L4"}="L7","L7";"L4"**@{-},"L0";"L7"**@{.},
"C";"V1"+(-.1,-.2)**\crv{(2,.3)&"V1"+(-.1,-.6)}?!{"U2";"U1"}="I2"?!{"L0";"L1"}="L8","U1";"I2"**@{-},"U2";"I2"**@{.},
"L8";"L1"**@{-},"L0";"L8"**@{.},
"C";"U2"**\crv{~*=<3pt>{.}(2,.3)&"U2"-(0,.3)}
\endxy

\bigskip{}

\selectlanguage{english}%
The Matveev-Piergallini moves, however, do not act on the class of
barycentric star subdivisions of triangulations\textbf{} so there
is the need to enlarge the class of objects on which the state sums
are defined. These objects, called simple 2-polyhedra, appear in a
natural way as 2-skeletons%
\footnote{Recall that given an inductive definition of a n-dimensional simplicial
complex $K$, the m-skeleton of $K$ is obtained by stopping at the
m-th step. %
} of the cell subdivisions of compact 3-manifolds dual to triangulations,
\cite{turaev1992state}. They also have a correspondence with objects
of the recoupling theory defined in \prettyref{sec:Some-More-Diagrammatics}
which will be given after a short discussion about simple 2-polyhedra
and the state sum defined for them. For an extensive discussion of
the topic see the original papers by Turaev and Viro, \cite{turaev1992state}
and \cite{viro1992moves}.
\begin{defn}
\label{def:2-dimensional-polyhedron}A 2-dimensional polyhedron (with
boundary) $X$ is \textbf{simple }if the neighborhood of each point
of $X$ is homeomorphic to either of the next spaces:\end{defn}
\begin{enumerate}
\item $\mathbb{R}^{2}$ 
\item The union of three half-planes meeting in their common boundary line\foreignlanguage{british}{
\[
\xy/r2pc/:{\xypolygon4"L"{~:{(3,0):(.3,.5)::}~>{}}},"L1";"L2"**@{-},"L2";"L3"**{}?<>(.5)="L5","L2";"L5"**@{--},"L1";"L4"**{}?<>(.5)="L6","L6";"L5"**@{--},"L6";"L1"**@{--},"L1"+(0,1.3)="V1";"L1"**@{--},"V1";"L2"+(0,1.3)="V2"**@{--},"L2";"V2"**@{--},"L1"+(.65,-.75)="D1";"L1"**@{--},"L2"+(.65,-.75)="D2";"L2"**@{.},"D1";"D2"**{}?!{"L1";"L6"}="D3";"D1"**@{--},"D3";"D2"**@{.}\endxy
\]
}
\item The cone over the one-skeleton of the tetrahedron%
\footnote{This is homeomorphic to six 2-dimensional cells sharing a single vertex,
cf. \cite[Ch. 10]{kauffman1994temperley}.%
} 
\[
\xy/r5pc/:(1,-.35)="A",+(0,1.5)="B","A",{\xypolygon3"K"{~:{(1.25,0):(-.2,-.5)::}~<>{;"B"**@{}}~>{}}},"A",{\xypolygon3"X"{~:{(.68,.06):(.1,.45)::}~>{}}},"A";"X1"+(.15,0)**@{--},"A";"X2"+(.025,.025)**@{-},"A";"X3"**@{-},"X3",+(-.25,.5),{\xypolygon3"Y"{~:{(.9,0):(.1,1.4)::}~={-4}~>{}}},"B";"K1"**{}?<>(.43)="Y4","Y4";"Y0"**@{-},"B";"K3"**{}?<>(.5)="Y5","Y5";"Y0"**@{-},"X3";"Y0"**@{-},"X2",+(.2,.6),{\xypolygon3"Z"{~:{(-.55,0):(.3,2.2)::}~={3}~>{}}},"B";"K1"**{}?<>(.43)="Z4","Z4";"Z0"**@{-},"B";"K2"**{}?<>(.5)="Z5","Z5";"Z0"**@{-},"X2"+(.025,.025);"Z0"**@{-},"X0"+(0.1,.7),{\xypolygon3"L"{~:{(-1.25,0):(0.06,-.65)::}~>{}}},"B";"K2"**{}?<>(.5)="L4","L4";"L0"**@{-},"B";"K3"**{}?<>(.5)="L5","L5";"L0"**@{-},"X1"+(.15,0);"L0"**@{--},"K2";"K3"**{}?<>(.5)="K4","K4"+(-.152,.05)*{\bullet};"X0"**@{-},"K4"+(-.152,.05);"Y0"**@{-},"K4"+(-.152,.05);"Z0"**@{-},"K4"+(-.152,.05);"L0"**@{}?!{"Y4";"Y0"}="I","I";"L0"**@{-},"I";"K4"+(-.152,.05)**@{.}\endxy\cong\quad\xy/r2pc/:{\xypolygon4"L"{~:{(3,0):(.3,.3)::}~>{}}},"L1";"L4"**@{-},"L2";"L3"**@{-},"L3";"L4"**@{-},"L1";"L2"**{}?<>(.33)="U1","L1";"L2"**{}?<>(.6)="U2","L1";"L2"**{}?<>(.9)="U3","L4";"L3"**{}?<>(.6)="D2","L4";"L3"**{}?<>(.33)="D1","L1";"U1"**@{-},"U1";"U2"**@{-},"U2";"D2"**@{-},"U3"+(0,.2);"D2"**@{-},"U3";"L2"**@{-},"U2"+(0,1.2)="V1";"U3"+(0,.2)**@{-},"V1";"U2"**@{-},"U3";"U2"**@{.},"L2";"L3"**{}?<>(.5)="L5","L5";"L0"**@{}?!{"U3";"D2"}="L6","L6";"L0"**@{}?!{"U2";"D2"}="L7"*{\bullet},"L5";"L6"**@{-},"L6";"L7"**@{.},"L0";"L7"**@{-},"L1";"L4"**{}?<>(.5)="L8";"L0"**@{-},"L0"-(0,1.5),{\xypolygon4"B"{~:{(3,0):(.3,.3)::}~>{}}},"B2";"B3"**{}?<>(.5)="B5","B5";"B0"**@{-},"L5";"B5"**{}?!{"L3";"L4"}="M1","L5";"M1"**@{.},"B5";"M1"**@{-},"B1";"B4"**{}?<>(.5)="B6","B6";"B0"**@{-},"B6";"L8"**@{-}\endxy
\]

\item The half-plane $\mathbb{R}_{+}^{2}$, or
\item the union of three quadrants $\{(x,y)\in\mathbb{R}^{2}:x\geq0,y\geq0\}$
meeting in the half-line $x=0$,
\[
\xy/r2.5pc/:{\xypolygon4"L"{~:{(1,0):(.3,.5)::}~>{}}},"L1";"L2"**@{-},"L2";"L3"**@{--},"L1";"L4"**@{-},"L3";"L4"**@{--},"L1"+(0,.75)="V1";"L1"**@{-},"V1";"L2"+(0,.75)="V2"**@{--},"L2";"V2"**@{--},"L1"+(.45,-.6)="D1";"L1"**@{-},"L2"+(.45,-.6)="D2";"L2"**@{.},"D1";"D2"**{}?!{"L1";"L4"}="D3";"D1"**@{--},"D3";"D2"**@{.}\endxy
\]

\end{enumerate}
The points of $X$ with neighborhoods homeomorphic to the last two
ones above belong to the boundary $\partial X$, which is a simple
graph%
\footnote{A simple graph is a finite 1-dimensional CW-complex such that its
0-cells are trivalent vertices and its 1-cells are homeomorphic to
$\mathbb{R}$, called edges, or to $S^{1}$ called loops.%
}. 

The simple 2-dimensional polyhedra are naturally stratified. The $k$-strata
$(k=2,1,0)$ being $k$-dimensional connected components of the set
of internal points in $X$ with neighborhood $\mathbb{R}^{2}$ , (2.)
and (3.) respectively. For the stratification of the boundary, the
1-strata are the edges and loops of the simple graph and the 0-strata
are the 3-vertices of the simple graph.

Now we have the ingredients to redefine the previous state sum with
the simple 2-polyhedra. The same concepts, like admissibility, apply
but with the subtlety that the coloring is now with respect to the
2-strata of $X$%
\footnote{Admissibility is then w.r.t. the edge formed by three labeled 2-strata
as in (2.) in the definition \prettyref{def:2-dimensional-polyhedron}.%
}. Any coloring of $X$ induces naturally a coloring of $\partial X$
since a 1-stratum of the boundary acquires the color assigned to the
2-stratum of $X$ in which this 1-stratum is contained. We denote
the map defined by this construction by $\partial:\, adm(X)\rightarrow adm(\partial X)$.
As mentioned before, the tetrahedra in Eq. \prettyref{eq:state-of-M}
are associated to vertices, which carry a 6-tuple labelling the 2-strata
that meet at the vertex. The association is such that if $x\in X\backslash\partial X$,
the 1-skeleton of its corresponding tetrahedron $T_{x}$ is the polyhedral
link of $x$ in $X$, i.e. the edges correspond to lines in the 2-strata
and the vertices correspond to points in the 1-strata where the lines
of the three 2-strata defining this 1-strata meet. This association
labels automatically the edges of $T_{x}$. The figure below shows
the six 2-strata intersecting in one vertex in the center of a tetrahedron
made out of the germs of 2-strata defining the polyhedral link%
\footnote{Notice that the figure below is, in fact, the cone over the 1-skeleton
of the tetrahedron. Hence, the polyhedral link is the 1-skeleton defining
this cone.%
}: 
\[
\xy/r2pc/:{\xypolygon3"T"{~:{(1,-.2):(-.1,1)::}}},"T0"+(1.25,-.1)="R1","T0"+(.2,0)="I","I";"T1"**@{-},"I";"T2"**@{-},"I";"T3"**@{-},"R1";"T1"**@{-},"R1";"T2"**@{.},"R1";"T3"**@{-},"R1";"I"**{}?!{"T1";"T3"}="K";"R1";"K"**@{-},"K";"I"**@{.}\endxy
\]

The edges of the dual tetrahedron $\hat{T}_{x}$ correspond to edges
of $T_{x}$, thus, they obtain the same labelling. This can also be
seen, if one constructs the dual tetrahedron directly from the configuration
of six 2-cells defining the vertex $x$, as in \prettyref{def:2-dimensional-polyhedron}.
From this construction it is clear that each admissible coloring of
$X$ induces an admissible coloring of $\hat{T}_{x}$. Then for $\phi\in adm(X)$
define
\begin{equation}
|X|_{\phi}=w^{-2\chi(X)+\chi(\partial X)}\prod_{s=1}^{f}w_{\partial\phi(E_{s})}^{\chi(E_{s})}\prod_{r=1}^{b}w_{\phi(\Gamma_{r})}^{2\chi(\Gamma_{r})}\prod_{t=1}^{d}|\hat{T}_{x_{t}}^{\phi}|\;\in K\label{eq:X-phi}
\end{equation}
where $\phi(\Gamma_{i})$ is the color of the edge $i$ of $\hat{T}_{x}$
corresponding to the 2-stratum $\Gamma_{i}$ of $X$ and $\partial\phi(E_{s})$
is the color of the edge $E_{s}$ of $\partial X$%
\footnote{Recall that there are $f$ edges $E_{s}$ in the boundary of $M$,
cf. Eq. \prettyref{eq:phi-summand-of-invariant}.%
}. If $E_{s}$ is homeomorphic to $\mathbb{R}$ then $\chi(E_{s})=-1$
and if it is homeomorphic to $S^{1}$ then $\chi(E_{s})=0$. The corresponding
invariant is 
\[
|X|=\sum_{\phi\in adm(X)}|X|_{\phi},
\]
 and for any admissible coloring $\alpha\in adm(\partial X)$ we have
\[
\Omega_{X}(\alpha)=\sum_{\phi:\partial\phi=\alpha}|X|_{\phi}.
\]

From the conditions (I), (II) and (III) follows the invariance of
$|X|$ and $\Omega_{X}(\alpha)$%
\footnote{The moves $\mathcal{L}$, $\mathcal{M}$ and $\mathcal{B}$ preserve
the boundary, \cite{turaev1992state}.%
} under $\mathcal{L}$-, $\mathcal{M}$- and $\mathcal{B}$-moves respectively.
A detailed proof of the invariance can be found in \cite{turaev1992state}
and \cite{kauffman1994temperley}. In the latter reference, the proof
occurs in the framework of Temperley-Lieb recoupling theory in which
the conditions (I) and (II) correspond to the orthogonality and Biedenharn-Elliott
identity for $q-6j$-symbols respectively. 

In the rest of this section we will discuss, following \cite{kauffman1994temperley},
the correspondence between the approach just discussed and another,
which we will call Kauffman-Lins approach, involving the recoupling
theory. The aim of this comparison is to understand the connections
between these two frameworks. This is done%
\footnote{The correspondence here is informal in nature. The formal proof is
by S. Piunikhin, \cite{piunikhin1992turaev}.%
} first by assigning weights to the vertices, edges and faces given
a coloring of the special spine of a 3-manifold corresponding to $X$,
cf. Eq. \prettyref{eq:q6j-symbol-and-Tetrahedron}. Then, a partition
function involving these weights is defined and finally the identification
of the factors is made. 

The first and most obvious association is, as explained before, that
of a vertex $x$ with a colored tetrahedron, 
\[
|\hat{T}_{\sigma(x)}|\triangleq\xygraph{!{0;/r1.pc/:}!{\sbendh@(0)|{^{b}}}*{\bullet}!{\sbendv@(0)|{^{c}}}[l(2)]*{\bullet}!{\xcaph[2]@(0)|{j}}[r]*{\bullet}[ll]!{\sbendv@(0)|{a}}*{\bullet}!{\sbendh@(0)|{d}}[lu]!{\vcap[1.5]}[r(1.5)]!{\xcapv[2]@(0)|{i}}[d][l(1.5)]!{\vcap[-1.5]}}.
\]

The weight of an edge in a special spine is the value $\theta(a,b,c)$
associated to it, where the labels are those of the three 2-cells
incident to the edge, i.e. we have
\[
\theta(a,\, b,\, c)=\xygraph{!{0;/r1.5pc/:}!{\vcap|{a}}!{\vcap-|{^{c}}}*{\bullet}!{\xcaph@(0)|{b}*{\bullet}}}\triangleq\xy/r1pc/:(0,-.9)="A","A",{\xypolygon4"L"{~:{(3,0):(.3,.5)::}~>{}}},"L0"-(0,.3)*{b},"L0"+(0,1.8)*{a},"L0"+(3.6,.5)*{c},"L1";"L2"**@{-},"L2";"L3"**{}?<>(.5)="L5","L2";"L5"**@{--},"L1";"L4"**{}?<>(.5)="L6","L6";"L5"**@{--},"L6";"L1"**@{--},"L1"+(0,1.3)="V1";"L1"**@{--},"V1";"L2"+(0,1.3)="V2"**@{--},"L2";"V2"**@{--},"L1"+(.65,-.75)="D1";"L1"**@{--},"L2"+(.65,-.75)="D2";"L2"**@{.},"D1";"D2"**{}?!{"L1";"L6"}="D3";"D1"**@{--},"D3";"D2"**@{.}\endxy
\]

As before, the 2-cells correspond in the dual sense to edges, hence,
they carry only one color $i$. Therefore, the weight associated with
it is the quantum integer $\Delta_{i}$, 
\[
\Delta_{i}\triangleq\xygraph{!{0;/r1.5pc/:}[d(0.5)]!{\sbendh[0.75]@(0)}[dl(0.25)]!{\xcaph@(0)}[dl(0.75)]!{\sbendh[0.75]@(0)}[dll]!{\xcaph@(0)<{i}}}.
\]

\begin{defn}
The partition function $TV_{M^{3}}$ for a 3-manifold $M$ in the
Temperley-Lieb recoupling theory is defined by
\[
TV_{M^{3}}=\sum_{\sigma}\prod_{v,e,f}\theta(\sigma(e))^{\chi(e)}\Delta_{\sigma(f)}^{\chi(f)}|\hat{T}_{\sigma(x)}|
\]
where $\chi(f)$ and $\chi(e)$ are the Euler characteristic of the
2-cell $f$ and the edge $e$. If $e$ has graphical nodes then $\chi(e)=-1$
and if $e$ is a loop without nodes then $\chi(e)=0$, cf. \prettyref{eq:X-phi}.
The coloring $\sigma$ involved in this definition of the partition
function is over a finite color set $\{0,1,\dots,r-2\}$ as well,
and admissible.
\end{defn}
The behavior of $TV_{M^{3}}$ under the bubble move deserves more
attention since it helps us identify the rest of the factors in the
definition of the Turaev-Viro invariant $|X|$. The result of performing
a bubble move on a face of the special spine of $M^{3}$ turns out
to be a global factor of $\Delta_{a}^{-1}\sum_{i,j}\Delta_{i}\Delta_{j}$,
where $a$ is the color of the face where the bubble move was realized
and the sum is over all admissible triples $(a,i,j)$. This factor
can be explained as follows. Consider the bubble move on the face
colored with $a$. Notice that the rest of the special spine is unaffected.
We have then
\[
\xygraph{!{0;/r1.5pc/:}[u]!{\xcapv[2]@(0)|{a}}[u]!{\xcaph[2.2]@(0)}[d(2)][l]!{\xcaph[2.2]@(0)}[u(2)][r(1.2)]!{\xcapv[2]@(0)}}\Longleftrightarrow\qquad\quad\xygraph{!{0;/r1.5pc/:}[u]!{\xcapv[2]@(0)|{a}}[r(0.7)]!{\vcap[0.75]|{j}}!{\vcap[-0.75]|{^{i}}}[u][r(1.5)]!{\xcapv[2]@(0)}[l(2.2)][u]!{\xcaph[2.2]@(0)}[d(2)][l]!{\xcaph[2.2]@(0)}}
\]
 where the colors $i,j$ correspond to the surfaces inside the circle
and to the hemisphere respectively, if we imagine the circle as being
the intersection of the surface of an hemisphere $j$ with the surfaces
$i$ and $a$. Since the cell colored by $a$ obtains a hole in the
process, its Euler characteristic is reduced by one. This explains
the term $\Delta_{a}^{-1}$. The factor involving the sum appears
as a natural consequence of the summation over all possible admissible
colorings keeping $a$ fixed. There is no $\theta$-value involved
in this term since the edge defined by the 2-cells $a,i,j$ is a loop
without nodes.

There is in fact a relation between the weights $\Delta_{j}$ of the
2-cells for $q$-admissible triples $(a,i,j)$ and $(b,i,j)$ where
$q=e^{i\pi/r}$ which strongly resembles condition (III) of the initial
data. For $a,b,i,j\in\{0,1,\dots,r-2\}$, we have 
\[
\Delta_{a}^{-1}\sum_{i,j}\Delta_{i}\Delta_{j}=\Delta_{b}^{-1}\sum_{i,j}\Delta_{i}\Delta_{j}
\]
where the sum is over the above $q$-admissible triples. Hence for
any $a\in\{0,1,\dots,r-2\}$ and $(a,i,j)$ $q$-admissible we have
\[
\tau_{q}=\Delta_{a}^{-1}\sum_{i,j}\Delta_{i}\Delta_{j}=-\frac{2r}{(q-q^{-1})^{2}}.
\]

The above relation looks just like the relation \prettyref{eq:Cond-III}
for the elements of the commutative ring $K$. Thus we can identify
$w^{2}=\tau_{q}$ and for all $j\in\{0,1,\dots,r-2\}$ we have $w_{j}^{2}=\Delta_{j}$,
cf. \cite{turaev1992state}.

Now, notice that the $\theta$-values are in fact values of a vertex
on the boundary of the 3-manifold with admissible coloring induced
by the coloring of the special spine. These values correspond to products
of terms like $w_{\partial\phi(E_{s})}$ in the evaluation of $|X|_{\phi}$.
This is readily seen if one recalls that the value of the theta-nets
are combinatorial products of $\Delta_{i}$ in which admissible coloring
is involved.

Finally, to obtain a topological invariant of $M^{3}$ from the recoupling
theory one has to normalize the above defined state summation in order
to take care of its change by the factor $\tau_{q}$ when considering
bubble moves. Thus, the invariant of 3-manifolds in \cite{kauffman1994temperley}
is defined by
\[
I_{M^{3},q}=\tau_{q}^{-(t-1)}TV_{M^{3}}
\]
where $t$ is the number of 3-cells%
\footnote{Recall that 3-cells correspond to vertices in the dual form of the
combinatorial manifold $X$. Thus, $t$ is the number of vertices
in $X$.%
} in the decomposition of $M^{3}$. Thus, the factor $\tau_{q}^{-(t-1)}$
corresponds to the factor involving $w^{2}\in K$ in $|X|$. 

As mentioned before, the two invariants defined in this section coincide
when $q$ is a root of unity. The Kauffman-Lins approach gives the
tools to calculate the invariant in a direct way while the Turaev-Viro
approach gives us a broader insight about the theoretical structure
giving rise to the invariant and its link to the topological QFT by
the fact that a simple 2-polyhedron $X$ is, in fact, a cobordism
between simple graphs.

\section{Invariants from Spherical Categories\label{sec:Invariants-from-Spherical}}

In this section a more general version of the above discussed theory
is presented following \cite{barrett1996invariants}, where an algebraic
framework for constructing invariants of closed oriented 3-manifolds
is presented. We use the previously learned concepts in section \prettyref{sub:Spherical-categories}
since the data for the construction of the invariant is a spherical
category, for instance, the representations of the quantized enveloping
algebra of $\mathfrak{sl}_{2}$ give the Turaev-Viro invariant defined
in the previous section. In \cite{barrett1996invariants} the invariance
from a finite set of moves on triangulations is also proved without
going to the dual form of the transformations. Here we will only give
a general account of the results obtained by J. W. Barrett and B.
W. Westbury.

For the rest of this section by spherical category we mean an \textbf{additive}
(strict) spherical category and we assume some conditions on it. First,
the ring $\mathbb{F}\cong End(e)$, which is commutative in any additive
monoidal category, is assumed to be a field. Second, each set of morphisms
in our spherical category is a finite dimensional vector space (over
$\mathbb{F}$). 

In this framework the data for the state sum consists of the set of
labels $I=J$, which is the set%
\footnote{More precisely, it is a set of representatives of each isomorphism
class of simple objects.%
} of simple objects in the category, a set of state spaces and a set
of partition functions for each tetrahedron. Denote by $D(a,b,c)$
the standard oriented triangle $+(012)$ labelled by $\partial_{0}D\mapsto a,\partial_{1}D\mapsto b,\partial_{2}D\mapsto c$,
where $\partial_{i}$ is the map%
\footnote{Note that these maps satisfy $\partial_{i}\partial_{j}\sigma=\partial_{j-1}\partial_{i}\sigma$
for $i<j$, cf. \cite{barrett1996invariants}.%
} sending any $n$-simplex $\sigma$ in a simplicial complex to one
of its $(n-1)$-faces obtained by omitting the $i^{th}$ vertex of
$\sigma$. The state space for $D(a,b,c)$ is a vector space $H(D_{a,b,c})=H(a,b,c)=Hom(b,a\otimes c)$
over a field $\mathbb{F}$, whereas for the opposite oriented triangle
$-D(a,b,c)$ it is defined to be the dual vector space $H^{*}(a,b,c)$,
cf. Sec. \prettyref{sub:Spherical-categories}. Consider once again
the standard oriented tetrahedron $T=+(0123)$ with edges $\partial_{i}\partial_{j}T$
labelled by $e_{ij}$. The partition function corresponding to this
tetrahedron is defined to be the linear map
\[
\left\{ \begin{array}{ccc}
e_{01} & e_{02} & e_{12}\\
e_{23} & e_{13} & e_{03}
\end{array}\right\} _{+}:\, H(e_{23},e_{03},e_{02})\otimes H(e_{12},e_{02},e_{01})\rightarrow H(e_{23},e_{13},e_{12})\otimes H(e_{13},e_{03},e_{01})
\]
 and accordingly the partition function of the opposite oriented tetrahedron
$-T$ is defined by
\[
\left\{ \begin{array}{ccc}
e_{01} & e_{02} & e_{12}\\
e_{23} & e_{13} & e_{03}
\end{array}\right\} _{-}:\, H(e_{23},e_{13},e_{12})\otimes H(e_{13},e_{03},e_{01})\rightarrow H(e_{23},e_{03},e_{02})\otimes H(e_{12},e_{02},e_{01}).
\]
 Hence, once again, we observe the correspondence between a topological
object and a morphism, where the factors in the tensor products are
associated to each one of the four faces of the tetrahedron%
\footnote{Recall that the labelled trivalent vertex can be regarded in a certain
sense as the dual of a labelled triangle.%
}. 

This data determines an element $Z(M)\in\mathbb{F}$ for each labelled
simplicial closed manifold $M$, the simplicial invariant of $M$
obtained as follows. Let
\[
V(M)\equiv\bigotimes_{D\in K(M)}H(D)
\]
 where $K(M)$ is the simplicial complex triangulating $M$, $D$
are the triangles in this triangulation and $H(D)$ are the state
spaces of the corresponding triangles. Consider the tensor product
over the set of partition functions corresponding to each tetrahedron
in $K(M)$. The resulting morphism%
\footnote{Recall the construction of an edge in \prettyref{sec:Some-More-Diagrammatics}
where the representations of the quantum group were woven into edges
by permuting them in the order of their tensor product, cf. \cite{kauffman1994temperley}.%
} is a linear map $V(M)\rightarrow V^{\pi}(M)$, where the tensor product
$V^{\pi}(M)$ is defined as $V(M)$ but with factors permuted by some
permutation $\pi$. Furthermore, the iteration of the standard twist
$P:x\otimes y\mapsto y\otimes x$ gives a unique linear map sending
$V^{\pi}(M)$ back to $V(M)$. The composition of these two maps gives
a linear map $V(M)\rightarrow V(M)$ and its trace defines the invariant
$Z(M)$%
\footnote{Note that the notation given here and in \prettyref{sec:TQFT} is
no coincidence.%
}. Then a state sum invariant of a closed manifold is then obtained
by a weighted sum of these simplicial invariants over the class of
(admissible) labelling.

As mentioned in section \prettyref{sub:Spherical-categories} the
(additive) spherical category which defines the state sum model is
the non-degenerate quotient category constructed from the category
of representations of a spherical Hopf algebra. This construction
is important, and always possible (see sec. \prettyref{sub:Spherical-categories}),
because in order to construct the invariants one needs to take a non-degenerate
category since this property is needed for the semisimplicity condition
which allows us to construct a well defined manifold invariant. The
category of representations of the Hopf algebra may be degenerate
but the quotient \vpageref{thm:non-degenerate-quotient} is not. However,
it is not the category of the representations of any finite dimensional
Hopf algebra since it is not possible to assign a dimension to \textit{each}
object which would be additive and multiplicative under direct sum
and tensor product respectively. 

Given some isomorphisms of the label objects in a triangle $D_{a,b,c}$,
i.e. $\phi_{a}:\, a\rightarrow a'$, $\phi_{b}:\, b\rightarrow b'$
and $\phi_{c}:\, c\rightarrow c'$, there is an induced isomorphism
between the state spaces corresponding to $D_{a,b,c}$ and $D_{a',b',c'}$
\[
Hom(b,a\otimes c)\rightarrow Hom(b',a'\otimes c')
\]
given by $\alpha\mapsto\phi_{b}^{-1}\alpha(\phi_{a}\otimes\phi_{c})=\alpha'$. 

It follows that the map $V(M)\rightarrow V(M)$ is conjugated by the
induced isomorphism on the state space of each triangle in the triangulation
of the manifold $M$. Since the simplicial invariant mentioned above
is the trace of this map and the trace is invariant under conjugation
by a linear map we have, for a closed simplicial manifold $M$, that
the invariant $Z(M)$ only depends on the isomorphism class of the
labelling.

In fact, for a combinatorial%
\footnote{Combinatorial maps are maps of complexes, as opposed to simplicial
maps which are combinatorial maps that preserve orderings. The difference
is that a simplicial complex is a complex together with a total ordering
of the vertices of each simplex. Thus, a single simplex has no symmetries,
whereas the corresponding complex admits the permutations of its vertices
as its symmetries. Therefore, in the above case, the combinatorial
isomorphisms are representations of $S_{3}$.%
} isomorphism of labelled manifolds $f:\, M\rightarrow N$ the simplicial
invariants are equal. This follows from the pivotal structure of the
spherical category which governs the properties of the state space
of a triangle under combinatorial isomorphisms and the spherical property
which allows isotopy on the sphere%
\footnote{Any permutation of the vertices of a tetrahedron, i.e. elements of
$S_{4}$, can be extended to an isotopy of the sphere. %
} and governs the properties of the non-degenerate pairing.

To end this chapter, we present the main result given in \cite{barrett1996invariants},
which is stated for a more general type of 3-manifolds, however, we
present it here as a result for closed 3-manifolds.
\begin{thm}
\label{thm:Invariants-Sp-Cat}A finite semisimple spherical category
of non-vanishing dimension determines an invariant of oriented closed
3-manifolds. In other words, any simplicial closed manifold $M$ which
triangulates a given piecewise-linear manifold $\mathcal{M}$ determines
the manifold invariant given by
\[
C(M)=K^{-v}\sum_{l:E\rightarrow J}Z(M,l)\prod_{e\in E}\text{dim}_{q}(l(e))
\]
where $K$ is the dimension of the spherical category, $v$ the number
of vertices in $M$, the set of edges is denoted by $E$ and $J$
is a set of representatives from each isomorphism class of simple
objects of the category. The map $l:\, E\rightarrow J$ is the labelling,
thus, the labelled manifold is denoted by $(M,l)$. 
\end{thm}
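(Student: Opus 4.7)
The plan is to establish the invariance of $C(M)$ under the choice of triangulation of the piecewise-linear manifold $\mathcal{M}$, since the definition of $C(M)$ is already well-posed for each fixed triangulation (the simplicial invariant $Z(M,l)$ depends only on the isomorphism class of the labelling by the remarks preceding the theorem, and the weighted sum over all labellings is a finite sum because $J$ is finite). By a theorem of Pachner, any two triangulations of a closed PL $3$-manifold are related by a finite sequence of local bistellar moves, namely the $2\leftrightarrow 3$ move (replacing two tetrahedra sharing a triangle by three tetrahedra sharing an edge) and the $1\leftrightarrow 4$ move (subdividing one tetrahedron into four by inserting a central vertex). Alternatively, one may pass to the dual special spines and use the Matveev--Piergallini moves $\mathcal{L}$, $\mathcal{M}$, $\mathcal{B}$ of \prettyref{sec:Moves-on-Triangulations}. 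Thus it suffices to check that $C(M)$ is preserved under each of these local moves.

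First, I would verify invariance under the $2\to 3$ move. The simplicial invariant $Z(M,l)$ assigns to the two-tetrahedra configuration a composition of two partition functions of the form $\{\cdots\}_{\pm}$, and to the three-tetrahedra configuration a composition of three. The equality of these two compositions, summed over the label of the new internal edge with weight $\dim_{q}$, is precisely the categorical Biedenharn--Elliott identity. In the framework of spherical categories this identity is a consequence of the associativity isomorphism of $\otimes$ combined with the semisimplicity isomorphism \prettyref{eq:semi-simple-condition-sph-cat}, which allows one to resolve the identity on a triple tensor product $a\otimes b\otimes c$ as a weighted sum over simple objects in two equivalent ways. The pivotal/spherical structure ensures that the graphical morphism associated to each tetrahedron is well-defined on the sphere, so the local recoupling identity lifts to the global state sum without sign or phase ambiguities.

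Next, I would verify invariance under the $1\to 4$ move. Inserting a central vertex in a tetrahedron introduces one new vertex, four new edges, six new triangles and three new tetrahedra (replacing one old tetrahedron by four). The local contribution is a sum over the labels of the four new edges of a product of four partition functions, weighted by the quantum dimensions of the new edges. Using the Biedenharn--Elliott identity repeatedly, or equivalently the non-degenerate pairing $\Theta$ in the spherical category together with semisimplicity, this local sum collapses to $K$ times the partition function of the original tetrahedron, where $K=\sum_{a\in J}\dim_{q}^{2}(a)$ is the dimension of the category. The global effect on $C(M)$ is therefore multiplication by $K$, exactly compensated by the change of the normalising factor $K^{-v}\to K^{-(v+1)}$. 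This is the place where the assumption of non-vanishing dimension $K$ is essential, since otherwise the normalisation is ill-defined.

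The main obstacle in the plan is the book-keeping for the $1\to 4$ move: one must carefully pair off the partition functions of the four new tetrahedra with the state spaces of the six new triangles and the four new edges, apply the semisimple decomposition to contract the internal indices, and identify the result with $K$ times the original partition function. This requires repeated use of the pentagon identity together with the trace identity $\mathrm{tr}_{L}(f\otimes g)=\mathrm{tr}_{L}(f)\cdot\mathrm{tr}_{L}(g)$ from the spherical property, and is where the non-degeneracy of $\Theta$ given by \prettyref{thm:non-degenerate-quotient} is crucially used. Finally, one checks that the permutation and twist maps used to define $Z(M,l)$ are compatible with these local moves; this follows from the pivotal axioms, which guarantee that the graphical calculus is invariant under isotopy on $S^{2}$, hence in particular under the reordering of simplices induced by a bistellar move. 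Collecting these verifications, $C(M)$ depends only on the PL-equivalence class of $\mathcal{M}$, completing the proof.
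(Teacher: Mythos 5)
Your proposal is correct and follows essentially the same route the paper (and its source, Barrett--Westbury) takes: reduce triangulation-independence to invariance under the bistellar ($2\leftrightarrow 3$ and $1\leftrightarrow 4$) moves, handle the first via the Biedenharn--Elliott identity and the second via orthogonality together with the factor $K$ absorbed by the normalisation $K^{-v}$, with the pivotal/spherical structure and the preceding remark that $Z(M,l)$ depends only on the isomorphism class of the labelling guaranteeing well-definedness. The paper itself only sketches this and defers the bookkeeping to \cite{barrett1996invariants}, so your outline is, if anything, more explicit than the text it accompanies.
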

As before, this state sum invariant does not depends neither on the
choice of simple objects $J$ nor on the choice of the simplicial
structure. The invariance under the choice of triangulation is a consequence
of the invariance of $C(M)$ under bistellar moves which follows from
the orthogonality and the Biedenharn-Elliot relation for the partition
function corresponding to the tetrahedron.

To summarize, we started in \prettyref{sec:Spin-Networks} with the
introduction of the classical model of Penrose's spin networks and
its state sum as the Regge-Ponzano theory. The term ``classical''
here refers to the construction of the spin networks out of representations
of Lie groups, $SU(2)$ in the case described in \prettyref{sec:Spin-Networks}.
However, the classical model has for some manifolds a divergent behavior
coming from the infinite sum over the set of labels in the coloring
of the triangulations. In order to avoid these divergences one has
to construct the manifold invariant from the q-deformed universal
enveloping algebra of the corresponding semisimple Lie algebra at
a root of unity; for instance, the Turaev-Viro invariant can be seen
as the regularization of the Regge-Ponzano state sum. Using this Hopf
algebra has the effect that the set of representation labels is finite,
hence, the state sum is finite, \cite{fairbairn2010quantum}. As mentioned
before, this is a special case of the state sums given in \prettyref{thm:Invariants-Sp-Cat},
where the data comes from the more abstract notion of spherical categories.

\chapter{Non-planar Spin Networks\label{cha:Non-planar-Spin-Networks}}

After giving an account of the physical motivation for spin networks,
their relation to general relativity and placing them in an algebraical
context as well as in the setting of TQFT we now turn to another language
useful to describe some aspects of these objects. We will use some
basic concepts of (topological) graph theory taken from \cite{hartsfield2003pearls,beineke1997graph}
to describe the embeddings of spin networks in surfaces, in particular
the cellular embeddings of non-planar graphs with a $(3,3)$-bipartite
graph as subgraph.

Moussouris' algorithm for the evaluation of planar spin networks and
the Decomposition Theorem are presented in the second section of this
chapter. We will apply and extend these ideas to evaluate the above
mentioned embeddings, which will allow us to define a toroidal symbol
in order to attempt a generalization of the algorithm for the evaluation
of non-planar networks. To achieve this a few identities for relating
the evaluations of the different embeddings in the torus are given.
We explain the main concepts involving the generalization of the evaluation
of non-planar spin networks in terms of toroidal symbols, however,
the process for obtaining the main result is still ongoing.

\section{Kuratowski's Theorem and the Embedding of Graphs in Surfaces\label{sec:Kuratowski's-Theorem-Embeddings}}

A \textbf{graph} $G$ is a pair of sets $(V_{G};E_{G})$ where $V_{G}\neq\emptyset$
is the set of vertices of $G$ and $E_{G}$ is a set of unordered
pairs of elements of $V_{G}$ which might be empty. The pairs of vertices
are called edges and they are defined, in an abstract way, as a relation
between the objects defining $G$. If two vertices form an edge, we
say they are adjacent or neighbors. A \textbf{subgraph} of a graph
$G$ is a graph $H=(V_{H};E_{H})$ such that $V_{H}\subseteq V_{G}$
and $E_{H}\subseteq E_{G}$, \cite{hartsfield2003pearls}. Another
concept related to subgraphs are the \textbf{minors} of a graph; these
are graphs obtained from $G$ by a succession of edge-deletions and
edge-contractions. If the minor $M$ was obtained only by edge contractions,
then $G$ is said to be contractible to $M$, \cite{beineke1997graph}.

As mentioned before, we are able to consider only trivalent vertices
since graphs of higher degree, i.e. with vertices of higher valence,
are expandable to cubic graphs. On the other hand, a graph is only
allowed to have at most one edge between two adjacent vertices. If
two vertices are joined by more than one edge, the structure is called
a multigraph. In the case of cubic graphs we only have trivalent vertices,
hence, a multigraph would have two vertices joined by at most three
edges. This impose, however, no constraints in the class of spin networks
since the double edge can be reduced to a single edge using \vref{eq:Edge-with-loop}
and a triple edge is exactly the theta function defined previously
as the value of a 3-vertex by \vref{eq:theta-function}.

Two graphs are \textbf{homeomorphic} if they can be obtained from
the same graph by subdividing its edges. Subdividing an edge $e=vw$
between two vertices $v,w$ is the operation of inserting a new vertex
$z$ such that $e$ is replaced by two new edges $vz$ and $zw$,
\cite{beineke1997graph}. 

The drawing of a graph $G$ in a (closed) surface $S$ consists of
points corresponding to vertices and simple curves, corresponding
to edges, joining the points. If there are no crossings in the drawing,
i.e. the curves do not meet except at their end-vertices, then the
drawing is an embedding. The embedding is called \textbf{cellular
}if each region is homeomorphic to an open disc. It is in this sense
that planarity is defined, namely, a graph is planar if it can be
embedded in a plane, hence, in the 2-sphere. Surprisingly, there is
a simple criterion for determining whether a graph is planar or not,
which is given by Kuratowski's theorem.
\begin{thm}
\textbf{\label{thm:Kuratowski's-Theorem}Kuratowski's Theorem.} A
graph is planar if and only if it has neither $K_{5}$ nor $K_{3,3}$
as a minor, i.e. there are no subgraphs homeomorphic or contractible
to $K_{5}$ and $K_{3,3}$.
\[
\xy/r2.0pc/:{\xypolygon5"C"{}},"C1"*@{*};"C3"*@{*}**@{-},"C1";"C4"**@{-},"C2"*@{*};"C4"*@{*}**@{-},"C2";"C5"*@{*}**@{-},"C3";"C5"**@{-},"C0"-(0,1.3)*{K_{5}}\endxy\:\text{ and }\xy/r2.0pc/:{\xypolygon4"B"{~:{(2.5,0):(0,.5)::}~>{}}},"B1";"B2"**{}?<>(.5)="B5"*@{*},"B3";"B4"**{}?<>(.5)="B6"*@{*},"B1"*@{*},"B2"*@{*},"B3"*@{*},"B4"*@{*},"B1";"B3"**@{-},"B1";"B4"**@{-},"B1";"B6"**@{-},"B5";"B3"**@{-},"B5";"B6"**@{-},"B5";"B4"**@{-},"B1";"B3"**@{-},"B2";"B3"**@{-},"B2";"B6"**@{-},"B2";"B4"**@{-},"B0"-(0,1.3)*{K_{3,3}}\endxy
\]
 
\end{thm}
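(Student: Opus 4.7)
The proof splits naturally into two directions. The easy direction is to show that if $G$ contains a subgraph homeomorphic or contractible to $K_5$ or $K_{3,3}$, then $G$ is non-planar. The plan is first to verify that $K_5$ and $K_{3,3}$ themselves fail to be planar using Euler's formula $V - E + F = 2$: since each face of a planar embedding has at least three bounding edges, one obtains $E \leq 3V - 6$ in general, and $E \leq 2V - 4$ for triangle-free graphs such as the bipartite $K_{3,3}$; substituting $V=5,E=10$ for $K_5$ and $V=6,E=9$ for $K_{3,3}$ yields immediate contradictions. Next I would observe that planarity is preserved under taking subgraphs, subdividing and suppressing degree-two vertices (so topological minors inherit non-planarity), and under edge contraction (so ordinary minors do as well), which completes this direction.

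For the hard direction I would argue by minimal counterexample. Let $G$ be a non-planar graph with the fewest edges that contains no $K_5$ or $K_{3,3}$ minor. The plan is to show that $G$ must be $3$-connected: if $G$ had a cut vertex or a separating pair, one could split it into smaller pieces, apply the induction hypothesis to each piece to obtain planar embeddings, and glue these embeddings along the separator to produce a planar embedding of $G$, contradicting non-planarity. The minimality of $G$ also ensures that for every edge $e$, the graph $G-e$ is planar, and for every edge $e=uv$, the contraction $G/e$ either is planar or contains $K_5$ or $K_{3,3}$ as a minor (which, by pulling the contraction back, would yield such a minor in $G$, contradicting the assumption).

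The heart of the argument is then to derive a contradiction for the $3$-connected case. The plan is to fix an edge $e=uv$ and use that $G-e$ is planar and $3$-connected (after perhaps a small reduction), so by Whitney's theorem it admits an essentially unique planar embedding; equivalently one may invoke Tutte's theorem that every $3$-connected planar graph has a convex straight-line embedding with prescribed outer face. Choose the embedding of $G-e$ so that $u$ and $v$ lie on the boundaries of faces, and examine the cycle $C$ through $u$ and $v$ in that embedding. By analysing the ``bridges'' of $C$ in $G-e$ and how they are forced to be distributed inside and outside $C$ because $e$ cannot be drawn without a crossing, one isolates a configuration of bridges whose attachment pattern on $C$ forces the existence of either a $K_5$ or a $K_{3,3}$ topological minor in $G$; this is exactly the contradiction sought.

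The main obstacle is this last bridge analysis: one has to classify the possible conflict structures of bridges on the cycle $C$ and show that every irreducible conflict already produces one of the two Kuratowski subgraphs, which requires careful bookkeeping but is essentially combinatorial. Reducing to $3$-connectedness is routine by the gluing argument sketched above, and the easy direction is a straightforward Euler-formula computation; so the entire weight of the proof rests on converting the planarity obstruction into a concrete $K_5$ or $K_{3,3}$ subdivision or minor inside the cycle-with-bridges picture of a $3$-connected minimal counterexample.
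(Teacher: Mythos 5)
The paper does not prove this theorem at all; it is quoted from the graph-theory literature (\cite{beineke1997graph,hartsfield2003pearls}), so there is no in-paper argument to compare against. Your outline is the standard classical proof of Kuratowski's theorem: the easy direction via Euler's formula ($E\leq 3V-6$, and $E\leq 2V-4$ for triangle-free graphs) plus closure of planarity under subgraphs, subdivision and contraction; the hard direction via a minimal non-planar counterexample, reduction to the $3$-connected case, and a bridge analysis on a cycle in a planar embedding of $G-e$. Two remarks on where the plan is thinner than it looks. First, in the reduction to $3$-connectedness, when you split $G$ at a $2$-separator $\{x,y\}$ you must add a virtual edge $xy$ to each piece before invoking minimality, and then check that this added edge cannot create a Kuratowski minor in a piece that was absent from $G$ (it cannot, because the edge can be simulated by a path through the other piece); without the virtual edge the glued embeddings need not fit together. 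Second, and more substantially, the entire content of the theorem is concentrated in the bridge analysis you defer to the last paragraph: classifying the conflict structures of bridges on the cycle $C$ and showing each one yields a $K_5$ or $K_{3,3}$ subdivision is the long case analysis that occupies most of any written proof, so as it stands your text is a correct and standard proof \emph{plan} rather than a proof. Also note that $G-e$ need not remain $3$-connected, so the appeal to Whitney/Tutte requires either Thomassen's contraction lemma (choose $e$ with $G/e$ still $3$-connected) or a direct argument about the faces of $G-e$ containing $u$ and $v$; your parenthetical ``after perhaps a small reduction'' is doing real work there.
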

The graph $K_{5}$ is called the complete graph on five vertices and
$K_{3,3}$ is the $(3,3)$-bipartite graph. Notice that the $K_{5}$
graph is 4-valent and can be expanded to obtain the Petersen graph
which is a cubic graph with 10 vertices as depicted below. If one
deletes any of its vertices and the edges incident to it, one finds
a graph homeomorphic to the $(3,3)$-bipartite graph. Hence, we only
need to focus on the latter graph.

\medskip{}

\begin{center}
\xy/r3.0pc/:
{\xypolygon5"C"{}},
"C1"*@{*};"C3"*@{*}**@{-},"C1";"C4"**@{-},"C2"*@{*};"C4"*@{*}**@{-},"C2";"C5"*@{*}**@{-},"C3";"C5"**@{-},"C0"-(0,1.3)*{K_{5}}
\endxy
$\rightarrow$
\xy/l1pc/:
{\xypolygon5"A"{}},
{\xypolygon5"B"{~:{(-2.5,0):}~>{}}},
{\xypolygon5"C"{~:{(-3.75,0):}}},
"B1"*@{*};"C1"*@{*}**@{-},"B2"*@{*};"C2"*@{*}**@{-},"B3"*@{*};"C3"*@{*}**@{-},"B4"*@{*};"C4"*@{*}**@{-},"B5"*@{*};"C5"*@{*}**@{-},
"B1";"A3"**@{-},"B1";"A4"**@{-},"B2";"A5"**@{-},"B2";"A4"**@{-},"B3";"A5"**@{-},"B3";"A1"**@{-},"B4";"A1"**@{-},"B4";"A2"**@{-},
"B5";"A3"**@{-},"B5";"A2"**@{-},
"A0"+(0,3.9)*{\text{Petersen graph}}
\endxy
\par\end{center}

\medskip{}

Closed surfaces, on the other hand, are categorized into orientable
and non-orientable surfaces, e.g. the sphere $S^{2}$, torus $T^{2}$
or the real projective plane $P^{2}$. Any oriented surface is homeomorphic
to the sphere or to the connected sum $T^{2}\#T^{2}\#\dots\#T^{2}$
of a finite number $h$ of tori, while any non-orientable closed surface
is homeomorphic to a connected sum of a finite number of copies of
the real projective plane, where the empty sum is defined as $P^{2}$
itself.

Whether a given graph $G$ is embeddable in an orientable surface
$S_{h}$ or not depends on its genus $\gamma(G)$, which is defined
to be the minimum genus of any orientable surface in which $G$ is
embeddable, i.e. $G$ is embeddable in $S_{h}$ if $h\geq\gamma(G)$.
In fact, any graph can be embedded in a surface with enough handles
just by adding a handle at each crossing, but we are rather interested
in cellular embeddings%
\footnote{In fact, if $\gamma(G)=h$, then every embedding of $G$ on $S_{h}$
is cellular, \cite{beineke1997graph}.%
} for which Euler's formula hold,
\begin{equation}
v-e+f=2-2h\label{eq:Eulers-formula-for-graph-embeddings}
\end{equation}
where $v,e,f$ are the number of vertices, edges and faces respectively
and $h$ is the (orientable) genus of $S_{h}$. In this context, a
planar graph has genus $\gamma(G_{planar})=0$.

There is no general formula for calculating the orientable genus of
a given graph, however, for the $(s,r)$-bipartite graph it is given
by
\begin{equation}
\gamma(K_{s,r})=\left\lceil \frac{(r-2)(s-2)}{4}\right\rceil \label{eq:genus-of-graph}
\end{equation}
where $\left\lceil x\right\rceil $ denotes the next integer bigger
than $x$. Hence, $K_{3,3}$ is embeddable in the torus but not in
the sphere since $\gamma(K_{3,3})=\left\lceil \frac{1}{4}\right\rceil =1$,
\cite{beineke1997graph}. 

There are similar relations for the non-orientable case, however,
our discussion will be only for embeddings in orientable surfaces
since embeddings of the graphs in non-orientable surfaces involve
a ``twist'' which is not clear how to deal with in the context of
spin networks.

Notice that from the relation \prettyref{eq:Eulers-formula-for-graph-embeddings}
there is a topological constraint to the allowed cellular embeddings
for a given graph. For instance, $K_{3,3}$ has 6 vertices and 9 edges,
thus, we obtain a constraint for the number of faces, $f=5-2h$, since
$f,h>0$. Furthermore, from \prettyref{eq:genus-of-graph} we have
$h\geq1$, hence, $f=3$ or $f=1$ in the case where $K_{3,3}$ is
embedded%
\footnote{From now on, whenever we refer to an embedding, it is meant a cellular
embedding.%
} in $T^{2}\text{ or }T^{2}\#T^{2}$ respectively.

Is there other information encoded in the graph that can help us to
further narrow down the possible embeddings? Does the orientation
of the vertices impose a constraint on the embedding? Now, having
found the number of possible faces (or 2-cells) for the embeddings,
we want to construct oriented surfaces such that the cellular embeddings
are automatically realized. In order to achieve this, we need to find
the circuits of the given graph $G$ and attach 2-cells to the regions
bounded by them. 

A \textbf{circuit} is a closed walk%
\footnote{A walk in $G$ is an alternating sequence $v_{1}e_{1}v_{2}e_{2}\dots v_{n-1}e_{n-1}v_{n}$
of vertices and edges of $G$, where every edge $e_{i}$ is incident
with $v_{i}\text{ and }v_{i+1}$, and $v_{i}\neq v_{i+1}$. If $v_{1}=v_{n}$,
then it is a closed walk, \cite{hartsfield2003pearls}. It is important
to notice here that this definition is a property of the graph itself.
We will, however, abuse the use of the language and refer to the regions
bounded by the circuits (cf. \prettyref{thm:Rotation-Scheme-Theorem})
as \textit{embedded circuits}, when they have repeated edges in both
directions, or as \textit{embedded cycles} when no edges are repeated.
To clarify, the difference is that the latter concepts are related
to the embedding of the graph in a surface and the definition given
above is a property of the graph related only indirectly to the embedding
of the graph through \prettyref{thm:Rotation-Scheme-Theorem}.%
} in $G$ such that no edge is repeated in the same direction. One
may imagine a circuit as walking along an edge in a certain direction
and, when getting to a vertex, the direction to follow (either clockwise
or counter-clockwise) is given by the orientation, also called rotation,
of the vertex under consideration. Thus, a given configuration of
the orientations of all vertices in the graph, called a rotation scheme,
induces a set of circuits giving rise to a specific cellular embedding.
Hence, all embeddings can also be described by giving the orientation%
\footnote{The orientation is given as a cyclic permutation of the neighbors
encountered while going clockwise around the vertices. In this convention,
the orientation of a vertex $v$ is the equivalent class of even permutations,
in the case of positive orientation denoted $(+1)$, or odd permutations,
in the case of negative orientation denoted $(-1)$, of its neighbors.
Such a definition takes into account the fact that both directions,
anticlockwise and clockwise, can be described by listing the neighbors
in order of their appearance when going \textit{clockwise}, for instance,
if $(abc)$ describes the positive orientation of $\xygraph{!{0;/r1.5pc/:}[u]!{\xcapv@(0)>{a}}*{\bullet}[lr(0.1)]!{\sbendv@(0)>{b}}[ll]!{\sbendh-@(0)>{c}}}$,
then $(acb)$ describes the negative orientation, meaning going around
counter-clockwise, which can also be regarded as permutating the edges
$vc$ and $vb$ giving $\xygraph{!{0;/r1.5pc/:}[u]!{\xcapv@(0)>{a}}*{\bullet}[lr(0.1)]!{\sbendv@(0)>{c}}[ll]!{\sbendh-@(0)>{b}}}$.
Notice that \textit{in the diagram} the listing of neighbors of $v$
is still clockwise.%
} of each vertex and specifying the regions bounded by the circuits
obtained from applying the so called \textbf{rotation rule}: after
the edge $xy$, take the edge $yz$, where $z$ is the successor to
$x$ in the permutation of the neighbors of the vertex $y$, see examples
below. The previous discussion is formalized in the next theorem,
\cite{beineke1997graph}.
\begin{thm}
\textbf{\label{thm:Rotation-Scheme-Theorem}Rotation Scheme Theorem.
}Let $G$ be a connected graph with v vertices and e edges, and let
$\Pi=\{\pi_{1},\pi_{2},\dots,\pi_{v}\}$ be a set of cyclic permutations
of the neighbors of the vertices $\{1,2,\dots,v\}$, i.e. a set of
a given orientation of all vertices. Let $W_{1},W_{2},\dots,W_{f}$
be the circuits obtained by applying the rotation rule to $\Pi$.
Then the circuits are the boundaries of the regions of a cellular
embedding of $G$ in $S_{h}$, with $h=(2-v+e-f)/2$, the genus of
the orientable surface $S_{h}$. Hence, all possible embeddings of
a graph are provided by the rotation schemes.
\end{thm}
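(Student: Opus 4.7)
The plan is to prove this in two halves: first, that a rotation scheme produces a cellular embedding with the claimed genus; second, that every cellular embedding arises this way. Throughout, the bridge between combinatorial data (rotations at vertices) and topological data (embeddings) is the standard construction of a 2-complex by capping off circuits with discs.

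For the forward direction I would proceed as follows. Given $\Pi$, form the 1-complex $G$ as a topological space and then, for each circuit $W_i$ produced by the rotation rule, attach a 2-cell $D_i$ by a boundary map that traverses $W_i$. Call the resulting 2-complex $X$. The first task is to verify that $X$ is in fact a closed surface, i.e.\ that each point has a neighbourhood homeomorphic to $\mathbb{R}^2$. Interior points of 2-cells are immediate, and interior points of edges require showing that every edge $xy$ of $G$ appears exactly twice in the multiset of edge-traversals used by the $W_i$, once as $xy$ and once as $yx$; this follows from the rotation rule itself, since the successor operation at $y$ after arriving along $xy$ is a bijection of the edges incident to $y$, and dually for $x$, so each ordered edge belongs to exactly one circuit. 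Around a vertex $v$ the cells incident to $v$ are precisely the circuits that pass through $v$, and consecutive cells in the cyclic order $\pi_v$ share the corresponding incident edges; hence the link of $v$ is a single cycle and $X$ is a topological 2-manifold.

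Next I would establish orientability, which is where the hypothesis that $\Pi$ consists of \emph{cyclic} permutations (not just dihedral data) is used. Choose on each 2-cell $D_i$ the orientation induced by traversing $W_i$ in its given direction. Because each edge is used once in each direction by the family of circuits, the two 2-cells meeting along any edge induce opposite orientations on that edge, which is precisely the condition for the orientations on the $D_i$ to glue to a global orientation of $X$. Once $X$ is an orientable closed surface, Euler's formula applied to the CW-decomposition with $v$ vertices, $e$ edges and $f$ 2-cells identifies $X$ with $S_h$ for $h=(2-v+e-f)/2$, and the embedding of $G$ is cellular by construction.

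For the converse, suppose $G$ is cellularly embedded in some $S_h$. Pick a global orientation of $S_h$; at each vertex $u$ this induces a clockwise cyclic order on the edges incident to $u$, and I would take this as $\pi_u$. I then need to show that the faces of the embedding are exactly the circuits produced by the rotation rule applied to $\Pi=\{\pi_u\}$: tracing the boundary of a face, on reaching a vertex $y$ from $x$ one leaves along the edge immediately following $yx$ in the local rotation at $y$ (this is what ``clockwise face-tracing'' means in an oriented surface), which is the defining step of the rotation rule. Hence the face boundaries coincide with the $W_i$, and the embedding is recovered from $\Pi$ up to homeomorphism, completing the classification. The main obstacle, and the only place that needs genuine care, is the double-covering property of oriented edges by the circuits in the forward direction; once that is in hand, both the manifold structure and the orientability follow mechanically, and the converse is essentially tautological given the convention that rotations are read off from the ambient orientation.
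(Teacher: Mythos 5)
Your proposal is correct; note that the paper itself gives no proof of this theorem, citing \cite{beineke1997graph}, and your argument is precisely the standard Heffter--Edmonds construction found there: the face-tracing map on the $2e$ darts, $(x,y)\mapsto(y,\pi_y(x))$, is a permutation whose orbits are the circuits, whence each dart is used exactly once, the link of each vertex is a single circle because $\pi_v$ is a single cycle on all of its neighbours, and orientability and the Euler count follow as you say. The only point worth a half-sentence more of care is that a single 2-cell may meet an edge from both sides (both darts of an edge can lie on one circuit); the orientation-compatibility argument still goes through since it is a local condition on the two sides of the edge, not on distinct faces.
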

Let us consider the graph $K_{3,3}$ with $v=6,\, e=9$. This is a
graph with trivalent vertices, hence, there are two possible rotations%
\footnote{In fact, if a vertex $v$ has degree $d$, then there are $(d-1)!$
different rotations of $v$.%
} for each of the six vertices. As a consequence, there are $2^{6}=64$
different sets $\Pi_{i=1,\dots,64}$. The question that arises immediately
is whether all these sets induce topological inequivalent embeddings
or not. In other words, if we disregard the labeling of the vertices,
how many different embeddings of $K_{3,3}$ in the torus or the double-torus
exist?

Before making a general claim, let us work out some examples to illustrate
the construction of embeddings by the rotation rule in order to understand
the relation between the set of orientations and the embeddings induced
by them. We will achieve this by listing the vertices and their neighbors
and from this list extract the circuits in the embedding using the
rotation rule, \cite{beineke1997graph,hartsfield2003pearls}. 

Notice that for a general $(r,s)$-bipartite graph the defining characteristics
are: \textit{(i)} The $r$ vertices corresponding to a set, say $R$,
are adjacent to each of the $s$ vertices corresponding to another
set, say $S$, and \textit{(ii)} $R\cap S=\emptyset$. In the case
of $K_{3,3}$ each of the odd vertices, $R=\{1,3,5\}$, are adjacent
to each of the even vertices, $S=\{2,4,6\}$. Thus, we denote (up
to cyclic permutations) positive orientations of even and odd vertices
as $(135)\text{ and }(246)$ respectively, and $(153)\text{ and }(264)$
for negative orientations. For instance, a standard way of picturing
$K_{3,3}$ with minimal crossing is

\medskip{}

\begin{center}
\def\objectstyle{\scriptscriptstyle}
\xy/r2pc/:
{\xypolygon4"L"{~:{(1,0):(0,3.5)::}~>{}{\bullet}}},
"L1";"L4"**{}?<>(.5)="L5"*{\bullet},"L2";"L3"**{}?<>(.5)="L6"*{\bullet},
"L2";"L1"**@{-},"L1";"L6"**@{-},"L6";"L5"**@{-},"L5";"L3"**@{-},"L3";"L4"**@{-},
"L4";"L2" **\crv{(1,-4) & (-4,-4.5)&(-1,2.5)},
"L4";"L6" **\crv{(.8,-3.5) & (-2,-2.5) & (-1,-.5)},
"L2";"L5" **@{-},
"L1";"L3" **\crv{(4,0)},
"L2"+(0,.2)*{1},"L1"+(0,.2)*{2},"L6"+(-.1,.2)*{3},"L5"+(.1,.2)*{4},"L3"+(0,.2)*{5},"L4"+(0,.2)*{6},
"L2"-(.2,.25)*{(246)},"L1"+(.4,0)*{(153)},"L6"+(0,-.3)*{(246)},"L5"+(.25,-.2)*{(153)},"L3"+(-.1,-.2)*{(264)},"L4"+(.4,-.1)*{(135)}
\endxy
\par\end{center}

\medskip{}

This configuration has orientations $(246)$ for the vertices $1$
and $3$, $(264)$ for vertex $2$ and $(153)$ for the vertices $2$
and $4$, $(135)$ for vertex $6$. 
\begin{example}
\label{exa:case-3-3}Consider the case where the vertices in each
set have the same orientation, i.e. either $(+1,+1,+1)$ or $(-1,-1,-1)$.
For instance, the vertices $\{1,3,5\}$ have negative orientation
while the vertices $\{2,4,6\}$ have all positive orientation:\\

\begin{tabular}{|c|c|c|c|c|c|c|}
\hline 
Vertex & 1 & 2 & 3 & 4 & 5 & 6\tabularnewline
\hline 
\hline 
Neighbors/Orientation & (264) & (135) & (264) & (135) & (264) & (135)\tabularnewline
\hline 
\end{tabular}\\

From this information we extract the circuits which will help us to
construct the embedding corresponding to this configuration. For instance,
take the edge $(12)$ and apply the rotation rule on it, i.e. the
neighbor of $2$ coming after $1$ in the cyclic permutation $(135)$
is $3$, hence, the next edge in the walk is $(23)$. Apply again
the rule to get $(36)$ and so on. After some steps, depending on
how long the walk is, one gets to the edge where the procedure started,
meaning that one has to stop and apply the same procedure to another
edge different than the ones encountered in the previous walk. In
this specific configuration, this algorithm results in the following
disjoint circuits,\end{example}
\begin{enumerate}
\item $(12)\rightarrow(23)\rightarrow(36)\rightarrow(65)\rightarrow(54)\rightarrow(41)\rightarrow(12)$;
\item $(25)\rightarrow(56)\rightarrow(61)\rightarrow(14)\rightarrow(43)\rightarrow(32)\rightarrow(25)$;
\item $(21)\rightarrow(16)\rightarrow(63)\rightarrow(34)\rightarrow(45)\rightarrow(52)\rightarrow(21)$. 
\end{enumerate}
Notice that there is a difference between the ``side'' $(xy)$ and
$(yx)$ reflecting the direction of the walk, hence, there are 18
``sides'' available to build the circuits. In this case, there are
three regions with six sides as boundaries, thus, the embedding has
three faces and corresponds to an embedding in the torus as depicted
below:

\medskip{}

\begin{center}
\def\objectstyle{\scriptscriptstyle}
\xy/r2pc/:
{\xypolygon4"T"{~:{(3,0):(0,.8)::}}},{\xypolygon6"A"{~:{(1,0):(0,1)::}{\bullet}}},
"T1";"T2"**{}?<>(.5)="T5"*{\circ},"T4";"T3"**{}?<>(.5)="T6"*{\circ},
"T1";"T4"**{}?<>(.5)="T7"*{*},"T2";"T3"**{}?<>(.5)="T8"*{*},
"A1";"T7"**@{-},"A4";"T8"**@{-},"A2";"T5"**@{-},"A5";"T6"**@{-},
"T7";"T4"**{}?<>(.5)="T9"*{=},"T8";"T3"**{}?<>(.5)="T10"*{=},
"T5";"T2"**{}?<>(.5)="T11"*{\times},"T6";"T3"**{}?<>(.5)="T12"*{\times},
"T9";"A6"**@{-},"T10";"T12"**\crv{"T12"+(0,1)},"T11";"A3"**\crv{"T11"-(.3,.75)},
"A1"+(.2,.2)*{1},"A2"+(.2,.2)*{2},"A3"+(0,.2)*{3},"A4"-(.2,.2)*{6},"A5"-(.2,.2)*{5},"A6"-(0,.2)*{4}
\endxy
\par\end{center}

\medskip{}

\begin{example}
\label{exa:case-1-1}Consider the case where one vertex has the opposite
orientation relative to the two other vertices of the same set. For
instance, the case where vertex $1$ and $2$ have positive orientation,
$(246)$ and $(135)$ respectively, and the rest have negative orientation:\\

\begin{tabular}{|c|c|c|c|c|c|c|}
\hline 
Vertex & 1 & 2 & 3 & 4 & 5 & 6\tabularnewline
\hline 
\hline 
Neighbors/Orientation & (246) & (135) & (264) & (153) & (264) & (153)\tabularnewline
\hline 
\end{tabular}\\

Using the rotation rule we obtain the following circuits,\end{example}
\begin{enumerate}
\item $(12)\rightarrow(23)\rightarrow(36)\rightarrow(61)\rightarrow(12)$;
\item $(21)\rightarrow(14)\rightarrow(45)\rightarrow(52)\rightarrow(21)$;
\item $(32)\rightarrow(25)\rightarrow(56)\rightarrow(63)\rightarrow(34)\rightarrow(41)\rightarrow(16)\rightarrow(65)\rightarrow(54)\rightarrow(43)\rightarrow(32).$
\end{enumerate}
Notice that this time, we obtain two circuits of length four and a
single one of length ten. Hence, the 18 sides available form three
faces and the embedding is in a torus:

\medskip{}

\begin{center}
\def\objectstyle{\scriptscriptstyle}
\xy/r2pc/:
{\xypolygon4"T"{~:{(3,0):(0,.8)::}}},{\xypolygon6"A"{~:{(1,0):(0,1)::}{\bullet}}},
"T1";"T2"**{}?<>(.5)="T5"*{\circ},"T4";"T3"**{}?<>(.5)="T6"*{\circ},
"T1";"T4"**{}?<>(.5)="T7","T2";"T3"**{}?<>(.5)="T8",
"A1";"A4"**@{-},"A3";"T5"**@{-},"A6";"T6"**@{-},
"T7";"T4"**{}?<>(.5)="T9"*{\times},"T8";"T3"**{}?<>(.5)="T10"*{\times},
"T10";"A5"**@{-},"T9";"A2"**\crv{"A6"+(.5,0)&"A2"+(1,0)},
"A1"-(.15,.15)*{1},"A2"+(0,.2)*{4},"A3"+(0,.2)*{5},"A4"-(.2,.15)*{2},"A5"-(.15,.2)*{3},"A6"-(0,.2)*{6}
\endxy
\par\end{center}

\medskip{}

\begin{example}
\label{exa:case-3-1}Finally, consider the case where the orientation
of all vertices in one set is the same while in the other set we have
one vertex with the opposite orientation relative to the other two
vertices. For instance, the case where all odd vertices have orientation
$(246)$ and vertex $2$ has positive orientation as well, while the
vertices $4$ and $6$ have orientation $(153)$:\\

\begin{tabular}{|c|c|c|c|c|c|c|}
\hline 
Vertex & 1 & 2 & 3 & 4 & 5 & 6\tabularnewline
\hline 
\hline 
Neighbors/Orientation & (246) & (135) & (246) & (153) & (246) & (153)\tabularnewline
\hline 
\end{tabular}\\

In this case we obtain, after using the described algorithm, only
one circuit with 18 sides, 
\begin{multline*}
(12)\rightarrow(23)\rightarrow(34)\rightarrow(41)\rightarrow(16)\rightarrow(65)\rightarrow(52)\rightarrow(21)\rightarrow(14)\rightarrow(45)\rightarrow(56)\rightarrow\dots\\
\dots\rightarrow(63)\rightarrow(32)\rightarrow(25)\rightarrow(54)\rightarrow(43)\rightarrow(36)\rightarrow(61)\rightarrow(12)
\end{multline*}
where all sides are walked exactly once%
\footnote{Notice that a circuit induced in this way may have repeated vertices
and edges used in both directions, however, if the edge is repeated
in the same direction the algorithm must stop, \cite{hartsfield2003pearls}.%
}. This configuration thus corresponds to an embedding in $T^{2}\#T^{2}$
which can be represented in a plane in a similar manner as the torus,
\[
\xy/r3pc/:{\xypolygon8"T"{~:{(3,0):(0,1)::}}},{\xypolygon6"A"{~:{(1.5,0):(0,1)::}~>{}{\bullet}}},"T1";"T2"**{}?<>(.5)="T9"*{\times}+(.3,-.1)*{\beta},"T2";"T3"**{}?<>(.5)="T10"*{\circ}+(.5,.2)*{\alpha},"T3";"T4"**{}?<>(.5)="T11"*{*}+(.1,.3)*{\delta},"T4";"T5"**{}?<>(.5)="T12"*{+}+(-.2,-.4)*{\gamma},"T5";"T6"**{}?<>(.5)="T13"*{*}+(.1,-.3)*{\delta},"T6";"T7"**{}?<>(.5)="T14"*{+}+(.3,-.2)*{\gamma},"T7";"T8"**{}?<>(.5)="T15"*{\times}+(.3,.1)*{\beta},"T8";"T1"**{}?<>(.5)="T16"*{\circ}+(.2,.5)*{\alpha},"A1"+(.2,.2)*{1},"A2"+(.2,.2)*{2},"A3"+(0,.2)*{3},"A4"-(.2,.2)*{4},"A5"-(.2,.2)*{6},"A6"+(.2,0)*{5},"T13";"A4"**@{-},"A4";"A3"**@{-},"A3";"A2"**@{-},"A2";"A1"**@{-},"A1";"A5"**@{-},"A5";"A6"**@{-},"A6";"T15"**@{-},"A2";"T9"**@{-},"A1";"T16"**@{-},"T10";"T11"**\crv{"T10"-(1,1)},"A6";"T14"**@{-},"A4";"T12"**@{-},"T11";"T4"**{}?<>(.5)="T17"*{=},"T12";"T5"**{}?<>(.5)="T18"*{/},"T5";"T13"**{}?<>(.5)="T19"*{=},"T6";"T14"**{}?<>(.5)="T20"*{/},"A3";"T17"**\crv{"T17"+(.3,-.3)},"T18";"T19"**\crv{"T18"+(.5,-.5)},"T20";"A5"**\crv{"A5"-(.2,0)},\endxy
\]
where the Greek letters denote the borders of the frame that have
to be glued together in order to obtain a double torus and the symbols
on them identify corresponding points.
\end{example}
Now, define the \textbf{value} $\mathfrak{v}$\textbf{ of a set of
vertices} as the modulus of the sum of orientations $\pm1$ of the
vertices in that set, e.g. the value of $\{1,3,5\}$ with orientation
$(-1-1-1)$ is $|-3|=3$. The definition is such that, if the orientations
of all vertices in a given set change, then the value remains invariant.
For instance, one can achieve a change of orientation of all vertices
in, say, the set $R=\{1,3,5\}$, e.g. $(+1+1-1)$, by an odd permutation
of the vertices in $S=\{2,4,6\}$ such that $(+1+1-1)\mapsto(-1-1+1)$
but $\mathfrak{v}_{++-}=|+1|=|-1|=\mathfrak{v}_{--+}$. In fact, by
permutations of the vertices in a set, one can construct all equivalent
diagrams%
\footnote{One has to consider the operation $R\rightleftarrows S$ as well.%
}, i.e. giving the same embedding, since these operations do not change
the 2-cells of the embedding, it merely results in a permutation of
the vertices in it.

Observe that the three cases in the examples above are the only cases
possible if we consider only the \textit{relative} orientation between
vertices of the same set, in which case the value is either $\mathfrak{v}=3$,
when all vertices have the same orientation, or $\mathfrak{v}=1$,
when one of the vertices has the opposite orientation relative to
the other two in the same set. The value of each set is independent
of each other, hence, we have the cases where the pair of values are
$(3,3)$, $(3,1)$, $(1,3)$ and $(1,1)$. However, since $K_{3,3}$
is symmetric under exchange of sets $R\leftrightarrow S$ preserving
the orientation, we can regard $(1,3)$ and $(3,1)$ as equivalent
cases.

Another way of looking at this is to consider the partition of 18
in summands with some constraints. Each of the summands represents
a circuit and their value represent the length of the circuit. The
defining characteristics of the $(3,3)$-bipartite graph do not allow
the construction of circuits with an odd number of edges since this
would mean that two vertices of the same set are adjacent. Thus, the
partition of 18 cannot contain any odd numbers. It follows that the
smallest possible circuit has length 4. Furthermore, the only number
of summands in the partition can be 1 or 3 since they represent the
regions of the embeddings. From these restrictions we conclude that
the only partitions of 18 allowed are $18,\;6+6+6,\;4+4+10$ and $4+8+6$.
However, the latter partition is not realized. To see this, notice
that it is not possible to construct a 4-circuit which is not a 4-cycle%
\footnote{A \textbf{cycle} is a circuit which does not have any repeated edges
in any direction.%
}, since this would mean that either one edge is repeated, in which
case the edge would need to have two loops at each vertex, or two
edges are repeated, this is not possible since all vertices in the
graph are 3-valent. Thus, the only possibility is to have a 4-cycle,
which implies automatically the existence of another 4-cycle. To see
this, notice that the 4-cycle has two vertices of each set, therefore,
there are two more vertices available to construct the graph, one
of each kind. The defining characteristics of $K_{3,3}$ impose the
constraint that these two vertices must be adjacent to each other
and to the corresponding vertices in the original 4-cycle. This leaves
no other possibility but to construct another 4-cycle, in contradiction
to the partition $4+8+6$.
\begin{rem*}
Observe the symmetry of the bipartite graph under permutation of its
vertices reflected in the pair of values of the sets as well as in
the partition of 18, thus, we may call the $(3,3)$ case ``maximal
symmetric'', the $(1,1)$ case ``minimal symmetric'' and the $(3,1)=(1,3)$
case ``asymmetric''. Therefore, we can think of this pair of values
as the ``degree of symmetry'' of the graph.
\end{rem*}
From the examples \prettyref{exa:case-3-3}, \prettyref{exa:case-1-1}
and \prettyref{exa:case-3-1} we see that the partitions $[6+6+6],\;[4+4+10]$
and $18$ correspond to the pair of values $(3,3),\:(1,1)$ and $(1,3)$
respectively. We say that the maximal (minimal) symmetric graph has
a $[6+6+6]$-type ($[4+4+10]$-type) embedding and the asymmetric
graph has a $18$-type embedding. Therefore we can say that the type
of embedding is only dependent on the ``degree of symmetry'' of
the graph given by the pair of values of the two sets $R$ and $S$.
In other words, the embeddings are topological invariant under permutations
acting on the sets of even and odd vertices. Odd permutations on one
set, merely change the orientation of all vertices in the other set,
in which case the value of the set is not affected. Even permutations
on one set only affect the cyclic order of the orientations in that
set, e.g. if we have an orientation $(-1+1+1)$ of the set $R$ an
even permutation acting on $R$ would only result in, say, $(-1+1+1)\mapsto(+1+1-1)$. 

Thus, from all 64 possible configurations of the orientations of vertices
in $K_{3,3}$ only three of them induce inequivalent embeddings. If
the pair of values is $(3,3)$, then there are 4 equivalent configurations
which induce a $[6+6+6]$-type embedding; either all 6 vertices have
positive (or negative) orientation or 3 vertices from one set have
positive (or negative) orientation while the vertices from the other
set have opposite orientation. Therefore, we are left with $64-4=60$
configurations; $36$ from them belong to the case where the pair
of value is $(1,1)$. This is a $[4+4+10]$-type embedding, hence,
one of the vertices on each set has the opposite orientation relative
to the vertices from the set which belongs to, i.e. the sets have
the orientations of the form $(+1-1-1)$ and cyclic, or of the form
$(-1+1+1)$ and cyclic. Therefore, for each relative orientation there
are 3 cases, which make $3\times2=6$ for each set. The rest $24$
of the configurations belong to the case where the pair of values
is $(3,1)=(1,3)$. This gives an embedding in the double torus. There
are 6 cases where the value of a set is 1 and 2 cases where the value
is 3, hence, for each of the pairs $(1,3)$ and $(3,1)$ there are
12 configurations to consider.

Finally, we can summarize the above discussion by the following claim,
\begin{claim}
If the value of the two disjoint sets of vertices in $K_{3,3}$ is
unequal, then the cellular embedding of $K_{3,3}$ corresponds to
an embedding in $T^{2}\#T^{2}$; otherwise the only cellular embeddings
of $K_{3,3}$ are in the torus $T^{2}$, such that it is a $[6+6+6]$-type
embedding for the $(3,3)$-value or a $[4+4+10]$-type embedding for
the $(1,1)$-value.
\end{claim}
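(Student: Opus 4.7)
\bigskip
\noindent\textbf{Proof proposal.} The plan is to combine the Rotation Scheme Theorem \ref{thm:Rotation-Scheme-Theorem} with the partition argument on the number $18=2|E_{K_{3,3}}|$ of directed sides, and then to use the invariance of the pair of values $(\mathfrak{v}_R,\mathfrak{v}_S)$ under the symmetry operations that leave a rotation scheme class alone. First I would recall that every cellular embedding of $K_{3,3}$ arises from a rotation scheme $\Pi$, and that two rotation schemes related by (i) an even permutation acting simultaneously on the neighbours of a single vertex (which is trivial since rotations are cyclic permutations), (ii) relabelling of the vertices within one of the two bipartition sets, or (iii) swapping the sets $R\leftrightarrow S$, produce combinatorially equivalent embeddings. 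Under each of these operations the unordered pair of values $\{\mathfrak{v}_R,\mathfrak{v}_S\}$ is unchanged, as already noted in the remark preceding the claim; therefore an equivalence class of embeddings is indexed by this pair, which can only take the values $\{3,3\}$, $\{1,1\}$ or $\{1,3\}$.

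Next I would extract the possible face-size partitions. Since $K_{3,3}$ is bipartite, every closed walk generated by the rotation rule has even length, so the partition of $18$ into face lengths uses only even parts. Because every face has length at least $3$ in a simple graph and actually at least $4$ by bipartiteness, the admissible partitions of $18$ with even parts $\ge 4$ are $\{18\}$, $\{4,4,10\}$, $\{4,6,8\}$, $\{6,6,6\}$, together with $\{4,4,4,6\}$. The parts $\{4,4,4,6\}$ would give $f=4$ and hence by Euler's formula $2-2h=6-9+4=1$, which is not an integer, ruling it out. The partition $\{4,6,8\}$ must also be excluded, and here I would adapt the argument sketched just before the claim: a $4$-face of $K_{3,3}$ is necessarily a $4$-cycle (a $4$-circuit that repeats an edge would force a vertex of degree $\le 2$), and the two vertices of $R$ and two vertices of $S$ not lying on a given $4$-cycle are forced by the bipartite adjacency pattern to bound a second $4$-cycle, so any embedding containing a face of length $4$ must contain two of them. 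Hence only $\{18\}$, $\{4,4,10\}$ and $\{6,6,6\}$ survive, giving by Euler's formula genus $2$, $1$ and $1$ respectively.

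The heart of the proof is then to match each value-pair to exactly one of these partitions. I would fix a reference rotation scheme for each of the three classes (precisely the three configurations laid out in Examples \ref{exa:case-3-3}, \ref{exa:case-1-1} and \ref{exa:case-3-1}) and compute its rotation-rule circuits, which is exactly what the author does, obtaining $\{6,6,6\}$ for $(3,3)$, $\{4,4,10\}$ for $(1,1)$ and $\{18\}$ for $(1,3)$. Since the value pair is a complete invariant of the equivalence class of rotation schemes (the counting $4+36+24=64=2^6$ verifies that the classes partition all possible schemes), the association between value pair and face-partition is bijective, and the statement about the ambient surface follows from Euler's formula applied to each partition.

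The main obstacle I anticipate is the invariance statement: showing rigorously that the pair $\{\mathfrak{v}_R,\mathfrak{v}_S\}$ is constant on each equivalence class of rotation schemes, and that it separates the classes. The first half is a direct check that an odd relabelling permutation of one side flips every orientation on the other side, leaving both values fixed; the second half reduces to the combinatorial count $4+36+24=64$, which leaves no room for further refinement. Everything else is a finite computation with the rotation rule.
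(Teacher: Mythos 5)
Your proposal follows essentially the same route as the paper: cellular embeddings via the Rotation Scheme Theorem, the even-part partition of $18$ constrained by Euler's formula, the exclusion of the partition containing a single $4$-face by the forced-second-$4$-cycle argument, the three reference computations, and the count $4+36+24=64$ to show the value pair indexes the equivalence classes. Your explicit elimination of $\{4,4,4,6\}$ via a non-integer genus is a minor tidying of the paper's shortcut ($f=5-2h$ forces $f\in\{1,3\}$), but the argument is otherwise the paper's own.
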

This result is important since it would allow us to extract information
of the terms needed in the evaluation of non-planar spin networks
to account for their topology%
\footnote{We mean by the topology of a graph, the topology of the surface in
which the graph is embedded.%
}. The graph contains topological information about the surface in
which it is cellular embeddable and we can use recoupling theory to
extract that information. The reason for this is that we are considering
only cellular embeddings and we use all the information contained
in the graph (number of edges, vertices and their orientation) to
build the surfaces by the Rotation Scheme. Hence, the information
of the topology of the surface must be contained in the graph itself;
in other words, by reducing the graph in the embedding, we receive
a factor in the evaluation that reflects the information of the graph
being non-planar. That is why it is important to consider only cellular
embeddings, the faces are only 2-cells homeomorphic to discs with
no information about the global topology. For instance, in the case
of the tetrahedron we have two cellular embeddings in the torus, one
with an embedded 3- and another with an embedded 4-cycle as cells
and both wrapping the two circles of the torus. Both cellular embeddings
give, in fact, different evaluations, however only up to a constant
involving powers of $q$ (or $A$). In fact, these spin networks are
contained in $K_{3,3}$, in the sense that reducing the graph of $K_{3,3}$
in the torus via Moussouris' algorithm leads to such diagrams. We
will call the embedding of the tetrahedron in the torus with an embedded
3-cycle the \textit{toroidal Racah coefficient}.

There are of course (non-cellular) embeddings of a graph in surfaces
with higher genus, however, it is not the graph containing the information
about the topology of the surface but the surface itself. If we consider
a non-cellular embedding in the torus of the complete graph on 4 vertices,
i.e. the tetrahedron, and we \textquotedblleft{}cut\textquotedblright{}
the surface along the edge of the graph wrapping the circle of the
torus, we will get a surface which is not homeomorphic to a disc and
which contains the information about the topology of the torus. Thus,
the graph in that configuration has no information about a non-trivial
topology.

Due to the classification of closed oriented surfaces we expect that
the information extracted from the torus is sufficient to extend Moussouris
algorithm for the evaluation of planar spin networks to the non-planar
case. We believe that by knowing the evaluation of the spin network
corresponding to the torus we can use it to evaluate all spin networks
with higher genus in terms of products of this evaluation. To evaluate
these spin networks it would be necessary to arrange them such that
it is possible to \textquotedblleft{}cut\textquotedblright{} their
components (using the generalized Wigner-Eckart theorem, cf. \cite{moussouris1983quantum})
corresponding to each handle of the oriented surface and evaluate
each torus separately, this would give hopefully a sum of products
of toroidal symbols.

\section{The Evaluation of Non-planar Spin Networks\label{sec:non-planar-SN}}

In this section we discuss Moussouris' Decomposition Theorem, give
an improved version of it and present its algorithm for the evaluation
of planar spin networks, which relates these objects to the Ponzano-Regge
partition function. We then apply this algorithm to the graph $K_{3,3}$
in order to extract the information needed to extend the algorithm
to non-planar spin networks, i.e. we give the explicit form of the
toroidal phase factor for the $q$-deformed case, and discuss what
needs to be done to achieve such a generalization.

\subsection{The Decomposition Theorem\label{sub:The-Decomposition-Theorem}}

In \cite{moussouris1983quantum} J. P. Moussouris proved a theorem
which relates the spin networks with the Ponzano-Regge theory by reducing
a recoupling graph%
\footnote{Recall that a recoupling graph of a group $G$ is a labelled 3-valent
graph representing a contraction of tensors of $G$, cf. Sec. \prettyref{sub:Recoupling-Theory}.
In the following, the term ``recoupling graph'' will denote such
a graph together with an orientation of its vertices.%
} to a sum of products of Racah coefficients. This reduction, known
as the Decomposition Theorem, gives an evaluation of the spin network
only dependent on the labelling of the graph, as in \prettyref{cha:Invariants-of-3-Manifolds}
for a manifold with boundary. 

There are two versions of the mentioned theorem which we will present
and analyze in this section in order to understand how the expansion
of the algorithm for evaluating non-planar spin networks could be
done. The first version of the theorem, called network version, is
more general than the second version since it does not assume the
spin network to be planar, however, it assumes implicitly the existence
of an embedded cycle for the recoupling graph $F$ to be reduced.
This implies that the embedding of the graph in some surface has at
least two 2-cells since a cycle induces a region homeomorphic to a
disc by using only one side of each edge in the Rotation Scheme.

Moussouris does mention the importance of the orientation of the vertices
in the evaluation of the graph, pointing out that considering the
orientation of the vertices results in so-called phase factors, which
can be isolated as values of graphs with two vertices with the same
orientation, \cite{moussouris1983quantum}. However, in the proof
of the first version of the theorem this consideration enters only
in the first and second steps of the induction on the number of vertices
$V$ in $F$, disregarding the fact that the orientation of the vertices
of a recoupling graph affects the embedding of it in a surface, which
might be such that there is no embedded cycle at all. This would mean
that the spin network cannot be reduced straightforward. We will discuss
this case later. Moreover, in the proof it is also assumed implicitly
that after reducing all embedded cycles the only diagram left is either
a Racah coefficient or a toroidal phase factor (cf. Section \prettyref{sub:The-Toroidal-SN});
in the latter case we can call such a recoupling graph a toroidal
spin network since the phase factor left at the end of the reduction
contains the information of the graph being embedded in the torus%
\footnote{The Racah coefficient with a toroidal phase factor has as its cellular
embedding exactly the one discussed at the end of the previous section.%
}. 

As seen in example \prettyref{exa:case-3-1}, the appearance of an
embedded cycle is not always the case and there exist spin networks
which are irreducible if we only consider the operations described
in \cite{moussouris1983quantum}, thus, Moussouris' Decomposition
Theorem is limited to planar and toroidal spin networks. Hence, it
is necessary to rewrite the Decomposition Theorem in a more precise
manner in order to account for the case where the spin network is
toroidal, i.e. for the phase factors, which are especially important
for the $q$-deformed case.

We will now give both versions of the theorem and the proof of the
network version following \cite{moussouris1983quantum}. We modified
the first version of the theorem to account for the discussion above.
The second version is the special case where $F$ is planar and it
is proven by applying Schur's lemma and the Alexander moves to the
graph-theoretic dual version of $F$, which gives a triangulation
of the sphere. This dual version allows the connection with the Regge-Ponzano
theory.
\begin{thm}
\textbf{\textup{\label{thm:Decomposition-Theorem}Decomposition Theorem: }}

A recoupling graph $F$ of a compact semisimple group $G$, which
is at most toroidal, can always be evaluated as a sum of products
of Racah coefficients of $G$ and a toroidal phase factor.\end{thm}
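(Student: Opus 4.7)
The plan is to proceed by strong induction on the number of vertices $V$ of $F$, leveraging Moussouris' original planar decomposition argument where applicable and isolating the toroidal contribution as a distinguished multiplicative factor when necessary. First, for the base cases $V\leq 4$ I would verify directly: a two-vertex recoupling graph is either a theta-net (evaluable in closed form via \prettyref{eq:theta-function}) or, when the two vertices have the same orientation and its embedding is toroidal, it \emph{is} the toroidal phase factor by definition; four-vertex graphs reduce to a tetrahedral network possibly carrying a superimposed toroidal phase, matching \prettyref{eq:q6j-symbol-and-Tetrahedron}.

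For the inductive step I would branch on the genus of the cellular embedding of $F$ produced by its rotation scheme, in the sense of \prettyref{thm:Rotation-Scheme-Theorem}. If $F$ admits an embedded cycle $C$ that bounds a $2$-cell (in particular this always holds when $F$ is planar), I would follow Moussouris' original procedure: apply the generalized Wigner-Eckart theorem to cut $F$ transversally along $C$, expand the result as a weighted sum over admissible colourings of the edges of $C$, and in each summand use the Recoupling Theorem from \prettyref{sub:Recoupling-Theory} together with the bubble identity \prettyref{eq:Edge-with-loop} to collapse the cut region, strictly reducing the number of vertices so the inductive hypothesis applies.

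The harder case is when no embedded cycle bounds a disc, which by the analysis of \prettyref{sec:Kuratowski's-Theorem-Embeddings} can only occur in a genuinely toroidal embedding (e.g. the $[4+4+10]$-type or $18$-type embeddings of $K_{3,3}$ identified there). Here my plan is to select a simple closed curve $\gamma$ on the torus representing a generator of $H_1(T^2)$ that meets $F$ transversally in a minimal set of edges, apply the Recoupling Theorem at each intersection to regroup the crossing edges into a single strand labelled by a new summation variable, and then identify the resulting loop as an instance of the toroidal phase factor; cutting this factor off leaves a recoupling graph whose embedding is planar and has strictly fewer vertices than $F$, at which point the planar version of the theorem finishes the job.

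The main obstacle I anticipate is precisely this last extraction: verifying that the ``cutting along $\gamma$'' operation can always be arranged so that the toroidal phase factor genuinely separates as a \emph{single} multiplicative factor in each summand of the decomposition, independently of the choice of generator of $\pi_1(T^2)$ and of $\gamma$ within its isotopy class. This requires checking compatibility of the twist/phase identities from \cite{kauffman1994temperley,carter1995classical} with the recoupling moves used in Moussouris' algorithm, and confirming that the resulting sum of products of Racah coefficients is invariant under the ambiguities discussed at the end of \prettyref{sec:Kuratowski's-Theorem-Embeddings}. Once this compatibility is in place, the induction closes and the decomposition theorem follows in the extended form stated.
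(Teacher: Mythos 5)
Your base cases and your planar branch are broadly in the spirit of the paper's argument (the paper inducts on the number of vertices $V$ \emph{and} on the length $l$ of the smallest cycle, shrinking that cycle step by step with the crossing identity until a $3$-cycle can be removed by Wigner--Eckart, rather than cutting along the whole cycle at once with the generalized Wigner--Eckart theorem, but these are close relatives). The real problem is your ``harder case.'' Its case analysis is wrong on both of the examples you cite. The $[4+4+10]$-type toroidal embedding of $K_{3,3}$ \emph{does} contain embedded cycles bounding discs --- its two $4$-cycles are faces of the cellular embedding, and the paper's own evaluation of $K_{3,3}^{(1,1)}$ begins precisely by applying the crossing identity to the common edge of those two $4$-cycles. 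So that case is handled by the ordinary cycle-reduction branch and never needs your homology-cutting construction. Conversely, the $18$-type embedding, which genuinely has no disc-bounding cycle (its single face is an $18$-circuit traversing every edge twice), lives in $T^{2}\#T^{2}$, not in $T^{2}$; it is therefore \emph{excluded} by the hypothesis that $F$ is ``at most toroidal,'' and the paper explicitly flags it as irreducible by Moussouris' algorithm. The case you built your hardest machinery for either does not arise within the theorem's scope or is the wrong example.

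Beyond the misclassification, the machinery itself is not established. Extracting the toroidal phase factor mid-induction by cutting along a generator of $H_{1}(T^{2})$, and then asserting that what remains is a planar recoupling graph with fewer vertices, is exactly the step you concede you cannot yet verify (that the phase separates as a single multiplicative factor independent of the choice of $\gamma$). The paper does not need any such extraction: for an at-most-toroidal $F$ with $V>4$ there is always a small cycle to reduce, and the reduction is simply run down to the base cases $V=4$ or $V=2$, where the toroidal phase factor appears as the irreducible remnant (a two-vertex graph whose vertices share an orientation, or a toroidal Racah coefficient). In other words, the toroidal contribution is \emph{discovered at the bottom} of the induction, not \emph{peeled off at the top}. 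To repair your write-up, drop the homology-cutting branch, restrict attention to graphs whose rotation scheme yields a genus-$\leq 1$ cellular embedding, and show instead that the cycle-reduction terminates in one of the two-vertex or four-vertex base configurations, with the twist identity of Eq.~\prettyref{eq:Twist-factor} accounting for the phase.
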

\begin{proof}
The proof is by induction on the number of vertices $V$ in $F$ and
the size $l$ of the smallest cycle.

If $V=2$, the recoupling graph is a toroidal phase factor or a theta-evaluation
of a vertex. The case $V=3$ is not possible.

If $V=4$, the recoupling graph is a (toroidal) Racah coefficient
or two phases.

If $V>4$, we look for the smallest cycle in $F$ and reduce it as
follows, depending on the size $l$ of it. A 2-cycle is reduced using
Schur's identity \vref{eq:Edge-with-loop}. This results in a new
graph containing $V-2$ vertices. A 3-cycle is eliminated by producing
a single Racah coefficient by the Wigner-Eckart theorem. Alternatively,
one can regard the so-called ``crossing identity'' described below
to reduce the 3-cycle to a 2-cycle and apply Schur's identity. The
resulting graph contains $V-2$ vertices.

For the case $l>3$ we have a cycle with $l$ edges labelled by $j_{1},j_{2},\dots,j_{l}$.
This reduces to a $(l-1)$-cycle by the crossing identity derived
from using \vref{eq:Recoupling-Thm} on the edge, say, $j_{l}$.
This operation results in a Racah coefficient multiplied by a recoupling
graph in which the edge $j_{l}$ is removed while a new ``internal''
edge $x$ is introduced, coupling $j_{1}$ to $j_{l-1}$ and the other
two edges, which were coupling to $j_{l}$, are also coupled to $x$
and to each other. This resulting product is summed over the new edge
$x$ as in the Recoupling Theorem \vref{eq:Recoupling-Thm}. The cycle
is then reduced until $l=3$. 

This process of vertex reduction is repeated until $V=4$ giving as
a result a product of Racah coefficients and a phase factor summed
over all internal variables.
\end{proof}
For completeness we give the second version of the theorem above,
\begin{thm}
\textbf{\textup{Planar version of the Decomposition Theorem:}}

Let $F$ be a \textbf{planar }recoupling graph and $D(F)$ its dual
relative to a particular embedding in the sphere. Let $C(F)$ be a
combinatorial 3-manifold produced by dissecting $D(F)$ with internal
edges $x_{1},x_{2},\dots,x_{p}$ into tetrahedra $T_{1},\dots,T_{q}$.
Then, the evaluation of the recoupling graph is given by the amplitude
\[
\Psi(F)=\sum_{x_{1},\dots,x_{p}}\prod_{j=1}^{p}[x_{j}]\prod_{k=1}^{q}[T_{k}]
\]
where $[x_{j}]$ is the loop-value of the edge $x_{j}$ and the $[T_{k}]$'s
are the Racah coefficient associated with the tetrahedra $T_{k}$.
\end{thm}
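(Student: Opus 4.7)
The plan is to combine the already-proved network version of the theorem with the state-sum invariance machinery of \prettyref{sec:State-Sum-Invariants} via the duality between planar trivalent graphs and triangulated 2-spheres. Since $F$ is trivalent and planar, its dual $D(F)$ relative to an embedding in $S^{2}$ is a triangulation of $S^{2}$; the 3-ball $B^{3}$ bounded by this sphere is then dissected by the internal edges $x_{1},\dots,x_{p}$ into the tetrahedra $T_{1},\dots,T_{q}$ making up $C(F)$.

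First I would apply the network version of the Decomposition Theorem \prettyref{thm:Decomposition-Theorem}. Because $F$ is planar it carries no toroidal phase factor, so the output is purely a sum of products of Racah coefficients indexed by the internal labels introduced at each application of the crossing identity, i.e.\ the Recoupling Theorem \prettyref{eq:Recoupling-Thm}. The crucial identification is that a single crossing-identity step -- which replaces one embedded cycle of length $l$ by a cycle of length $l-1$ at the cost of one new internal edge $x$ and one $q$-$6j$-symbol -- corresponds geometrically to attaching a tetrahedron to the growing dissection of $B^{3}$ along a triangle of its current boundary, with $x$ becoming the new interior edge of that tetrahedron. Through the definition \prettyref{eq:q6j-symbol-and-Tetrahedron} of the $q$-$6j$-symbol, the tetrahedral evaluation $[T_{k}]$ and the loop value $[x]$ appear with exactly the normalization required by the stated formula, with the theta-net denominators cancelling against theta-values collected along the boundary as successive vertices are eliminated.

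Iterating the reduction produces one particular dissection of $B^{3}$ and yields the formula $\Psi(F)=\sum_{x_{1},\dots,x_{p}}\prod_{j}[x_{j}]\prod_{k}[T_{k}]$ for \emph{that} $C(F)$. To extend the formula to an arbitrary dissection with the same boundary $D(F)$, the natural step is to invoke state-sum invariance: any two triangulations of $B^{3}$ with a common boundary triangulation are connected by a finite sequence of Alexander moves in the interior (equivalently, Matveev--Piergallini moves on the dual spine, \prettyref{sec:Moves-on-Triangulations}), and invariance of the summand under the $2$--$3$ and $1$--$4$ moves reduces, respectively, to the Biedenharn--Elliott identity \prettyref{eq:Biedenharn-Elliott} and the orthogonality relation \prettyref{eq:Orthogonality-Relation} -- that is, to exactly the axioms (I), (II) and (III) (\prettyref{eq:Cond-III}) used in the proof of \prettyref{thm:Invariance-wrt-triangulations}. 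Since the boundary $D(F)$ is fixed, the boundary factors are unaffected by these moves.

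The hard part will be the careful bookkeeping of normalization: the crossing identity of \prettyref{eq:Recoupling-Thm} produces bare $q$-$6j$-symbols, while the target formula uses the tetrahedral brackets $[T_{k}]$, which differ from the $q$-$6j$-symbols by theta-nets and loop values (\prettyref{eq:q6j-symbol-and-Tetrahedron}). Showing that, after all reductions, every theta-factor appearing as a denominator in an internal $q$-$6j$-symbol is cancelled by a matching theta-factor on a boundary edge, so that only the loop values $[x_{j}]$ of genuinely interior edges survive, is the combinatorial heart of the argument. A secondary subtlety is that the identification between reduction steps and tetrahedra is not canonical when the smallest cycle has length greater than three, since the crossing identity then removes one edge at a time and produces a chain of tetrahedra whose precise gluing pattern depends on the order of reductions -- but any ambiguity here is precisely what the Alexander-move invariance in the previous paragraph is designed to absorb.
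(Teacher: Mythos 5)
Your proposal follows essentially the same route as the paper: reduce $F$ by the network version of \prettyref{thm:Decomposition-Theorem}, observe that the 3-cycle elimination and the crossing identity are dual to Alexander moves dissecting the interior of $D(F)$ into tetrahedra, and absorb the non-uniqueness of the dissection via the Biedenharn--Elliott identity \prettyref{eq:Biedenharn-Elliott} and the orthogonality relation \prettyref{eq:Orthogonality-Relation}. The only divergence is the theta-net bookkeeping you flag as the ``combinatorial heart'': the paper dispenses with it by adopting Moussouris' normalization in which all theta-nets evaluate to $1$, so no cancellation argument is needed.
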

The successive application of the Alexander moves, which correspond
in the dual form to the elimination of a 3-cycle and the crossing
identity, results in the introduction of sufficient internal edges
to dissect the interior of $D(F)$ into tetrahedra, giving a combinatorial
3-manifold $C(F)$ with $D(F)$ as its boundary. This decomposition
process is non-unique, however, the Biedenharn-Elliott identity and
the orthogonality of the $6j$-symbols ensures the equivalence of
the decompositions, \cite{moussouris1983quantum}. This is a special
case of the procedure to obtain the invariant%
\footnote{Notice that in the amplitude given above the theta-net factors are
missing. This is due to the fact that in \cite{moussouris1983quantum}
the spin networks are normalized such that the theta-nets are evaluated
to one. %
} described in \prettyref{cha:Invariants-of-3-Manifolds}.

\subsection{\label{sub:The-Toroidal-SN}The Evaluation of the Toroidal Spin Network
$K_{3,3}$ }

We will now apply the algorithm described in the above proof to the
$[4+4+10]$- and $[6+6+6]$-type embeddings of $K_{3,3}$ on a torus,
denoted by $K_{3,3}^{(1,1)}$ and $K_{3,3}^{(3,3)}$ respectively.
This will be done in order to extract information for the evaluation
of non-planar spin networks. 

In the case of the embedding $K_{3,3}^{(1,1)}$ we may start by applying
the crossing identity to the common edge of the 4-cycles and then
eliminating the two resulting 3-cycles by extracting two Racah coefficients.
The result is a sum over a single internal edge $x$ of a product
of three $6j$-symbols weighted by a factor of $(-1)^{2x}[2x+1]$.
These are, however, not all the factors since the diagram left encodes
the information of the graph being embedded in a torus. This diagram,
which we will call \textbf{toroidal phase factor}, can be represented
in a torus as follows:\smallskip{}

\xy/r1.5pc/:{\xypolygon4"T"{~:{(4,0):(0,.6)::}}},
"T0"-(.25,0)*{x},"T0"-(0,.5)="X1"*{\bullet},"T0"+(0,.5)="X2"*{\bullet},
"X1";"X2"**@{-},"T1";"T2"**{}?<>(.5)="M2";"X2"**@{-},"T4";"T3"**{}?<>(.5)="M1";"X1"**@{-},
"X1"+(.25,-.2)*{m},"X2"+(.25,.2)*{m},
"T1";"T4"**{}?<>(.5)="L2";"X2"**@{-},"T3";"T2"**{}?<>(.5)="L1";"X1"**@{-},
"L1";"X1"**{}?<>(.5)-(0,.25)*{l},"L2";"X2"**{}?<>(.5)+(0,.25)*{l}
\endxy

\smallskip{}

If we ``project'' this diagram to the plane by connecting the loose
ends of the edges $m$ and $l$, once we have disregarded the frame
of the above diagram, we get a theta-net with these edges crossing.
In order to get a more familiar theta-net, which can then be set to
have the value of 1, we need to ``twist'' the edge $x$. This is
done by following operation on a vertex%
\footnote{\label{fn:definition-of-twist-factor-under-cross}Notice that here
we are presenting the case of a vertex with an over-crossing, however,
this operation is defined for an undercrossing as well. In this case
we exchange $A\rightarrow A^{-1}$, cf. \prettyref{sec:Some-More-Diagrammatics}.%
}, \cite{carter1995classical,kauffman1994temperley},
\begin{equation}
\xygraph{!{0;/r1pc/:}[u(2)]!{\xcapv@(0)>{2j}}*{\bullet}[lr(0.1)]!{\sbendv-@(0)|{2b}}[ll]!{\sbendh-@(0)|{2a}}[ld]!{\xunderv[2]}}=(-1)^{a+b-j}A^{2[a(a+1)+b(b+1)-j(j+1)]}\xygraph{!{0;/r1.5pc/:}[u]!{\xcapv@(0)>{2j}}*{\bullet}[lr(0.1)]!{\sbendv-@(0)|{2a}}[ll]!{\sbendh-@(0)|{2b}}}.\label{eq:Twist-factor}
\end{equation}
The result of applying Moussouris algorithm and the above twisting
rule is a sum of products of Racah coefficients as in the planar case,
however, the non-planar nature of the graph is reflected in the ``twist
factor'' given above, i.e. in the evaluation of the toroidal phase
factor. Thus, we have 
\begin{equation}
\big[K_{3,3}^{(1,1)}\big]=\sum_{x}\Delta_{x}\left\{ \begin{array}{ccc}
j_{3} & j_{4} & k\\
j_{6} & j_{1} & x
\end{array}\right\} \left\{ \begin{array}{ccc}
m & j_{5} & j_{6}\\
j_{4} & x & l
\end{array}\right\} \left\{ \begin{array}{ccc}
j_{2} & l & j_{1}\\
x & j_{3} & m
\end{array}\right\} (-1)^{l+m-x}A^{2[l(l+1)+m(m+1)-x(x+1)]}\label{eq:quantum-9j-symbol}
\end{equation}
where $\Delta_{x}=(-1)^{2x}[2x+1]$ is the loop-evaluation \vref{eq:loop-value-in-TL-alg}
and $A$ is the square root of the deformation parameter $q$ introduced
before, cf. Sec. \prettyref{sub:The-Quantum-Group}. The squared brackets
denote the evaluation of a graph in terms of Racah coefficients.

The relation \prettyref{eq:quantum-9j-symbol} looks similar to the
$9j$-symbol. However, considering the cases where $A=\pm1$, we have
an overall factor of $(-1)^{l+m+x}$ which corresponds to one of the
Racah coefficients having a vertex with the ``wrong'' orientation.
This can be seen by expressing one of the $6j$-symbols where the
labels $x,l,m$ form an admissible triple in terms of $3j$-symbols
and permuting the order of the labels in the corresponding $3j$-symbol
by the following relation, \cite{edmonds1996angular},
\[
(-1)^{j_{1}+j_{2}+j_{3}}\left(\begin{array}{ccc}
j_{1} & j_{2} & j_{3}\\
m_{1} & m_{2} & m_{3}
\end{array}\right)=\left(\begin{array}{ccc}
j_{2} & j_{1} & j_{3}\\
m_{2} & m_{1} & m_{3}
\end{array}\right).
\]
The resulting factor is exactly the one described in \cite[p. 65]{moussouris1983quantum},
i.e. a tetrahedron with one of the vertices having an orientation
so that two edges cross. In fact, the diagram left after applying
the crossing identity and eliminating only one of the two 3-cycle
gives such a tetrahedron.

Consider now the evaluation $\big[K_{3,3}^{(3,3)}\big]$. It is possible
to reduce the $[6+6+6]$-type embedding to the $[4+4+10]$-type one
by applying the algorithm on two edges of the hexagonal figure shown
in example \prettyref{exa:case-3-3} which belong to the same ``exterior''
6-cycle, for instance the edges $(12)$ and $(54)$. From this procedure
we get
\begin{equation}
\big[K_{3,3}^{(3,3)}\big]=\sum_{v,w}\Delta_{v}\Delta_{w}\left\{ \begin{array}{ccc}
j_{4} & j_{5} & l\\
m & v & j_{6}
\end{array}\right\} \left\{ \begin{array}{ccc}
j_{6} & j_{1} & k\\
l & w & j_{2}
\end{array}\right\} \big[K_{3,3}^{(1,1)}\big](v,w)\label{eq:Reduction-(3,3)-to-(1,1)}
\end{equation}
where the first $6j$-symbol is the result of the crossing identity
on the edge $j_{5}=(54)$ and the second one is the result of the
same identity on the edge $j_{1}=(12)$, and the factor $\big[K_{3,3}^{(1,1)}\big](v,w)$
corresponds to the relation \prettyref{eq:quantum-9j-symbol} with
$j_{1}\rightarrow m,\, j_{2}\rightarrow l,\, j_{3}\leftrightarrow k,\, j_{5}\rightarrow j_{6},\, j_{6}\rightarrow j_{2},\, m\rightarrow w,\, l\rightarrow v$.

If we compare $\big[K_{3,3}^{(1,1)}\big]$ with the\textit{ }$9j$-symbol
we may recognize the possibility to use the following relation between
a $9j$-symbol (without twist factor) and $6j$-symbols, as in \cite{edmonds1996angular},
\begin{equation}
\sum_{\mu}(2\mu+1)\left\{ \begin{array}{ccc}
j_{11} & j_{12} & \mu\\
j_{21} & j_{22} & j_{23}\\
j_{31} & j_{32} & j_{33}
\end{array}\right\} \left\{ \begin{array}{ccc}
j_{11} & j_{12} & \mu\\
j_{23} & j_{33} & \lambda
\end{array}\right\} =(-1)^{2\lambda}\left\{ \begin{array}{ccc}
j_{21} & j_{22} & j_{23}\\
j_{12} & \lambda & j_{32}
\end{array}\right\} \left\{ \begin{array}{ccc}
j_{31} & j_{32} & j_{33}\\
\lambda & j_{11} & j_{21}
\end{array}\right\} \label{eq:9j-and-6j-relation}
\end{equation}
However, this relation does not account for the twist factor, hence,
it is not possible to use straightforward.
\begin{claim}
If we define the \textit{toroidal symbol}
\begin{multline*}
\left[\begin{array}{ccc}
j_{11} & j_{12} & j_{13}\\
j_{21} & j_{22} & j_{23}\\
j_{31} & j_{32} & j_{33}
\end{array}\right]_{A}:=\big[K_{3,3}^{(1,1)}\big]=\sum_{x}\Delta_{x}\left\{ \begin{array}{ccc}
j_{11} & j_{21} & j_{31}\\
j_{32} & j_{33} & x
\end{array}\right\} \left\{ \begin{array}{ccc}
j_{12} & j_{22} & j_{32}\\
j_{21} & x & j_{23}
\end{array}\right\} \left\{ \begin{array}{ccc}
j_{13} & j_{23} & j_{33}\\
x & j_{11} & j_{12}
\end{array}\right\} \\
\dots\times(-1)^{j_{21}+j_{32}-x}A^{2[j_{21}(j_{21}+1)+j_{32}(j_{32}+1)-x(x+1)]}.
\end{multline*}

Then the identity corresponding to \prettyref{eq:9j-and-6j-relation}
is
\begin{multline}
\sum_{\mu}[2\mu+1]\left[\begin{array}{ccc}
j_{11} & j_{12} & \mu\\
j_{21} & j_{22} & j_{23}\\
j_{31} & j_{32} & j_{33}
\end{array}\right]_{A}\left\{ \begin{array}{ccc}
j_{11} & j_{12} & \mu\\
j_{23} & j_{33} & \lambda
\end{array}\right\} =(-1)^{2\lambda}(-1)^{j_{21}+j_{32}-\lambda}A^{2[j_{21}(j_{21}+1)+j_{32}(j_{32}+1)-\lambda(\lambda+1)]}\\
\dots\times\left\{ \begin{array}{ccc}
j_{21} & j_{22} & j_{23}\\
j_{12} & \lambda & j_{32}
\end{array}\right\} \left\{ \begin{array}{ccc}
j_{31} & j_{32} & j_{33}\\
\lambda & j_{11} & j_{21}
\end{array}\right\} \label{eq:relation-to-reduce-(3,3)-to-(1,1)-1}
\end{multline}
where $[2\mu+1]$ corresponds to the quantum integer defined in section
\prettyref{sub:The-Quantum-Group} and equation \vref{eq:loop-value-in-TL-alg}.\end{claim}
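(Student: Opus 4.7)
The plan is to reduce the identity to the q-deformed Biedenharn--Elliott identity \prettyref{eq:Biedenharn-Elliott} and the orthogonality relation \prettyref{eq:Orthogonality-Relation} established in \prettyref{sub:Recoupling-Theory}, exploiting the fact that the toroidal symbol factors as a q-deformed $9j$-symbol multiplied by a twist phase that does not involve the outer summation variable $\mu$.

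First I would expand $\bigl[\begin{smallmatrix}j_{11} & j_{12} & \mu\\ j_{21} & j_{22} & j_{23}\\ j_{31} & j_{32} & j_{33}\end{smallmatrix}\bigr]_A$ using its defining sum over the internal label $x$ and substitute into the left-hand side of the claim. The resulting expression is a double sum over $\mu$ and $x$ whose summand factors as a product of four $6j$-symbols, the quantum dimensions $[2\mu+1]$ and $\Delta_x$, and the twist phase $\tau(x)=(-1)^{j_{21}+j_{32}-x}A^{2[j_{21}(j_{21}+1)+j_{32}(j_{32}+1)-x(x+1)]}$. The crucial observation is that $\tau(x)$ depends only on the external labels $j_{21}$, $j_{32}$ and on the internal variable $x$, but is inert to the summation over $\mu$. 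Swapping the order of summation therefore pulls $\tau(x)$ outside the inner $\mu$-sum without any further rearrangement.

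For each fixed $x$ the inner $\mu$-sum is then precisely the q-deformed analogue of the classical $9j$--$6j$ identity \prettyref{eq:9j-and-6j-relation}. I would establish this analogue within the Temperley--Lieb recoupling framework by iterating \prettyref{eq:Biedenharn-Elliott} to fuse three of the $6j$-factors into a pair of $6j$-factors sharing a common summation label with the fourth, and then applying \prettyref{eq:Orthogonality-Relation} to collapse that surviving label. The expected outcome is a Kronecker delta $\delta_{x,\lambda}$ multiplying the product $\bigl\{\begin{smallmatrix}j_{21} & j_{22} & j_{23}\\ j_{12} & \lambda & j_{32}\end{smallmatrix}\bigr\}_q\bigl\{\begin{smallmatrix}j_{31} & j_{32} & j_{33}\\ \lambda & j_{11} & j_{21}\end{smallmatrix}\bigr\}_q$, scaled so that the outer sum $\sum_x\Delta_x\tau(x)\delta_{x,\lambda}$ collapses cleanly on $x=\lambda$; the twist phase is then evaluated at $x=\lambda$, producing exactly the prefactor claimed on the right-hand side together with the overall $(-1)^{2\lambda}$ sign inherited from the classical limit.

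The hard part will be managing the phase bookkeeping inside the q-analogue of \prettyref{eq:9j-and-6j-relation}: that identity is not written out explicitly in the excerpt, and care is needed when porting the classical derivation (which uses a weighted orthogonality for $SU(2)$-$6j$-symbols) to the q-deformed setting, where \prettyref{eq:Orthogonality-Relation} is in an unweighted form and the quantum dimensions $\Delta_x$ must be tracked through every application of Biedenharn--Elliott. Should the iteration fail to place the four $6j$-factors in the precise configuration required by \prettyref{eq:Orthogonality-Relation}, I would fall back to a diagrammatic argument: represent the left-hand side as a labelled Temperley--Lieb graph containing a single ribbon twist on the edge coloured by $x$, and use the recoupling theorem \prettyref{eq:Recoupling-Thm} together with \prettyref{eq:Twist-factor} to slide that twist onto the edge coloured by $\lambda$. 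Specialising to $A=\pm 1$ reduces the claim to the classical \prettyref{eq:9j-and-6j-relation} and provides a useful consistency check throughout.
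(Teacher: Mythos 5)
Your proposal ends in the right place, but it takes a detour the paper does not need, and the detour comes from overlooking the one structural fact that makes the proof immediate. In the defining expansion of the toroidal symbol the entry $j_{13}=\mu$ sits in a counter-diagonal slot and therefore appears in exactly \emph{one} of the three $6j$-symbols, namely $\bigl\{\begin{smallmatrix}\mu & j_{23} & j_{33}\\ x & j_{11} & j_{12}\end{smallmatrix}\bigr\}$, which by the tetrahedral symmetries equals $\bigl\{\begin{smallmatrix}j_{11} & j_{12} & \mu\\ j_{23} & j_{33} & x\end{smallmatrix}\bigr\}$ (this is exactly what the remark following the claim records). After swapping the $\mu$- and $x$-sums, the inner $\mu$-sum therefore contains only two $6j$-symbols, and the weighted orthogonality
\[
\sum_{\mu}[2x+1][2\mu+1]\left\{\begin{array}{ccc} j_{11} & j_{12} & \mu\\ j_{23} & j_{33} & x\end{array}\right\}\left\{\begin{array}{ccc} j_{11} & j_{12} & \mu\\ j_{23} & j_{33} & \lambda\end{array}\right\}=\delta_{x,\lambda}
\]
collapses it at once: the factor $[2x+1]$ is supplied by $\Delta_{x}=(-1)^{2x}[2x+1]$, whose leftover sign becomes the $(-1)^{2\lambda}$ on the right-hand side, and the twist phase together with the two spectator $6j$-symbols is simply evaluated at $x=\lambda$. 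That is the paper's entire proof. Your plan instead treats the inner $\mu$-sum at fixed $x$ as ``the $q$-deformed analogue of \prettyref{eq:9j-and-6j-relation}'' and proposes to establish it by iterating Biedenharn--Elliott before invoking orthogonality; but at fixed $x$ there is no $9j$-symbol present (only one summand of its expansion), two of the four $6j$-factors are inert under the $\mu$-sum, and no fusion step is required. The Biedenharn--Elliott route is how one would prove the classical identity without expanding the $9j$-symbol; since the toroidal symbol is \emph{defined} by its expansion, that machinery is superfluous and would only aggravate the phase bookkeeping you rightly worry about. Your observation that the twist phase is inert to the $\mu$-sum is correct and is the other half of the argument; combined with the single-occurrence observation above, the claim follows in two lines.
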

\begin{proof}
The only term containing $\mu$ in the expansion of the l.h.s. in
term of $6j$-symbols is of the form
\[
\sum_{\mu}[2x+1][2\mu+1]\left\{ \begin{array}{ccc}
j_{11} & j_{12} & \mu\\
j_{23} & j_{33} & x
\end{array}\right\} \left\{ \begin{array}{ccc}
j_{11} & j_{12} & \mu\\
j_{23} & j_{33} & \lambda
\end{array}\right\} =\delta_{x,\lambda}.
\]
Thus, the only term left is the one on the r.h.s.\end{proof}
\begin{rem}
Notice that \prettyref{eq:relation-to-reduce-(3,3)-to-(1,1)-1} only
holds if the toroidal symbol has that exact form, i.e. $\mu$ must
be in any counter-diagonal position. Labels in that position appear
in the expansion \prettyref{eq:quantum-9j-symbol} only in one $6j$-symbol,
thus, the orthogonality of the $6j$-symbols may be used straightforward.
Moreover, it is possible to transpose the toroidal symbol since this
only changes the ordering of the admissible triples in the $6j$-symbols,
i.e. it is possible to use the symmetry properties of the $6j$-symbols
to achieve a transposition of the toroidal symbol. In the regular
case without twist factor, the $9j$-symbols have some symmetries
and this constraint does not appear. However, the symmetries of the
above defined symbol, if any besides the transposition, are not clear
at the moment.
\end{rem}
Even without having the symmetries needed it is possible to transform
$\big[K_{3,3}^{(1,1)}\big](v,w)$ in order to bring it in a form suitable
for the use of \prettyref{eq:relation-to-reduce-(3,3)-to-(1,1)-1}
to achieve a further simplification of $\big[K_{3,3}^{(3,3)}\big]$.
Consider the following relation similar to the one given in \cite{carter1995classical}%
\footnote{The relation is given in the reference in a different form, namely,
as a sum of products of three $6j$-symbols and a factor similar to
the twist factor described above. We used the proof for the case $A=\pm1$
given in \cite{carter1995classical} as a guide to reconstruct the
relation in order to present it as a ``symmetry'' of the toroidal
symbol. Notice that by using this relation six times, one obtains
the original form of the symbol, thus, this transformation can be
regarded as a symmetry.%
},
\[
\big[K_{3,3}^{(1,1)}\big](v,w)=\left[\begin{array}{ccc}
j_{4} & v & j_{6}\\
k & l & w\\
j_{3} & m & j_{2}
\end{array}\right]_{A}=(-1)^{l+j_{6}-k-j_{2}}A^{2[l(l+1)+j_{6}(j_{6}+1)-k(k+1)-j_{2}(j_{2}+1)]}\left[\begin{array}{ccc}
m & l & v\\
j_{3} & k & j_{4}\\
j_{2} & w & j_{6}
\end{array}\right]_{A}
\]
 Thus, using $1=(-1)^{-2(l+m+v)}$ for all $v$ and the relation 
\begin{multline*}
\sum_{v}[2v+1]\left\{ \begin{array}{ccc}
j_{4} & j_{5} & l\\
m & v & j_{6}
\end{array}\right\} \left[\begin{array}{ccc}
m & l & v\\
j_{3} & k & j_{4}\\
j_{2} & w & j_{6}
\end{array}\right]_{A}=(-1)^{j_{3}+w+j_{5}}A^{2[j_{3}(j_{3}+1)+w(w+1)-j_{5}(j_{5}+1)]}\times\dots\\
\dots\times\left\{ \begin{array}{ccc}
j_{3} & k & j_{4}\\
l & j_{5} & w
\end{array}\right\} \left\{ \begin{array}{ccc}
j_{2} & w & j_{6}\\
j_{5} & m & j_{3}
\end{array}\right\} 
\end{multline*}
we can simplify \prettyref{eq:Reduction-(3,3)-to-(1,1)} further.

Summarizing the discussion above we obtain the result that the $[6+6+6]$-type
embedding of the $(3,3)$-bipartite graph is a spin network with following
evaluation
\begin{multline}
\big[K_{3,3}^{(3,3)}\big]=\mathcal{A}(l,j_{6},k,j_{2};j_{3})\sum_{w}\Delta_{w}\left\{ \begin{array}{ccc}
k & j_{6} & j_{1}\\
j_{2} & l & w
\end{array}\right\} \left\{ \begin{array}{ccc}
j_{3} & m & j_{2}\\
j_{6} & w & j_{5}
\end{array}\right\} \left\{ \begin{array}{ccc}
j_{4} & j_{5} & l\\
w & k & j_{3}
\end{array}\right\} \times\dots\\
\dots\times(-1)^{w-j_{3}-j_{5}}A^{2[w(w+1)-j_{3}(j_{3}+1)-j_{5}(j_{5}+1)]},\label{eq:9j-symbol?}
\end{multline}
 where $\mathcal{A}(l,j_{6},k,j_{2};j_{3})=(-1)^{2j_{3}-(l+j_{6}+k+j_{2})}A^{4j_{3}(j_{3}+1)}A^{2[l(l+1)+j_{6}(j_{6}+1)-k(k+1)-j_{2}(j_{2}+1)]}$.

Notice that the above relation is not exactly the toroidal symbol
defined in \prettyref{eq:quantum-9j-symbol}, however, it looks very
similar and it could be argued that it is, in fact, a toroidal symbol
with an under-crossing instead of an over-crossing, cf. Footnote \prettyref{fn:definition-of-twist-factor-under-cross}.
It is yet unclear why the result is different and further work on
this would need to be done. The difference could be related to the
fact that there is a certain arbitrariness when it comes to project
the diagram embedded in the torus into the plane, hence, the need
to ``choose'' which (and even how) edges will cross. This corresponds
to choose the orientation of the surface in which the spin network
is embedded. However, the general form of the symbol remains and we
can observe that the topology of the surface in which the diagram
is embedded is refelected in the evaluation of this spin network.

Recall that we reduced the graph by choosing two common edges of the
same embedded 6-cycles. If we reduce the diagram by applying the crossing
identity to two edges of the central hexagonal region in example \prettyref{exa:case-3-3}
belonging to two different 6-cycles in the exterior of this hexagon,
e.g. $(14),(45)$, then the resulting reduction is given by a relation
of the following form, 
\[
\begin{array}{cc}
\big[\tilde{K}_{3,3}^{(3,3)}\big]= & \sum_{x,y,z}\Delta_{x}\Delta_{y}\Delta_{z}\left\{ \begin{array}{ccc}
j_{5} & j_{6} & m\\
k & x & j_{1}
\end{array}\right\} \left\{ \begin{array}{ccc}
j_{1} & z & j_{3}\\
m & j_{2} & l
\end{array}\right\} \left\{ \begin{array}{ccc}
j_{4} & j_{5} & l\\
x & y & j_{1}
\end{array}\right\} \left\{ \begin{array}{ccc}
y & j_{4} & j_{1}\\
j_{3} & z & k
\end{array}\right\} \times\dots\\
 & \dots\sum_{w}\Delta_{w}(-1)^{x+z-w}A^{2[x(x+1)+z(z+1)-w(w+1)]}\left\{ \begin{array}{ccc}
m & y & w\\
x & z & l
\end{array}\right\} \left\{ \begin{array}{ccc}
m & x & k\\
z & y & w
\end{array}\right\} 
\end{array}
\]
which can be simplified using, \cite{carter1995classical}, 
\begin{multline}
\sum_{w}\Delta_{w}(-1)^{x+z-w}A^{2[x(x+1)+z(z+1)-w(w+1)]}\left\{ \begin{array}{ccc}
m & y & w\\
x & z & l
\end{array}\right\} \left\{ \begin{array}{ccc}
m & x & k\\
z & y & w
\end{array}\right\} =\dots\\
\dots=(-1)^{k+l-y-m}A^{2[k(k+1)+l(l+1)-y(y+1)-m(m+1)]}\left\{ \begin{array}{ccc}
m & x & k\\
y & z & l
\end{array}\right\} \label{eq:Reduction-(3,3)-to-(3,3)}
\end{multline}
twice, first summing over $w$ to get an expression that can be reduced
further by the Biedenharn-Elliott identity on the internal edge $x$
and the second time summing over $y$ to obtain the following simplified
expression
\[
\big[\tilde{K}_{3,3}^{(3,3)}\big]=\tilde{\mathcal{A}}\sum_{z}\Delta_{z}\left\{ \begin{array}{ccc}
j_{4} & l & j_{5}\\
m & j_{6} & z
\end{array}\right\} \left\{ \begin{array}{ccc}
j_{3} & j_{2} & m\\
l & z & j_{1}
\end{array}\right\} \left\{ \begin{array}{ccc}
k & j_{1} & j_{6}\\
z & j_{4} & j_{3}
\end{array}\right\} (-1)^{j_{1}+j_{3}-z}A^{2[j_{1}(j_{1}+1)+j_{3}(j_{3}+1)-z(z+1)]}
\]
where $\tilde{\mathcal{A}}$ is a constant dependent on $A$ similar
to the one in \prettyref{eq:9j-symbol?}. Hence, we have
\[
\big[\tilde{K}_{3,3}^{(3,3)}\big]\propto\left[\begin{array}{ccc}
j_{4} & j_{3} & k\\
l & j_{2} & j_{1}\\
j_{5} & m & j_{6}
\end{array}\right]_{A}
\]

Even if $\big[K_{3,3}^{(1,1)}\big]$ and $\big[K_{3,3}^{(3,3)}\big]$
look similar, we were not able to conclude that they are exactly the
same; this is partly also due to the freedom in the choice of the
orientation of the surface, which might have affected the results.
We expect, however, that they are equal up to a sign and a factor
involving the parameter $A$, as in the case of the complete graph
on 4 vertices embedded in the torus. We might be able to solve this
ambiguity by defining the evaluation of the toroidal phase factor
as a sum over both possible crossings rather than just a single twist
factor, i.e. we would have a factor of $A^{+2[\dots]}+A^{-2[\dots]}$
instead. This is only a suggestion which will be verify in a paper
coming soon.

Finally, we consider the embedding of $K_{3,3}$ in the double torus.
As mentioned before, this embedding has only one 2-cell, thus, it
is not possible to reduce by means of Moussouris' algorithm. Hence,
in order to decompose it, it would be necessary to change the orientation
of a vertex by the ``twisting'' operation defined above. This would
give an overall twist factor dependent on the edges involved. This
operation is, however, highly arbitrary since, depending on the choice
of the vertex to be twisted, one obtains either of the embeddings
above or even the original embedding, thus, it is not a viable way
to proceed.

Nevertheless, since we were able to identify the toroidal phase factor
with the handle of the torus and obtained (up to orientation of the
surface) a symbol corresponding to this surface, one might ask if
all spin networks embeddable in an orientable closed surface with
genus $>0$ could be expressed as a sum of products of (quantum) $6j$-
and toroidal symbols, one for each handle. It is not hard to imagine
the existence of graphs with such evaluation. At this point the embedding
of the $(3,3)$-bipartite graph in the double torus is of great interest
since it might be the missing link needed to generalize the Decomposition
Theorem for spin networks with genus $>1$. We could use the inverse
operations of the ones used in Moussouris' algorithm on this embedding
in order to introduce enough vertices and edges such that the resulting
graph has two components, one on each handle, which are at least 3-edge
connected to each other. It would then be possible to separate the
components using the generalized Wigner-Eckart theorem, \cite{moussouris1983quantum}.
This could help us to study the possibility of an evaluation of non-planar
graphs as a sum of products of (quantum) $6j$- and toroidal symbols.
The results of these considerations are expected in the near future.

\section*{Conclusion}

In this dissertation we were able to explore a broad scope of different
topics involved in the description of combinatorial manifolds in terms
of spin networks. We explored briefly the possible significance of
these objects for a description of space in terms of abstract objects
derived from the properties of the category of spin representations
and their non-classical counterpart, the quantum group $U_{q}(\mathfrak{sl}_{2})$.
Each one of the fields presented here is of interest on its own, nevertheless,
together they give a description of spin networks at very different
levels. For instance, the diagrammatical language of these objects
encoding algebraical notions, relations and operations can be analyzed
in the setting of (topological) graph theory, as well as in the context
of combinatorial manifolds. This point of view helped us to identify
some key aspects of the structure of spin networks, such as non-planarity
and the information encoded in the graphs representing these objects. 

From this, we were able to analyze the Decomposition Theorem and Moussouris'
algorithm involved in its proof in order to improve its statement
which, as we noticed, was not clear enough. However, the generalization
of the Decomposition Theorem for networks of higher genus is not completed.
This is in part due to a generalization of Kuratowski's theorem, which
needs to be considered in a thorough manner. It states the existence
of a finite family of minimal forbidden subgraphs for each surface,
i.e. graphs which are not embeddable in the given surface. This theorem
might lower our expectations of being able to express a given graph
in terms of toroidal symbols since we expect other spin networks to
be non-toroidal. For instance, there are more than 800 minimal forbidden
graphs known for the torus, \cite{beineke1997graph}. Thus, we need
to analyze the general case in order to determine if the evaluation
of all spin networks with genus $>1$ is expressable as a sum of products
of quantum $6j$- and toroidal symbols. We expect, however, no constraint
regarding the form of the evaluation of these type of networks since
we could use the generalized Wigner-Eckart-theorem to divide the network
in its components representing the tori of the surface.

The fact that the only graph necessary to study was $K_{3,3}$ is
not surprising since it is the only 3-valent graph responsible for
non-planarity and every graph of higher valence is expandable to a
trivalent graph. However, what about trivalent graphs belonging to
other minimal forbidden families? Again, the information used for
extracting the factors needed in the evaluation is the only information
contained in the network as a graph theoretical object. Thus, we do
not expect a further complication due to these graphs. In any case
we can be sure that the evaluation can be expressed, at least, in
terms of $q$-$6j$-symbols and twist factors.

We saw explicitly the importance of the orientation of the vertices
for the evaluation of spin networks via Moussouris' algorithm. The
introduction of the twist factor in the evaluation of the spin networks
destroys or complicates many of the identities between $6j$-symbols
and, especially, $9j$-symbols which have, as we saw, the same form
as the toroidal symbols if we disregard the twist factor. Further
study of this symbol is necessary, for instance its symmetries or
whether there are other relations between them similar to the ones
for $9j$-symbols.

Notice that the objects needed for a possible description of quantum
gravity are not spin networks themselves but rather their 4-dimensional
Lorenzian analog, called spin foams, which can be seen as the time
evolution of the spin networks in the underlying space. The rich structure
of these objects arising from the different perspectives gives a strong
argument for their study in the context of quantum gravity. For instance,
an indirect related theory called Causal Dynamical Triangulation (CDT)
succeeded in constructing, as an infrared limit, the 4-dimensional
Minkowski spacetime, \cite{ambjorn2005reconstructing,ambjorn2005universe,ambjorn2006quantum}.
On the other hand, one may ask whether it is possible to use the framework
presented in this dissertation, or related ones, to study the description
of the Standard Model in terms of similar objects called braided ribbon
networks and the emergence of matter as topological excitations of
a given quantum geometry, \cite{bilson2007quantum,bilson2008particle,bilson2011emergent},
or even the emergence of locality and geometry itself, \cite{konopka2008quantum}.
Furthermore, there is evidence that spin foam models could be related
to gravity with a positive cosmological constant given by some relation
involving the root of unity $r$ of the parameter $q$ of the quantum
group, \cite{fairbairn2010quantum}. 

The above list is not, in any way, exhaustive and it is probably excluding
many other interesting aspects of topics related to spin networks,
however, it is merely intended as a suggestion of further reading
and as an example of the rich structure behind the concepts described
throughout the dissertation.

Finally, we give a consideration related to the topics above. There
is a similar theorem to Kuratowski's one for graphs embedded in three
dimensional spaces. It states that there are seven distinct graphs,
all containing $K_{3,3}$, which are not embeddable in 3 dimensions
without a link. However, the only networks considered in the structures
described, for instance, in CDT are trivial embeddings of graphs in
3-dimensional spaces; trivial in the sense that there are no links
to consider. It would be interesting to study whether this extra structure
gives new features useful to describe physical concepts such as, for
example, matter in a theory of quantum gravity. The author is conscious
of the highly speculative nature of this consideration, however, he
regards as important to raise humbly the issue (in a probably very
naive way) of the possible necessity to consider more basic notions,
than for instance geometry, to tackle the difficulties encounter in
quantum gravity.

\cleardoublepage{}

\bibliographystyle{amsplain}
\addcontentsline{toc}{chapter}{\bibname}\bibliography{Intro_SN_arXiv}

\end{document}